\definecolor{linkcol}{rgb}{0.0,0.55,0.7}
\definecolor{citecol}{rgb}{0.0, 0.6, 0.45}
\definecolor{urlcol}{rgb}{0.7, 0.0, 0.55}
\def\01{\{0,1\}}
\newcommand{\mc}[1]{\mathcal{#1}}
\renewcommand{\O}{\ensuremath{\mathcal{O}}}
\newcommand{\poly}{\mathrm{poly}}
\renewcommand{\norm}[1]{\lVert{#1}\rVert}
\DeclarePairedDelimiter\lrangle{\langle}{\rangle}
\let\oldlrangle\lrangle
\def\lrangle{\@ifstar{\oldlrangle}{\oldlrangle*}}
\newcommand{\verteq}{\rotatebox{90}{$=$}}
\newcommand{\xmark}{\ding{55}}
\newtheoremstyle{mydefinitionsty}% 〈name〉
{10pt}% 〈Space above〉
{10pt}% 〈Space below〉
{}% 〈Body font〉
{}% 〈Indent amount〉1
{}% 〈Theorem head font〉
{}% 〈Punctuation after theorem head〉
{.5em}% 〈Space after theorem head〉2
{\textbf{\thmname{#1}~\thmnumber{#2}:  }\thmnote{(#3)}}% 〈Theorem head spec (can be left empty, meaning ‘normal’)〉
\theoremstyle{mydefinitionsty}
\newtheorem{definition}{Definition}
\newtheorem{remark}{Remark}
\newtheoremstyle{mythmsty}% 〈name〉
{10pt}% 〈Space above〉
{10pt}% 〈Space below〉
{\itshape}% 〈Body font〉
{}% 〈Indent amount〉1
{}% 〈Theorem head font〉
{}% 〈Punctuation after theorem head〉
{.5em}% 〈Space after theorem head〉2
{\textbf{\thmname{#1}~\thmnumber{#2}:  }\thmnote{(#3)}}% 〈Theorem head spec (can be left empty, meaning ‘normal’)〉
\theoremstyle{mythmsty}
\newtheorem{theorem}{Theorem}
\newtheorem{lemma}{Lemma}
\newtheorem{corollary}{Corollary}
\newtheorem{proposition}{Proposition}
\title{Classical Verification of Quantum Learning}
\author[1,2]{Matthias C.~Caro\thanks{
E-mail addresses: \{matthias.caro, m.hinsche, marios.ioannou, a.nietner\}@fu-berlin.de, ryan.sweke@ibm.com}}
\newcommand\CoAuthorMark{\footnotemark[\arabic{footnote}]}
\author[2]{Marcel Hinsche\protect\CoAuthorMark}
\author[2]{Marios Ioannou\protect\CoAuthorMark}
\author[2]{\\Alexander Nietner\protect\CoAuthorMark}
\author[2,3]{Ryan Sweke\protect\CoAuthorMark}
\affil[1]{Institute for Quantum Information and Matter, Caltech, Pasadena, CA, USA}
\affil[2]{Dahlem Center for Complex Quantum Systems, Freie Universit\"at Berlin, Berlin, Germany}
\affil[3]{IBM Quantum, Almaden Research Center, San Jose, CA, USA}
\date{}
\begin{document}
\maketitle

\begin{abstract}
    %Context 
    Quantum data access and quantum processing can make certain classically intractable learning tasks feasible.
    %Research Gap
    However, quantum capabilities will only be available to a select few in the near future. 
    Thus, reliable schemes that allow classical clients to delegate learning to untrusted quantum servers are required to facilitate widespread access to quantum learning advantages.
    %Results -- Part 1
    Building on a recently introduced framework of interactive proof systems for classical machine learning, %Add Goldwasser et al. citation for ICTS
    we develop a framework for classical verification of quantum learning.
    We exhibit learning problems that a classical learner cannot efficiently solve on their own, but that they can efficiently and reliably solve when interacting with an untrusted quantum prover.
    %Results -- Part 2 / Methods
    Concretely, we consider the problems of agnostic learning parities and Fourier-sparse functions with respect to distributions with uniform input marginal.
    We propose a new quantum data access model that we call ``mixture-of-superpositions'' quantum examples, based on which we give efficient quantum learning algorithms for these tasks.
    Moreover, we prove that agnostic quantum parity and Fourier-sparse learning can be efficiently verified by a classical verifier with only random example or statistical query access.
    Finally, we showcase two general scenarios in learning and verification in which quantum mixture-of-superpositions examples do not lead to sample complexity improvements over classical data.
    %Impact
    Our results demonstrate that the potential power of quantum data for learning tasks, while not unlimited, can be utilized by classical agents through interaction with untrusted quantum entities.
\end{abstract}

\newpage
\tableofcontents
\newpage

\section{Introduction}

For many learning problems, the amount and type of data to which we have access determine our ability to obtain a good hypothesis. Unfortunately, in practical settings there is often a cost associated with collecting high quality data, and this cost prohibits us from solving a learning problem of interest. In light of this, it would be desirable to delegate learning problems to \textit{untrusted} servers with access to more or higher-quality data than ourselves.  
Ideally we would like such ``data-rich'' servers to efficiently solve the learning problem, and we would like to efficiently verify, using both the limited data available to us and interaction with the server, that the server has indeed provided a sufficiently good hypothesis and thus successfully solved the learning problem.
Recently, a formal framework  -- \textit{interactive proofs for the verification of machine learning} -- has been introduced to explore when, and to which extent, such delegation of learning tasks is possible~\cite{goldwasser2021interactive}. 

In this work, we are interested in verifying learning with untrusted \emph{quantum} servers, with access to some type of quantum data. Indeed, there is a rich history of work on quantum learning theory~\cite{arunachalam2017survey}, aimed at rigorously understanding the potential advantages and limitations of quantum learning algorithms with access to different types of quantum data oracles. 
Notably, there do exist learning problems which are intractable for classical algorithms, but which can be efficiently solved by quantum learning algorithms with quantum data access. 
However, the most realistic future scenario is that quantum devices will be accessed remotely, and that only certain parties have access to hard-to-prepare and hard-to-store quantum data. 
Therefore, to realize the advantages of quantum learning algorithms, it becomes crucial that classical clients (verifiers) can delegate learning problems to untrusted quantum servers (provers) and efficiently verify the provided hypotheses, using only interaction with the server and the classical data that is readily available.

In order to explore the setting just described, it is necessary to fix a formal learning problem. In the case of supervised learning, \cite{goldwasser2021interactive} showed that for standard Probably Approximately Correct (PAC) learning there exist trivial techniques for the verification of hypotheses (requiring only one round of one-way communication between prover and verifier) and as such the verification problem is only non-trivial for \textit{agnostic} PAC learning. 
In addition to being the natural and interesting setting for exploring the delegation and verification of learning, agnostic learning also captures an important feature of modern machine learning in practice: Often, one has few or no promises on the structure of the data, and one attempts to do the best possible by optimizing over a chosen model class (such as a particular neural network architecture). 
%Given the necessity of working within the framework of agnostic learning, the question of whether or not it is possible for classical clients to delegate learning problems to untrusted quantum servers is only interesting if there exist agnostic learning problems which are intractable for classical learning algorithms, but which can be solved by quantum learners with access to quantum data.
Given the necessity of working within the framework of agnostic learning, the question of whether or not it is possible for classical clients to delegate learning problems to untrusted quantum servers is only interesting if there exist agnostic learning problems in which the amount of resources required for classical learning exceeds that sufficient for quantum learners with access to quantum data.
Unfortunately, however, little is known about the power of quantum learning algorithms for agnostic learning.

In light of the above, the main contributions in this work are two-fold: Firstly, we identify and motivate a novel quantum oracle model for agnostic learning and, with respect to this oracle, provide the first efficient fully agnostic quantum learning algorithms for parities and Fourier-sparse functions. To the best of our knowledge, these are the first agnostic quantum learning algorithms for any model class for the problem of \textit{distributional} agnostic learning. Secondly, we leverage these positive agnostic quantum learning results to give a concrete example of an agnostic learning problem which is classically intractable, but can be efficiently and reliably delegated to an untrusted quantum server. More specifically, we provide an explicit interactive verification protocol which, despite the classical intractability of the learning problem, allows the classical client to efficiently verify the hypothesis provided by a potentially dishonest quantum server. 
This result serves as a proof-of-principle demonstration that classical clients can indeed reap the benefits of quantum advantages in learning, in the realistic setting where learning needs to be delegated to untrusted servers. Our hope is that these results provide new tools and insights for agnostic quantum learning, as well as motivation for the development of further techniques for the secure delegation of learning problems to quantum servers.

\subsection{Framework}\label{sbsct:framework}

\paragraph{Agnostic learning:}
When formalizing a learning task in which there may be a fundamental mismatch between the model used by the learner and the data-generating process, a so-called \emph{agnostic} learning task \cite{haussler1992decision, kearns1994toward}, there are two canonical choices:
\begin{itemize}
    \item In \emph{functional agnostic learning} w.r.t.~uniformly random inputs, we assume that the data consists of labeled examples $(x_i,f(x_i))$, with the $x_i$ drawn i.i.d.~uniformly at random from $\{0,1\}^n$ and with $f:\{0,1\}^n\to\{0,1\}$ an arbitrary unknown Boolean function. In this case, we denote the data-generating distribution as $\mathcal{D}=(\mathcal{U}_n, f)$.
    \item In \emph{distributional agnostic learning} w.r.t.~uniformly random inputs, we drop the assumption of a deterministic function that perfectly describes the data. That is, we assume labeled examples $(x_i,y_i)$ drawn i.i.d.~from some distribution $\mathcal{D}$ over $\{0,1\}^n\times\{0,1\}$ with uniform marginal over $\{0,1\}^n$. We denote this as $\mathcal{D}=(\mathcal{U}_n, \varphi)$ with conditional label expectation $\varphi:\{0,1\}^n\to [0,1]$, $\varphi(z) = \mathbb{E}_{(x,y)\sim \mathcal{D}} [y | x=z]$. 
\end{itemize}
Whereas in functional agnostic learning there is a ``correct'' label for every input, this is no longer true in the distributional agnostic setting. In particular, in the latter case data could contain conflicting labels for the same input.
Nevertheless, in both the functional and the distributional case, the goal is to learn an almost-optimal approximating function compared to a benchmark class $\mathcal{B}$: Given an accuracy parameter $\varepsilon$, a confidence parameter $\delta$, and access to a training data set generated i.i.d.~from $\mathcal{D}$, an $\alpha$-agnostic learner has to output, with success probability $\geq 1-\delta$, a hypothesis $h$ such that
\begin{equation}
    \mathbb{P}_{(x,y)\sim\mathcal{D}}[h(x)\neq y]
    \leq \alpha\cdot\inf_{b\in\mathcal{B}} \mathbb{P}_{(x,y)\sim\mathcal{D}}[b(x)\neq y] + \varepsilon .
\end{equation}
Note that here we do not necessarily require that $h\in\mathcal{B}$. If we add this requirement, we speak of \emph{proper} learning, otherwise the learner can be \emph{improper}.
Also, we recover the scenario of \emph{realizable} PAC learning when assuming that $\inf_{b\in\mathcal{B}} \mathbb{P}_{(x,y)\sim\mathcal{D}}[b(x)\neq y]=0$.

\paragraph{Learning classical functions from quantum data:}
In quantum learning theory, a learner can have access to $\mathcal{D}$ via a potentially more powerful resource than classical i.i.d.~examples.
Quantum training data for $\mathcal{D}$ is canonically taken to consist of copies of the \emph{quantum superposition example state} \cite{bshouty1998learning}
\begin{equation}
    \ket{\psi_\mathcal{D}} 
    = \sum_{(x,y)\in\{0,1\}^n\times\{0,1\}} \sqrt{\mathcal{D}(x,y)}\ket{x,y} .
\end{equation}
Such quantum data is at least as powerful as its classical counterpart, since the former can simulate the latter via computational basis measurements.
In fact, these quantum examples have proven to be useful for realizable learning and, to some degree, functional agnostic learning w.r.t.~the uniform distribution. However, it is unknown how to use copies of $\ket{\psi_\mathcal{D}}$ to improve upon classical distributional agnostic learning.

Therefore, we propose a different quantum resource for distributional agnostic learning. Our starting point is that a distribution $\mathcal{D}=(\mathcal{U}_n, \varphi)$ induces a distribution $F_\mathcal{D}$ over the set of all functions mapping $\{0,1\}^n$ to $\{0,1\}$. Namely, $F_\mathcal{D}$ is defined by taking the probability that $f(x)$ equals $1$ to be $\varphi(x)$ independently for each $x$, see \Cref{eq:induced-distribution-over-functions}.
We then consider quantum training data for $\mathcal{D}=(\mathcal{U}_n, \varphi)$ to consist of copies of the \emph{mixture-of-superpositions example state} (\Cref{definition:mixture-of-superpositions-quantum-example})
\begin{equation}
    \rho_\mathcal{D} = \mathbb{E}_{f\sim F_\mathcal{D}} \left[ \ket{\psi_{(\mathcal{U}_n, f)}}\bra{\psi_{(\mathcal{U}_n, f)}} \right] .
\end{equation}
Note that this kind of quantum data still reproduces classical training data upon computational basis measurements and is thus a consistent quantum generalization of the classical notion of training data.

\paragraph{Interactive verification of agnostic learning:}
If quantum processing and quantum data are only available to a select few, enabling widespread use of quantum learning requires classical verification procedures. 
Extending the framework of \cite{goldwasser2021interactive}, who formalized interactive verification of classical learning, we consider interactive classical verification of quantum learning. 
Here, an efficient classical verifier with classical data access, via random examples or statistical queries (SQs), interacts with an efficient quantum prover with mixture-of-superpositions quantum example or quantum SQ (QSQ) access. The goal of the verifier is twofold: On the one hand, when interacting with an honest quantum prover, the verifier should, with high probability, produce a hypothesis that satisfies the agnostic learning requirement. On the other hand, even when interacting with an arbitrarily powerful dishonest prover, the verifier should only accept the interaction and output a faulty hypothesis with small probability. If these two requirements are satisfied, the classical verifier can reliably profit from potential quantum advantages in learning.

\subsection{Overview of the Main Results}

Our first contribution is proposing mixture-of-superpositions states $\rho_\mathcal{D} = \mathbb{E}_{f\sim F_\mathcal{D}} \left[ \ket{\psi_{(\mathcal{U}_n, f)}}\bra{\psi_{(\mathcal{U}_n, f)}} \right]$ (see \cref{definition:mixture-of-superpositions-quantum-example}) as a resource for agnostic quantum learning. 
With this proposal, we return to the fundamental question of quantum learning theory: Do quantum versions of classical data access models enlarge the class of feasible learning problems? In particular, while quantum superposition examples have been widely adopted as the canonical “quantization” of classical random examples, it is of fundamental interest to understand what other consistent quantizations of classical data oracles exist, and how access to such oracles influences the complexity of different learning problems. To this end, we note that our mixture-of-superpositions examples are indeed \textit{consistent}, in the sense that they reduce to classical random examples upon measurements in the computational basis, and to the established quantum superposition examples in the functional agnostic case. 
Additionally, our definition is well-motivated by a natural operational interpretation of classical random examples for arbitrary distributions, which has previously been used to provide reductions from classical distributional to functional agnostic learning (see the discussion in \Cref{appendix:classical-distributional-to-agnostic}). 
More specifically, each time a mixture-of-superpositions oracle for the distribution $\mathcal{D}$ is queried, it responds by first choosing a random function $f:\{0,1\}^n\to\{0,1\}$ according to the distribution $F_\mathcal{D}$ induced by $\mathcal{D}$ and then sending a copy of $\ket{\psi_{(\mathcal{U}_n, f)}}$. Finally, our mixture-of-superpositions examples can be viewed as enriching quantum learning by an analogue of randomized quantum oracles, which, as discussed in \Cref{ss:related_work}, have recently received attention in quantum complexity theory~\cite{harrow2014uselessness, fefferman2018quantum-vs-classical, natarajan2022distribution, bassirian2022power}. Indeed, our motivation here is similar to these recent works -- namely to understand the effect of different oracle models on the landscape of quantum sample/query complexity. 

Quantum Fourier sampling \cite{bernstein1997quantum} is a central subroutine in most existing quantum learning algorithms. However, while it is known how to do quantum Fourier sampling from quantum superposition examples for functional agnostic learning, it is unknown whether quantum superposition examples suffice to perform quantum Fourier sampling in the distributional agnostic setting. 
% However, while it is known how to do quantum Fourier sampling from quantum superposition examples of an unknown function, it is unknown whether quantum superposition examples suffice to perform quantum Fourier sampling in the fully distributional setting.
Our first main result shows that mixture-of-superpositions examples allow for an approximate version of quantum Fourier sampling in the distributional agnostic setting and are thus a valuable resource for distributional agnostic quantum learning algorithms:

\begin{theorem}[Distributional agnostic approximate quantum Fourier sampling and learning -- Informal]\label{theorem:main-result-agnostic-quantum-fourier-sampling-learning}
    Let $\mathcal{D}=(\mathcal{U}_n, \varphi)$ be an unknown probability distribution over $\{0,1\}^n\times\{0,1\}$, with (known) uniform marginal over $\{0,1\}^n$ and with (unknown) conditional label expectation $\varphi:\{0,1\}^n\to [0,1]$.
    \begin{enumerate}
        \item \textbf{Distributional agnostic quantum Fourier sampling:} There is an efficient quantum algorithm that, given a single copy of $\rho_\mathcal{D}$, with success probability $\nicefrac{1}{2}$ outputs a sample from a probability distribution over $\{0,1\}^n$ that is inverse-exponentially close to the squares of the Fourier coefficients of $\phi = 1-2\varphi$. 
        \item \textbf{Distributional agnostic proper quantum parity learning:} There is an efficient quantum algorithm that properly $1$-agnostically learns parities from an efficient number of copies of $\rho_\mathcal{D}$.
        \item \textbf{Distributional 2-agnostic improper quantum Fourier-sparse learning:} There is an efficient quantum algorithm that improperly $2$-agnostically learns Fourier-sparse functions from an efficient number of copies of $\rho_\mathcal{D}$. 
    \end{enumerate}
\end{theorem}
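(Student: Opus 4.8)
\emph{Part 1.} The plan is to run the natural adaptation of quantum Fourier sampling to example states. On the two registers of $\rho_\mathcal{D}$ --- an $n$-qubit ``input'' register $A$ and a one-qubit ``label'' register $B$ --- I would apply $\mathsf{Z}$ to $B$, then $\mathsf{H}^{\otimes n}$ to $A$ and $\mathsf{H}$ to $B$, and measure in the computational basis, declaring \emph{success} iff the $B$-outcome is $0$ and then outputting the $A$-outcome $z$. First I would analyze this on a pure functional example $\ket{\psi_{(\mathcal{U}_n,f)}}$: writing $F=1-2f$, a one-line calculation gives that the event (success, output $z$) has probability $\tfrac12\widehat F(z)^2$ and the failure event has probability $\tfrac12$ (consistently, since $\sum_z\widehat F(z)^2=1$ by Parseval), \emph{uniformly in $f$}. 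By linearity of the channel-plus-measurement map and $\rho_\mathcal{D}=\Ex_{f\sim F_\mathcal{D}}[\ket{\psi_{(\mathcal{U}_n,f)}}\!\bra{\psi_{(\mathcal{U}_n,f)}}]$, the success probability is still exactly $\tfrac12$ and, conditioned on success, the output distribution is $p(z)=\Ex_{f\sim F_\mathcal{D}}[\widehat F(z)^2]$. Finally I would expand $\widehat F(z)^2$ and use that $F_\mathcal{D}$ sets the signs $\{F(x)\}_x$ \emph{independently} with $\Ex[F(x)]=\phi(x)$ and $\Ex[F(x)^2]=1$: the cross terms reproduce $\widehat\phi(z)^2$ up to a $z$-independent correction which, together with the diagonal terms, collapses to $p(z)=\widehat\phi(z)^2+2^{-n}\bigl(1-\Ex_x[\phi(x)^2]\bigr)$. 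Since $\Ex_x[\phi(x)^2]\in[0,1]$, this yields $|p(z)-\widehat\phi(z)^2|\le 2^{-n}$ for every $z$, which is the claimed inverse-exponential closeness.

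\emph{Part 2.} Identifying a label $y\in\{0,1\}$ with $Y:=1-2y$, the parity $\chi_S$ (as a $\pm1$-predictor) satisfies $\Pr_{(x,y)\sim\mathcal{D}}[\chi_S(x)\neq Y]=\tfrac12-\tfrac12\widehat\phi(S)$, so over the class of (signed) parities the optimum is $\opt=\tfrac12-\tfrac12\max_S|\widehat\phi(S)|$, attained at $S^\star=\arg\max_S|\widehat\phi(S)|$. I would run Part~1 $\poly(n,1/\eps,\log(1/\delta))$ times, collecting the successful outputs into a list $L$; since every $S$ with $|\widehat\phi(S)|\ge\eps$ has $p(S)\ge\eps^2$ and there are at most $\eps^{-2}$ such $S$ (Parseval), all of them land in $L$ with high probability. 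Using $\poly$ classical examples --- drawn directly, or by computational-basis measurements of $\rho_\mathcal{D}$, which return samples from $\mathcal{D}$ --- I would estimate $\widehat\phi(S)$ for each $S\in L\cup\{\emptyset\}$ to within $\eps/4$ and output the signed parity of least estimated error. If $\max_S|\widehat\phi(S)|\ge\eps$ then $S^\star\in L$ and the output has true error $\le\opt+\eps/2$; otherwise $\opt>\tfrac12-\tfrac\eps2$, so \emph{every} parity already has error $\le\opt+\eps$. The $2^{-n}$ distortion from Part~1 is negligible against the $\eps^2$ threshold.

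\emph{Part 3.} Write $\varphi=\tfrac12(1-\phi)$. For a $\pm1$-valued $t$-Fourier-sparse benchmark $g^\star$ of spectral support $A$ with $|A|\le t$, Cauchy--Schwarz gives $\sum_{S\in A}\widehat\phi(S)^2\ge\langle g^\star,\phi\rangle^2=(1-2\opt)^2$: $\phi$ concentrates $\Omega(1)$ Fourier weight on the (unknown) small set $A$. I would use Part~1 ($\poly$ repetitions) to collect the list $L$ of all $S$ with $|\widehat\phi(S)|\ge\theta$, $\theta=\poly(\eps,1/t)$, estimate those coefficients from $\poly$ classical examples to form $\tilde\phi_L=\sum_{S\in L}\tilde{\widehat\phi}(S)\chi_S$, and output the randomized hypothesis that labels an input $x$ by $1$ with probability $\tfrac12\bigl(1-\mathrm{clip}_{[-1,1]}(\tilde\phi_L(x))\bigr)$. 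A direct calculation shows this hypothesis has error $\tfrac12-\tfrac12\langle\mathrm{clip}_{[-1,1]}(\tilde\phi_L),\phi\rangle$. Since $L$ omits at most $t\theta^2$ of the Fourier weight of $\phi$ on $A$ (at most $t$ coefficients, each of magnitude $<\theta$), one gets $\langle\phi_L,\phi\rangle=\|\phi_L\|_2^2\ge(1-2\opt)^2-O(\eps)$ and hence error $\le\tfrac12-\tfrac12\bigl((1-2\opt)^2-O(\eps)\bigr)=2\opt-2\opt^2+O(\eps)\le 2\opt+\eps$; the factor $2$ is exactly the loss incurred in passing from an $L_2$-type guarantee on $\phi$ to a $0/1$ guarantee by (randomized) rounding.

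\emph{Main obstacle.} The conceptual heart is the variance computation in Part~1: naively one would hope the sampled distribution is exactly $\widehat\phi(z)^2$, but the independent random signs in $F_\mathcal{D}$ contribute a diagonal variance term, and the point --- the reason mixture-of-superpositions examples are useful --- is that this term is spread uniformly over the $2^n$ outcomes, hence inverse-exponentially small \emph{pointwise}, so it never hides a heavy Fourier coefficient. The technically most delicate step I expect is in Part~3: turning the sparse Fourier approximant into a genuine $[-1,1]$-valued (hence legitimate) hypothesis while keeping the bound at exactly $2\opt+\eps$ rather than the weaker $O(\sqrt{\opt})+\eps$ that naive truncation-and-clipping would give; this should require exploiting the structure of the spectrum returned by Fourier sampling (e.g.\ the coset/conditional-expectation structure it induces) instead of clipping a generic Fourier partial sum.
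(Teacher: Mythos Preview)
Your Parts~1 and~2 are correct and essentially match the paper. Part~1 is the paper's \Cref{theorem:agnostic-quantum-fourier-sampling}: apply Hadamards, postselect on the label qubit, and use linearity plus the independence in $F_\mathcal{D}$ to get $p(z)=\hat\phi(z)^2+2^{-n}(1-\mathbb{E}_x[\phi(x)^2])$; your $Z$-before-$H$ on the label qubit merely swaps which outcome counts as ``success.'' Part~2 is the paper's \Cref{corollary:agnostic-quantum-parity-learning}: Fourier-sample to collect candidate heavy strings, estimate each coefficient classically, output the best. (The paper packages the ``collect heavy strings'' step via the DKW inequality to get an $\ell_\infty$ approximation of the sampled distribution, but your direct frequency argument is equivalent. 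One cosmetic point: the paper's benchmark class is $\{\chi_s\}_s$, not $\{\pm\chi_s\}_s$, so the relevant quantity is $\max_s\hat\phi(s)$ rather than $\max_s|\hat\phi(s)|$; this changes nothing in the argument.)

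Part~3 has the right skeleton and the right arithmetic (in particular, your Cauchy--Schwarz bound $\sum_{S\in A}\hat\phi(S)^2\ge(1-2\opt)^2$ and the resulting $2\opt-2\opt^2+\varepsilon$ target are exactly what the paper obtains in the proof of \Cref{lemma:Fourier-sparse-learning-via-heaviest-Fourier-coefficients}). But the gap you flag at the clipping step is genuine: $\phi_L$ can leave $[-1,1]$ (already for $n=2$), and clipping can strictly decrease $\langle\mathrm{clip}(\phi_L),\phi\rangle$ below $\|\phi_L\|_2^2$, so your error identity $\tfrac12-\tfrac12\langle c,\phi\rangle$ no longer connects to the lower bound you proved. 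The paper's resolution is \emph{not} any coset or conditional-expectation structure of the sampled spectrum; it is a different (and simpler) randomization rule, due to \cite{blum1994weakly}: for an arbitrary real-valued $g$, set $h(x)=1$ with probability $p(x)=\tfrac{(1-g(x))^2}{2(1+g(x)^2)}$, which lies in $[0,1]$ for all $g(x)\in\mathbb{R}$ so no clipping is ever needed. A one-line computation (the paper's \Cref{lemma:misclassification-probability-bound-L2-probabilistic-hypothesis}) gives
\[
\mathbb{P}_{(x,b)\sim\mathcal{D};h}[b\neq h(x)]\ \le\ \tfrac12\,\mathbb{E}_{(x,b)}[(b-g(x))^2]\ =\ \tfrac12\bigl(1+\|g\|_2^2-2\langle g,\phi\rangle\bigr),
\]
and plugging in $g=\tilde\phi_L$ yields exactly the $\tfrac12(1-\|\phi_L\|_2^2)+O(\varepsilon)$ you were aiming for. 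With this single substitution your Part~3 argument goes through verbatim.
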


\Cref{theorem:main-result-agnostic-quantum-fourier-sampling-learning}, proved in \Cref{section:distributional-agnostic-quantum-learning}, constitutes the first general progress on distributional agnostic quantum learning w.r.t.~uniform input marginal. It achieves this by generalizing quantum Fourier sampling from the functional to the distributional setting (see \Cref{theorem:agnostic-quantum-fourier-sampling}). In proving \Cref{theorem:main-result-agnostic-quantum-fourier-sampling-learning}, we establish agnostic learning guarantees from Fourier spectrum approximation that, to the best of our knowledge, also improve upon the best known analogous classical result in terms of the achieved $\alpha$.
Moreover, we prove that, based on a version of the Goldreich-Levin/Kushilevitz-Mansour algorithm \cite{goldreich1989hard, kushilevitz1993learning}, agnostic parity and Fourier-sparse learning remain possible efficiently even in a weaker data access model of distributional agnostic quantum statistical queries, which we introduce as an extension of the classical statistical query model \cite{kearns1998efficient} and its functional quantum variant \cite{arunachalam2020qsq}.
In addition, we provide a variety of results establishing the feasibility of Fourier sampling, finding heavy Fourier coefficients, and agnostic learning in the functional setting, when given access to different types of quantum oracles. While many of these results follow from existing techniques used in the realizable PAC setting, we present them to provide a complete picture of the status quo in agnostic learning from quantum resources and to highlight open questions. These additional results, as well as the main results from Theorem~\ref{theorem:main-result-agnostic-quantum-fourier-sampling-learning}, are summarized in \Cref{tab:results_table}. 

\begin{table}[h!]
\centering
\begin{adjustwidth}{-.4in}{-.5in} 
\begin{tabular}{ |c|c||c|c|c|c| } 
\hline
\multirow{4}{*}{} & \multirow{4}{*}{Oracle type} & \multicolumn{4}{c|}{Problem type} \\ \cline{3-6}
 &  & $\begin{matrix}
 \text{Fourier}\\
 \text{sampling}
 \end{matrix}$  & $\begin{matrix}
 \text{Heavy Fourier}\\
 \text{coefficients}\\
 \text{(à la GL/KM)}
 \end{matrix}$   & $\begin{matrix}
 \text{1-agnostic}\\
 \text{parity learning}
 \end{matrix}$  & $\begin{matrix}
 \text{2-agnostic}\\
 \text{Fourier-sparse}\\
 \text{learning}
 \end{matrix}$  \\ 
\hline
\multirow{4}{*}{Functional} & $\begin{matrix}
\text{superposition examples} \\
\verteq \\
\text{mixture-of-superpositions} \\
\end{matrix}$  & 
$\begin{matrix}
\text{\checkmark}\\\text{(\Cref{lemma:functional-quantum-Fourier-sampling})}
\end{matrix}$  
& 
$\begin{matrix}
\text{\checkmark}\\\text{(\cref{corollary:quantum-approximation-fourier-spectrum})}
\end{matrix}$   & 
$\begin{matrix}
\text{\checkmark}\\\text{(\Cref{corollary:quantum-functional-agnostic-parity-learning})}
\end{matrix}$   &
$\begin{matrix}
\text{\checkmark}\\\text{(\Cref{corollary:quantum-functional-agnostic-fourier-sparse-learning})}
\end{matrix}$  \\ \cline{2-6}
& $\begin{matrix}
\text{superposition QSQ} \\
\verteq \\
\text{mixture-of-superpositions QSQ} \\
\end{matrix}$  &
$\begin{matrix}
 \text{Probably \xmark}\\\text{(\Cref{remark:qsq-no-efficient-fourier-sampling})}
\end{matrix}$  & 
$\begin{matrix}
\text{\checkmark}\\\text{(\Cref{theorem:functional-agnostic-qsq-GL})}
\end{matrix}$  &
$\begin{matrix}
\text{\checkmark}\\\text{(\Cref{subsection:functional-qsq-learning})}
\end{matrix}$   & 
$\begin{matrix}
\text{\checkmark}\\\text{(\Cref{subsection:functional-qsq-learning})}
\end{matrix}$  \\ \hline
\multirow{6}{*}{Distributional} & superposition examples  & ?  & ? & ? & ? \\\cline{2-6}
 & superposition QSQ  & $\begin{matrix}
 \text{Probably \xmark}\\\text{(\Cref{remark:qsq-no-efficient-fourier-sampling})}
\end{matrix}$  & ? & ? & ? \\\cline{2-6}
 & mixture-of-superpositions  &
 $\begin{matrix}
\text{\checkmark}\\\text{(\Cref{theorem:agnostic-quantum-fourier-sampling})}
\end{matrix}$   &
 $\begin{matrix}
\text{\checkmark}\\\text{( \Cref{corollary:distributional-agnostic-quantum-approximation-fourier-spectrum})}
\end{matrix}$ & 
 $\begin{matrix}
\text{\checkmark}\\\text{(\Cref{corollary:agnostic-quantum-parity-learning})}
\end{matrix}$ &
 $\begin{matrix}
\text{\checkmark}\\\text{(\Cref{corollary:agnostic-quantum-fourier-sparse-learning})}
\end{matrix}$ \\\cline{2-6}
 & mixture-of-superpositions QSQ  & 
 $\begin{matrix}
 \text{Probably \xmark}\\\text{(\Cref{remark:qsq-no-efficient-fourier-sampling})}
\end{matrix}$ & 
 $\begin{matrix}
\text{\checkmark}\\\text{(\Cref{theorem:distributional-agnostic-qsq-GL})}
\end{matrix}$  &
 $\begin{matrix}
\text{\checkmark}\\\text{(\Cref{corollary:agnostic-qsq-learning})}
\end{matrix}$  & 
 $\begin{matrix}
\text{\checkmark}\\\text{(\Cref{corollary:agnostic-qsq-learning})}
\end{matrix}$\\
\hline
\end{tabular}
\end{adjustwidth}
\caption{\textbf{Quantum oracles and the feasibility of agnostic learning:} An overview of the different quantum oracles studied in this work, and the protocols that they allow for. For all problems, in both the functional and distributional case, we assume a uniform input marginal. A check mark (\checkmark) indicates an efficient algorithm, ``Probably \xmark" indicates that there exists evidence against the existence of an efficient algorithm, and a question mark indicates an open question. Notably, the mixture-of-superpositions examples that we introduce are the only form of quantum data access currently known to enable quantum Fourier sampling and agnostic learning in the distributional setting (as highlighted in \Cref{theorem:main-result-agnostic-quantum-fourier-sampling-learning}).}
\label{tab:results_table}
\end{table}

In our second main result, we identify an agnostic learning problem that a classical learner cannot solve on their own, but that becomes feasible for a classical verifier interacting with a quantum prover who has access to mixture-of-superpositions examples. 

\begin{theorem}[Verifying distributional agnostic quantum learning -- Informal]\label{theorem:main-result-verification}
    There is a class $\mathfrak{D}$ of probability distributions over $\{0,1\}^n\times\{0,1\}$ with (known) uniform marginal over $\{0,1\}^n$ such that:
    \begin{itemize}
        \item[(a)] Distributional $1$-agnostic parity learning is classically hard from SQs or random examples, even if the unknown distribution is promised to lie in  $\mathfrak{D}$.
        \item[(b)] When promised that the unknown distribution lies in $\mathfrak{D}$, there is an efficient interactive verification procedure that allows a classical verifier, with SQ or random example access, to verify a distributional $1$-agnostic quantum parity learner, who has mixture-of-superpositions example or QSQ access. 
        \item[(c)] When promised that the unknown distribution lies in $\mathfrak{D}$, there is an efficient interactive verification procedure that allows a classical verifier, with SQ or random example access, to verify a distributional $2$-agnostic quantum Fourier-sparse learner, who has mixture-of-superpositions example or QSQ access.
    \end{itemize}
\end{theorem}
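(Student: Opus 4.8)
The plan is to take $\mathfrak{D}$ to be a family of \emph{noisy parities}. Fix small constants $0<\gamma\le\gamma'<\tfrac14$; for every $s\in\{0,1\}^n$ and every noise rate $\eta\in[\gamma,\gamma']$ let $\mathcal{D}_{s,\eta}=(\mathcal{U}_n,\varphi_{s,\eta})$ with $\varphi_{s,\eta}(x)=\eta$ if $s\cdot x=0$ and $\varphi_{s,\eta}(x)=1-\eta$ otherwise, and set $\mathfrak{D}=\{\mathcal{D}_{s,\eta}\}$. The key structural fact is that the phase function of $\mathcal{D}_{s,\eta}$ satisfies $\phi(x)=1-2\varphi_{s,\eta}(x)=(1-2\eta)(-1)^{s\cdot x}$, so $\phi$ has a \emph{single} nonzero Fourier coefficient, $\hat\phi(s)=1-2\eta\in[1-2\gamma',1-2\gamma]$; hence $\Pr_{(x,y)\sim\mathcal{D}_{s,\eta}}[\chi_t(x)\neq y]=\tfrac12(1-\hat\phi(t))$ equals $\eta$ for $t=s$ and $\tfrac12$ for every $t\neq s$, the Bayes-optimal predictor is $\chi_s$, and therefore $\opt=\eta$ whether we benchmark against parities or against Fourier-sparse functions. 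For part~(a), choosing $\veps<\tfrac12-\gamma'$ forces any $1$-agnostic parity learner on $\mathcal{D}_{s,\eta}$ to output $\chi_s$, i.e.\ to identify $s$; from statistical queries this needs $2^{\Omega(n)}$ queries of inverse-polynomial tolerance, because a query to $\mathcal{D}_{s,\eta}$ returns an affine function of $\Ex_x[A(x)]$ and $\hat B(s)$ for bounded $A,B$ read off from the query with $\sum_s\hat B(s)^2=O(1)$, so $s$ is hidden among $2^n$ near-orthogonal candidates --- the classical parity SQ lower bound, which the noise does not affect --- while from random examples recovering $s$ is exactly Learning Parity with Noise, hard under the LPN assumption. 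The same reasoning makes $2$-agnostic Fourier-sparse learning on $\mathfrak{D}$ classically hard: a $2$-agnostic Fourier-sparse hypothesis for $\mathcal{D}_{s,\eta}$ must agree with $\chi_s$ on all but an $O(\eta+\veps)$ fraction of inputs and hence exposes $s$ as its (for small $\gamma',\veps$, unique) heaviest Fourier character.

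For part~(b), the honest prover runs the efficient proper $1$-agnostic quantum parity learner of item~2 of \Cref{theorem:main-result-agnostic-quantum-fourier-sampling-learning} --- on mixture-of-superpositions examples, or the QSQ variant --- and sends its output string $s'$. The verifier estimates $\hat\phi(s')=\Ex_{(x,y)\sim\mathcal{D}}[(-1)^{s'\cdot x+y}]$ using a single statistical query, equivalently the empirical mean over $\poly(n)$ random examples, to tolerance $\tau$, and accepts --- outputting $\chi_{s'}$ --- iff the estimate exceeds a fixed threshold $\theta\in(\tau,\,1-2\gamma'-\tau)$. Completeness: on $\mathfrak{D}$ an honest output has error $\le\eta+\veps<\tfrac12$, which forces $s'=s$ and hence $\hat\phi(s')=1-2\eta>\theta+\tau$. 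Soundness is where the promise earns its keep: if the verifier accepts then $\hat\phi(s')\ge\theta-\tau>0$, and since \emph{every} distribution in $\mathfrak{D}$ has a unique nonzero Fourier coefficient, $s'$ must be that coefficient, so $\chi_{s'}$ is the \emph{optimal} parity and the output is trivially $1$-agnostic --- whatever the (possibly dishonest) prover did.

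Part~(c) is analogous. The honest prover runs the improper $2$-agnostic quantum Fourier-sparse learner of item~3 of \Cref{theorem:main-result-agnostic-quantum-fourier-sampling-learning}, which along the way locates the unique heavy Fourier character $s$, and sends $s$ together with an efficiently evaluable description of the hypothesis $h$. The verifier estimates $\hat\phi(s)$ and $\err(h)=\Pr_{(x,y)\sim\mathcal{D}}[h(x)\neq y]$ from $\poly(n)$ statistical queries or random examples, and accepts --- outputting $h$ --- iff $\hat\phi(s)>\theta$ and $\err(h)\le 1-\hat\phi(s)+\veps+O(\tau)$. By the promise, $\hat\phi(s)>\theta>0$ again forces $s$ to be the unique nonzero Fourier coefficient, so $\opt=\tfrac12(1-\hat\phi(s))$ and the acceptance test reads precisely $\err(h)\le 2\,\opt+\veps+O(\tau)$; completeness follows because a genuine $2$-agnostic $h$ has $\err(h)\le 2\,\opt+\veps$ and the honest prover supplies the correct $s$. (Alternatively the verifier can recover $s$ itself by running classical Goldreich--Levin/Kushilevitz--Mansour on the explicitly given $h$, removing the need for the prover to transmit $s$.)

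The delicate point --- really the main idea --- is the choice of $\mathfrak{D}$. Verifying agnostic learning amounts to certifying a \emph{lower} bound on $\opt$ (``no hypothesis beats the prover's by more than $\veps$''), which in general is as hard as agnostic learning itself and is exactly what forces the multi-round protocols of \cite{goldwasser2021interactive} in the unrestricted setting. The promise class must be simultaneously rich enough to inherit classical SQ/LPN hardness, amenable to \Cref{theorem:main-result-agnostic-quantum-fourier-sampling-learning} so that an honest quantum prover can succeed, and structured enough that $\opt$ is \emph{self-certifying} from a constant amount of classical data plus a one-line prover message --- and noisy parities thread this needle precisely because their optimum is pinned down by a single heavy Fourier coefficient, which a classical verifier can detect with one statistical query but which a classical learner provably cannot locate. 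Everything else --- choosing $\veps,\gamma,\gamma',\theta,\tau$ consistently, amplifying the success probabilities from \Cref{theorem:main-result-agnostic-quantum-fourier-sampling-learning}, and quoting the standard SQ-dimension and LPN lower bounds --- should be routine.
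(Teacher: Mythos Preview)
Your proposal is correct and gives a clean proof of the informal theorem, but it takes a narrower route than the paper. You choose $\mathfrak{D}$ to be exactly the noisy parities, where $\phi$ has a \emph{single} nonzero Fourier coefficient; this lets the verifier certify optimality with one statistical query, since any $s'$ with $\hat\phi(s')>0$ must already be the unique heavy coefficient. The paper instead works with the larger class $\mathfrak{D}_{\mathcal{U}_n;\ge\vartheta}\cap\mathfrak{D}_{\mathcal{U}_n;[a^2,b^2]}$ of distributions whose $\phi$ has no small nonzero Fourier coefficients and whose total Fourier weight $\sum_s\hat\phi(s)^2$ is a priori known to lie in $[a^2,b^2]$. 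Its protocol has the prover send a \emph{list} of all heavy coefficients; the verifier estimates each individually and accepts only if their accumulated squared weight is close to the known total, so that any coefficient omitted by a dishonest prover is forced to be light. Noisy parities sit inside this class (with $a=b=1-2\eta$), and on them the paper's list collapses to your single string, so your protocol is effectively the paper's specialised to the minimal class that suffices for the informal statement.

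What each approach buys: yours is shorter, needs only one SQ, and makes soundness immediate from the single-coefficient structure; in fact part~(c) becomes a free corollary of part~(b), since $\chi_s$ is already the Bayes-optimal (hence $1$-agnostic, hence $2$-agnostic) Fourier-sparse hypothesis. The paper's approach yields strictly more general formal theorems (verification over genuinely multi-coefficient distributions, the functional/noisy/distributional hierarchy of \Cref{subsection:verification-functional}), and naturally motivates the sharpness results (\Cref{theorem:limitation-improvement-distributional-agnostic-verification}) showing that the gap $b^2-a^2$ cannot be widened without the verifier paying $\Omega(n)$ samples --- a question that does not even arise in your setting. Both are valid proofs of the theorem as stated.
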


\Cref{theorem:main-result-verification}, which collects the statements of \Cref{theorem:functional-agnostic-quantum-parity-verification-qsq-no-small-non-zero-Fourier-coeff,theorem:functional-agnostic-quantum-fourier-sparse-verification-qsq-no-small-non-zero-Fourier-coeff,theorem:distributional-agnostic-qsq-parity-verification-no-small-non-zero-Fourier-coeff,theorem:distributional-agnostic-quantum-parity-verification-no-small-non-zero-Fourier-coeff,theorem:distributional-agnostic-qsq-fourier-sparse-verification-no-small-non-zero-Fourier-coeff,theorem:distributional-agnostic-quantum-fourier-sparse-verification-no-small-non-zero-Fourier-coeff}, shows that our new notion of quantum data not only enables distributional agnostic quantum learning but does so in a classically efficiently verifiable manner. 
Thereby, \Cref{theorem:main-result-verification} establishes a separation between what a classical learner can achieve on their own and what they can achieve when interacting with an untrusted quantum prover. 
This separation is unconditional for SQ access and conditional on the hardness of Learning Parity with Noise (LPN) for random example access.
Moreover, we show that the distribution class $\mathfrak{D}$ used in  \Cref{theorem:main-result-verification} cannot be meaningfully enlarged without significant losses in the efficiency of the classical verifier.
All of this is proved in \Cref{section:verification}.

\Cref{theorem:main-result-agnostic-quantum-fourier-sampling-learning,theorem:main-result-verification} show that mixture-of-superpositions examples serve as a powerful resource that can change the learning landscape in a positive way, by allowing us to solve learning problems for which we have so far been lacking quantum learners.
Crucially, however, our proposed model of quantum data access is not all-powerful: Just like their established superposition counterpart, mixture-of-superpositions examples do not allow for relevant sample complexity advantages over classical learners when considering distribution-independent agnostic learning.

\begin{theorem}[Sample Complexity Lower Bound for Distribution-Independent Distributional Agnostic Quantum Learning -- Informal Version]\label{theorem:main-result-distribution-independent-lower-bound} 
    The quantum sample complexity of distribution-independent distributional agnostic learning a function class $\mathcal{F}\subseteq\{0,1\}^{\{0,1\}^n}$ from mixture-of-superpositions examples does not improve upon the classical sample complexity, up to logarithmic factors.
\end{theorem}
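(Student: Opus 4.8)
The plan is to prove the matching lower bound: for every $\mathcal{F}\subseteq\{0,1\}^{\{0,1\}^n}$ with $d\defeq\mathrm{VC}(\mathcal{F})$, any algorithm that distribution-independently, distributionally agnostically learns $\mathcal{F}$ to accuracy $\varepsilon$ from copies of the mixture-of-superpositions state (the natural extension of \Cref{definition:mixture-of-superpositions-quantum-example} to arbitrary input marginals) must use $\tilde\Omega(d/\varepsilon^2)$ copies; since a computational-basis measurement of a copy of $\rho_{\mathcal{D}}$ reproduces a classical example, this matches the classical sample complexity $\tilde\Theta\bigl((d+\log(1/\delta))/\varepsilon^2\bigr)$ up to logarithmic factors. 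I would use the hard family underlying the classical VC lower bound: fix a set $S=\{s_1,\dots,s_d\}\subseteq\{0,1\}^n$ shattered by $\mathcal{F}$, and for $a\in\{0,1\}^d$ let $\mathcal{D}_a$ have marginal the uniform distribution on $S$ and conditional label expectation $\varphi_a(s_i)=\tfrac12+\varepsilon'(2a_i-1)$, for a suitable $\varepsilon'=\Theta(\varepsilon)$ with $\varepsilon'<1/2$ (the regime of $\varepsilon$ bounded away from $1/2$ is treated separately, below). Then $\opt_{\mathcal{D}_a}=\tfrac12-\varepsilon'$, attained by the $f_a\in\mathcal{F}$ with $f_a|_S=a$, and a direct computation gives, for an arbitrary (possibly improper, randomized) hypothesis $h$,
\[
\err_{\mathcal{D}_a}(h)-\opt_{\mathcal{D}_a}=\frac{2\varepsilon'}{d}\sum_{i=1}^{d}\bigl|\mathbb{P}[h(s_i)=1]-a_i\bigr| ,
\]
so that an agnostic learner must, with high probability, output a hypothesis that agrees with $a$ on a constant fraction of $S$. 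Classically, $\Omega(d/\varepsilon^2)$ then follows from Assouad's lemma applied to the product structure of $\{\mathcal{D}_a\}_a$ over the $d$ coordinates; the quantum argument will follow the same scheme.

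The quantum heart of the proof is the single estimate $1-F\bigl(\rho_{\mathcal{D}_a},\rho_{\mathcal{D}_{a\oplus e_j}}\bigr)=O(\varepsilon^2/d)$ for all neighbours $a$ and $a\oplus e_j$, stating that one copy of the quantum state carries no more per-copy distinguishing information about the bit $a_j$ than a classical example does (the squared Hellinger distance between $\mathcal{D}_a$ and $\mathcal{D}_{a\oplus e_j}$ being $\Theta(\varepsilon^2/d)$). To establish it I would first compute $\rho_{\mathcal{D}_a}$ in closed form: using that $f(x)$ is independent across $x$ under the induced distribution $F_{\mathcal{D}_a}$, together with the elementary identity $\bigl((1-\varphi)\ketbra{0}{0}+\varphi\ketbra{1}{1}\bigr)-\ketbra{w}{w}=2\varphi(1-\varphi)\ketbra{-}{-}$ for $\ket{w}\defeq(1-\varphi)\ket{0}+\varphi\ket{1}$ and $\ket{-}\defeq(\ket{0}-\ket{1})/\sqrt{2}$, one obtains
\[
\rho_{\mathcal{D}_a}=\ketbra{\Xi_a}{\Xi_a}+\bigl(\tfrac12-2(\varepsilon')^2\bigr)\,\Pi_S\otimes\ketbra{-}{-},\qquad \ket{\Xi_a}\defeq\frac1{\sqrt d}\sum_{i=1}^{d}\ket{s_i}\bigl((1-\varphi_a(s_i))\ket{0}+\varphi_a(s_i)\ket{1}\bigr),
\]
where $\Pi_S$ is the maximally mixed state supported on $S$ and, crucially, the second summand does not depend on $a$. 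One then checks that $\|\Xi_a\|^2=\tfrac12+2(\varepsilon')^2$ is independent of $a$ and that the normalized signal states $\hat\Xi_a\defeq\Xi_a/\|\Xi_a\|$ satisfy $\lvert\langle\hat\Xi_a|\hat\Xi_{a\oplus e_j}\rangle\rvert=1-O(\varepsilon^2/d)$. Writing $\rho_{\mathcal{D}_a}$ as the convex combination $\|\Xi_a\|^2\,\ketbra{\hat\Xi_a}{\hat\Xi_a}+\bigl(\tfrac12-2(\varepsilon')^2\bigr)\,\tau$ with the common state $\tau\defeq\Pi_S\otimes\ketbra{-}{-}$ of weight $\approx\tfrac12$, and invoking joint concavity of the fidelity, gives $F\bigl(\rho_{\mathcal{D}_a},\rho_{\mathcal{D}_{a\oplus e_j}}\bigr)\geq\|\Xi_a\|^2\,\lvert\langle\hat\Xi_a|\hat\Xi_{a\oplus e_j}\rangle\rvert+\bigl(\tfrac12-2(\varepsilon')^2\bigr)=1-O(\varepsilon^2/d)$.

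I expect this fidelity bound to be the main obstacle, because the obvious route through the trace distance is too weak: one computes $\|\rho_{\mathcal{D}_a}-\rho_{\mathcal{D}_{a\oplus e_j}}\|_1=\Theta(\varepsilon/\sqrt d)$, which would only yield a $\Omega(\sqrt d/\varepsilon)$ lower bound. It is precisely the large, $a$-independent ``noise'' component $\tau$ of $\rho_{\mathcal{D}_a}$ that, via joint concavity of fidelity, upgrades this quadratically to $\varepsilon^2/d$; identifying this component — and, more broadly, getting a handle on the spectral structure of mixture-of-superpositions states — is the genuinely new ingredient, while the remainder is standard.

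Finally, I would assemble the pieces in the usual way. Any learner applies a POVM to $\rho_{\mathcal{D}_a}^{\otimes m}$ and outputs a hypothesis; by contractivity of the trace distance under measurement and its convexity, the coordinate-wise distinguishing advantage entering Assouad's lemma is at most $\tfrac12\max_{a,j}\|\rho_{\mathcal{D}_a}^{\otimes m}-\rho_{\mathcal{D}_{a\oplus e_j}}^{\otimes m}\|_1$, which by multiplicativity of the fidelity under tensor powers and the Fuchs--van de Graaf inequality is at most $\sqrt{1-F\bigl(\rho_{\mathcal{D}_a},\rho_{\mathcal{D}_{a\oplus e_j}}\bigr)^{2m}}=O(\varepsilon\sqrt{m/d})$. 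The (quantum) Assouad lemma then forces $\max_a\mathbb{E}\bigl[\err_{\mathcal{D}_a}(h)-\opt_{\mathcal{D}_a}\bigr]\geq\varepsilon'\bigl(1-O(\varepsilon\sqrt{m/d})\bigr)$, so that an agnostic learner with constant confidence needs $m=\Omega(d/\varepsilon^2)$. The additive $\log(1/\delta)/\varepsilon^2$ term is recovered by a separate, standard two-hypothesis quantum state-discrimination argument: two distributions differing only by a bias of $\varepsilon$ on a single input have $1-F=\Theta(\varepsilon^2)$, and distinguishing them with error $\delta$ requires $\Omega(\log(1/\delta)/\varepsilon^2)$ copies. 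Finally, for $\varepsilon$ bounded away from $1/2$ the required $\Omega(\mathrm{VC}(\mathcal{F}))$ bound is inherited verbatim from the known realizable lower bound, since on realizable instances $F_{\mathcal{D}}$ is a point mass and $\rho_{\mathcal{D}}$ coincides with the ordinary quantum superposition example state $\ketbra{\psi_{\mathcal{D}}}{\psi_{\mathcal{D}}}$.
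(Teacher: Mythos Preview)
Your proposal is correct and follows a genuinely different route from the paper's own proof. The paper adapts the Fano/mutual-information strategy of \cite{arunachalam2018optimal}: it builds the CQ state $\rho=\tfrac{1}{2^d}\sum_a\ket{a}\bra{a}\otimes\rho_{\mathcal{D}_a}^{\otimes m}$, lower bounds $I(A;B_1,\ldots,B_m)\geq\Omega(d)$ from the learning requirement, upper bounds it by $m\cdot I(A;B_1)$, and then computes $I(A;B_1)=O(\varepsilon^2\log d)$ by explicitly diagonalising both $\rho_{\mathcal{D}_a}$ and the average state $\bar\rho$. This yields $m\geq\Omega\bigl(d/(\varepsilon^2\log d)\bigr)$, hence the ``up to logarithmic factors'' caveat in the theorem statement. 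You instead run Assouad's lemma on pairwise neighbours: your key observation is the decomposition $\rho_{\mathcal{D}_a}=\ket{\Xi_a}\bra{\Xi_a}+\bigl(\tfrac12-2(\varepsilon')^2\bigr)\tau$ with an $a$-independent $\tau$, which via strong concavity of the fidelity gives $1-F\bigl(\rho_{\mathcal{D}_a},\rho_{\mathcal{D}_{a\oplus e_j}}\bigr)=O(\varepsilon^2/d)$ directly, and then multiplicativity of $F$ under tensor powers plus Fuchs--van de Graaf closes the argument. Both proofs ultimately exploit the same spectral structure of $\rho_{\mathcal{D}_a}$ (your signal--plus--noise splitting is essentially the coarse version of the paper's full eigendecomposition), but your use of it is more economical: you avoid the somewhat tedious explicit diagonalisation and series expansion in $\varepsilon$, and as a bonus you obtain $m\geq\Omega(d/\varepsilon^2)$ without the $\log d$ loss. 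The $\log(1/\delta)/\varepsilon^2$ part is handled identically in both via a two-hypothesis fidelity argument.
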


Classically, it is well-established that the sample complexity of distribution-independent distributional agnostic learning $\mathcal{F}$ behaves as $\Theta\left(\tfrac{\operatorname{VCdim}(\mathcal{F}) + \log(\nicefrac{1}{\delta})}{\varepsilon^2}\right)$ \cite{vapnik1971uniform, blumer1989learnability, talagrand1994sharper}. 
Here, the VC-dimension $\operatorname{VCdim}(\mathcal{F})$ is a combinatorial complexity measure for the function class $\mathcal{F}$ \cite{vapnik1971uniform}.
While we prove a quantum sample complexity lower bound that matches the classical upper bound up to factors logarithmic in $\operatorname{VCdim}(\mathcal{F})$, we in fact conjecture that quantum and classical sample complexities for distribution-independent learning coincide up to constant factors.
In addition, we show that also the optimal sample complexity lower bound for verifying distribution-independent agnostic classical learning from \cite{mutreja2022pac-verification} carries over to agnostic quantum learning with mixture-of-superpositions examples.
Thus, whereas \Cref{theorem:main-result-agnostic-quantum-fourier-sampling-learning} and \Cref{theorem:main-result-verification} exhibit the power of our newly proposed quantum resource, \Cref{theorem:main-result-distribution-independent-lower-bound} demonstrates that, from an information-theoretic perspective, mixture-of-superpositions examples do not change the landscape of distribution-independent learning.
Our detailed results for distribution-independent learning and verification can be found in \Cref{section:distribution-independent-quantum-limitations}.

\subsection{Related Work}\label{ss:related_work}

\paragraph{Verification and testing of learning:}
Learning problems are often studied in the framework of computational learning theory \cite{valiant1984theory}.
Our fundamental motivation is to understand the extent to which classical clients can efficiently delegate learning tasks to untrusted quantum servers.
A formal framework for reasoning about such settings -- i.e. the delegation of learning tasks to servers with more or better quality data via interactive proofs for PAC learning -- was only recently introduced in the classical case by \cite{goldwasser2020interactive-full-version,goldwasser2021interactive}.
They considered clients (verifiers) with random example access interacting with untrusted servers (provers) who have the ability to make membership queries.  Since then, this work has been extended to both statistical query learning algorithms~\cite{mutreja2022pac-verification} and the setting of limited communication complexity~\cite{oconnor2021delegating} between prover and verifier. Our work initiates the study of the natural setting in which the untrusted prover is quantum, with access to a quantum data oracle. 
As discussed in~\cite{goldwasser2021interactive}, interactive proofs for the verification of PAC learning are closely related to interactive proofs for distribution testing~\cite{chiesa2018proofs, herman2022verifying}, with the important difference that in the learning setting we insist on efficient honest provers. Additionally, we have already alluded to the main difficulty in developing interactive proofs for PAC learning: finding certificates for the quality of a hypothesis relative to the optimal achievable performance in the agnostic learning setting.
Addressing a similar issue, but from a different perspective, \cite{rubinfeld2022testing} recently initiated the study of \textit{testable} agnostic learning, in which the learning algorithm combines with a distribution testing algorithm, and is only required to succeed when the data passes the testing algorithm. This has since attracted significant attention and sparked the development of a variety of testable learning algorithms for a wide range of model classes~\cite{gollakota2022momentmatching, gollakota2023efficient,gollakota2023testerlearners,diakonikolas2023efficient}.

\paragraph{Fourier-based and agnostic learning:}
As discussed in \Cref{ss:techniques}, both our learning and verification algorithms rely heavily on the use of Fourier analytic tools. In this sense, our results and techniques extend a long line of work on Fourier-based learning algorithms, pioneered by the Goldreich-Levin/Kushilevitz-Mansour algorithm \cite{goldreich1989hard, kushilevitz1993learning} for learning Fourier-sparse functions, as well as the low-degree algorithm \cite{linial1993constant}, which has only recently been significantly improved~\cite{eskenazis2022learning}. 
Additionally, there is also a long history of work aimed at understanding the complexity of classical agnostic learning and its relation to standard realizable PAC learning~\cite{gopalan2008agnostically, kalai2008agnostic, kanade2009potential,feldman2009agnostic, feldman2009power, feldman2010agnostic-boosting}. 
Of particular interest from the perspective of our work is \cite{gopalan2008agnostically}, who use the ``distribution over functions" interpretation of a distributional oracle -- which is the starting point for the formulation of our quantum \textit{mixture-of-superpositions} oracle -- to provide reductions from distributional agnostic to functional agnostic learning. Additionally, we note that our work touches on the relatively underdeveloped setting of \textit{improper} agnostic learning, whose foundations have very recently been developed under the name of \textit{comparative learning}~\cite{hu2022comparative}. Finally, we remark that agnostic learning is of significant cryptographic relevance. Indeed, both the LWE and LPN assumptions, of key importance for a wide variety of cryptographic protocols, are fundamentally assumptions on the hardness of specific agnostic learning problems~\cite{regev2009lattices,pietrzak2012cryptography}. 

\paragraph{Quantum learning theory:} 
The field of quantum learning theory was initiated by \cite{bshouty1998learning}, who formulated the notion of \textit{quantum examples} and provided an efficient quantum algorithm for learning DNFs with respect to the uniform distribution from such examples. 
Since then, as surveyed in \cite{arunachalam2017survey}, there has been a variety of work in multiple directions. One series of works has developed the foundations of quantum learning theory, providing characterizations of the quantum query complexity for PAC and agnostic learning, which culminated in a proof that quantum examples cannot offer more than a polynomial advantage in query complexity in the distribution-independent setting~\cite{atici2005improved, zhang2010improved, arunachalam2018optimal}. Simultaneously, several works focused on the further development of explicit quantum algorithms for learning from quantum examples in the distribution-dependent setting. Examples include an improved quantum procedure for learning DNFs~\cite{jackson2002quantum}, a quantum learning algorithm for learning juntas~\cite{atici2007quantum}, as well as first quantum algorithms for learning with respect to non-uniform distributions~\cite{kanade2019learning, caro2020quantum}. 
Another series of works has focused on broadening the scope of quantum learning theory, introducing the notions of quantum membership queries~\cite{servedio2004equivalences, montanaro2012quantum, arunachalam2021twonewresults}, quantum statistical queries~\cite{arunachalam2020qsq}, and quantum distribution learning~\cite{sweke2021quantum}. 
Especially closely related to our work are previous efforts to develop quantum learning algorithms in the non-realizable setting. In particular, a variety of quantum algorithms have been given for learning parities with respect to various notions of \textit{noisy} quantum examples~\cite{cross2015quantum, grilo2019learning, caro2020quantum}, and recently \cite{bera2022efficient} provided a \textit{functional} agnostic quantum learning algorithm for decision trees. 
Against this backdrop, our work makes a variety of contributions. Firstly, we broaden the scope of quantum learning theory through both the introduction of the mixture-of-superpositions quantum example for agnostic learning, as well as the initiation of delegated quantum learning. Additionally, by using the mixture-of-superpositions oracle, we give the first efficient \textit{distributional} quantum agnostic learning algorithms. We achieve this by developing the toolbox for quantum Fourier sampling~\cite{bernstein1997quantum}, which is the key subroutine underlying quantum learning algorithms.

\paragraph{Randomized quantum oracles:}
One of the primary conceptual contributions of our work is to expose the impact of different quantum oracles on the quantum query complexity of agnostic learning.  The mixture-of-superpositions oracle that we introduce is similar in spirit to a variety of ``non-standard" randomized quantum oracles that have recently been introduced and used to enrich and explore the landscape of quantum complexity theory more broadly. More specifically, similar randomized quantum oracles have recently been used both to better understand the impact of oracle design on the quantum query complexity of a variety of testing problems and to provide oracle separations between $\mathsf{QMA}$ and $\mathsf{QCMA}$, as partial progress towards the long-standing goal of separating these classes~\cite{harrow2014uselessness,fefferman2018quantum-vs-classical,natarajan2022distribution,bassirian2022power}.
 
\paragraph{Verification of quantum computation:}
This work initiates and studies the question of whether and to which extent it is possible for classical clients to efficiently delegate learning problems to untrusted quantum servers. 
In particular we ask whether there exists efficient interactive protocols via which efficient classical clients, with access to one type of data, can verify the hypothesis provided by an efficient quantum learning algorithm with access to a quantum data oracle. 
This is a learning-theoretic analogue to a long line of work aimed at providing protocols via which efficient classical verifiers ($\mathsf{BPP}$ machines)  can verify the results of efficient quantum provers ($\mathsf{BQP}$ machines) \cite{gheorghiu2019verification}, which culminated in Mahadev's breakthrough verification protocol~\cite{Mahadev}. We note, however, that the relation between verifying computation and verifying learning is non-trivial, as discussed in \cite{goldwasser2021interactive}.
Additionally, there is a large body of work aimed at understanding the extent to which classical clients can delegate computations to quantum servers in a way that ensures the \textit{privacy} of the clients inputs, outputs and desired computation~\cite{Broadbent_2009, fitzsimons2017private}. While also similar in spirit to our work in some ways, we do not enforce any notion of privacy. However, we note that \cite{canetti2021covert} recently introduced the notion of \textit{covert} learning, a classical framework for exploring the possibility of private delegation of learning.

\subsection{Techniques and Proof Overview}\label{ss:techniques}

\paragraph{Distributional agnostic quantum learning:}
It is not known how to use the conventional superposition quantum examples $\ket{\psi_{\mathcal{D}}}$ to speed up agnostic learning in the distributional setting. The key advantage of our newly introduced mixture-of-superpositions quantum examples $\rho_\mathcal{D} = \mathbb{E}_{f\sim F_\mathcal{D}} \left[ \ket{\psi_{(\mathcal{U}_n, f)}}\bra{\psi_{(\mathcal{U}_n, f)}} \right]$ is that they enable approximate quantum Fourier sampling in the distributional setting under uniform input marginal. In particular, to achieve this approximate Fourier sampling, we use the same simple, standard quantum algorithm that is known to work in the realizable setting: applying a layer of single-qubit Hadamard gates to a single copy of $\rho_\mathcal{D}$ followed by a measurement in the computational basis and post-selecting on the outcome 1 in the last qubit. 
In \Cref{subsection:distributional-agnostic-quantum-fourier-sampling}, we show that, with post-selection probability $\nicefrac{1}{2}$, this procedure results in sampling strings from $\{0,1\}^n$ according to the probability distribution
\begin{equation}
    \mathrm{Pr} (s) = \frac{1}{2^n}\left(1 - \mathbb{E}_{x\sim\mathcal{U}_n}[(\phi(x))^2]\right) + (\hat{\phi}(s))^2 . \label{eq:approximate_Fourier_sampling}
\end{equation}
Here, $\phi = 1- 2\varphi$ is the conditional $\{-1,1\}$-label expectation whose Fourier spectrum $\{\hat{\phi}(s)\}_s$ we are interested in. Hence, by \Cref{eq:approximate_Fourier_sampling}  we can sample strings $s \in \{0,1\}^n$ essentially with probability proportional to the squared Fourier weight $(\hat{\phi}(s))^2$ up to an exponentially small correction. By the Dvoretzky-Kiefer-Wolfowitz theorem \cite{dvoretzky1956asymptotic, massart1990tight}, this is sufficient to yield a succinct $\infty$-norm approximation to the Fourier spectrum of $\phi$ and hence identify its heaviest Fourier coefficients. 
We connect these findings to quantum agnostic learning with a careful, purely classical analysis of how knowledge about the heaviest Fourier coefficients leads to distributional $\alpha$-agnostic learners. This analysis is the content of \Cref{appendix:useful}.

\paragraph{Classical verification of quantum agnostic learning:}
In contrast to the realizable setting, verifying the quality of a hypothesis in the agnostic setting is non-trivial. This is because the optimal performance with respect to the benchmark class is unknown. To appreciate this, consider the setting of verifying agnostic learning parities: As discussed above, a quantum prover is able to learn the string $s$ corresponding to the heaviest Fourier coefficient $\hat{\phi}(s)$ given access to mixture-of-superpositions examples. It could then send the string $s$ to the verifier. The verifier can use its classical data access to produce a good estimate of the Fourier weight $|\hat{\phi}(s)|$ of the string. However, the verifier has no efficient means of knowing if there are other strings $s'$ with even more Fourier weight and so it cannot rule out that the prover cheated by sending a non-optimal string.

To address such lack of soundness, in our protocol the prover ought to send not only the single, heaviest Fourier coefficient, but a list $L=\{s_1, \dots, s_{|L|}\}$ corresponding to all non-negligible Fourier coefficients $\hat{\phi}(s)$. 
Given such a list, the verifier can independently from the prover estimate the total squared Fourier weight of the list, namely $\sum_{\ell=1}^{|L|} (\hat{\phi}(s_\ell))^2$, using classical data access only. If additionally we restrict the agnostic learning task to distributions $\mathcal{D}=(\mathcal{U}_n, \varphi)$ where the total Fourier weight is fixed, or at least a priori known to lie within a small interval,
\begin{equation}
    \sum_{s \in \{0,1\}^n} (\hat{\phi}(s))^2 \in \left[ a^2, b^2 \right], 
\end{equation}
then the verifier can find out if the prover cheated by checking whether its estimate of the total Fourier weight of the list deviates too much from the a priori known total Fourier weight of the distribution. For the protocol to be efficient, we need to require that the underlying distribution $\mathcal{D}$ is well-described only by a few heavy Fourier coefficients such that the list $L$ is of size at most polynomial in $n$.
This means that we consider a distributional agnostic learning task over a restricted set of distributions rather than the set of all distributions with uniform input marginal. We emphasize that despite the restriction of the distribution class, the learning task is still classically hard and so the verifier crucially requires the prover to succeed. This classical hardness is based on the Learning Parity with Noise (LPN) problem, which is a special case of our more general distributional agnostic learning task.

Lastly, we note that in the functional agnostic learning setting, we provide an alternative approach to classically verifying quantum learning. 
This approach is based on the interactive Goldreich-Levin algorithm laid out in \cite{goldwasser2021interactive} for agnostic verification of Fourier-sparse learning. 
Said verification scheme requires classical membership query access for the prover in order to answer the queries sent by the verifier.
We observe that under certain Fourier-sparsity assumptions on the unknown function, the quantum prover can emulate this membership query access.

\paragraph{Distribution-independent agnostic quantum learning and its verification:}
To show that mixture-of-superpositions examples do not lead to a significant advantage in a distribution-independent agnostic setting, we provide a lower bound on the sample complexity required to learn a benchmark class of VC-dimension $d$. To prove this lower bound, we adapt the Fano-method based proof strategy from \cite{arunachalam2018optimal} to our mixture-of-superpositions quantum examples.
Similarly, to show that our mixture-of-superpositions examples do not lead to an advantage in the setting of verification of distribution-independent learning, we provide a lower bound on the sample complexity of the verifier for verifying a benchmark class of VC-dimension $d$. To prove this lower bound, we adopt the proof strategy from \cite{mutreja2022pac-verification}, which is based on a reduction from a testing task to the verification task. We find that the reduction applies even to our setting with a quantum prover since we focus purely on sample complexity. 
The sample complexity lower bound for the testing task was given in \cite{mutreja2022pac-verification} and carries over directly.

\subsection{Directions for Future Work}\label{subsection:future-work}

Our work opens up several directions for future research. 
Firstly, we have demonstrated that mixture-of-superpositions examples enable quantum Fourier sampling-based distributional agnostic learning -- in the distribution-dependent setting -- by giving explicit learning algorithms for parities and Fourier-sparse functions.
At the same time, we have shown that mixture-of-superpositions examples \textit{do not} give a sample complexity advantage over classical random examples in the distribution-independent setting.
Thus, it is of natural interest to go beyond these initial results and to further understand both the potential and the limitations of mixture-of-superpositions examples for agnostic learning, for example exploring their use for other model classes. 
Additionally, one of the primary motivations for the mixture-of-superpositions examples introduced here is the difficulty in developing techniques for Fourier sampling from standard quantum superposition examples in the distributional agnostic setting.
However, while no such techniques have been developed to date, there are no established hardness results.  
As such, it remains unclear whether mixture-of-superpositions examples are indeed strictly more powerful than standard quantum superposition examples.
In light of this, it would be interesting to understand whether there is a separation between the power of the two oracle models.

Additionally, in this work we have explored the extent to which one can delegate problems of \textit{supervised learning of Boolean functions} to untrusted quantum servers.  
However, there is a plethora of other learning problems whose delegation to quantum algorithms would be desirable to investigate. A natural first example would be the delegation and verification of \textit{distribution learning}~\cite{kearns1994learnability} problems to quantum servers. In particular, we note that, unlike for supervised learning, in the distribution learning context even the realizable setting seems non-trivial. 
Alternatively, there is a multitude of learning and testing problems for which the object to be learned or tested is inherently quantum. Examples include testing or learning to predict properties of quantum states \cite{aaronson2007learnability, aaronson2019shadow, huang2020predicting}, quantum measurements \cite{cheng2016learnability}, or quantum processes \cite{chung2021sample, caro2021binary, fanizza2022learning, caro2022learning, huang2023learning}. For many of these problems there are known exponential separations between what can be achieved by quantum algorithms with or without access to a quantum memory (see~\cite{huang2021information, aharonov2022quantum, chen2022exponential, huang2022quantum-advantage, caro2022learning, chen2022unitarity}). This prompts a natural question: Can quantum learning algorithms without a quantum memory efficiently delegate such learning or testing problems to untrusted quantum algorithms with access to a quantum memory?

Moreover, from a more technical perspective, there are concrete ways in which both our learning algorithms and verification procedures might be improved. Firstly, for the problem of distributional agnostic learning Fourier-sparse functions, our learning algorithms are 2-agnostic -- i.e., they yield hypotheses whose risk is guaranteed to be at most twice the risk of the optimal model, plus some desired tolerance $\varepsilon$. Ideally, however, one would like to give 1-agnostic learning algorithms. 
Secondly, our verification procedures do not work for arbitrary unknown functions or distributions, but require prior assumptions. The learning problems we consider remain classically hard under these assumptions, and are therefore still sufficient for demonstrating the existence of problems which can be efficiently delegated/verified by classical learning algorithms, although not efficiently solved without delegation.
Nevertheless, it seems interesting to understand the extent to which the assumptions we use here are truly necessary.

Finally, our work is motivated by a desire to understand the potential for classical clients to profit from the advantages of quantum learning algorithms in a (realistic) world where quantum computations are delegated to untrusted quantum servers with access to proprietary quantum data resources. However, at least currently and for the intermediate-term future, any quantum server will only have access to ``noisy intermediate scale quantum" (NISQ) devices \cite{preskill2018quantum}. 
As such, to bring our results closer to immediate practical relevance, it is of interest to explore the extent to which classical clients can verify untrusted NISQ-friendly quantum machine learning algorithms based on the variational optimization of parameterized quantum circuits. 
Indeed, there has recently been progress on the statistical foundations of such hybrid quantum-classical learning algorithms~\cite{caro2020pseudo, abbas2021power, banchi2021generalization, caro2021encodingdependent, du2022efficient, caro2022generalization, caro2022out-of-distribution}, and it would be of significant interest to enrich this developing understanding with insight into the complexity of classical verification. 
Additionally, in our work we have explored the setting in which a classical client interacts with a quantum server, with access to a quantum data oracle. However, one may also consider quantum clients of limited complexity (e.g., NISQ clients) that interact with more powerful quantum servers. 
This would serve to enrich our growing understanding of the capability of NISQ algorithms from a complexity-theoretic perspective~\cite{chen2022complexity}.

\subsection{Structure of the Paper}

The remainder of this work is structured as follows.
\Cref{section:preliminaries} recalls standard notions from the analysis of Boolean functions, from classical and quantum learning theory, and the framework of \cite{goldwasser2021interactive} for interactive verification of machine learning.
In \Cref{section:mixture-of-superpositions-examples}, we propose mixture-of-superpositions quantum examples as a new resource for quantum learning algorithms.
\Cref{section:functional-agnostic-quantum-learning} gives an overview over the power of quantum data access in functional agnostic learning.
This section may be viewed as a prelude to our distributional agnostic quantum learning results in \Cref{section:distributional-agnostic-quantum-learning}.
Nevertheless, \Cref{section:distributional-agnostic-quantum-learning} can be read mostly independently of \Cref{section:functional-agnostic-quantum-learning}.
We establish that our agnostic quantum learning procedures can be classically verified in \Cref{section:verification}.
Finally, \Cref{section:distribution-independent-quantum-limitations} contains our results about the limitations of quantum data for distribution-independent agnostic learning and its verification.
\Cref{appendix:useful} contains relevant tools and results in classical computational learning theory.
In \Cref{appendix:classical-distributional-to-agnostic}, we present a classical reduction from distributional to functional agnostic learning that serves as motivation for our notion of mixture-of-superpositions examples. 
\Cref{appendix:proofs} contains auxiliary results and proofs. 

\section{Notation and Preliminaries}\label{section:preliminaries}

In this section, we review standard notions from the analysis of Boolean functions as well as the underlying framework of computational learning theory, both classical and quantum. 
Moreover, we recall the framework for interactive verification of learning introduced in \cite{goldwasser2021interactive}.
Along the way, we fix notational conventions that will be used throughout the paper.

\subsection{Boolean Fourier Analysis}\label{subsection:fourier-analysis}

Throughout, we work with functions defined on the Boolean hypercube $\mathcal{X}_n=\{0,1\}^n$, for some $n\in\mathbb{N}_{>0}$.
We will be particularly interested in Boolean functions, which we denote by $f:\{0,1\}^n \to\{0,1\}$. When convenient, we redefine the binary labels according to $0\mapsto 1$, $1\mapsto -1$, and then consider the function $g :\{0,1\}^n \to\{-1,1\}$, $g(x)=(-1)^{f(x)}=1-2f(x)$ instead of $f$.
We will denote probability distributions over $\mathcal{X}_n\times\{0,1\}$ by $\mathcal{D}$. The marginal of $\mathcal{D}$ over the first $n$ bits is denoted by $\mathcal{D}_{\mathcal{X}_n}$. If this marginal is the uniform distribution over $\mathcal{X}_n$, then we denote that by $\mathcal{D}_{\mathcal{X}_n} = \mathcal{U}_n$. 
The conditional expectation of the $\{0,1\}$-label given the input is denoted by $\varphi:\mathcal{X}_n\to [0,1]$, $\varphi(z) = \mathbb{E}_{(x,y)\sim \mathcal{D}} [y | x=z]$.
In other words, under the probability distribution $\mathcal{D}$, conditioned on the input being $x$, the associated label equals $1$ with probability $\varphi(x)$ and equals $0$ with probability $1 - \varphi(x)$.
Again, whenever convenient, we relabel the target space from $\{0,1\}$ to $\{-1,1\}$, then replacing $(\mathcal{D}_{\mathcal{X}_n}, \varphi)$ by $(\mathcal{D}_{\mathcal{X}_n}, \phi)$ with $\phi:\mathcal{X}_n\to [-1,1]$, $\phi(z) = 1-2\varphi(z)$.

We use the standard framework of Fourier analysis for $\mathbb{R}$-valued functions defined on the Boolean hypercube $\mathcal{X}_n$, compare \cite{odonnell2014analysis}.
In particular, we define the Fourier coefficients w.r.t.~the uniform distribution over $\mathcal{X}_n$ of such as function as follows:

\begin{definition}[Fourier coefficients]\label{definition:fourier-coefficients}
    Let $\phi:\mathcal{X}_n\to \mathbb{R}$.
    Then, for any $s\in \{0,1\}^n = \mathcal{X}_n^\ast = \mathcal{X}_n$, we define the Fourier coefficient $\hat{\phi}(s)\in [-\norm{\phi}_\infty, \norm{\phi}_\infty]$ as
    \begin{equation}\label{eq:fourier-coefficient}
        \hat{\phi}(s)
        \coloneqq \mathbb{E}_{x\sim\mathcal{U}_n} \left[ \phi(x) \chi_s(x) \right] ,
    \end{equation}
    with the parity functions $\chi_s : \mathcal{X}_n\to \{-1,1\}$ defined as $\chi_s(x) = (-1)^{s\cdot x}$. Here, the inner product $s\cdot x$ is taken modulo $2$, i.e., $s\cdot x\coloneqq \sum_{i=1}^n s_i x_i \mod 2$.
\end{definition}

We can view $\{\chi_s\}_{s\in\mathcal{X}_n}$ as an orthonormal basis (ONB) for the space of functions $\mathbb{R}^{\mathcal{X}_n}$ with respect to the inner product $\langle \phi, \Tilde{\phi}\rangle_{\mathcal{U}_n} = \mathbb{E}_{x\sim\mathcal{U}_n} [\phi(x)\tilde{\phi}(x)]$. Then, the Fourier coefficients of \Cref{definition:fourier-coefficients} simply become the ONB expansion coefficients, $\phi = \sum_{s\in\mathcal{X}_n}\hat{\phi}(s) \chi_s$, and they satisfy the Parseval identity $\sum_{s\in\mathcal{X}_n} (\hat{\phi}(s))^2 = \mathbb{E}_{x\sim\mathcal{U}_n} [(\phi(x))^2]$. In particular, if $\phi$ is $\{-1,1\}$-valued, then the squares of its Fourier coefficients form a probability distribution over $\mathcal{X}_n$.

By orthonormality, the parity functions $\chi_s$ have exactly one non-zero Fourier coefficient. More generally, we will consider functions with few non-zero Fourier coefficients:

\begin{definition}[Fourier-sparse functions]
    Let $\phi:\mathcal{X}_n\to \mathbb{R}$. We denote the set of non-zero Fourier coefficients of $\phi$ by
    \begin{equation}
        \operatorname{supp}(\phi)
        \coloneqq \{s\in\mathcal{X}_n~|~ \hat{\phi}(s)\neq 0\}.
    \end{equation}
    If $\lvert \operatorname{supp}(\phi)\rvert = k$, then we say that $\phi$ is Fourier-$k$-sparse.
\end{definition}

When we speak of Fourier-sparse functions, we think of functions that are Fourier-$k$-sparse for some $k\ll 2^n$, for example $k\leq\mathcal{O}(1)$ or $k\leq\mathcal{O}(\poly (n))$.

\subsection{Agnostic Learning}

In agnostic learning the goal is to output the best possible approximation within some class of possible solutions, without prior structural assumptions on the observed data. 
We begin with a general definition for the agnostic learning framework, which reduces to all relevant special cases treated in this paper. For this we partially adopt the nomenclature of \cite{hu2022comparative}.

\begin{definition}[$\alpha$-agnostic learning $\mathcal{B}$ via $\mathcal{M}$ with respect to $\mathcal{D}$ from oracle $\mathsf{O}$]\label{def:agnostic-learning}
  We say that a learning algorithm $\mathcal{A}$ is an $\alpha$-agnostic learner for the benchmark class $\mathcal{B}$, via the model class $\mathcal{M}$, with respect to the distribution class $\mathfrak{D}$ over $\mathcal{X}_n\times\{0,1\}$, from oracle $\mathsf{O}$ if: 
    For all $\mathcal{D}\in\mathfrak{D}$, when given access to oracle $\mathsf{O}(\mathcal{D})$, as well as some $(\epsilon,\delta)\in (0,1)$, algorithm $\mathcal{A}$ outputs, with probability at least $1-\delta$, some model $m\in\mathcal{M}$ such that  
    \begin{equation}\label{eq:learning_criterion}
        \mathrm{err}_{\mathcal{D}}(m)\leq \alpha\cdot\mathrm{opt}_{\mathcal{D}}(\mathcal{B}) +\epsilon.
    \end{equation}
\end{definition}
Here we define the error of $m\in \mathcal{M}$ as $ \mathrm{err}_{\mathcal{D}}(m)=\mathbb{P}_{(x,y)\sim \mathcal{D}}[m(x)\neq y]$ and similarly for $b\in \mathcal{B}$.
The optimal error with respect to the benchmark  class $\mathcal{B}$ is defined as 
\begin{equation}
    \mathrm{opt}_{\mathcal{D}}(\mathcal{B})=\min_{b\in \mathcal{B}}\mathrm{err}_{\mathcal{D}}(b).
\end{equation}

In words, the criterion for learning as given in \Cref{eq:learning_criterion} says that the error achieved by the output model should be close to a multiple of the best achievable error by any model in the benchmark class. 
Usually $\mathcal{B}$ and $\mathcal{M}$ are taken to be subsets of $\{0,1\}^{\mathcal{X}_n}$, but we will also consider model classes $\mathcal{M}$ containing randomized hypotheses. (Note: We use the terms ``model'' and ``hypothesis'' interchangeably.) In this case, the definition of $\mathrm{err}_{\mathcal{D}}(m)$ is naturally adapted by considering the probability over both the randomness of $(x,y)$ and the randomness of $m(x)$, that is $ \mathrm{err}_{\mathcal{D}}(m)=\mathbb{P}_{(x,y)\sim \mathcal{D}, b\sim m(x)}[b\neq y]$ and similarly for $b\in \mathcal{B}$.
We also make a basic distinction between \emph{proper} and \emph{improper} learners. In the context of \Cref{def:agnostic-learning} we call a learner proper if $\mathcal{M}\subseteq\mathcal{B}$ and improper otherwise.

In any of the learning frameworks covered by \Cref{def:agnostic-learning}, we say that a learner is query/sample-efficient if it requires at most $\mathcal{O}(\poly (n, \nicefrac{1}{\varepsilon}, \nicefrac{1}{\delta}))$ oracle calls. Accordingly, the learner is said to be computationally efficient if it it requires computation time at most $\mathcal{O}(\poly (n, \nicefrac{1}{\varepsilon}, \nicefrac{1}{\delta}))$. 

We can furthermore distinguish between learners with access to random examples, learners with statistical query access, and learners access to quantum examples. This corresponds to specific oracles $\mathsf{O}$ in the above definition. On the classical side, the \emph{random examples oracle}, when queried, returns a randomly drawn example $(x,y)\sim \mathcal{D}$. 
The \emph{statistical query (SQ) oracle} accepts two inputs, a bounded function $g$ defined on $\mathcal{X}_n\times\{0,1\}$ and a tolerance parameter $\tau$, and returns $\mu$ such that
\begin{equation}
        \left\lvert \mu-\mathbb{E}_{(x,y)\sim \mathcal{D}} \left[ g\left(x,y\right)\right]\right\rvert\leq \tau.
\end{equation}
We postpone the discussion of quantum example oracles to \Cref{subsection:quantum-oracles} and \Cref{section:mixture-of-superpositions-examples}. 
Quantum versions of statistical queries are discussed in \Cref{subsection:functional-qsq-learning} and \Cref{subsection:distributional-qsq-learning}.

By specifying the classes $ \mathfrak{D}$ we define some special cases, which we will refer to as follows:
\begin{itemize}
    \item \textbf{Distributional agnostic learning:} A learning algorithm $\mathcal{A}$ is said to be a \emph{distributional} $\alpha$-agnostic learner if the criteria of \cref{def:agnostic-learning} are met without further assumptions on $\mathfrak{D}$. 
    \item \textbf{Functional agnostic learning:} A learning algorithm $\mathcal{A}$ is said to be a \emph{functional} $\alpha$-agnostic learner if the criteria of \cref{def:agnostic-learning} are met under the assumption that $\mathfrak{D}$ only contains distributions $\mathcal{D}:\mathcal{X}_n \times \{0,1\} \to [0,1]$ of the form $\mathcal{D}=(\mathcal{D}_{\mathcal{X}_n}, f)$ for any Boolean function $f:\mathcal{X}_n\to \{0,1\}$.
    \item \textbf{Realizable PAC learning:}  A learning algorithm $\mathcal{A}$ is said to be a realizable PAC-learner if the criteria of definition \cref{def:agnostic-learning} are met under the assumption that $\mathfrak{D}$ only contains distributions $\mathcal{D}:\mathcal{X}_n \times \{0,1\} \to [0,1]$ of the form $\mathcal{D}=(\mathcal{D}_{\mathcal{X}_n}, f)$ for any Boolean function $f\in\mathcal{B}$.
\end{itemize}

Finally, in this work we are mostly interested in the more restrictive setting where the marginal distribution over the first $n$ bits of $\mathcal{D}$ is the uniform distribution. In this case, we speak of (distributional agnostic, functional agnostic, or realizable) learning w.r.t.~uniformly random inputs. 
If we make now assumptions on the input marginal, we speak of distribution-independent learning.

\subsection{Quantum Data Oracles}\label{subsection:quantum-oracles}

Data oracles $\mathsf{O}$ can go beyond random examples and statistical queries. In this section we will consider different types of quantum oracles. We begin with the \emph{pure superposition oracle}.

\begin{definition}[Pure superposition oracle]\label{definition:functional-superposition-quantum-examples}
    Let $\mathcal{D}_{\mathcal{X}_n}$ be some distribution over $\mathcal{X}_n$ and  $f:\mathcal{X}_n\to\{0,1\}$. The pure superposition oracle, when queried, will return a copy of the pure superposition state  given by
    \begin{equation}\label{eq:superposition-functional-example}
        \ket{\psi_{(\mathcal{D}_{\mathcal{X}_n}, f)}}
        = \sum_{x\in \mathcal{X}_n} \sqrt{\mathcal{D}_{\mathcal{X}_n}(x)} \ket{x,f(x)}.
    \end{equation}
\end{definition}
It can be verified that examples of the form \cref{eq:superposition-functional-example} give rise to the random examples we considered in PAC and functional agnostic learning when the state is measured in the computational basis. This amounts to projecting onto one of the states in the superposition with probability given by $\mathcal{D}_{\mathcal{X}_n}(x)$. 
As in the case with classical random examples, we can move towards the distributional agnostic setting by considering random classification noise. There is no unique way to quantize noisy classical examples and therefore we provide the following three alternative definitions.

\begin{definition}[Noisy functional quantum examples]\label{definition:noisy-functional-quantum-examples}
    Let $\mathcal{D} = (\mathcal{D}_{\mathcal{X}_n}, f)$ be a probability distribution over $\mathcal{X}_n\times \{0,1\}$ with deterministic labeling function $f:\mathcal{X}_n\to\{0,1\}$. Let $0\leq \eta < \nicefrac{1}{2}$ be a noise parameter. 
    \begin{itemize}
        \item[(i)] A \emph{mixed $\eta$-noisy functional quantum example for $\mathcal{D} = (\mathcal{D}_{\mathcal{X}_n}, f)$} is a copy of the mixed $(n+1)$-qubit state
        \begin{equation}\label{eq:mixed-noisy-functional-quantum-example}
            \rho_{(\mathcal{D}_{\mathcal{X}_n}, f),\eta}
            \coloneqq (1-\eta)\ket{\psi_{(\mathcal{D}_{\mathcal{X}_n},f)}}\bra{\psi_{(\mathcal{D}_{\mathcal{X}_n},f)}} + \eta\ket{\psi_{(\mathcal{D}_{\mathcal{X}_n},f\oplus 1)}}\bra{\psi_{(\mathcal{D}_{\mathcal{X}_n},f\oplus 1)}} ,
        \end{equation}
        where $\oplus$ denotes additional modulo $2$.
        \item[(ii)] A \emph{pure $\eta$-noisy functional quantum example for $\mathcal{D} = (\mathcal{D}_{\mathcal{X}_n}, f)$} is a pure superposition quantum example $\ket{\psi_{\mathcal{D}_\eta}}$ according to \Cref{definition:superposition-quantum-examples}, where $\mathcal{D}_\eta$ is the probability distribution over $\mathcal{X}_n\times \{0,1\}$ defined as
        \begin{equation}
            \mathcal{D}_\eta (x,f(x)) = (1-\eta)\mathcal{D}_{\mathcal{X}_n}(x) ,~
            \mathcal{D}_\eta (x,f(x)\oplus 1) = \eta\mathcal{D}_{\mathcal{X}_n}(x),\quad\forall x\in\mathcal{X}_n .
        \end{equation}
        That is, such a quantum example is a copy of the pure $(n+1)$-qubit state
        \begin{equation}\label{eq:pure-noisy-functional-quantum-example}
            \ket{\psi_{\mathcal{D}_\eta}}
            \coloneqq \frac{1}{\sqrt{2^n}}\sum_{x\in\{0,1\}^n} \left(\sqrt{1-\eta}\ket{x,f(x)}+\sqrt{\eta}\ket{x,f(x)\oplus 1}\right) ,
        \end{equation}
        where $\oplus$ denotes additional modulo $2$.
        \item[(iii)] A \emph{mixture-of-superpositions $\eta$-noisy functional quantum example for $\mathcal{D} = (\mathcal{D}_{\mathcal{X}_n}, f)$} is a copy of the mixed $(n+1)$-qubit state
        \begin{equation}\label{eq:mixture-of-superpositions-noisy-functional-quantum-example}
            \rho_{\mathcal{D}_\eta}
            \coloneqq \mathbb{E}_{\{e_x\}_{x\in\mathcal{X}_n}} \left[ \ket{\psi_{(\mathcal{D}_{\mathcal{X}_n},f \oplus e)}}\bra{\psi_{(\mathcal{D}_{\mathcal{X}_n},f\oplus e)}} \right] ,
        \end{equation}
        where the $e_x$, $x\in\mathcal{X}_n$, are i.i.d.~random variables with $\mathbb{P}[e_x = 0 ]=1-\eta=1-\mathbb{P}[e_x = 1 ]$ for all $x\in\mathcal{X}_n$.
    \end{itemize}
\end{definition}

The mixed noisy quantum example in \Cref{eq:mixed-noisy-functional-quantum-example} can be interpreted as a version of the noiseless quantum superposition example $\ket{\psi_{(\mathcal{D}_{\mathcal{X}_n},f)}}$ subject to bit-flip noise and was suggested in \cite{cross2015quantum}.
The pure noisy quantum example in \Cref{eq:pure-noisy-functional-quantum-example} is the notion of quantum example obtained directly from \Cref{definition:superposition-quantum-examples}.
Finally, \Cref{eq:mixture-of-superpositions-noisy-functional-quantum-example} is the notion of noisy quantum example considered in \cite{grilo2019learning}.
Note that all three notions reproduce classical random examples with label noise when performing computational basis measurements. In contrast to the classical case where random classification noise makes learning parities hard, this does not happen in the case of noisy quantum examples \cite{cross2015quantum, grilo2019learning}.

The noisy quantum examples from \Cref{definition:noisy-functional-quantum-examples} can be viewed as a first step towards quantum data for distributional agnostic quantum learning: While there still is an underlying function, no deterministic function describes the data perfectly. Starting from the noisy case, one may attempt to extend the definitions to obtain fully distributional agnostic quantum examples. In fact taking \Cref{eq:pure-noisy-functional-quantum-example} as a starting point, one would arrive at the following definition for such examples: 

\begin{definition}[Pure superposition quantum examples]\label{definition:superposition-quantum-examples}
    Let $\mathcal{D}$ be a probability distribution over $\mathcal{X}_n\times\{0,1\}$.
    Then, a \emph{pure superposition example for $\mathcal{D}$} is a copy of the pure $(n+1)$-qubit state 
    \begin{equation}\label{eq:superposition-quantum-examples}
        \ket{\psi_{\mathcal{D}}}
        = \ket{\psi_{(\mathcal{D}_{\mathcal{X}_n}, \varphi)}}
        \coloneqq \sum_{x\in\{0,1\}^n}\sum_{y\in\{0,1\}} \sqrt{\mathcal{D}(x,y)}\ket{x,y}
        = \sum_{x\in\{0,1\}^n} \sqrt{\mathcal{D}\rvert_{\mathcal{X}_n}(x) }\left(\sqrt{1-\varphi(x)}\ket{x,0} + \sqrt{\varphi(x)}\ket{x,1} \right) .
    \end{equation}
    Accordingly, a \emph{pure superposition quantum example oracle for $\mathcal{D}$} is an oracle that when queried outputs a copy of $\ket{\psi_{\mathcal{D}}}$.
\end{definition}

While the literature contains many examples for the power of functional superposition examples (\Cref{definition:functional-superposition-quantum-examples}) in distribution-dependent learning, no comparable results exist for the distributional superposition examples \Cref{definition:superposition-quantum-examples} in the distributional agnostic case.

\subsection{Interactive Verification of Learning}

\cite{goldwasser2021interactive} recently introduced a formal framework for verification of machine learning in the spirit of interactive proofs. 
An important aspect of this framework is that a difference in power between verifier and prover can arise from them having data access via different oracles. 
Thus, with the following small modification of \cite[Definition 4]{goldwasser2021interactive}, we give a definition for interactively verifying learning in which we allow for classical as well as quantum data access and processing.

\begin{definition}[Interactive verification of $\alpha$-agnostic learning -- Classical and/or quantum]\label{definition:interactive-verification-of-learning}
    Let $\mathcal{F}\subseteq\{0,1\}^{\mathcal{X}_n}$ be a benchmark class.
    Let $\mathfrak{D}$ be a family of probability distributions over $\mathcal{X}_n\times\{0,1\}$.
    Let $\alpha\geq 1$.
    We say that \emph{$\mathcal{F}$ is $\alpha$-agnostic verifiable with respect to $\mathfrak{D}$ using classical or quantum oracles $\mathsf{O}_V$ and $\mathsf{O}_P$} if there exists a pair of classical or quantum algorithms $(V,P)$ that satisfy the following conditions for every input accuracy parameter $\varepsilon\in (0,1)$ and for every confidence parameter $\delta\in (0,1)$:
    \begin{itemize}
        \item \textbf{Completeness:} For any $\mathcal{D}\in\mathfrak{D}$, the random hypothesis $h:\mathcal{X}_n\to \{0,1\}$ that $V(\varepsilon,\delta)$ outputs after interacting with the honest prover $P$, assuming $V(\varepsilon,\delta)$ has access to $\mathsf{O}_V (\mathcal{D})$ and $P(\varepsilon,\delta)$ has access to $\mathsf{O}_P (\mathcal{D})$, satisfies
        \begin{equation}
            \mathbb{P}\left[h\neq\mathrm{reject}~\wedge ~ \left(\operatorname{err}_{\mathcal{D}}(h)\leq \alpha\cdot \operatorname{opt}_{\mathcal{D}}(\mathcal{F}) + \varepsilon\right)\right] 
            \geq 1-\delta .
        \end{equation}
        \item \textbf{Soundness:} For any $\mathcal{D}\in\mathfrak{D}$ and for any (possibly unbounded) dishonest prover $P'$, the random hypothesis $h:\mathcal{X}_n\to \{0,1\}$ that $V(\varepsilon,\delta)$ outputs after interacting with $P'$, assuming $V(\varepsilon,\delta)$ has access to $\mathsf{O}_V (\mathcal{D})$ and $P'(\varepsilon,\delta)$ has access to $\mathsf{O}_P (\mathcal{D})$, satisfies
        \begin{equation}
            \mathbb{P}\left[h\neq\mathrm{reject}~\wedge ~ \left(\operatorname{err}_{\mathcal{D}}(h)> \alpha\cdot \operatorname{opt}_{\mathcal{D}}(\mathcal{F}) + \varepsilon\right)\right] 
            \leq \delta .
        \end{equation}
    \end{itemize}
    Moreover:
    \begin{itemize}
        \item If the above can be achieved with a pair $(V,P)$ such that $V(\varepsilon,\delta)$ makes at most $\mathcal{O}(\poly (n, \nicefrac{1}{\varepsilon}, \nicefrac{1}{\delta}))$ queries to $\mathsf{O}_V (\mathcal{D})$, such that $P(\varepsilon,\delta)$ makes at most $\mathcal{O}(\poly (n, \nicefrac{1}{\varepsilon}, \nicefrac{1}{\delta}))$ queries to $\mathsf{O}_P (\mathcal{D})$, and such that both $V(\varepsilon,\delta)$ and $P(\varepsilon,\delta)$ have running time $\mathcal{O}(\poly (n, \nicefrac{1}{\varepsilon}, \nicefrac{1}{\delta}))$, then we say that \emph{$\mathcal{F}$ is efficiently $\alpha$-agnostic verifiable with respect to $\mathfrak{D}$ using oracles $\mathsf{O}_V$ and $\mathsf{O}_P$}.
        \item If additionally one can ensure that the random hypothesis $h:\mathcal{X}_n\to \{0,1\}$ that $V(\varepsilon,\delta)$ outputs after interacting with any (possibly unbounded) dishonest prover $P'$, assuming $V(\varepsilon,\delta)$ has access to $\mathsf{O}_V (\mathcal{D})$ and $P'(\varepsilon,\delta)$ has access to $\mathsf{O}_P (\mathcal{D})$, satisfies $h\in \{\mathrm{reject}\}\cup \mathcal{F}$ almost surely, then we say that \emph{$\mathcal{F}$ is proper $\alpha$-agnostic verifiable with respect to $\mathfrak{D}$ using oracles $\mathsf{O}_V$ and $\mathsf{O}_P$}.
    \end{itemize}
\end{definition}

\begin{figure}[h]
    \centering
    \includegraphics[width=0.75\textwidth]{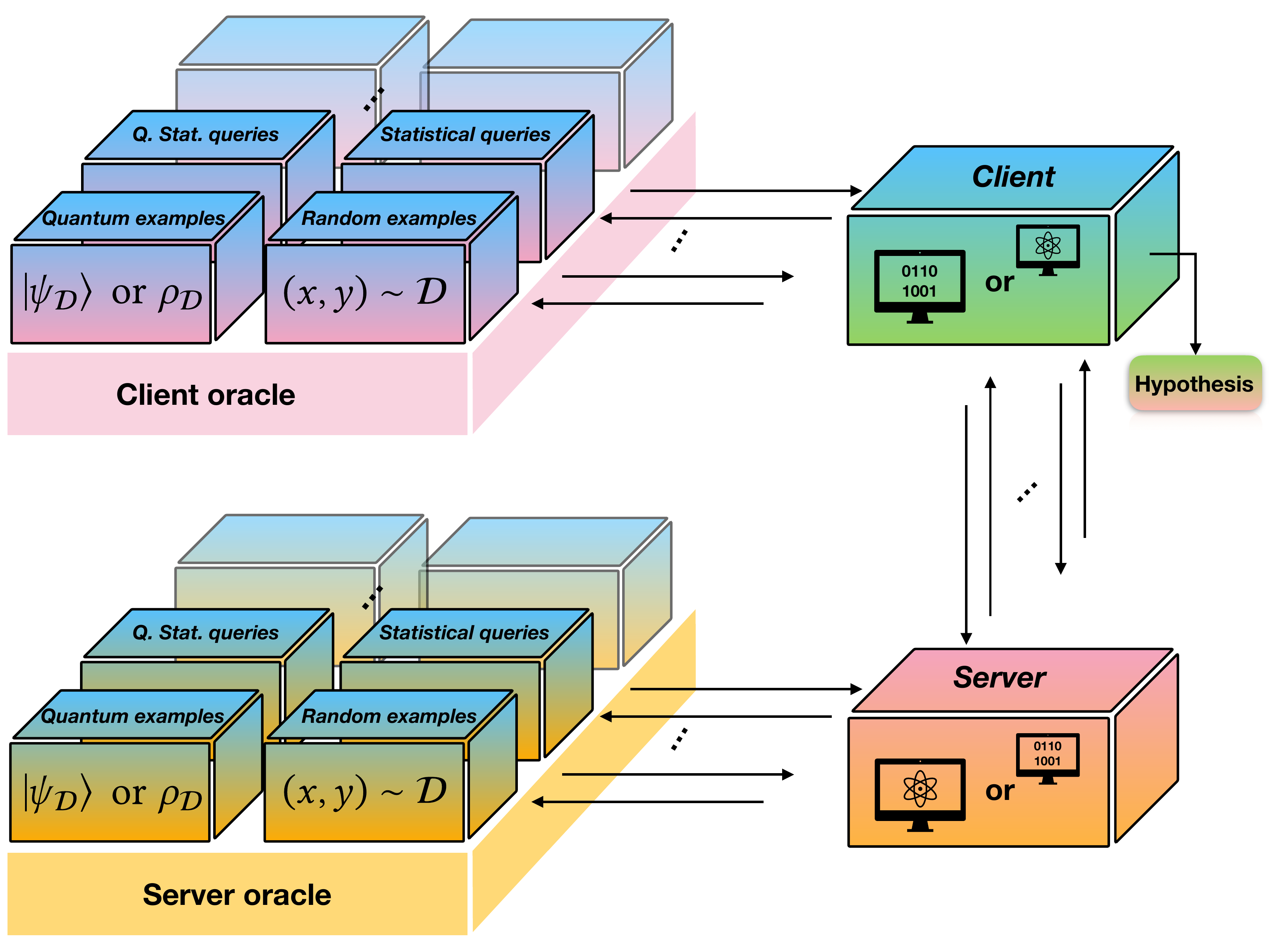}
    \caption{\textbf{Interactive verification of learning:} 
As per \Cref{definition:interactive-verification-of-learning}, we consider the setting in a which a client, with some type of oracle access, interacts with an untrusted server with access to a different oracle. The goal of the client is to solve a learning problem via interaction with the untrusted server. In general, both the client and the server could have access to either a classical or quantum computer, and one could consider any well-defined oracles. 
In this work we consider the setting in which the client only has access to a classical computer and classical data oracle, but the server has access to both a quantum computer and some sort of quantum data oracle.
}
    \label{fig:verification-framework}
\end{figure}

Our formulation of \Cref{definition:interactive-verification-of-learning}, which we illustrate in \Cref{fig:verification-framework}, is quite general in that it covers a variety of scenarios when focusing on specific choices of oracles.
For instance, if we consider only classical oracles and algorithms, then we recover \cite[Definition 4]{goldwasser2021interactive}. 
Here, \cite{goldwasser2021interactive} explored mainly two scenarios: On the one hand, they considered a random example oracle $\mathsf{O}_V$ and a membership query oracle $\mathsf{O}_P$. On the other hand, they explored the case in which both $\mathsf{O}_V$ and $\mathsf{O}_P$ are random example oracles. Albeit not done in prior work, one might also consider classical SQ oracles.

For the purposes of our work, the case of interest is that of a classical verifier and a quantum prover. That is, we will take $V$ to be a classical algorithm, with access to a classical data oracle $\mathsf{O}_V$, whereas the quantum prover $P$ (as well as dishonest provers $P'$) has access to a quantum data oracle $\mathsf{O}_P$.
Concretely, the following two scenarios will be our main focus: First, we are interested in the capabilities of a classical verifier $V$, with access to a classical SQ oracle $\mathsf{O}_V$, when interacting with a quantum prover, who has access to a quantum statistical query oracle $\mathsf{O}_P$. Second, we study a classical verifier $V$, with access to a classical random example oracle $\mathsf{O}_V$, when interacting with a quantum prover, who has access to a quantum example oracle $\mathsf{O}_P$.
We note, however, that it may also be interesting to investigate scenarios in which both $\mathsf{O}_V$ and $\mathsf{O}_P$ are (possibly different) quantum data oracles. 

\section{Mixture-of-Superpositions Examples}\label{section:mixture-of-superpositions-examples}

As mentioned when introducing the three versions of noisy functional quantum examples, there is no canonical way to quantize any given classical example. Here, we propose a new type of quantum examples which we call \emph{mixture-of-superpositions} quantum examples. This can be seen as extending the noisy example defined in \Cref{eq:mixture-of-superpositions-noisy-functional-quantum-example}.

\begin{definition}[Mixture-of-superpositions quantum examples for distributional agnostic learning]\label{definition:mixture-of-superpositions-quantum-example}
    Let $\mathcal{D}$ be a probability distribution over $\mathcal{X}_n\times\{0,1\}$.
    Let $F_\mathcal{D}$ be the probability distribution over $\{0,1\}^{\mathcal{X}_n}$ defined by sampling $f(x)$ from the conditional label distribution independently for each $x\in\mathcal{X}_n$. 
    That is, for any $\Tilde{f}:\mathcal{X}_n\to \{0,1\}$,
    \begin{equation}\label{eq:induced-distribution-over-functions}
        \mathbb{P}_{f\sim F_\mathcal{D}} [f=\Tilde{f}]
        = \prod_{z\in\mathcal{X}_n} \mathbb{P}_{(x,y)\sim \mathcal{D}}[ \tilde{f}(z)=y | x=z ]
        = \prod_{z\in\mathcal{X}_n} \left( (1-\varphi(z))(1-\tilde{f}(z)) + \varphi(z) \tilde{f}(z) \right) .
    \end{equation}
    Then, a \emph{mixture-of-superpositions quantum example for $\mathcal{D}$} is a copy of the mixed $(n+1)$-qubit state
    \begin{equation}\label{eq:mixture-of-superpositions-quantum-example}
        \rho_\mathcal{D}
        = \mathbb{E}_{f\sim F_\mathcal{D}} \left[ \ket{\psi_{(\mathcal{D}_{\mathcal{X}_n}, f)}}\bra{\psi_{(\mathcal{D}_{\mathcal{X}_n}, f)}} \right] .
    \end{equation}
    Accordingly, a \emph{mixture-of-superpositions quantum example oracle for $\mathcal{D}$} is an oracle that when queried outputs a copy of $\rho_\mathcal{D}$.
\end{definition}

Randomized quantum oracles similar in spirit to \Cref{definition:mixture-of-superpositions-quantum-example} have previously appeared in a complexity-theoretic context, see for example \cite{harrow2014uselessness, fefferman2018quantum-vs-classical, natarajan2022distribution, bassirian2022power}.
While standard quantum oracles (as for example that of \Cref{definition:superposition-quantum-examples}) can be viewed in terms of black box unitaries, randomized quantum oracles correspond to black box mixed unitary channels.
Note that \Cref{definition:mixture-of-superpositions-quantum-example} reproduces the definition of a mixture-of-superpositions noisy functional quantum example from \Cref{definition:noisy-functional-quantum-examples} when applied to a distribution $\mathcal{D}_\eta$ that is obtained by adding i.i.d.~label noise of strength $\eta \geq0$ to a distribution $(\mathcal{D}_{\mathcal{X}_n}, f)$ with a deterministic labeling function $f:\mathcal{X}_n\to\{0,1\}$. In particular, \Cref{definition:mixture-of-superpositions-quantum-example} reproduces the functional superposition example of \Cref{definition:functional-superposition-quantum-examples} for distributions of the form $(\mathcal{D}_{\mathcal{X}_n}, f)$ with Boolean $f$.
Importantly, however, \Cref{definition:mixture-of-superpositions-quantum-example} covers more general distributions, for example distributions arising from adding correlated labeling noise to a deterministic labeling.
\Cref{definition:mixture-of-superpositions-quantum-example} also reproduces the standard notion of a classical distributional agnostic random example under computational basis measurements:

\begin{lemma}\label{lemma:sanity-check-reproduce-classical-sample}
    Let $\mathcal{D}$ be a probability distribution over $\mathcal{X}_n\times\{0,1\}$.
    Performing computational basis measurements on all $n+1$ qubits of a copy of $\rho_\mathcal{D}$ produces a sample from $\mathcal{D}$.
\end{lemma}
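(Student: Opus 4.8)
The plan is a direct computation: expand $\rho_\mathcal{D}$ using \Cref{definition:mixture-of-superpositions-quantum-example}, apply the Born rule for the computational-basis measurement on all $n+1$ qubits, and check that the resulting outcome distribution on $\mathcal{X}_n\times\{0,1\}$ is exactly $\mathcal{D}$. Since $\rho_\mathcal{D} = \mathbb{E}_{f\sim F_\mathcal{D}}[\ket{\psi_{(\mathcal{D}_{\mathcal{X}_n},f)}}\bra{\psi_{(\mathcal{D}_{\mathcal{X}_n},f)}}]$ is a mixture, linearity lets me first compute the measurement statistics of a single pure state $\ket{\psi_{(\mathcal{D}_{\mathcal{X}_n},f)}}$ and then average over $f\sim F_\mathcal{D}$.

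First I would fix a Boolean function $f$ and an outcome $(z,c)\in\mathcal{X}_n\times\{0,1\}$. From \eq{eq:superposition-functional-example}, $\ket{\psi_{(\mathcal{D}_{\mathcal{X}_n},f)}} = \sum_x \sqrt{\mathcal{D}_{\mathcal{X}_n}(x)}\ket{x,f(x)}$, so $\langle z,c\,|\,\psi_{(\mathcal{D}_{\mathcal{X}_n},f)}\rangle = \sqrt{\mathcal{D}_{\mathcal{X}_n}(z)}\cdot\mathbb{1}[f(z)=c]$, and hence the probability of outcome $(z,c)$ equals $\mathcal{D}_{\mathcal{X}_n}(z)\cdot\mathbb{1}[f(z)=c]$. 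Next I would take the expectation over $f\sim F_\mathcal{D}$: by the definition of $F_\mathcal{D}$ in \eq{eq:induced-distribution-over-functions}, the marginal law of the single value $f(z)$ is $\mathbb{P}[f(z)=1]=\varphi(z)$, $\mathbb{P}[f(z)=0]=1-\varphi(z)$, so $\mathbb{E}_{f\sim F_\mathcal{D}}[\mathbb{1}[f(z)=c]] = \varphi(z)$ if $c=1$ and $1-\varphi(z)$ if $c=0$. Therefore the outcome probability is $\mathcal{D}_{\mathcal{X}_n}(z)\cdot\varphi(z) = \mathcal{D}(z,1)$ when $c=1$, and $\mathcal{D}_{\mathcal{X}_n}(z)\cdot(1-\varphi(z)) = \mathcal{D}(z,0)$ when $c=0$, which is precisely $\mathcal{D}(z,c)$ in either case. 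This establishes that the measurement outcome is distributed according to $\mathcal{D}$.

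There is essentially no obstacle here — the statement is a sanity check, and the only thing to be careful about is that one does not actually need to know the full joint law of $f$ under $F_\mathcal{D}$, only the marginal of each individual coordinate $f(z)$; the independence across coordinates built into $F_\mathcal{D}$ is more than enough (and in fact is not used, since a single basis measurement only probes one coordinate). I would also remark that, because one can iterate this with fresh copies of $\rho_\mathcal{D}$, the lemma implies a mixture-of-superpositions oracle can simulate a classical distributional agnostic random-example oracle with one query per example, which is the ``consistency'' property advertised in the introduction.
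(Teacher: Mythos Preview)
Your proposal is correct and follows essentially the same route as the paper's proof: compute $\bra{z,c}\rho_\mathcal{D}\ket{z,c}$ by linearity as $\mathbb{E}_{f\sim F_\mathcal{D}}[\mathcal{D}_{\mathcal{X}_n}(z)\,\mathds{1}_{\{f(z)=c\}}]$, then use the marginal of $f(z)$ under $F_\mathcal{D}$ to identify this with $\mathcal{D}(z,c)$. Your side remark that only the single-coordinate marginal of $F_\mathcal{D}$ is needed (not the independence across coordinates) is a nice clarification that the paper does not make explicit.
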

\begin{proof}
    By definition of $\rho_\mathcal{D}$, the probability of observing an output string $(x,b)\in\mathcal{X}_n\times \{0,1\}$ when measuring all $n+1$ qubits in the computational basis is given by
    \begin{align}
        \bra{x,b}\rho_\mathcal{D}\ket{x,b}
        %= \mathbb{E}_{f\sim F_\mathcal{D}} \left[ \left\lvert \braket{x,b | \psi_{(\mathcal{D}_{\mathcal{X}_n}, f)}} \right\rvert^2\right]
        &= \mathbb{E}_{f\sim F_\mathcal{D}} \left[ \left\lvert \braket{x,b}{\psi_{(\mathcal{D}_{\mathcal{X}_n}, f)}} \right\rvert^2\right]\\
        &= \mathbb{E}_{f\sim F_\mathcal{D}} \left[ \mathcal{D}_{\mathcal{X}_n}(x) \delta_{b, f(x)}\right]\\
        &= \mathcal{D}_{\mathcal{X}_n}(x)\mathbb{P}_{f\sim F_\mathcal{D}} \left[ f(x)=b\right]\\
        %&= \mathcal{D}_{\mathcal{X}_n}(x) \mathcal{D}_{b|x}(b)\\
        &= \mathcal{D} (x,b),
    \end{align}
    as claimed.
\end{proof}

Thus, our mixture-of-superpositions quantum examples constitute a generalization of established definitions, both classical and quantum. 
Moreover, the probability distribution $F_\mathcal{D}$ over functions is an object that naturally appears in classical learning-theoretic proofs of reductions between distributional and functional agnostic learning, compare \cite[Appendix A]{gopalan2008agnostically} and our presentation in \Cref{appendix:classical-distributional-to-agnostic}. 

In the next sections, we investigate how the different notions of quantum data access discussed so far can be used for functional and distributional agnostic quantum learning.

\section{Functional Agnostic Quantum Learning}\label{section:functional-agnostic-quantum-learning}

In preparation for our main results and their proofs, this subsection contains a compilation of results on functional agnostic parity and Fourier-sparse learning. 
These results can be obtained by combining known results from prior work with the classical learning theory toolkit of \Cref{appendix:useful}. 
While these results are straightforward to obtain with known techniques, we provide them here to give a systematic approach to agnostic quantum learning and to prepare for the analysis of the following sections.

\subsection{Noiseless Functional Agnostic Quantum Learning}

We first recall the standard procedure of quantum Fourier sampling with pure functional superposition examples as in \Cref{eq:superposition-functional-example}, which goes back to \cite{bernstein1997quantum}:

\begin{lemma}\label{lemma:functional-quantum-Fourier-sampling}
    Let $\mathcal{D}$ be a probability distribution over $\mathcal{X}_n\times\{0,1\}$ with $\mathcal{D}=(\mathcal{U}_n, f)$ for some deterministic labeling function $f:\{0,1\}^n\to\{0,1\}$. 
    Consider the following quantum algorithm: 
    Given a single copy of $\ket{\psi_{(\mathcal{U}_n, f)}}$, first apply the unitary $H^{\otimes (n+1)}$, then measure all $n+1$ qubits in the computational basis.
    The measurement outcomes of this procedure satisfy the following:
    \begin{itemize}
        \item[(i)] The computational basis measurement on the last qubit gives outcome $0$ with probability $\nicefrac{1}{2}$ and outcome $1$ with probability $\nicefrac{1}{2}$.
        \item[(ii)] Conditioned on having observed outcome $1$ for the last qubit, the computational basis measurement on the first $n$ qubits outputs a string $s\in\{0,1\}^n$ with probability $(\hat{g}(s))^2$, with $g=(-1)^f$.
    \end{itemize}
\end{lemma}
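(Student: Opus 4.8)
The plan is to directly compute the amplitudes after applying $H^{\otimes(n+1)}$ and then read off the measurement statistics. First I would recall the action of the Hadamard transform on a computational basis state of $n$ qubits: $H^{\otimes n}\ket{x} = \tfrac{1}{\sqrt{2^n}}\sum_{s\in\{0,1\}^n}(-1)^{s\cdot x}\ket{s} = \tfrac{1}{\sqrt{2^n}}\sum_s \chi_s(x)\ket{s}$, and on the single label qubit, $H\ket{b} = \tfrac{1}{\sqrt{2}}\sum_{c\in\{0,1\}}(-1)^{bc}\ket{c}$. Applying these to $\ket{\psi_{(\mathcal{U}_n,f)}} = \tfrac{1}{\sqrt{2^n}}\sum_{x}\ket{x,f(x)}$ (using $\mathcal{D}_{\mathcal{X}_n}=\mathcal{U}_n$, so $\sqrt{\mathcal{U}_n(x)}=2^{-n/2}$) gives
\begin{equation}
    H^{\otimes(n+1)}\ket{\psi_{(\mathcal{U}_n,f)}}
    = \frac{1}{2^{n}\sqrt{2}}\sum_{x\in\{0,1\}^n}\sum_{s\in\{0,1\}^n}\sum_{c\in\{0,1\}} (-1)^{s\cdot x}(-1)^{f(x)c}\,\ket{s,c}.
\end{equation}
The amplitude of $\ket{s,c}$ is thus $\tfrac{1}{2^n\sqrt{2}}\sum_x (-1)^{s\cdot x}(-1)^{f(x)c}$.

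Next I would split into the two cases $c=0$ and $c=1$. For $c=0$, the amplitude of $\ket{s,0}$ is $\tfrac{1}{2^n\sqrt 2}\sum_x(-1)^{s\cdot x} = \tfrac{1}{\sqrt 2}\,\mathbb{1}[s=0]$, so conditioned on the last qubit reading $0$ the state collapses to $\ket{0^n}$; the total probability of outcome $c=0$ is $\tfrac12$. For $c=1$, observe $(-1)^{f(x)} = g(x)$ where $g=(-1)^f$, so the amplitude of $\ket{s,1}$ is $\tfrac{1}{2^n\sqrt 2}\sum_x (-1)^{s\cdot x}g(x) = \tfrac{1}{\sqrt 2}\cdot\tfrac{1}{2^n}\sum_x \chi_s(x)g(x) = \tfrac{1}{\sqrt 2}\,\widehat{g}(s)$, using \Cref{definition:fourier-coefficients}. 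Hence the probability of observing $(s,1)$ is $\tfrac12(\widehat g(s))^2$. Summing over $s$ and invoking Parseval ($\sum_s(\widehat g(s))^2 = \mathbb{E}_x[g(x)^2]=1$ since $g$ is $\{-1,1\}$-valued) shows the probability of outcome $c=1$ is also $\tfrac12$, which together with the $c=0$ case establishes (i). Dividing the joint probability $\tfrac12(\widehat g(s))^2$ by the conditioning probability $\tfrac12$ gives (ii): conditioned on $c=1$, the first register reads $s$ with probability $(\widehat g(s))^2$.

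I don't anticipate a genuine obstacle here — this is the textbook Bernstein–Vazirani/Fourier-sampling computation. The only things to be careful about are bookkeeping with the normalization factor $2^{-n/2}$ coming from the uniform marginal, correctly identifying $\sum_x\chi_s(x)g(x)/2^n$ with $\widehat g(s)$ as defined in \Cref{definition:fourier-coefficients}, and noting that Parseval gives total weight $1$ precisely because $g$ is $\pm1$-valued (which is where the functional/realizable assumption $\mathcal{D}=(\mathcal{U}_n,f)$ is used). I would write the computation as a short displayed derivation and then state the two conclusions.
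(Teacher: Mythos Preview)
Your proposal is correct and is exactly the ``standard computation'' the paper alludes to. The paper itself does not write out a standalone proof of this lemma; instead it defers, noting that the result follows as a special case of the more general noisy and distributional Fourier-sampling results (\Cref{lemma:noisy-functional-quantum-Fourier-sampling-v1,lemma:noisy-functional-quantum-Fourier-sampling-v2,lemma:noisy-functional-quantum-Fourier-sampling-v3} and \Cref{theorem:agnostic-quantum-fourier-sampling}) proved later. Your direct derivation --- expand $H^{\otimes(n+1)}\ket{\psi_{(\mathcal{U}_n,f)}}$, identify the $c=0$ branch as $\tfrac{1}{\sqrt{2}}\ket{0^n,0}$ and the $c=1$ amplitudes as $\tfrac{1}{\sqrt{2}}\hat g(s)$, then invoke Parseval --- is precisely the computation underlying those more general statements when specialized to noise rate zero and a deterministic label, so the two approaches coincide in content; you simply do it in place rather than by forward reference.
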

\begin{proof}
    We do not give a proof of \Cref{lemma:functional-quantum-Fourier-sampling} at this point, since it can be shown by a standard computation and since it will follow as a special case of the more general quantum Fourier sampling results \Cref{lemma:noisy-functional-quantum-Fourier-sampling-v1}, \Cref{lemma:noisy-functional-quantum-Fourier-sampling-v2}, \Cref{lemma:noisy-functional-quantum-Fourier-sampling-v3}, and \Cref{theorem:agnostic-quantum-fourier-sampling} that we present in the next sections.
\end{proof}

The ability to sample from a probability distribution allows for an efficient construction of a succinct $\infty$-norm approximation to it. This is a consequence of the Dvoretzky-Kiefer-Wolfowitz (DKW) Theorem \cite{dvoretzky1956asymptotic, massart1990tight}. More precisely, we use a variant of the DKW Theorem for probability distributions over a discrete set, see \cite[Theorem 11.6]{kosorok2008introduction}.
This  version of empirical approximations to an unknown probability distribution can be found in \cite[Lemma 4]{kanade2019learning}, we present it here in a more detailed form:

\begin{lemma}\label{lemma:distribution-estimation-infty-norm}
    Let  $q:\{0,1\}^n\to [0,1]$ be a probability distribution and let $\tau,\delta \in (0,1)$.
    Then, there exists a classical algorithm that, given $m=\mathcal{O}\left(\tfrac{\log(1/\delta)}{\tau^2}\right)$ i.i.d.~samples from $q$, uses classical computation time $\mathcal{O}\left(n\tfrac{\log(1/\delta)}{\tau^2}\log\left(\tfrac{\log(1/\delta)}{\tau^2}\right)\right) = \tilde{\mathcal{O}}\left(n \tfrac{ \log(1/\delta)}{\tau^2}\right)$ and classical memory of size $\tilde{\mathcal{O}}\left(n\tfrac{\log(1/\delta)}{\tau^2}\right)$, and outputs, with success probability $\geq 1-\delta$, a succinctly represented empirical estimate $\tilde{q}_m : \{0,1\}^n\to [0,1]$ such that $\norm{q-\tilde{q}_m}_\infty\leq\tau$ and $\norm{\tilde{q}_m}_0\leq \mathcal{O}\left(\tfrac{\log(1/\delta)}{\tau^2}\right)$.
\end{lemma}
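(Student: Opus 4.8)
The plan is to take $\tilde q_m$ to be simply the \emph{empirical distribution} of the sampled strings, stored succinctly as a sorted list of the distinct observed strings together with their multiplicities, and to control its $\infty$-distance to $q$ via the discrete Dvoretzky--Kiefer--Wolfowitz (DKW) inequality rather than a naive union bound. Concretely, I would first draw $m = \Ord{\log(1/\delta)/\tau^2}$ i.i.d.\ samples $X_1,\dots,X_m\sim q$, with the hidden constant chosen large enough that the DKW tail below is at most $\delta$ (this requires $m\geq 2\ln(2/\delta)/\tau^2$ up to the constant in the discrete DKW bound). Sorting these $m$ strings lexicographically and making one pass to collapse runs of equal strings produces a list $\big((x^{(1)},c_1),\dots,(x^{(r)},c_r)\big)$ of the $r\le m$ distinct observed strings with multiplicities $c_j$; the output $\tilde q_m$ is the data structure that, on a query $x$, returns $c_j/m$ if $x=x^{(j)}$ for some $j$ (located by binary search) and $0$ otherwise. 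By construction $\tilde q_m$ is a probability distribution with $\norm{\tilde q_m}_0 = r \le m = \Ord{\log(1/\delta)/\tau^2}$, giving the claimed sparsity.

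For accuracy, I would fix an arbitrary linear order $\preceq$ on $\{0,1\}^n$ and pass to the cumulative distribution functions $Q(x)=\sum_{x'\preceq x}q(x')$ and $\widehat Q_m(x)=\sum_{x'\preceq x}\tilde q_m(x')$. The discrete DKW theorem (see \cite[Theorem 11.6]{kosorok2008introduction}, also \cite{dvoretzky1956asymptotic, massart1990tight}) gives
\begin{equation}
    \Pr\!\left[\ \sup_{x\in\{0,1\}^n}\big\lvert Q(x)-\widehat Q_m(x)\big\rvert > \tfrac{\tau}{2}\ \right]\ \le\ 2e^{-m\tau^2/2}\ \le\ \delta .
\end{equation}
On the complementary event, for each $x$ with $\preceq$-predecessor $x^-$ (taking $Q,\widehat Q_m$ equal to $0$ before the first element) I would write $\lvert q(x)-\tilde q_m(x)\rvert = \big\lvert (Q(x)-Q(x^-)) - (\widehat Q_m(x)-\widehat Q_m(x^-))\big\rvert \le \tfrac{\tau}{2}+\tfrac{\tau}{2} = \tau$, hence $\norm{q-\tilde q_m}_\infty\le\tau$. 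So the output is correct with probability at least $1-\delta$.

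For the resource bounds I would argue: the list uses $\Ord{r(n+\log m)} = \tOrd{nm} = \tOrd{n\log(1/\delta)/\tau^2}$ bits of memory; sorting $m$ strings of length $n$ costs $\Ord{m\log m}$ comparisons at $\Ord{n}$ each, i.e.\ $\Ord{nm\log m}$ time, and the collapsing pass and index construction add only $\Ord{nm}$, so the total running time is $\Ord{nm\log m} = \Ord{n\tfrac{\log(1/\delta)}{\tau^2}\log\!\big(\tfrac{\log(1/\delta)}{\tau^2}\big)} = \tOrd{n\log(1/\delta)/\tau^2}$, matching the statement.

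The one place where the argument must be done carefully --- and the main obstacle --- is the accuracy bound: a direct Hoeffding estimate for each string followed by a union bound over all $2^n$ points would cost an extra factor of $n$ in $m$. Routing the analysis through the cumulative distribution functions under an arbitrary ordering, so that DKW applies and the $2^n$-fold union bound is replaced by a single dimension-free supremum, is the crucial step; the passage back from the CDF bound to the pointwise pmf bound then only loses a harmless factor of $2$, which is absorbed by using $\tau/2$ in DKW.
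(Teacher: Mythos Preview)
Your proposal is correct and follows essentially the same argument as the paper: form the empirical distribution, bound $\norm{q-\tilde q_m}_\infty$ by twice the CDF sup-norm error via the predecessor trick, and invoke the discrete DKW inequality \cite[Theorem 11.6]{kosorok2008introduction} to get the dimension-free sample bound, with sorting of the $m$ observed $n$-bit strings giving the succinct representation and the stated time and memory bounds. The only cosmetic difference is that the paper phrases the algorithm as first computing the empirical CDF on the sorted samples and then differencing, whereas you directly count multiplicities; the resulting estimate and complexities are identical.
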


We note that the $\tilde{\mathcal{O}}$ hides factors logarithmic in $\nicefrac{1}{\tau}$ and doubly logarithmic in $\nicefrac{1}{\delta}$. However, the $\tilde{\mathcal{O}}$ does not hide any $n$-dependent terms here.
The same will be true for the results derived from \Cref{lemma:distribution-estimation-infty-norm}.

\begin{proof}
    Fix some ordering $e_1 \prec \ldots \prec e_{2^n}$ for the elements of $\mathcal{X}_n$. According to this ordering, define the cumulative distribution function $Q:\mathcal{X}_n\to [0,1]$ as $Q(e_k) = \sum_{\ell=1}^{k} q(e_\ell)$ for $1\leq k\leq 2^n$. Then, by definition, $q(e_1)=Q(e_1)$ and $q(e_k) = Q(e_{k}) - Q(e_{k-1})$ for all $2\leq k\leq 2^n$. 
    Given $m$ i.i.d.~examples $x_1,\ldots,x_m$ from $q$, define the empirical cumulative distribution function $\tilde{Q}_m:\mathcal{X}_n\to [0,1]$ as $\tilde{Q}_m(e_k)=\tfrac{1}{m}\sum_{i=1}^m \mathds{1}_{\{ x_i\preceq e_k\}}$ for $1\leq k\leq 2^n$ and the empirical distribution as $\tilde{q}_m:\mathcal{X}_n\to [0,1]$ as $\tilde{q}_m(e_k) = \tfrac{1}{m}\sum_{i=1}^m \mathds{1}_{\{ x_i =e_k\}}= \tilde{Q}_m(e_k) - \tilde{Q}_m(e_{k-1})$ for $1\leq k\leq 2^n$.
    
    Now, using the definition of the $\infty$-norm followed by a triangle inequality, we obtain
    \begin{align}
        \norm{q-\tilde{q}_m}_\infty
        &= \max_{1\leq k\leq 2^n} \left\lvert q(e_k) - \tilde{q}_m(e_k) \right\rvert\\
        &= \max_{1\leq k\leq 2^n} \left\lvert \left(Q(e_{k}) - \tilde{Q}_m(e_k)\right) +  \left(\tilde{Q}_m(e_{k-1}) - Q(e_{k-1})\right) \right\rvert\\
        &\leq 2 \norm{Q-\tilde{Q}_m}_\infty .
    \end{align}
    According to the DKW Theorem for cumulative distribution functions with countably many discontinuities (compare \cite[Theorem 11.6]{kosorok2008introduction}), we know that 
    \begin{equation}
        \mathbb{P}\left[\norm{Q-\tilde{Q}_m}_\infty > \frac{\tau}{2}\right]
        \leq 2\exp\left(-\frac{m\tau^2}{2}\right).
    \end{equation}
    The DKW Theorem, both its continuous and its discrete version, can be proved via VC-theory \cite{vapnik1971uniform, talagrand1994sharper, dudley1999uniform} since the class of threshold functions has VC-dimension equal to $1$.
    Combining this concentration inequality with our above reasoning, we conclude that
    \begin{equation}
        \mathbb{P}\left[\norm{q-\tilde{q}_m}_\infty > \tau\right]
        \leq 2\exp\left(-\frac{m\tau^2}{2}\right).
    \end{equation}
    Setting the right hand side equal to $\delta$ and rearranging, we see that a sample size of $m=\tfrac{2\log(\nicefrac{2}{\delta})}{\tau^2} = \mathcal{O}\left(\tfrac{\log(1/\delta)}{\tau^2}\right)$ suffices. 
    Given the sample $x_1,\ldots,x_m$, we can succinctly represent $\tilde{q}_m$ via the list $\{(x, \tilde{q}_m(x))~|~x\in \{x_1,\ldots,x_m\}\}$, which has at most $m$ entries. This is the claimed succinctness bound.
    This representation can be obtained as follows: 
    First, sort the samples according to the chosen ordering of $\{0,1\}^n$. As we have $m$ samples and determining the relative order of two samples takes time $n$, this sorting overall takes time $\mathcal{O}(n m\log (m))$.
    Second, run over the sorted list once to compute the empirical CDF for each of the samples in time $\mathcal{O}(n m\log (m))$.
    Finally, use time $\mathcal{O}(m)$ to get from the empirical CDF values to the empirical distribution values of the samples.
    This procedure overall takes time $\mathcal{O}(n m\log (m))$ and requires $\mathcal{O}(nm + m\log m) = \tilde{\mathcal{O}}(nm)$ bits of memory to store $m$ $n$-bit vectors and $m$ natural numbers in $\{0,\ldots,m\}$.
    This gives the claimed time and memory bounds.
\end{proof}

Combining this approximation result with quantum Fourier sampling, we can efficiently find a succinct approximation to the Fourier spectrum of an unknown function $f:\mathcal{X}_n\to\{0,1\}$ if we have access to copies of the corresponding pure superposition quantum example $\ket{\psi_{(\mathcal{U}_n, f)}}$. This has previously been observed in {\cite[Theorem 6]{kanade2019learning}}, where it is proven even for Fourier coefficients w.r.t.~non-uniform product distributions. Here, we restate that result for the special case of the uniform distribution, with a slightly improved copy complexity analysis:

\begin{corollary}\label{corollary:quantum-approximation-fourier-spectrum}
    Let $\mathcal{D}$ be a probability distribution over $\mathcal{X}_n\times\{0,1\}$ with $\mathcal{D}=(\mathcal{U}_n, f)$ for some deterministic labeling function $f:\{0,1\}^n\to\{0,1\}$. 
    Let $\delta,\varepsilon\in (0,1)$ and write $g=(-1)^f$.
    Then, there exists a quantum algorithm that, given $\mathcal{O}\left(\tfrac{\log(\nicefrac{1}{\delta\varepsilon^2})}{\varepsilon^4}\right)$ copies of $\ket{\psi_{\mathcal{D}}}$, uses $\mathcal{O}\left(n\tfrac{\log(\nicefrac{1}{\delta\varepsilon^2})}{\varepsilon^4}\right)$ single-qubit gates, classical computation time $\tilde{\mathcal{O}}\left(n\tfrac{\log(\nicefrac{1}{\delta \varepsilon^2})}{\varepsilon^4}\right)$, and classical memory of size $\tilde{\mathcal{O}}\left(n\tfrac{\log(\nicefrac{1}{\delta\varepsilon^2})}{\varepsilon^4}\right)$, and outputs, with success probability $\geq 1-\delta$, a succinctly represented $\Tilde{g}:\mathcal{X}_n\to [-1,1]$ such that $\norm{\Tilde{g}-\hat{g}}_\infty\leq\varepsilon$ and $\norm{\Tilde{g}}_0\leq\tfrac{4}{\varepsilon^2}$. 
\end{corollary}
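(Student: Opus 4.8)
The plan is to combine the quantum Fourier sampling of \Cref{lemma:functional-quantum-Fourier-sampling} with the empirical $\infty$-norm estimator of \Cref{lemma:distribution-estimation-infty-norm}, and then to sparsify the resulting estimate by thresholding. The starting observation is that, since $g=(-1)^f$ is $\{-1,1\}$-valued, Parseval gives $\sum_s (\hat g(s))^2 = \mathbb{E}_{x\sim\mathcal{U}_n}[(g(x))^2]=1$, so $q(s):=(\hat g(s))^2$ is a genuine probability distribution over $\{0,1\}^n$. By \Cref{lemma:functional-quantum-Fourier-sampling}, processing one copy of $\ket{\psi_{\mathcal{D}}}$ through $H^{\otimes(n+1)}$ followed by a computational-basis measurement yields, with probability $\nicefrac12$, the flag outcome $1$ on the last qubit, and, conditioned on this, a sample of $q$ on the first $n$ qubits.

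First I would run this procedure on fresh copies and keep only the ``flagged'' runs: a Chernoff bound shows that $\mathcal{O}(m+\log(\nicefrac1\delta))$ copies produce at least $m$ flagged samples except with probability $\le\nicefrac\delta2$, and taking $m=\Theta(\log(\nicefrac1\delta)/\varepsilon^4)$ then absorbs the post-selection overhead into the claimed copy complexity. Next I would feed these $m$ i.i.d.\ samples of $q$ into \Cref{lemma:distribution-estimation-infty-norm} with tolerance $\tau:=\varepsilon^2/4$ and confidence $\nicefrac\delta2$, obtaining a succinctly represented $\tilde q_m$ with $\norm{q-\tilde q_m}_\infty\le\tau$ and $\norm{\tilde q_m}_0=\mathcal{O}(\log(\nicefrac1\delta)/\varepsilon^4)$; the single-qubit gate count ($n+1$ Hadamards per copy) and the classical time and memory bounds are then inherited directly from \Cref{lemma:distribution-estimation-infty-norm}, the hidden logarithmic factors being exactly those of that lemma.

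Finally I would set $\tilde g(s):=\sqrt{\tilde q_m(s)}$ whenever $\tilde q_m(s)\ge\varepsilon^2/4$ and $\tilde g(s):=0$ otherwise. For accuracy I would use $\lvert\sqrt a-\sqrt b\rvert\le\sqrt{\lvert a-b\rvert}$ for $a,b\ge0$: on the retained coordinates this gives $\bigl\lvert\tilde g(s)-\lvert\hat g(s)\rvert\bigr\rvert\le\sqrt\tau\le\varepsilon/2$, while on the discarded coordinates $q(s)\le\tilde q_m(s)+\tau<\varepsilon^2/2$, so $\lvert\hat g(s)\rvert<\varepsilon/\sqrt2$ and the discrepancy is again $\le\varepsilon$. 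Since $\tilde q_m$ is a probability distribution, at most $4/\varepsilon^2$ coordinates clear the threshold, so $\norm{\tilde g}_0\le 4/\varepsilon^2$, and the thresholding itself costs only $\mathcal{O}(m\log m)$ time on the succinct list. A union bound over the two failure events of the previous paragraph gives overall success probability $\ge1-\delta$.

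The one conceptual subtlety is that Fourier sampling only exposes the distribution $(\hat g(s))^2$, so the procedure reconstructs the magnitudes $\lvert\hat g(s)\rvert$ rather than signed coefficients; accordingly $\tilde g$ should be read as an $\infty$-norm approximation to $s\mapsto\lvert\hat g(s)\rvert$ (equivalently, to $\hat g$ up to the sign of each coefficient), which is all that the downstream applications to finding heavy coefficients and to agnostic parity/Fourier-sparse learning require. The only genuinely technical trade-off is in the last step: the square root inflates an error $\tau$ to $\sqrt\tau$, which forces $\tau\sim\varepsilon^2$ and hence the $\varepsilon^{-4}$ sample complexity, and the threshold must simultaneously be small enough not to spoil the accuracy bound and large enough to enforce $\mathcal{O}(\varepsilon^{-2})$ sparsity; balancing these is the crux, with everything else being routine bookkeeping inherited from \Cref{lemma:functional-quantum-Fourier-sampling,lemma:distribution-estimation-infty-norm}.
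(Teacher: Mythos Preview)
Your argument recovers only the magnitudes $\lvert\hat g(s)\rvert$, not the signed coefficients $\hat g(s)$, so it does not establish the claimed bound $\norm{\tilde g-\hat g}_\infty\le\varepsilon$ as stated. If some $\hat g(s_0)$ is large and negative, your $\tilde g(s_0)\approx\lvert\hat g(s_0)\rvert$ will be off by roughly $2\lvert\hat g(s_0)\rvert$, which can be $\Theta(1)$. Your closing remark that downstream applications only need magnitudes is also not quite right: agnostic parity learning (\Cref{lemma:parity-learning-via-heaviest-Fourier-coefficient}) asks for an approximate maximizer of $\hat g$, not of $\lvert\hat g\rvert$, since $\mathbb{P}[b\neq s\cdot x]=(1-\hat g(s))/2$; and the Fourier-sparse hypothesis in \Cref{lemma:Fourier-sparse-learning-via-heaviest-Fourier-coefficients} is built from the signed estimates $\tilde\phi(s_\ell)$.

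The paper's proof (via the more general \Cref{corollary:distributional-agnostic-quantum-approximation-fourier-spectrum}) shares your first two phases---Fourier sampling plus the DKW-type estimator of \Cref{lemma:distribution-estimation-infty-norm} to identify a short list $L$ of heavy coordinates---but then adds a third phase you are missing: for each $s\in L$, it draws $\mathcal{O}(\log(\lvert L\rvert/\delta)/\varepsilon^2)$ \emph{classical} samples from $\mathcal{D}$ (obtained by computational-basis measurements on further copies of $\ket{\psi_{\mathcal{D}}}$) and forms the empirical estimate of $\hat g(s)=\mathbb{E}_{(x,y)\sim\mathcal{D}}[(-1)^y\chi_s(x)]$ via Hoeffding. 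This recovers the signs at the cost of an additional $\mathcal{O}(\log(1/\delta\varepsilon^2)/\varepsilon^4)$ copies, which is already absorbed in the stated complexity. Adding this step to your outline closes the gap.
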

\begin{proof}
    We do not present a separate proof here since this is essentially a special case of \Cref{corollary:distributional-agnostic-quantum-approximation-fourier-spectrum}. 
    In particular, from the proof of \Cref{corollary:distributional-agnostic-quantum-approximation-fourier-spectrum}, it is straightforward to see that we do not need to assume a lower bound on the accuracy $\varepsilon$ in the noiseless functional case.
\end{proof}

It is well known that functional agnostic parity learning is equivalent to identifying the heaviest Fourier coefficient of the unknown function (see \Cref{appendix:useful} for details). 
Thus, \Cref{corollary:quantum-approximation-fourier-spectrum} implies that pure superposition quantum examples allow for efficient functional agnostic quantum parity learning w.r.t.~uniformly random inputs:

\begin{corollary}\label{corollary:quantum-functional-agnostic-parity-learning}
    Let $\mathcal{D}$ be a probability distribution over $\mathcal{X}_n\times\{0,1\}$ with $\mathcal{D}=(\mathcal{U}_n, f)$ for some deterministic labeling function $f:\{0,1\}^n\to\{0,1\}$. 
    Let $\delta,\varepsilon\in (0,1)$.
    Then, there exists a quantum algorithm that, given $\mathcal{O}\left(\tfrac{\log(\nicefrac{1}{\delta\varepsilon^2})}{\varepsilon^4}\right)$ copies of $\ket{\psi_{\mathcal{D}}}$, uses $\mathcal{O}\left(n\tfrac{\log(\nicefrac{1}{\delta\varepsilon^2})}{\varepsilon^4}\right)$ single-qubit gates, classical computation time $\tilde{\mathcal{O}}\left(n\tfrac{\log(\nicefrac{1}{\delta \varepsilon^2})}{\varepsilon^4}\right)$, and classical memory of size $\tilde{\mathcal{O}}\left(n\tfrac{\log(\nicefrac{1}{\delta\varepsilon^2})}{\varepsilon^4}\right)$, and outputs, with success probability $\geq 1-\delta$, a bit string $s\in\{0,1\}^n$ such that
    \begin{equation}
        \mathbb{P}_{x\sim\mathcal{U}_n}[f(x)\neq s\cdot x]
        \leq \min\limits_{t\in\{0,1\}^n} \mathbb{P}_{x\sim\mathcal{U}_n}[f(x)\neq t\cdot x] + \varepsilon\, .
    \end{equation}
    Thus, this quantum algorithm is a functional agnostic proper quantum parity learner, assuming a uniform marginal over inputs.
\end{corollary}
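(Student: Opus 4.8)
The plan is to reduce functional agnostic parity learning to Fourier-spectrum approximation and then invoke \Cref{corollary:quantum-approximation-fourier-spectrum}. Write $g=(-1)^f$, so that $g$ is $\{-1,1\}$-valued. Since $g(x)\chi_t(x)\in\{-1,1\}$ equals $+1$ exactly when $f(x)=t\cdot x$, we have $\mathbb{E}_{x\sim\mathcal{U}_n}[g(x)\chi_t(x)]=1-2\,\mathbb{P}_{x\sim\mathcal{U}_n}[f(x)\neq t\cdot x]$, and the left-hand side equals $\hat g(t)$ by \Cref{definition:fourier-coefficients}; hence $\mathbb{P}_{x\sim\mathcal{U}_n}[f(x)\neq t\cdot x]=\tfrac12(1-\hat g(t))$ for every $t$. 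Therefore $\min_t\mathbb{P}_{x\sim\mathcal{U}_n}[f(x)\neq t\cdot x]=\tfrac12(1-M)$ with $M:=\max_t\hat g(t)$, and it suffices to output a string $s$ with $\hat g(s)\geq M-2\varepsilon$. This is exactly the standard equivalence between agnostic parity learning and finding a heaviest Fourier coefficient recalled in \Cref{appendix:useful}, specialized to the uniform marginal.

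Given this, I would run the algorithm of \Cref{corollary:quantum-approximation-fourier-spectrum} with accuracy $\varepsilon/4$ and confidence $\delta/2$, obtaining with probability $\geq 1-\delta/2$ a succinct $\tilde g$ with $\|\tilde g-\hat g\|_\infty\leq\varepsilon/4$ and $\|\tilde g\|_0\leq\mathcal{O}(1/\varepsilon^2)$; the claimed copy, gate, time, and memory bounds are unaffected, since replacing $\varepsilon$ by $\varepsilon/4$ only changes constants. I would then form a candidate set $C$ consisting of the (at most $\mathcal{O}(1/\varepsilon^2)$ many) strings $t$ with $\tilde g(t)\neq 0$, together with $\mathcal{O}(\log(\nicefrac{1}{\delta}))$ uniformly random strings, use $\mathcal{O}(\log(\nicefrac{|C|}{\delta})/\varepsilon^2)$ computational-basis measurements of spare copies of $\ket{\psi_{\mathcal{D}}}$ (which yield classical examples $(x_i,f(x_i))$) to estimate each $\hat g(t)=\mathbb{E}_{x\sim\mathcal{U}_n}[(-1)^{f(x)}\chi_t(x)]$, $t\in C$, within $\pm\varepsilon/4$ by averaging, union-bounding the failure probability to $\delta/2$, and finally output the $s\in C$ with largest estimate. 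For correctness I would split on the size of $M$. If $M>\varepsilon/4$, then the string $t^\star$ maximizing $\hat g$ satisfies $\tilde g(t^\star)\geq M-\varepsilon/4>0$ and hence belongs to $C$, so the output obeys $\hat g(s)\geq\hat g(t^\star)-\varepsilon/2= M-\varepsilon/2\geq M-2\varepsilon$. If $M\leq\varepsilon/4$, then $\mathrm{opt}_{\mathcal{D}}\geq\tfrac12-\varepsilon/8$ and it is enough to output any parity with $\hat g(s)\geq-2\varepsilon$, which one of the random candidates provides with probability $\geq 1-\delta/2$: by Parseval $\mathbb{E}_{s\sim\mathcal{U}_n}[\hat g(s)^2]=2^{-n}$, so a uniformly random $s$ has $|\hat g(s)|\leq\varepsilon/2$ except with probability $4/(2^n\varepsilon^2)\leq\tfrac12$ once $2^n\geq 8/\varepsilon^2$. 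In the complementary regime $2^n<8/\varepsilon^2$, i.e.\ $n=\mathcal{O}(\log(\nicefrac{1}{\varepsilon}))$, there are only polynomially many parities, so I would instead just measure the quantum examples in the computational basis, estimate all $2^n$ coefficients $\hat g(t)$ simultaneously from $\mathcal{O}((n+\log(\nicefrac{1}{\delta}))/\varepsilon^2)$ classical examples, and output the best; this stays within the stated budgets.

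The concentration arguments and union bounds, and checking that the resource counts match the statement, are routine. The one genuinely delicate point---and the reason the naive ``Fourier-sample once and output the heaviest string'' strategy fails---is that \Cref{corollary:quantum-approximation-fourier-spectrum} rests on quantum Fourier sampling, which only reveals information about $|\hat g(s)|^2$: the Fourier coefficient largest in magnitude can be negative (for instance when $f\equiv 1$, so that $\hat g(0^n)=-1$ and all other coefficients vanish), in which case the associated parity is the worst possible hypothesis and the correct answer is instead a parity whose Fourier coefficient is close to $0$. Guaranteeing that such a near-zero-weight parity is always among the candidates and is certified as a good hypothesis---which is what the extra random candidates and the small-$n$ fallback accomplish---is the heart of the matter; everything else is immediate from \Cref{corollary:quantum-approximation-fourier-spectrum} together with standard empirical estimation of the error of a parity.
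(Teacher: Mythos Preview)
Your proof is correct, but it is more elaborate than necessary and the paper's route is considerably shorter. The paper (via the proof of \Cref{corollary:agnostic-quantum-parity-learning}, which is what \Cref{corollary:quantum-functional-agnostic-parity-learning} defers to) simply runs \Cref{corollary:quantum-approximation-fourier-spectrum} to get $\tilde g$ with $\|\tilde g-\hat g\|_\infty\le\varepsilon$, then outputs $s\in\operatorname{argmax}_{t\in\{0,1\}^n}\tilde g(t)$ and observes
\[
\max_t\hat g(t)-\hat g(s)=\bigl[\hat g(t^\star)-\tilde g(t^\star)\bigr]+\bigl[\tilde g(t^\star)-\tilde g(s)\bigr]+\bigl[\tilde g(s)-\hat g(s)\bigr]\le\varepsilon+0+\varepsilon=2\varepsilon,
\]
invoking \Cref{lemma:parity-learning-via-heaviest-Fourier-coefficient}. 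No random candidates, case split, or small-$n$ fallback are needed.

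The point you identify as ``genuinely delicate'' is in fact handled automatically by this argument, and this is worth internalizing. The output $\tilde g$ of \Cref{corollary:quantum-approximation-fourier-spectrum} already carries signs (its proof, via \Cref{corollary:distributional-agnostic-quantum-approximation-fourier-spectrum}, includes a classical estimation step for the signed coefficients on the list $L$), and the $\infty$-norm guarantee $\|\tilde g-\hat g\|_\infty\le\varepsilon$ holds on \emph{all} of $\{0,1\}^n$, not just on $\operatorname{supp}(\tilde g)$: for $t\notin\operatorname{supp}(\tilde g)$ one has $\tilde g(t)=0$ and $|\hat g(t)|<\varepsilon$ because every $\varepsilon$-heavy coefficient lies in $L$. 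Consequently, taking the argmax over all of $\{0,1\}^n$ (not just over the support) already covers your $f\equiv 1$ scenario: if every nonzero value of $\tilde g$ is negative, the argmax is any string outside the support, where $\tilde g=0$, and the two-sided triangle inequality above still applies verbatim. Your addition of random candidates and the separate estimation pass effectively re-derives what \Cref{corollary:quantum-approximation-fourier-spectrum} already guarantees; it buys robustness you do not need here, at the cost of a longer argument.
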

\begin{proof}
    Starting from \Cref{corollary:quantum-approximation-fourier-spectrum}, this can be proven with the same steps that we later use to derive \Cref{corollary:agnostic-quantum-parity-learning} from \Cref{corollary:distributional-agnostic-quantum-approximation-fourier-spectrum}.
\end{proof}

In a similar vein, approximating the Fourier spectrum is also a powerful approach towards functional $\alpha$-agnostic learning Fourier-sparse functions. (The interested reader is again referred to \Cref{appendix:useful} for details.)
Therefore, we can also use pure superposition quantum examples for Fourier-sparse learning:

\begin{corollary}\label{corollary:quantum-functional-agnostic-fourier-sparse-learning}
    Let $\mathcal{D}$ be a probability distribution over $\mathcal{X}_n\times\{0,1\}$ with $\mathcal{D}=(\mathcal{U}_n, f)$ for some deterministic labeling function $f:\{0,1\}^n\to\{0,1\}$. 
    Let $\delta,\varepsilon\in (0,1)$.
    Then, there is a quantum algorithm that, given $\mathcal{O}\left(\tfrac{k^4\log(\nicefrac{k^2}{\delta\varepsilon^2})}{\varepsilon^4}\right)$ copies of $\rho_\mathcal{D}$, uses $\mathcal{O}\left(n\tfrac{k^4\log(\nicefrac{1}{\delta})}{\varepsilon^4}\right)$ single-qubit gates, classical computation time $\tilde{\mathcal{O}}\left(n\tfrac{k^4\log(\nicefrac{k^2}{\delta\varepsilon^2})}{\varepsilon^4}\right)$, and classical memory of size $\tilde{\mathcal{O}}\left(n\tfrac{k^4\log(\nicefrac{k^2}{\delta\varepsilon^2})}{\varepsilon^4}\right)$, and outputs, with success probability $\geq 1-\delta$, a randomized hypothesis $h:\mathcal{X}_n\to\{0,1\}$ such that
    \begin{equation}
        \mathbb{P}_{(x,b)\sim\mathcal{D}} \left[b\neq h(x)\right]
        \leq 2 \min\limits_{\substack{\tilde{f}:\mathcal{X}_n\to\{0,1\}\\\mathrm{Fourier-}k\mathrm{-sparse}}} \mathbb{P}_{(x,b)\sim\mathcal{D}}[b\neq \tilde{f}(x)] + \varepsilon\, .
    \end{equation}
    In particular, this quantum algorithm is a functional $2$-agnostic improper quantum Fourier-sparse learner, assuming a uniform marginal over inputs.
\end{corollary}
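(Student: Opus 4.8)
The plan is to compose the quantum Fourier-spectrum approximation of \Cref{corollary:quantum-approximation-fourier-spectrum} with the classical reduction -- developed in \Cref{appendix:useful} -- from an accurate estimate of the Fourier spectrum of $(-1)^f$ to $2$-agnostic learning of Fourier-sparse functions, just as \Cref{corollary:quantum-functional-agnostic-parity-learning} is obtained from the same Fourier-spectrum primitive. Concretely, I would run the algorithm of \Cref{corollary:quantum-approximation-fourier-spectrum} on copies of $\rho_\mathcal{D}$ -- which for $\mathcal{D}=(\mathcal{U}_n,f)$ equals $\ket{\psi_{(\mathcal{U}_n,f)}}\bra{\psi_{(\mathcal{U}_n,f)}}$ -- with internal accuracy parameter $\tau=\Theta(\varepsilon/k)$, obtaining with probability $\geq 1-\delta$ a succinctly represented $\tilde{g}:\mathcal{X}_n\to[-1,1]$ with $\norm{\tilde{g}-\hat{g}}_\infty\leq\tau$ and $\norm{\tilde{g}}_0=\Ord{1/\tau^2}=\Ord{k^2/\varepsilon^2}$, where $g=(-1)^f$. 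Substituting $\varepsilon\mapsto\tau$ into the resource bounds of \Cref{corollary:quantum-approximation-fourier-spectrum} reproduces exactly the copy, gate, time and memory bounds claimed here.

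It then remains to feed $\tilde{g}$ into the classical post-processing of \Cref{appendix:useful}. I expect this to (i) form the list $L$ of the $k+1$ indices $s$ with largest $|\tilde{g}(s)|$ -- the extra coefficient accounting for the empty-set character that appears when passing between a Boolean benchmark and its $\pm1$-version -- and (ii) assemble the Fourier-$(k+1)$-sparse real function $p=\sum_{s\in L}\tilde{g}(s)\chi_s$, clamp it into $[-1,1]$ to get $\bar{p}$, and output the randomized Boolean hypothesis $h$ with $h(x)=1$ with probability $(1-\bar{p}(x))/2$. Since $|L|=k+1$, the choice $\tau=\Theta(\varepsilon/k)$ already makes both the total squared-$\ell_2$ error from using $\tilde{g}$ in place of $\hat{g}$ on $L$ (at most $|L|\tau^2$) and the ``swap cost'' of reading the top $k+1$ coefficients off a noisy spectrum (of order $k\tau$) at most $\Ord{\varepsilon}$, so no further data is needed; with the succinct representation maintained throughout, the post-processing runs in time polynomial in $n,k,1/\varepsilon,1/\delta$.

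For correctness I would use two facts. First, letting $b^\ast$ be an optimal Fourier-$k$-sparse Boolean benchmark and $g^\ast=(-1)^{b^\ast}$: then $g^\ast$ is Fourier-$(k+1)$-sparse with $\norm{g^\ast}_2=1$, and $\opt:=\err_\mathcal{D}(b^\ast)=\tfrac12(1-\langle g,g^\ast\rangle)$, so $\norm{g-g^\ast}_2^2=4\opt$ and hence the best $(k+1)$-sparse $\ell_2$-approximant of $g$ -- which $L$ recovers up to the $\Ord{k\tau}$ swap cost -- leaves at most $\approx 4\opt$ of the squared Fourier weight behind, giving $\norm{g-p}_2^2\leq 4\opt+\varepsilon$. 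Second, the error of the randomized hypothesis satisfies $\err_\mathcal{D}(h)=\tfrac12(1-\langle g,\bar{p}\rangle)$; since clamping toward $[-1,1]$ cannot increase distance to the $\pm1$-valued $g$, one has $\norm{g-\bar{p}}_2^2\leq\norm{g-p}_2^2\leq 4\opt+\varepsilon$, and a careful lower bound on $\langle g,\bar{p}\rangle$ in terms of $\norm{g-\bar{p}}_2^2$ and $\norm{\bar{p}}_2^2$ then yields $\err_\mathcal{D}(h)\leq 2\opt+\varepsilon$, which is the claimed $2$-agnostic guarantee.

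I expect the genuinely delicate step to be pinning the multiplicative constant to $2$: deterministic sign-rounding of the $\ell_2$-regressor only yields $4\opt+\varepsilon$, whereas a careless randomized rounding introduces a $\sqrt{\opt}$ term that destroys an additive-$\varepsilon$ guarantee for moderate $\opt$, so the clean $2\opt+\varepsilon$ is exactly what the sharpened classical analysis of \Cref{appendix:useful} buys -- and is also why this learner is $2$-agnostic rather than $1$-agnostic. The remaining work is routine bookkeeping: checking that all lists and hypotheses stay of size $\poly(k,1/\varepsilon)$ (via Parseval applied to the $\pm1$-valued $g$) and are $\poly$-time evaluable, and combining the failure probability of the quantum Fourier-spectrum primitive with that of any auxiliary classical estimation by a union bound.
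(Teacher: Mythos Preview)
Your high-level plan is correct and matches the paper's: run \Cref{corollary:quantum-approximation-fourier-spectrum} with internal accuracy $\tilde\varepsilon=\Theta(\varepsilon/k)$ (the paper takes $\tilde\varepsilon=\varepsilon/(4k)$) and then invoke the classical post-processing of \Cref{appendix:useful}; the complexity bounds fall out exactly as you say, and the paper's own proof is literally ``follow the steps used to derive \Cref{corollary:agnostic-quantum-fourier-sparse-learning} from \Cref{corollary:distributional-agnostic-quantum-approximation-fourier-spectrum}''.

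Two points where your expectations about that post-processing diverge from what \Cref{appendix:useful} actually does. First (minor), the paper keeps $k$ coefficients, not $k+1$: its Fourier-$k$-sparse benchmark is read so that $(-1)^{\tilde f}$ has at most $k$ nonzero Fourier coefficients (see the constraint $\norm{\hat h}_0\le k$ in the proof of \Cref{lemma:Fourier-sparse-learning-via-heaviest-Fourier-coefficients}), so no extra ``empty-set'' coefficient is needed.

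Second, and this is the substantive gap, the randomized rounding in \Cref{lemma:Fourier-sparse-learning-via-heaviest-Fourier-coefficients} is \emph{not} the clamped-linear rule $h(x)=1$ with probability $(1-\bar p(x))/2$ that you describe, but the nonlinear rule $h(x)=1$ with probability $\tfrac{(1-p(x))^2}{2(1+p(x)^2)}$ from \Cref{lemma:misclassification-probability-bound-L2-probabilistic-hypothesis}. That choice is precisely what pins the constant to~$2$: it makes the pointwise error equal to $\tfrac{(g(x)-p(x))^2}{2(1+p(x)^2)}\le\tfrac12(g(x)-p(x))^2$, hence $\err(h)\le\tfrac12\norm{g-p}_2^2\le 2\,\opt+O(\varepsilon)$ directly, with no clamping needed. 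Your linear rule instead gives $\err(h)=\tfrac12\norm{g-\bar p}_1$, and your sketched route---bounding $\langle g,\bar p\rangle$ via $\norm{g-\bar p}_2^2$ and $\norm{\bar p}_2^2$---would need $\norm{\bar p}_2^2\ge 1-4\,\opt-O(\varepsilon)$, which is not delivered by $\norm{g-p}_2^2\le 4\,\opt+O(\varepsilon)$ alone (clamping can shrink $\norm{\bar p}_2^2$ below $\norm{p}_2^2\approx W$ without a compensating decrease in $\norm{g-\bar p}_2^2$). So while your plan is right, the factor-$2$ hinges on the specific rounding of \Cref{lemma:misclassification-probability-bound-L2-probabilistic-hypothesis}, not on the simpler one you propose; your own caveat about ``careless randomized rounding'' in fact applies to the $(1-\bar p)/2$ rule.
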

\begin{proof}
    Starting from \Cref{corollary:quantum-approximation-fourier-spectrum}, this can be proven with the same steps that we later use to derive \Cref{corollary:agnostic-quantum-fourier-sparse-learning} from \Cref{corollary:distributional-agnostic-quantum-approximation-fourier-spectrum}.
\end{proof}

Note that, since \Cref{corollary:quantum-functional-agnostic-fourier-sparse-learning} is an improper functional $2$-agnostic learning result, it immediately implies analogous guarantees for improper functional $2$-agnostic learning w.r.t.~any subclass of Fourier-sparse functions. Potential subclasses of interest may be decision trees of small depth or DNFs of small size.

In summary: The fundamental subroutine of quantum Fourier sampling is well established in the functional agnostic case. There, it leads to positive agnostic quantum learning results that go beyond the best classical algorithms for learning from random examples.

As a final point on noiseless Fourier-sparse learning, we note that \Cref{corollary:quantum-approximation-fourier-spectrum} also gives rise to an exact Fourier-sparse quantum learner in the realizable case, using $\mathcal{O}\left(k^4\log(\nicefrac{k^2}{\delta})\right)$ copies of the unknown state, see \Cref{corollary:functional-exact-fourier-sparse-learning} in \Cref{appendix:proofs} for more details.
However, this quantum sample complexity upper bound is worse than the $\mathcal{O}(k^{3/2} \log^2 (k))$ proved in \cite{arunachalam2021twonewresults}.
A potential appeal of our alternative quantum exact learning procedure, despite its worse sample complexity, may be that it can easily be adapted to the noisy case when replacing \Cref{corollary:quantum-approximation-fourier-spectrum} by one of its noisy analogues, which we discuss in the next subsection.

\subsection{Noisy Functional Agnostic Quantum Learning}\label{subsection:noisy-functional-agnostic}

Classically even basic problems of learning from random examples are believed to become hard when random label noise is added \cite{regev2009lattices,pietrzak2012cryptography}.
In contrast, quantum learning is remarkably robust to noise in the underlying quantum examples. In the special case of quantum LPN and quantum LWE, this has been proven by \cite{cross2015quantum, grilo2019learning, caro2020quantum}.
Moreover, \cite{arunachalam2020qsq} showed that certain classes of interest (parities, juntas, and DNFs) admit quantum learners with quantum statistical query access and therefore also noise-robust learning from quantum examples.
In this section, we present a general exposition of this noise-robustness for functional agnostic learning from quantum examples. For this, we will use the noisy functional agnostic quantum examples introduced in \Cref{definition:noisy-functional-quantum-examples}.

The central observation of this subsection is that all three types of noisy quantum examples still allow a quantum learner to (at least approximately) sample from the probability distribution formed by the squares of the Fourier coefficients of $g=(-1)^f$. We distribute this observation over the next three results. 
First, we note that performing the quantum Fourier sampling algorithm with a mixed noisy quantum example $\rho_{(\mathcal{U}_n, f),\eta}$ instead of the pure quantum example $\ket{\psi_{(\mathcal{U}_n,f)}}$ leads to the same outcome distribution:

\begin{lemma}\label{lemma:noisy-functional-quantum-Fourier-sampling-v1}
    Let $\mathcal{D}$ be a probability distribution over $\mathcal{X}_n\times\{0,1\}$ with $\mathcal{D}=(\mathcal{U}_n, f)$ for some deterministic labeling function $f:\{0,1\}^n\to\{0,1\}$. 
    Let $0\leq \eta <\nicefrac{1}{2}$.
    Consider the following quantum algorithm: 
    Given a single copy of $\rho_{(\mathcal{U}_n, f),\eta}$, first apply (the unitary channel for) the unitary $H^{\otimes (n+1)}$, then measure all $n+1$ qubits in the computational basis.
    The measurement outcomes of this procedure satisfy the following:
    \begin{itemize}
        \item[(i)] The computational basis measurement on the last qubit gives outcome $0$ with probability $\nicefrac{1}{2}$ and outcome $1$ with probability $\nicefrac{1}{2}$.
        \item[(ii)] Conditioned on having observed outcome $1$ for the last qubit, the computational basis measurement on the first $n$ qubits outputs a string $s\in\{0,1\}^n$ with probability $(\hat{g}(s))^2$, with $g=(-1)^f$.
    \end{itemize}
\end{lemma}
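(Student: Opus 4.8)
The plan is to reduce the noisy statement to a direct computation for a single pure superposition example and then exploit the convex-mixture structure of $\rho_{(\mathcal{U}_n,f),\eta}$. First I would observe that, by \Cref{definition:noisy-functional-quantum-examples}(i), the state $\rho_{(\mathcal{U}_n,f),\eta}$ is the convex combination $(1-\eta)\ket{\psi_{(\mathcal{U}_n,f)}}\bra{\psi_{(\mathcal{U}_n,f)}} + \eta\ket{\psi_{(\mathcal{U}_n,f\oplus 1)}}\bra{\psi_{(\mathcal{U}_n,f\oplus 1)}}$. Since conjugating by the unitary channel for $H^{\otimes(n+1)}$ and then performing a fixed computational-basis measurement is an affine map from density operators to probability distributions, the outcome distribution on $\rho_{(\mathcal{U}_n,f),\eta}$ is the same convex combination of the two outcome distributions obtained by running the procedure separately on $\ket{\psi_{(\mathcal{U}_n,f)}}$ and on $\ket{\psi_{(\mathcal{U}_n,f\oplus 1)}}$. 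Hence it suffices to prove claims (i) and (ii) for the pure example $\ket{\psi_{(\mathcal{U}_n,\tilde f)}}$ for an arbitrary Boolean $\tilde f$ and then average.

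Next I would carry out the standard quantum Fourier sampling computation. Writing $\ket{\psi_{(\mathcal{U}_n,\tilde f)}} = 2^{-n/2}\sum_{x}\ket{x,\tilde f(x)}$ and using $H^{\otimes n}\ket x = 2^{-n/2}\sum_s (-1)^{s\cdot x}\ket s$ together with $H\ket{\tilde f(x)} = 2^{-1/2}(\ket0 + (-1)^{\tilde f(x)}\ket1)$, I would interchange the sums over $x$ and $s$, collect the coefficients of $\ket{s,0}$ and of $\ket{s,1}$, and invoke the identities $2^{-n}\sum_x(-1)^{s\cdot x} = \mathds{1}[s=0^n]$ and, from \Cref{definition:fourier-coefficients}, $2^{-n}\sum_x (-1)^{s\cdot x}(-1)^{\tilde f(x)} = \widehat{\tilde g}(s)$ with $\tilde g = (-1)^{\tilde f}$. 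This yields $H^{\otimes(n+1)}\ket{\psi_{(\mathcal{U}_n,\tilde f)}} = \tfrac{1}{\sqrt2}\ket{0^n,0} + \tfrac{1}{\sqrt2}\sum_{s}\widehat{\tilde g}(s)\ket{s,1}$. From this explicit form, the last-qubit outcome is $0$ only through the first summand, so it has probability $\tfrac12$, while outcome $1$ has probability $\tfrac12\sum_s \widehat{\tilde g}(s)^2 = \tfrac12$ by Parseval (since $\tilde g$ is $\{-1,1\}$-valued); and conditioned on the last qubit being $1$, the first $n$ qubits yield $s$ with probability $\widehat{\tilde g}(s)^2$. This establishes (i) and (ii) for a single pure example, and in particular proves \Cref{lemma:functional-quantum-Fourier-sampling} as the $\eta=0$ special case.

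Finally I would combine the two pieces. Applying the computation with $\tilde f = f$ gives an outcome distribution governed by $\hat g(s)^2$ with $g=(-1)^f$; applying it with $\tilde f = f\oplus 1$ replaces $\tilde g$ by $(-1)^{f\oplus1} = -g$, so that $\widehat{\tilde g}(s)^2 = \hat g(s)^2$, i.e.\ the outcome distribution is identical, and in both cases the last-qubit marginal is $\tfrac12$. Taking the convex combination with weights $1-\eta$ and $\eta$ therefore reproduces exactly the same statistics, proving the lemma. I do not anticipate a real obstacle here; the only point requiring a little care is the bookkeeping in the Hadamard transform — in particular isolating the $s = 0^n$ term, which is responsible both for the exact $\tfrac12$-balance of the ancilla and for the vanishing of all nontrivial $\ket{s,0}$ components.
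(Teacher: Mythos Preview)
Your proposal is correct and follows essentially the same route as the paper: both exploit the convex-mixture form of $\rho_{(\mathcal{U}_n,f),\eta}$, compute $H^{\otimes(n+1)}\ket{\psi_{(\mathcal{U}_n,\tilde f)}}=\tfrac{1}{\sqrt2}\ket{0^n,0}+\tfrac{1}{\sqrt2}\sum_s\widehat{\tilde g}(s)\ket{s,1}$ for $\tilde f\in\{f,f\oplus1\}$, and use that $(-1)^{f\oplus1}=-g$ so the squared Fourier coefficients coincide. The only stylistic difference is that the paper explicitly expands the full density matrix and observes that the $\eta$-dependent cross terms are off-diagonal in the computational basis, whereas you invoke affinity of the measurement map to bypass those cross terms entirely; this is a minor simplification, not a different argument.
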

\begin{proof}
    This can be seen by a direct computation, which we show in \Cref{appendix:proofs}.
\end{proof}

In contrast, when replacing $\ket{\psi_{(\mathcal{U}_n,f)}}$ by $\ket{\psi_{\mathcal{D}_\eta}}$, the outcome distribution of quantum Fourier sampling does change. Fortunately, this change is not major and can easily be controlled if the noise strength $\eta$ is known:

\begin{lemma}\label{lemma:noisy-functional-quantum-Fourier-sampling-v2}
    Let $\mathcal{D}$ be a probability distribution over $\mathcal{X}_n\times\{0,1\}$ with $\mathcal{D}=(\mathcal{U}_n, f)$ for some deterministic labeling function $f:\{0,1\}^n\to\{0,1\}$. 
    Let $0\leq \eta <\nicefrac{1}{2}$.
    Consider the following quantum algorithm: 
    Given a single copy of $\ket{\psi_{\mathcal{D}_\eta}}$, first apply (the unitary channel for) the unitary $H^{\otimes (n+1)}$, then measure all $n+1$ qubits in the computational basis.
    The measurement outcomes of this procedure satisfy the following:
    \begin{itemize}
        \item[(i)] The computational basis measurement on the last qubit gives outcome $0$ with probability $\tfrac{1}{2}+ \sqrt{(1-\eta)\eta}$ and outcome $1$ with probability $\tfrac{1}{2} - \sqrt{(1-\eta)\eta}$.
        \item[(ii)] Conditioned on having observed outcome $1$ for the last qubit, the computational basis measurement on the first $n$ qubits outputs a string $s\in\{0,1\}^n$ with probability $(\hat{g}(s))^2$, with $g=(-1)^f$.
    \end{itemize}
\end{lemma}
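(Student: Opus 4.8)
The plan is to compute directly how the quantum Fourier sampling circuit $H^{\otimes(n+1)}$ acts on $\ket{\psi_{\mathcal{D}_\eta}}$ and then read off the measurement statistics, exactly as in the noiseless case \Cref{lemma:functional-quantum-Fourier-sampling} but keeping track of the noise-dependent amplitudes. First I would group the label qubit: writing $\ket{\psi_{\mathcal{D}_\eta}} = 2^{-n/2}\sum_x \ket{x} \otimes \bigl(\sqrt{1-\eta}\ket{f(x)} + \sqrt{\eta}\ket{f(x)\oplus 1}\bigr)$ and applying the Hadamard to the last qubit via $H\ket{b} = \tfrac{1}{\sqrt2}(\ket{0} + (-1)^b\ket{1})$, the label qubit for input $x$ becomes $\tfrac{1}{\sqrt2}\bigl(a_+\ket{0} + (-1)^{f(x)}a_-\ket{1}\bigr)$, where $a_\pm \defeq \sqrt{1-\eta}\pm\sqrt{\eta}$. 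Note $a_+^2 = 1 + 2\sqrt{(1-\eta)\eta}$, $a_-^2 = 1 - 2\sqrt{(1-\eta)\eta}$, $a_+^2 + a_-^2 = 2$, and $a_- > 0$ since $\eta < \nicefrac{1}{2}$.

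Next I would apply $H^{\otimes n}$ to the input register using $H^{\otimes n}\ket{x} = 2^{-n/2}\sum_s (-1)^{s\cdot x}\ket{s}$ and interchange the sums over $x$ and $s$. In the $\ket{0}$ branch of the label qubit, $\sum_x (-1)^{s\cdot x} = 2^n\delta_{s,0^n}$ picks out only $s = 0^n$; in the $\ket{1}$ branch, $\sum_x (-1)^{s\cdot x}(-1)^{f(x)} = 2^n\hat g(s)$ by \Cref{definition:fourier-coefficients} with $g=(-1)^f$. After the overall $2^{-n}$ factor, this yields the post-circuit state $\tfrac{1}{\sqrt2}\bigl(a_+\ket{0^n}\ket{0} + a_-\sum_s \hat g(s)\ket{s}\ket{1}\bigr)$, which is correctly normalized because $\sum_s (\hat g(s))^2 = \mathbb{E}_{x\sim\mathcal{U}_n}[(g(x))^2] = 1$ by Parseval together with $a_+^2 + a_-^2 = 2$.

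Both claims then read off immediately. For (i), the last-qubit measurement gives outcome $0$ with probability $\tfrac{a_+^2}{2} = \tfrac{1}{2} + \sqrt{(1-\eta)\eta}$ and outcome $1$ with probability $\tfrac{a_-^2}{2} = \tfrac{1}{2} - \sqrt{(1-\eta)\eta}$. For (ii), conditioning on outcome $1$ on the last qubit and renormalizing (dividing by $a_-/\sqrt2 > 0$) leaves the state $\sum_s \hat g(s)\ket{s}$, which is a unit vector by Parseval, so a computational-basis measurement on the first $n$ qubits returns $s\in\{0,1\}^n$ with probability $(\hat g(s))^2$.

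I do not expect a genuine obstacle here: this is a one-shot Hadamard computation once the notation is fixed. The only care needed is the $0\leftrightarrow 1$ relabeling $g=(-1)^f$ and correctly isolating the $s=0^n$ contribution coming from the $\ket{0}$ branch of the label qubit. Setting $\eta=0$ (so $a_+=a_-=1$) recovers \Cref{lemma:functional-quantum-Fourier-sampling} as a special case, and the same computation structure will reappear, with a different mixing step, for \Cref{lemma:noisy-functional-quantum-Fourier-sampling-v3} and \Cref{theorem:agnostic-quantum-fourier-sampling}.
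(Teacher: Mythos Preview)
Your proposal is correct and follows essentially the same direct Hadamard computation as the paper's proof: both arrive at the post-circuit state $\tfrac{1}{\sqrt2}\bigl((\sqrt{1-\eta}+\sqrt\eta)\ket{0^{n+1}} + (\sqrt{1-\eta}-\sqrt\eta)\sum_s \hat g(s)\ket{s,1}\bigr)$ and read off the claimed probabilities. Your shorthand $a_\pm = \sqrt{1-\eta}\pm\sqrt\eta$ and the choice to apply the label-qubit Hadamard first before the input-register Hadamards are purely cosmetic differences from the paper's all-at-once expansion.
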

\begin{proof}
    This can be seen by a direct computation, which we show in \Cref{appendix:proofs}.
\end{proof}

Note that in particular the conditional distribution of the outcome for the first $n$ qubits, conditioned on observing outcome $1$ for the last qubit, is the same as in the noiseless case. Only the probabilty of success for quantum Fourier sampling decreases from $\tfrac{1}{2}$ to $\tfrac{1}{2} - \sqrt{(1-\eta)\eta}$.

Finally, using a copy of $\rho_{\mathcal{D}_\eta}$ instead of $\ket{\psi_{(\mathcal{U}_n,f)}}$ changes the outcome distribution of quantum Fourier sampling as follows:

\begin{lemma}\label{lemma:noisy-functional-quantum-Fourier-sampling-v3}
    Let $\mathcal{D}$ be a probability distribution over $\mathcal{X}_n\times\{0,1\}$ with $\mathcal{D}=(\mathcal{U}_n, f)$ for some deterministic labeling function $f:\{0,1\}^n\to\{0,1\}$. 
    Let $0\leq \eta <\nicefrac{1}{2}$.
    Consider the following quantum algorithm: 
    Given a single copy of $\rho_{\mathcal{D}_\eta}$, first apply (the unitary channel for) the unitary $H^{\otimes (n+1)}$, then measure all $n+1$ qubits in the computational basis.
    The measurement outcomes of this procedure satisfy the following:
    \begin{itemize}
        \item[(i)] The computational basis measurement on the last qubit gives outcome $0$ with probability $\tfrac{1}{2}$ and outcome $1$ with probability $\tfrac{1}{2}$.
        \item[(ii)] Conditioned on having observed outcome $1$ for the last qubit, the computational basis measurement on the first $n$ qubits outputs a string $s\in\{0,1\}^n$ with probability $\tfrac{4\eta - 4\eta^2}{2^n} + (1-2\eta)^2\hat{g}(s)^2$, with $g=(-1)^f$.
    \end{itemize}
\end{lemma}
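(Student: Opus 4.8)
The plan is to reduce to the single-function Fourier-sampling computation by convexity, and then average over the label noise. Write $\rho_{\mathcal{D}_\eta} = \mathbb{E}_{\{e_x\}_{x\in\mathcal{X}_n}}[\ket{\psi_{(\mathcal{U}_n, f\oplus e)}}\bra{\psi_{(\mathcal{U}_n, f\oplus e)}}]$ as a mixture over the pure functional examples associated to the (random) Boolean functions $f\oplus e$, where $(f\oplus e)(x) = f(x)\oplus e_x$. Since the probability of any computational-basis outcome of the procedure ``apply $H^{\otimes(n+1)}$, then measure all qubits'' is linear in the input density matrix, the outcome statistics for $\rho_{\mathcal{D}_\eta}$ are obtained by first computing the outcome statistics for each fixed $\ket{\psi_{(\mathcal{U}_n, f\oplus e)}}$ and then taking the expectation over $e$.

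First I would record the elementary calculation for a single pure example with an arbitrary Boolean labelling $h$: applying $H^{\otimes n}$ to the input register and $H$ to the label qubit sends $\ket{\psi_{(\mathcal{U}_n, h)}} = 2^{-n/2}\sum_x\ket{x, h(x)}$ to a state whose amplitude on $\ket{s,0}$ is $2^{-1/2}\mathds{1}[s = 0]$ and whose amplitude on $\ket{s,1}$ is $2^{-1/2}\widehat{g}(s)$, where $g = (-1)^h$. Summing squared amplitudes and using Parseval (valid since $g$ is $\{-1,1\}$-valued, so $\sum_s \widehat g(s)^2 = 1$) shows the last qubit reads $1$ with probability exactly $\tfrac12$, and conditioned on that outcome the first $n$ qubits read $s$ with probability $\widehat g(s)^2$. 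As this holds for every realization $e$, part (i) follows immediately after averaging, and part (ii) reduces to computing $\mathbb{E}_e[\widehat{g_e}(s)^2]$ for $g_e := (-1)^{f\oplus e}$.

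For this expectation I would set $r_e(x) := (-1)^{e_x}$, so that $g_e = g\cdot r_e$ with $g = (-1)^f$, expand $\widehat{g_e}(s)^2 = 2^{-2n}\sum_{x, x'} g(x) g(x')\, r_e(x) r_e(x')\, \chi_s(x)\chi_s(x')$, and use independence of the $e_x$: $\mathbb{E}[r_e(x) r_e(x')]$ equals $1$ when $x = x'$ and $(1-2\eta)^2$ otherwise. Splitting off the diagonal — which contributes $2^{-2n}\cdot 2^n = 2^{-n}$ since $g$ and $\chi_s$ are $\pm 1$-valued — from the off-diagonal part $(1-2\eta)^2\big(2^{2n}\widehat g(s)^2 - 2^n\big)\cdot 2^{-2n}$, and simplifying via $1 - (1-2\eta)^2 = 4\eta - 4\eta^2$, gives $\mathbb{E}_e[\widehat{g_e}(s)^2] = \tfrac{4\eta - 4\eta^2}{2^n} + (1-2\eta)^2\widehat g(s)^2$, which is exactly the claim.

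None of this is technically difficult; the only point requiring care is the bookkeeping in the last step — correctly isolating the $x = x'$ contribution of the double sum and noting it is a flat $2^{-n}$ independent of $s$, while the remaining sum telescopes into $(2^n\widehat g(s))^2 - 2^n$. As consistency checks I would verify that $\eta = 0$ recovers \Cref{lemma:functional-quantum-Fourier-sampling}, and that $\sum_s\big(\tfrac{4\eta - 4\eta^2}{2^n} + (1-2\eta)^2\widehat g(s)^2\big) = 4\eta - 4\eta^2 + (1-2\eta)^2 = 1$, confirming that the stated conditional distribution is indeed a probability distribution.
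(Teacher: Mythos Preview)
Your proof is correct and follows essentially the same approach as the paper. The paper defers this lemma to \Cref{theorem:agnostic-quantum-fourier-sampling} as a special case (with $\phi = (1-2\eta)g$, so $1-\mathbb{E}_x[\phi(x)^2]=4\eta-4\eta^2$ and $\hat\phi(s)=(1-2\eta)\hat g(s)$), and the proof of that theorem proceeds exactly as you do: reduce by convexity to the pure functional case, then compute the expectation of the squared Fourier coefficient by expanding the double sum and splitting diagonal versus off-diagonal terms using independence of the per-input label randomness.
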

\begin{proof}
    This can be seen by a computation similar to that in the proof of Theorem 4.1 in the arXiv Version 1 of \cite{grilo2019learning}. 
    It can also be obtained as a special case of \Cref{theorem:agnostic-quantum-fourier-sampling}.
\end{proof}

Thus, in the case of a mixture-of-superpositions noisy functional quantum example, the sampling still succeeds with probability $\tfrac{1}{2}$. However, after conditioning on that success, the probability distribution that is being sampled from is a perturbed version of the probability distribution of interest. Fortunately, the perturbation is exponentially-in-$n$ small in $\infty$-norm, and moreover is known exactly if the noise strength $\eta$ is known.

We can now combine \Cref{lemma:distribution-estimation-infty-norm} with either \Cref{lemma:noisy-functional-quantum-Fourier-sampling-v1}, \Cref{lemma:noisy-functional-quantum-Fourier-sampling-v2}, or \Cref{lemma:noisy-functional-quantum-Fourier-sampling-v3} and get corresponding analogues of \Cref{corollary:quantum-approximation-fourier-spectrum} (as well as of \Cref{corollary:quantum-functional-agnostic-parity-learning} and \Cref{corollary:quantum-functional-agnostic-fourier-sparse-learning}). 
In the case of mixed $\eta$-noisy functional quantum examples $\rho_{(\mathcal{D}_{\mathcal{X}_n}, f),\eta}$, we obtain exactly the same guarantees and complexity bounds as in \Cref{corollary:quantum-approximation-fourier-spectrum}. 
For pure $\eta$-noisy functional quantum examples $\ket{\psi_{\mathcal{D}_\eta}}$, the guarantees are the same as in \Cref{corollary:quantum-approximation-fourier-spectrum}, but the all asymptotic complexity bounds increase by a factor of $(\tfrac{1}{2} - \sqrt{(1-\eta)\eta})^{-4}\leq (\tfrac{1}{2} - \eta)^{-4}$.
Finally, if we work with mixture-of-superpositions $\eta$-noisy functional quantum examples $\rho_{\mathcal{D}_\eta}$, we have to change the statement of \Cref{corollary:quantum-approximation-fourier-spectrum} by adding the assumption that $\varepsilon > 2^{-(\tfrac{n}{2}-3)}\sqrt{\eta (1-\eta)}$ and replacing the sparsity guarantee by $\norm{\tilde{g}}_0 \leq \tfrac{16}{\varepsilon^2}$. (That way, it becomes a special case of \Cref{corollary:distributional-agnostic-quantum-approximation-fourier-spectrum} , which we establish later.)
Note that in these noisy learning settings, the goal remains the same as in noiseless functional agnostic quantum learning. That is, despite the training data being noisy, the performance is still measured according to the probability of misclassifying a fresh noiseless sample: 
If we consider a noisy functional distribution $\mathcal{D}=(\mathcal{U}_n, \eta + (1-2\eta) f)$ for some $f:\mathcal{X}_n\to\{0,1\}$ and noise strength $\eta\in [0,\nicefrac{1}{2})$, then the model $m$ produced by a (classical or quantum) $\alpha$-agnostic learner relative to the benchmark class $\mathcal{B}$ should satisfy
\begin{equation}
    \mathrm{err}_{(\mathcal{U}_n, f)}(m)
    \leq \alpha\cdot\mathrm{opt}_{(\mathcal{U}_n, f)}(\mathcal{B}) +\epsilon
\end{equation}
with success probability $\geq 1-\delta$. Notice that errors are measured w.r.t.~the noiseless distribution $(\mathcal{U}_n, f)$, rather than w.r.t.~the noisy data-generating distribution $\mathcal{D}$.

To conclude our discussion of noisy functional agnostic learning, let us remark that for mixed functional quantum examples or for mixture-of-superpositions noisy functional quantum examples, the respective procedures for (approximate) quantum Fourier sampling and the quantum learners derived from it do not require explicit knowledge of the noise strength $\eta$ in advance. In fact, when working with $\rho_{(\mathcal{D}_{\mathcal{X}_n}, f),\eta}$, the noise strength does not matter at all.
In the case of learning from copies of $\rho_{\mathcal{D}_\eta}$, it suffices to know an upper bound $0\leq\eta \leq\eta_b<\nicefrac{1}{2}$ on the noise strength, at least if that upper bound satisfies $\lvert \eta_b - \tfrac{1}{2} \rvert\geq \Omega (\tfrac{1}{\poly (n)})$.
In contrast, our proposed quantum learning procedure for noisy functional agnostic learning from copies of $\ket{\psi_{\mathcal{D}_\eta}}$ does rely on knowing $\eta$ at least approximately. More precisely, we need to approximately know $\sqrt{(1-\eta)\eta}$. Fortunately, based on \Cref{lemma:noisy-functional-quantum-Fourier-sampling-v2}, we can easily obtain approximations to this quantity. Moreover, if a noise strength upper bound $0\leq\eta \leq\eta_b<\nicefrac{1}{2}$ is known in advance, we can also estimate $\eta$:

\begin{corollary}\label{corollary:quantum-noise-rate-learning-v2}
    Let $\mathcal{D}$ be a probability distribution over $\mathcal{X}_n\times\{0,1\}$ with $\mathcal{D}=(\mathcal{U}_n, f)$ for some deterministic labeling function $f:\{0,1\}^n\to\{0,1\}$. 
    Let $\delta,\varepsilon, \tilde{\varepsilon}\in (0,1)$.
    Then, there is a quantum algorithm that, given an upper bound $0\leq\eta\leq \eta_b\in [0,\tfrac{1}{2})$ on the noise strength and $\mathcal{O}\left(\tfrac{\log (\nicefrac{1}{\delta})}{\tilde{\varepsilon}^2}\right)$ copies of $\ket{\psi_{\mathcal{D}_\eta}}$, uses $\mathcal{O}\left(n \tfrac{\log (\nicefrac{1}{\delta})}{\tilde{\varepsilon}^2}\right)$ single-qubit gates, classical computation time $\tilde{\mathcal{O}}\left(n\tfrac{\log (\nicefrac{1}{\delta})}{\tilde{\varepsilon}^2}\right)$, and classical memory of size $\tilde{\mathcal{O}}\left(n\tfrac{\log (\nicefrac{1}{\delta})}{\tilde{\varepsilon}^2}\right)$, and outputs, with success probability $\geq 1-\delta$, an estimate $\hat{\xi}\in [0, \sqrt{(1-\eta_b)\eta_b}]$ such that $\lvert \hat{\xi}-\sqrt{(1-\eta)\eta}\rvert\leq\tilde{\varepsilon}$.
    
    Moreover, there is a quantum algorithm that, given an upper bound $0\leq\eta\leq \eta_b\in [0,\tfrac{1}{2})$ on the noise
    strength and $\mathcal{O}\left(\tfrac{\log (\nicefrac{1}{\delta})}{\varepsilon^2}\cdot \tfrac{\eta_b}{\left(\eta_b-\tfrac{1}{2}\right)^2}\right)$ copies of $\ket{\psi_{\mathcal{D}_\eta}}$, uses $\mathcal{O}\left(n \tfrac{\log (\nicefrac{1}{\delta})}{\varepsilon^2}\cdot \tfrac{\eta_b}{\left(\eta_b-\tfrac{1}{2}\right)^2}\right)$ single-qubit gates, classical computation time $\tilde{\mathcal{O}}\left(n\tfrac{\log (\nicefrac{1}{\delta})}{\varepsilon^2}\cdot \tfrac{\eta_b}{\left(\eta_b-\tfrac{1}{2}\right)^2}\right)$, and classical memory of size $\tilde{\mathcal{O}}\left(n\tfrac{\log (\nicefrac{1}{\delta})}{\varepsilon^2}\cdot \tfrac{\eta_b}{\left(\eta_b-\tfrac{1}{2}\right)^2}\right)$, and outputs, with success probability $\geq 1-\delta$, an estimate $\hat{\eta}\in [0,\eta_b]$ such that $\lvert\hat{\eta}-\eta\rvert\leq \varepsilon$.
\end{corollary}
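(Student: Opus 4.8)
The plan is to reduce both estimation problems to elementary Bernoulli parameter estimation by exploiting part~(i) of \Cref{lemma:noisy-functional-quantum-Fourier-sampling-v2}. For the first estimate, I would repeat the following single-copy experiment $m$ times: take a fresh copy of $\ket{\psi_{\mathcal{D}_\eta}}$, apply $H^{\otimes(n+1)}$, and measure the last qubit in the computational basis. By \Cref{lemma:noisy-functional-quantum-Fourier-sampling-v2}, each run independently yields outcome $0$ with probability $p=\tfrac12+\sqrt{(1-\eta)\eta}\in(\tfrac12,1]$. Letting $\hat p$ be the empirical frequency of outcome $0$, I would output $\hat\xi$ obtained by truncating $\hat p-\tfrac12$ to the interval $[0,\sqrt{(1-\eta_b)\eta_b}]$. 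Since $t\mapsto\sqrt{(1-t)t}$ is increasing on $[0,\tfrac12)$ and $\eta\le\eta_b<\tfrac12$, the true value $\sqrt{(1-\eta)\eta}$ lies in that interval, so truncation cannot increase the error: $\bigl|\hat\xi-\sqrt{(1-\eta)\eta}\bigr|\le|\hat p-p|$. A Hoeffding bound then shows $m=\mathcal{O}\bigl(\log(\nicefrac1\delta)/\tilde\varepsilon^2\bigr)$ runs suffice for $|\hat p-p|\le\tilde\varepsilon$ with probability $\ge1-\delta$; each run costs $n+1$ single-qubit gates and measurements, and tallying the outcomes is a trivial classical computation, so all stated resource bounds for the first algorithm follow.

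For the second estimate, I would invert the relation $\xi=\sqrt{(1-\eta)\eta}$ analytically: solving $\xi^2=\eta(1-\eta)$ for the root in $[0,\tfrac12)$ gives $\eta=g(\xi):=\tfrac12\bigl(1-\sqrt{1-4\xi^2}\bigr)$. So I would run the first algorithm to obtain $\hat\xi$ and output $\hat\eta$ obtained by truncating $g(\hat\xi)$ to $[0,\eta_b]$ (which again only helps, since $\eta\in[0,\eta_b]$). The quantitative heart of the argument is a bound on the Lipschitz constant of $g$ over the relevant domain: one computes $g'(\xi)=2\xi/\sqrt{1-4\xi^2}$, and using the identity $1-4\xi^2=(1-2\eta)^2$ this equals $2\sqrt{\eta(1-\eta)}/(1-2\eta)$, which is increasing in $\eta$ and hence bounded for $\eta\in[0,\eta_b]$ by $L:=2\sqrt{\eta_b(1-\eta_b)}/(1-2\eta_b)$, so that $L^2\le\eta_b/(\eta_b-\nicefrac12)^2$. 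Invoking the first algorithm with accuracy $\tilde\varepsilon=\Theta(\varepsilon/L)$ (clamped below $1$) then yields $|\hat\eta-\eta|\le L\,|\hat\xi-\xi|\le\varepsilon$ by the mean value theorem, and the sample complexity becomes $\mathcal{O}\bigl(\log(\nicefrac1\delta)/\tilde\varepsilon^2\bigr)=\mathcal{O}\bigl(\tfrac{\log(\nicefrac1\delta)}{\varepsilon^2}\cdot\tfrac{\eta_b}{(\eta_b-\nicefrac12)^2}\bigr)$, matching the claim; the gate, time, and memory bounds scale in lockstep with the sample complexity.

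The only genuine difficulty — and precisely the reason the second bound degrades as $\eta_b\to\tfrac12$ — is this error-propagation step: the inverse map $g$ has a derivative that blows up as $\xi\to\tfrac12$ (equivalently $\eta\to\tfrac12$), so without the a priori upper bound $\eta_b$ one could not convert a fixed-precision estimate of $\sqrt{(1-\eta)\eta}$ into a fixed-precision estimate of $\eta$ at all. Everything else — the Hoeffding concentration, the union bound over the two stages of the second algorithm, and the bookkeeping of the elementary classical post-processing — is routine, and the resource bounds are inherited directly from those of the first algorithm.
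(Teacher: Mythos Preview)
Your proposal is correct and follows essentially the same approach as the paper: both use \Cref{lemma:noisy-functional-quantum-Fourier-sampling-v2}(i) to reduce the first part to Hoeffding-based Bernoulli mean estimation with truncation, and both invert via $\hat\eta=\tfrac12\bigl(1-\sqrt{1-4\hat\xi^2}\bigr)$ with a Lipschitz argument yielding the same constant $L=2\sqrt{(1-\eta_b)\eta_b}/(1-2\eta_b)$. The only cosmetic difference is that the paper factors the inversion into two Lipschitz maps ($\xi\mapsto\xi^2$ and $x\mapsto\sqrt{x}$) whereas you compute $g'$ directly; the bounds and complexities coincide.
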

\begin{proof}
    We give a complete proof in \Cref{appendix:proofs}.
\end{proof}

The results of this section demonstrate that the use of quantum training data goes beyond functional agnostic learning. In fact, learning problems that are, at least with current methods, classically hard to solve from noisy data remain efficiently solvable for quantum learners with access to noisy quantum data. 

\subsection{Functional Agnostic Quantum Statistical Query Learning}\label{subsection:functional-qsq-learning}

We already mentioned \cite{arunachalam2020qsq} because of its implications for learning from noisy examples. In this subsection, we consider agnostic quantum statistical query learning as an interesting problem in itself.
First, we recall the already established definition of QSQs in the functional case, derived from superposition examples:

\begin{definition}[Functional quantum statistical queries {\cite{arunachalam2020qsq}}]\label{definition:functional-qsq}
    Let $\mathcal{D}$ be a probability distribution over $\mathcal{X}_n\times\{0,1\}$ with $\mathcal{D}=(\mathcal{D}_{\mathcal{X}_n}, f)$ for some deterministic labeling function $f:\{0,1\}^n\to\{0,1\}$.
    A \emph{(functional) quantum statistical query (QSQ) oracle for $\mathcal{D}$} produces, when queried with a bounded $(n+1)$-qubit observable $M=M^\dagger\in\mathcal{B}((\mathbb{C}^2)^{\otimes (n+1)})$ satisfying $\norm{M}\leq 1$ and with a tolerance parameter $\tau> 0$, a number $\mu\in\mathbb{R}$ such that
    \begin{equation}
        \left\lvert \mu - \bra{\psi_{(\mathcal{D}_{\mathcal{X}_n}, f)}} M \ket{\psi_{(\mathcal{D}_{\mathcal{X}_n}, f)}} \right\rvert
        \leq \tau ,
    \end{equation}
    where $\ket{\psi_{(\mathcal{D}_{\mathcal{X}_n}, f)}}$ is a (functional) pure superposition example as in \Cref{definition:functional-superposition-quantum-examples}.
\end{definition}

When interested in matters of quantum computational efficiency (rather than query complexity), one may additionally require that the observables $M$ used in the QSQ queries are efficiently implementable. This is, for instance, important when considering reductions between noisy quantum learning and QSQ learning.

Next, we recall that functional agnostic QSQs are sufficient to perform the Goldreich-Levin algorithm for finding a list of heavy Fourier coefficients. This has previously been observed in\cite[Theorem 4.4]{arunachalam2020qsq}, the following version is a variant that can be obtained as special case of our later \Cref{theorem:distributional-agnostic-qsq-GL}.

\begin{theorem}[{\cite[Theorem 4.4]{arunachalam2020qsq}}]\label{theorem:functional-agnostic-qsq-GL}
    Let $\mathcal{D}$ be a probability distribution over $\mathcal{X}_n\times\{0,1\}$ with $\mathcal{D}=(\mathcal{U}_n, f)$ for some deterministic labeling function $f:\{0,1\}^n\to\{0,1\}$. 
    Let $\varepsilon\in (0,1)$. 
    Then, there exists an algorithm that, using $\mathcal{O}\left(\tfrac{n}{\varepsilon^2}\right)$ functional QSQs of tolerance $\nicefrac{\varepsilon^2}{8}$ for observables that can be implemented with $\mathcal{O}(n)$ single-qubit gates, classical computation time $\tilde{\mathcal{O}}\left(\tfrac{n}{\varepsilon^2}\right)$, and classical memory of size $\tilde{\mathcal{O}}\left(\tfrac{n^2}{\varepsilon^2}\right)$, and outputs a list $L=\{s_1,\ldots,s_{\lvert L\rvert}\}\subseteq\{0,1\}^n$ such that \begin{itemize}
        \item[(i)] if $\lvert \hat{g}(s)\rvert\geq \varepsilon$, then $s\in L$, and
        \item[(ii)] if $s\in L$, then $\lvert \hat{g}(s)\rvert\geq \nicefrac{\varepsilon}{2}$.
    \end{itemize}
    This list has length $\lvert L\rvert\leq \nicefrac{4}{\varepsilon^2}$ by Parseval.
\end{theorem}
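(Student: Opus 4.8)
The plan is to run the classical Kushilevitz--Mansour / Goldreich--Levin bucketing algorithm, the one new ingredient being that every Fourier weight it needs to estimate is an expectation value of an efficiently implementable bounded observable on $\ket{\psi_{(\mathcal{U}_n,f)}}$. For a prefix $a\in\{0,1\}^k$ with $0\le k\le n$, let $W_a\coloneqq \sum_{b\in\{0,1\}^{n-k}}\hat g(ab)^2$ denote the Fourier weight of the bucket indexed by $a$ (here $ab$ is the concatenation of the strings $a$ and $b$). The first step is to observe that
\[
W_a = 2\,\bra{\psi_{(\mathcal{U}_n,f)}} M_a \ket{\psi_{(\mathcal{U}_n,f)}}, \qquad M_a \coloneqq H^{\otimes(n+1)}\bigl(\ket a\bra a\otimes I^{\otimes(n-k)}\otimes\ket 1\bra 1\bigr)H^{\otimes(n+1)} .
\]
This is just partial quantum Fourier sampling phrased as an observable: by the computation behind \Cref{lemma:functional-quantum-Fourier-sampling}, after applying $H^{\otimes(n+1)}$ the label qubit reads $1$ with probability $\tfrac12$, and conditioned on that outcome the first $k$ qubits read $a$ with probability $W_a$. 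Since $M_a=M_a^\dagger$ with $\norm{M_a}=1$ and $H^{\otimes(n+1)}$ costs $\mathcal{O}(n)$ single-qubit gates, $M_a$ is a legal and efficiently implementable QSQ observable.

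Given this, the algorithm maintains a collection of ``live'' prefixes, initialized to the empty prefix. At the step from length $k$ to length $k+1$, for each live prefix $a$ and each child $a'\in\{a0,\,a1\}$ it issues one QSQ for $M_{a'}$ with tolerance $\tau=\varepsilon^2/8$, reads a response $\mu$, and records $\widehat W_{a'}\coloneqq 2\mu$, so that $\lvert \widehat W_{a'}-W_{a'}\rvert\le 2\tau=\varepsilon^2/4$; the child $a'$ is kept alive iff $\widehat W_{a'}\ge \varepsilon^2/2$. After $n$ steps the live prefixes are full-length strings, and these are output as $L$.

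The correctness proof is the standard error-budget argument. If $\lvert\hat g(s)\rvert\ge\varepsilon$, then every prefix $a$ of $s$ satisfies $W_a\ge \hat g(s)^2\ge\varepsilon^2$, hence $\widehat W_a\ge \varepsilon^2-\varepsilon^2/4\ge \varepsilon^2/2$, so $s$ survives every level and $s\in L$, giving (i). If $s\in L$, the singleton bucket $\{s\}$ was kept, so $\hat g(s)^2=W_s\ge \widehat W_s-\varepsilon^2/4\ge \varepsilon^2/2-\varepsilon^2/4=\varepsilon^2/4$, i.e.\ $\lvert\hat g(s)\rvert\ge\varepsilon/2$, giving (ii) and, via Parseval ($\sum_s\hat g(s)^2=1$ since $g=(-1)^f$ is $\{-1,1\}$-valued), the bound $\lvert L\rvert\le 4/\varepsilon^2$. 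For the query count, any live prefix $a$ at any level has $W_a\ge \widehat W_a-\varepsilon^2/4\ge \varepsilon^2/4$ and the buckets at a fixed level are disjoint, so at most $4/\varepsilon^2$ prefixes are live per level; summing over the $n$ levels and two children per live prefix yields $\mathcal{O}(n/\varepsilon^2)$ QSQs, each for an observable of $\mathcal{O}(n)$ single-qubit gates. Sorting and merging prefix lists of total size $\tilde{\mathcal{O}}(n/\varepsilon^2)$ across the recursion is routine and gives the stated $\tilde{\mathcal{O}}(n/\varepsilon^2)$ classical time and $\tilde{\mathcal{O}}(n^2/\varepsilon^2)$ classical memory.

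I do not expect a genuine obstacle: the only idea is the identity $W_a = 2\bra{\psi_{(\mathcal{U}_n,f)}}M_a\ket{\psi_{(\mathcal{U}_n,f)}}$, after which this is the textbook KM analysis with error budget $2\tau$. A point worth flagging is that, in contrast to the random-example model, no failure probability $\delta$ appears anywhere: guarantees (i)--(iii) hold for \emph{every} set of oracle responses consistent with the stated tolerance, so no concentration or union bound is needed. (In the write-up one may alternatively just invoke \Cref{theorem:distributional-agnostic-qsq-GL} and specialize to a $\{0,1\}$-valued conditional label expectation.)
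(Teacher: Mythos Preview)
Your proposal is correct and follows essentially the same approach as the paper. The paper does not give an independent proof here but instead states that this is a special case of \Cref{theorem:distributional-agnostic-qsq-GL}; your direct argument is precisely what that distributional proof collapses to in the functional case (the ``warm-up'' initialization at level $k=\lceil\log(16/\varepsilon^2)\rceil$ becomes unnecessary because the additive perturbation $2^{-n}(1-\mathbb{E}_x[\phi(x)^2])$ vanishes when $\phi=g$ is $\{-1,1\}$-valued), and you even note this alternative route explicitly.
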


Using \Cref{theorem:functional-agnostic-qsq-GL}, \cite{arunachalam2020qsq} for instance gave QSQ learners for parities, juntas, and DNFs.
More generally, using a simpler version of the reasoning worked out in detail in \Cref{subsection:distributional-qsq-learning}, \Cref{theorem:functional-agnostic-qsq-GL} gives rise to a functional $1$-agnostic QSQ learner for parities and functional $2$-agnostic QSQ learner for Fourier-sparse functions.

\section{Distributional Agnostic Quantum Learning}\label{section:distributional-agnostic-quantum-learning}

In this section, we go beyond (noisy) functional agnostic quantum learning. Namely, we show that changing the underlying quantum data resource to mixture-of-superpositions examples gives quantum learners the ability to solve distributional agnostic learning problems.

\subsection{Distributional Agnostic Approximate Quantum Fourier Sampling}\label{subsection:distributional-agnostic-quantum-fourier-sampling}

Recall that quantum Fourier sampling is a crucial subroutine for quantum learning algorithms in the functional case. However, no successful variant of this method for the distributional agnostic case was known.
Here, we show that mixture-of-superpositions quantum examples allow for an approximate version of this crucial tool. This demonstrates the usefulness of our new notion of quantum example in a distributional agnostic scenario:

\begin{theorem}[Formal statement of \Cref{theorem:main-result-agnostic-quantum-fourier-sampling-learning}, Point 1]\label{theorem:agnostic-quantum-fourier-sampling}
    Let $\mathcal{D}$ be a probability distribution over $\mathcal{X}_n\times\{0,1\}$ with $\mathcal{D}_{\mathcal{X}_n}=\mathcal{U}_n$.
    Consider the following quantum algorithm:
    Given a single copy of $\rho_\mathcal{D}$, first apply (the unitary channel for) the unitary $H^{\otimes (n+1)}$, then measure all $n+1$ qubits in the computational basis. 
    The measurement outcomes of this procedure satisfy the following:
    \begin{itemize}
        \item[(i)] The computational basis measurement on the last qubit gives outcome $0$ with probability $\nicefrac{1}{2}$ and outcome $1$ with probability $\nicefrac{1}{2}$.
        \item[(ii)] Conditioned on having observed outcome $1$ for the last qubit, the computational basis measurement on the first $n$ qubits outputs a string $s\in\{0,1\}^n$ with probability 
        \begin{equation}
            \frac{1}{2^n}\left(1 - \mathbb{E}_{x\sim\mathcal{U}_n}[(\phi(x))^2]\right) + (\hat{\phi}(s))^2 .
        \end{equation}
    \end{itemize}
\end{theorem}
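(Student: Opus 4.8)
The plan is to reduce the statement to the functional (deterministic-$f$) case and then average over the random function $f\sim F_\mathcal{D}$ via a short second-moment computation.

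\emph{Reduction to the functional case.} The procedure applies the unitary channel for $H^{\otimes(n+1)}$ followed by a fixed computational-basis measurement, so the probability of each outcome $(s,c)\in\{0,1\}^n\times\{0,1\}$ is affine in the input density matrix. Since $\rho_\mathcal{D}=\mathbb{E}_{f\sim F_\mathcal{D}}\!\left[\ket{\psi_{(\mathcal{U}_n, f)}}\bra{\psi_{(\mathcal{U}_n, f)}}\right]$, the outcome distribution on $\rho_\mathcal{D}$ is the $F_\mathcal{D}$-average of the outcome distributions on the pure functional examples $\ket{\psi_{(\mathcal{U}_n, f)}}$, so it suffices to understand the latter.

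\emph{The functional Hadamard step.} Fixing $f:\mathcal{X}_n\to\{0,1\}$ and writing $g=(-1)^f$, I would expand $H^{\otimes(n+1)}\ket{\psi_{(\mathcal{U}_n, f)}}$ using $H^{\otimes n}\ket{x}=2^{-n/2}\sum_s\chi_s(x)\ket{s}$ together with the identities $\sum_x\chi_s(x)=2^n\delta_{s,0}$ (for last-qubit value $0$) and $\sum_x\chi_s(x)g(x)=2^n\hat{g}(s)$ (for last-qubit value $1$), to obtain
\begin{equation}
    H^{\otimes(n+1)}\ket{\psi_{(\mathcal{U}_n, f)}}
    = \tfrac{1}{\sqrt{2}}\ket{0^n}\ket{0} + \tfrac{1}{\sqrt{2}}\sum_{s\in\{0,1\}^n}\hat{g}(s)\,\ket{s}\ket{1}.
\end{equation}
Since $g$ is $\{-1,1\}$-valued, Parseval gives $\sum_s\hat{g}(s)^2=1$, so the last qubit reads $1$ with probability $\tfrac12$ and, conditioned on that, the first $n$ qubits yield $s$ with probability $\hat{g}(s)^2$; equivalently, the joint probability of $(s,1)$ is $\tfrac12\hat{g}(s)^2$.

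\emph{Averaging over $f\sim F_\mathcal{D}$.} Combining the two steps, the procedure on $\rho_\mathcal{D}$ outputs $(s,1)$ with probability $\tfrac12\,\mathbb{E}_{f\sim F_\mathcal{D}}[\hat{g}(s)^2]$ with $g=(-1)^f$. I would then expand $\hat{g}(s)^2=2^{-2n}\sum_{x,x'}g(x)g(x')\chi_s(x)\chi_s(x')$ and use that, under $F_\mathcal{D}$, the values $\{f(x)\}_x$ are independent with $\mathbb{E}_{f\sim F_\mathcal{D}}[g(x)]=1-2\varphi(x)=\phi(x)$ and $\mathbb{E}_{f\sim F_\mathcal{D}}[g(x)^2]=1$. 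The diagonal $x=x'$ part contributes $2^{-2n}\sum_x\chi_s(x)^2=2^{-n}$, while completing the off-diagonal part to a full square by adding and subtracting the diagonal gives
\begin{equation}
    \sum_{x\neq x'}\phi(x)\phi(x')\chi_s(x)\chi_s(x')
    = \left(\sum_x\phi(x)\chi_s(x)\right)^2-\sum_x\phi(x)^2
    = 2^{2n}\hat{\phi}(s)^2 - 2^n\,\mathbb{E}_{x\sim\mathcal{U}_n}[\phi(x)^2].
\end{equation}
Putting these together yields $\mathbb{E}_{f\sim F_\mathcal{D}}[\hat{g}(s)^2]=\tfrac{1}{2^n}\!\left(1-\mathbb{E}_{x\sim\mathcal{U}_n}[\phi(x)^2]\right)+\hat{\phi}(s)^2$. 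Summing this over $s$ and invoking Parseval for $\phi$ (namely $\sum_s\hat{\phi}(s)^2=\mathbb{E}_{x\sim\mathcal{U}_n}[\phi(x)^2]$) shows that the last-qubit marginal is $(\tfrac12,\tfrac12)$, which is (i); dividing the joint probability of $(s,1)$ by this $\tfrac12$ then yields (ii).

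\emph{Main obstacle.} The only subtle point is the second-moment step: one must keep track of the fact that the diagonal terms carry $g(x)^2=1$ rather than $\phi(x)^2$, and it is exactly this discrepancy on the diagonal that produces the inverse-exponential correction $\tfrac{1}{2^n}(1-\mathbb{E}_{x\sim\mathcal{U}_n}[\phi(x)^2])$ which is absent in the functional case. Everything else is a routine Hadamard/Fourier calculation.
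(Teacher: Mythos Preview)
Your proposal is correct and follows essentially the same route as the paper: reduce to the functional case by linearity of the outcome probabilities in the mixture $\rho_\mathcal{D}$, then compute $\mathbb{E}_{f\sim F_\mathcal{D}}[\hat{g}(s)^2]$ by expanding the square, using independence of the $f(x)$ under $F_\mathcal{D}$ to separate the diagonal ($g(x)^2=1$) from the off-diagonal ($\phi(x)\phi(x')$) contributions, and completing to a full square. The only cosmetic differences are that the paper invokes \Cref{lemma:functional-quantum-Fourier-sampling} for the functional Hadamard step rather than writing it out, and obtains (i) directly from that lemma rather than by summing the formula in (ii) over $s$ via Parseval.
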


The squares of the Fourier coefficients of $\phi$ in general do not form a probability distribution, because in general $\mathbb{E}_{x\sim\mathcal{U}_n}[(\phi(x))^2]< 1$. Thus, it does not make sense to speak of exact sampling from the ``distribution formed by squares of Fourier coefficients'' in this distributional agnostic case. 
However, by Parseval, we know that $\{\tfrac{1}{2^n}\left(1 - \mathbb{E}_{x\sim\mathcal{U}_n}[(\phi(x))^2]\right) + (\hat{\phi}(s))^2\}_{s\in\{0,1\}^n}$ does form a probability distribution. It is exactly this probability distribution that \Cref{theorem:agnostic-quantum-fourier-sampling} allows us to sample from (with success probability $\nicefrac{1}{2}$).

\begin{proof}
    As $\rho_\mathcal{D}$ is a probabilistic mixture, we have, for any $s\in\{0,1\}^n$ and $b\in\{0,1\}$,
    \begin{equation}
        \bra{s,b} H^{\otimes (n+1)} \rho_\mathcal{D} H^{\otimes (n+1)} \ket{s,b}
        = \mathbb{E}_{f\sim F_{\mathcal{D}}} \left[\left\lvert \bra{s,b} H^{\otimes (n+1)}\ket {\psi_{(\mathcal{U}_n, f)}}\right\rvert^2\right] .
    \end{equation}
    Thus, \Cref{lemma:functional-quantum-Fourier-sampling} immediately gives (i).
    Also, \Cref{lemma:functional-quantum-Fourier-sampling} tells us that, conditioned on having observed outcome $1$ for the computational basis measurement on the last qubit, the computational basis measurement on the first $n$ qubits produces string $s\in\{0,1\}^n$ with probability $\mathbb{E}_{f\sim F_{\mathcal{D}}} [( \hat{g}_f (s))^2]$, where $g_f(s) = (-1)^f$. 
    Using the definition of $F_\mathcal{D}$ via an independent sampling of labels, we can rewrite this quantity as 
    \begingroup
    \allowdisplaybreaks
    \begin{align}
        \mathbb{E}_{f\sim F_{\mathcal{D}}} [( \hat{g}_f (s))^2]
        &= \mathbb{E}_{f\sim F_{\mathcal{D}}} \left[ \left(\frac{1}{2^n} \sum_{x\in\{0,1\}^n} (-1)^{f(x)} \chi_s(x)\right)^2 \right]\\
        &= \mathbb{E}_{f\sim F_{\mathcal{D}}} \left[ \frac{1}{4^n} \sum_{x,y\in\{0,1\}^n} \chi_s(x) (-1)^{f(x)} \chi_s(y) (-1)^{f(y)} \right]\\
        &= \frac{1}{4^n} \sum_{x,y\in\{0,1\}^n} \chi_s(x)  \chi_s(y) \mathbb{E}_{f\sim F_{\mathcal{D}}} \left[ (-1)^{f(x)} (-1)^{f(y)} \right]\\
        &= \frac{1}{4^n} \sum_{x,y\in\{0,1\}^n} \chi_s(x)  \chi_s(y)\cdot \begin{cases} \mathbb{E}_{f\sim F_{\mathcal{D}}} \left[ (-1)^{f(x)}\right]\cdot \mathbb{E}_{f\sim F_{\mathcal{D}}} \left[ (-1)^{f(y)}\right]\quad &\textrm{ if } x\neq y\\ 1 &\textrm{ if } x=y \end{cases} .
    \end{align}
    \endgroup
    Next, recall that $\mathbb{E}_{f\sim F_{\mathcal{D}}} \left[ (-1)^{f(x)}\right] = 1-2\varphi(x)=\phi(x)$ holds by definition of $F_{\mathcal{D}}$. Using that $\chi_s^2=1$ holds for any $s\in\mathcal{X}_n$, this allows us to further rewrite 
    \begingroup
    \allowdisplaybreaks
    \begin{align}
        \mathbb{E}_{f\sim F_{\mathcal{D}}} [( \hat{g}_f (s))^2]
        &= \frac{1}{4^n} \sum_{x,y\in\{0,1\}^n} \chi_s(x)  \chi_s(y)\cdot \begin{cases} \phi(x) \phi(y)\quad &\textrm{ if } x\neq y\\ 1 &\textrm{ if } x=y \end{cases}\\
        &= \frac{1}{4^n} \sum_{x\in\{0,1\}^n} (\chi_s(x))^2 + \frac{1}{4^n} \sum_{\substack{x,y\in\{0,1\}^n\\ x\neq y}}  \chi_s(x) \phi(x) \chi_s(y) \phi(y)\\
        &= \frac{1}{2^n} + \frac{1}{4^n}\sum_{x,y\in\{0,1\}^n}  \chi_s(x) \phi(x) \chi_s(y) \phi(y) - \frac{1}{4^n} \sum_{x\in\{0,1\}^n} (\chi_s(x))^2 (\phi(x))^2\\
        &= \frac{1}{2^n} + \left(\frac{1}{2^n} \sum_{x\in\{0,1\}^n}\chi_s(x) \phi(x) \right)^2 - \frac{1}{2^n} \mathbb{E}_{x\sim \mathcal{U}_n}[(\phi(x))^2]\\
        &= \frac{1}{2^n}\left(1 - \mathbb{E}_{x\sim\mathcal{U}_n}[(\phi(x))^2]\right) + (\hat{\phi}(s))^2 .
    \end{align}
    \endgroup
    This finishes the proof.
\end{proof}

To see that \Cref{theorem:agnostic-quantum-fourier-sampling} indeed implies Point 1 of \Cref{theorem:main-result-agnostic-quantum-fourier-sampling-learning}, note that $\phi$ is $[-1,1]$-valued, which in particular implies $0\leq  1 - \mathbb{E}_{x\sim\mathcal{U}_n}[(\phi(x))^2]\leq 1$.
Therefore, we can, with success probability $\nicefrac{1}{2}$, produce a sample from a distribution that is $(\nicefrac{1}{2^n})$-close in $\infty$-norm to the sub-normalized distribution formed by the squares of the Fourier coefficients of $\phi$ as follows:
First, perform $n+1$ single-qubit Hadamard gates on $\rho_\mathcal{D}$.  
Second, measure the last qubit in the computational basis. If the outcome is $0$, the sampling attempt fails. If the outcome is $1$, then measure the first $n$ qubits in the computational basis and output the observed string of bits.

Now equipped with a distributional agnostic analogue of quantum Fourier sampling, we can appeal to \Cref{lemma:distribution-estimation-infty-norm} to approximate the Fourier spectrum of the conditional label expectation:

\begin{corollary}\label{corollary:distributional-agnostic-quantum-approximation-fourier-spectrum}
    Let $\mathcal{D}$ be a probability distribution over $\mathcal{X}_n\times\{0,1\}$ with $\mathcal{D}_{\mathcal{X}_n}=\mathcal{U}_n$.
    Let $\delta,\varepsilon\in (0,1)$. Assume that $\varepsilon > 2^{-(\tfrac{n}{2} - 2)}$.
    Then, there exists a quantum algorithm that, given $\mathcal{O}\left(\tfrac{\log(\nicefrac{1}{\delta\varepsilon^2})}{\varepsilon^4}\right)$ copies of $\rho_\mathcal{D}$, uses $\mathcal{O}\left(n\tfrac{\log(\nicefrac{1}{\delta\varepsilon^2})}{\varepsilon^4}\right)$ single-qubit gates, classical computation time $\tilde{\mathcal{O}}\left(n\tfrac{\log(\nicefrac{1}{\delta \varepsilon^2})}{\varepsilon^4}\right)$, and classical memory of size $\tilde{\mathcal{O}}\left(n\tfrac{\log(\nicefrac{1}{\delta\varepsilon^2})}{\varepsilon^4}\right)$, and outputs, with success probability $\geq 1-\delta$, a succinctly represented $\Tilde{\phi}:\mathcal{X}_n\to [-1,1]$ such that $\norm{\Tilde{\phi}-\hat{\phi}}_\infty\leq\varepsilon$ and $\norm{\Tilde{\phi}}_0\leq\tfrac{16\mathbb{E}_{x\sim\mathcal{U}_n} [(\phi(x))^2]}{\varepsilon^2}\leq \tfrac{16}{\varepsilon^2}$.
\end{corollary}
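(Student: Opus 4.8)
The plan is to chain the two ingredients already in hand. By \Cref{theorem:agnostic-quantum-fourier-sampling}, one Hadamard layer plus a computational-basis measurement on a copy of $\rho_\mathcal{D}$ succeeds with probability $\nicefrac{1}{2}$ and, upon success, outputs a sample from the probability distribution $q(s) = \tfrac{c}{2^n} + (\hat{\phi}(s))^2$ on $\{0,1\}^n$, where $c = 1 - \mathbb{E}_{x\sim\mathcal{U}_n}[(\phi(x))^2]\in[0,1]$ (a genuine distribution by Parseval). Since distinct attempts succeed independently, a Chernoff bound shows that $\mathcal{O}(m + \log\nicefrac{1}{\delta})$ copies produce at least $m$ successful samples except with probability $\le\delta/3$; feeding these into \Cref{lemma:distribution-estimation-infty-norm} with $\tau = \varepsilon^2/8$ and $m = \mathcal{O}(\log(\nicefrac{1}{\delta})/\tau^2) = \mathcal{O}(\log(\nicefrac{1}{\delta})/\varepsilon^4)$ yields a succinct $\tilde{q}$ with $\norm{\tilde{q} - q}_\infty\le\tau$ except with probability $\le\delta/3$.

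The point requiring more care is that Fourier sampling reveals only the \emph{squared} coefficients, whereas the required output $\tilde{\phi}$ must track the signed spectrum. I would recover signs in a second, purely classical phase: measuring a fresh copy of $\rho_\mathcal{D}$ in the computational basis returns a sample $(x,y)\sim\mathcal{D}$ by \Cref{lemma:sanity-check-reproduce-classical-sample}, and $\hat{\phi}(s) = \mathbb{E}_{(x,y)\sim\mathcal{D}}[(1-2y)\chi_s(x)]$ is a $[-1,1]$-bounded statistic. So the algorithm is: (i) form the candidate list $L = \{s : \tilde{q}(s)\ge\varepsilon^2/4\}$; (ii) draw $M = \mathcal{O}(\log(\nicefrac{\lvert L\rvert}{\delta})/\varepsilon^2) = \mathcal{O}(\log(\nicefrac{1}{\delta\varepsilon^2})/\varepsilon^2)$ fresh classical samples and, for each $s\in L$, set $\tilde{\phi}(s)$ to the empirical mean of $(1-2y)\chi_s(x)$ clipped to $[-1,1]$; (iii) set $\tilde{\phi}(s) = 0$ for $s\notin L$ and output the list $\{(s,\tilde{\phi}(s)):s\in L\}$. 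The overall copy count is $\mathcal{O}(\log(\nicefrac{1}{\delta})/\varepsilon^4) + \mathcal{O}(\log(\nicefrac{1}{\delta\varepsilon^2})/\varepsilon^2) = \mathcal{O}(\log(\nicefrac{1}{\delta\varepsilon^2})/\varepsilon^4)$, the gate count is $\mathcal{O}(n)$ per Fourier-sampling round, and the classical time and memory are those of \Cref{lemma:distribution-estimation-infty-norm} plus the $\mathcal{O}(nM\lvert L\rvert)$ cost of evaluating the empirical means — all within the claimed budgets.

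For correctness I would condition on the first-phase estimate being $\tau$-accurate and on all $\lvert L\rvert$ empirical means being $\varepsilon$-accurate (Hoeffding plus a union bound, which the chosen $M$ secures with failure probability $\le\delta/3$). The hypothesis $\varepsilon > 2^{-(\tfrac{n}{2}-2)}$ gives $c/2^n\le 2^{-n} < \varepsilon^2/16$, hence $\lvert\tilde{q}(s) - (\hat{\phi}(s))^2\rvert\le \tau + \varepsilon^2/16 = 3\varepsilon^2/16$ for every $s$. Therefore every $s\in L$ has $(\hat{\phi}(s))^2\ge\varepsilon^2/4 - 3\varepsilon^2/16 = \varepsilon^2/16$, so Parseval forces $\lvert L\rvert\le 16\,\mathbb{E}_{x\sim\mathcal{U}_n}[(\phi(x))^2]/\varepsilon^2 \le 16/\varepsilon^2$, yielding both the sparsity bound and the validity of the union bound over $L$; and every $s\notin L$ has $(\hat{\phi}(s))^2\le\varepsilon^2/4 + 3\varepsilon^2/16 < \varepsilon^2$, so $\lvert\tilde{\phi}(s) - \hat{\phi}(s)\rvert = \lvert\hat{\phi}(s)\rvert < \varepsilon$. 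For $s\in L$, $\lvert\tilde{\phi}(s) - \hat{\phi}(s)\rvert\le\varepsilon$ follows from the Hoeffding event (clipping to $[-1,1]$ only helps, since $\hat{\phi}(s)\in[-1,1]$). A union bound over the three failure events gives overall success probability $\ge 1-\delta$.

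I expect the only genuine subtlety to be this sign-recovery detour and the bookkeeping that keeps its union bound over the $\mathcal{O}(1/\varepsilon^2)$-element list $L$ from pushing the copy complexity past $\mathcal{O}(\log(\nicefrac{1}{\delta\varepsilon^2})/\varepsilon^4)$ — indeed, it is exactly this union bound that produces the $\log(\nicefrac{1}{\delta\varepsilon^2})$ (rather than $\log(\nicefrac{1}{\delta})$) appearing in the statement. Everything else is a routine marriage of Chernoff/Hoeffding concentration with the two cited lemmas, and the assumption $\varepsilon > 2^{-(\tfrac{n}{2}-2)}$ is used only to absorb the $c/2^n$ perturbation — which is precisely why, as noted after \Cref{corollary:quantum-approximation-fourier-spectrum}, no lower bound on $\varepsilon$ is needed in the noiseless functional case.
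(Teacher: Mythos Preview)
Your proof is correct and follows essentially the same two-phase approach as the paper: approximate Fourier sampling via \Cref{theorem:agnostic-quantum-fourier-sampling} combined with the DKW-based \Cref{lemma:distribution-estimation-infty-norm} to identify a short list $L$ of candidate heavy coefficients, followed by classical estimation of the signed Fourier coefficients on $L$ via \Cref{lemma:sanity-check-reproduce-classical-sample} plus Hoeffding and a union bound over $L$. The only cosmetic difference is that the paper applies \Cref{lemma:distribution-estimation-infty-norm} directly to the full $(n{+}1)$-qubit measurement distribution (absorbing the factor-$\tfrac{1}{2}$ success probability into the thresholds), thereby sidestepping your separate Chernoff argument for the number of successful Fourier-sampling rounds.
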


Note that \Cref{corollary:distributional-agnostic-quantum-approximation-fourier-spectrum} imposes an additional assumption compared to the noiseless functional case, namely a lower bound on the desired accuracy $\varepsilon$. However, this lower bound is inverse-exponential in $n$ and thus satisfied (for large enough $n$) for the inverse-polynomial accuracies that are usually of interest.

\begin{proof}
    Our proof is similar to that of \cite[Theorem 5]{kanade2019learning}.
    \Cref{theorem:agnostic-quantum-fourier-sampling} gives a procedure that, using a single copy of $\rho_\mathcal{D}$ and $n+1$ single-qubit quantum gates, produces a sample from the probability distribution $q:\{0,1\}^{n+1}\to [0,1]$ defined via
    \begin{equation}
        q(s,1)
        =\frac{1}{2}\left(\frac{1}{2^n}\left(1 - \mathbb{E}_{x\sim\mathcal{U}_n}[(\phi(x))^2]\right) + (\hat{\phi}(s))^2\right),\quad 
        q(0^n,0)
        =\frac{1}{2} .
    \end{equation}
    Hence, according to \Cref{lemma:distribution-estimation-infty-norm} applied for the probability distribution $q$, confidence $\delta >0$ and accuracy $\tau = \nicefrac{\varepsilon^2}{8}$, we see that $m=\mathcal{O}\left(\tfrac{\log(1/\delta)}{\varepsilon^4}\right)$ copies of $\rho_\mathcal{D}$ are sufficient to obtain, with success probability $\geq 1-\tfrac{\delta}{2}$, a succinctly represented estimate $\tilde{q}_m$ such that $\norm{\tilde{q}_m}_0 \leq \mathcal{O}\left(\tfrac{\log(1/\delta)}{\varepsilon^4}\right)$ and $\norm{q - \tilde{q}_m}_\infty\leq \frac{\varepsilon^2}{8}$. Moreover, the estimate $\tilde{q}_m$ can be obtained using $\mathcal{O}(nm)=\mathcal{O}\left(n\tfrac{\log(1/\delta)}{\varepsilon^4}\right)$ single-qubit Hadamard gates, classical computation time $\tilde{\mathcal{O}}\left(n\tfrac{\log(1/\delta)}{\varepsilon^4}\right)$, and classical memory of size $\tilde{\mathcal{O}}\left(n\tfrac{\log(1/\delta)}{\varepsilon^4}\right)$.
    
    Starting from the estimate $\tilde{q}_m$ for $q$, we output a list $L$ of strings $s\in\{0,1\}^n$ such that $\tilde{q}_m(s,1) \geq  \nicefrac{\varepsilon^2}{4}$. 
    As we have a succinct representation of $\tilde{q}_m$ with at most $\mathcal{O}\left(\tfrac{\log(1/\delta)}{\varepsilon^4}\right)$ non-zero entries, the list $L$ can be compiled by brute-force search in classical computation time $\mathcal{O}\left(n\tfrac{\log(1/\delta)}{\varepsilon^4}\right)$.
    By our approximation guarantee, with success probability $\geq 1-\tfrac{\delta}{2}$, we have the following:
    \begin{itemize}
        \item If $\lvert \hat{\phi}(s)\rvert\geq\varepsilon$, then $\tilde{q}(s,1)\geq \tfrac{\varepsilon^2}{2} - \tfrac{\varepsilon^2}{8}\geq \tfrac{\varepsilon^2}{4}$. So, if $s$ is an $\varepsilon$-heavy Fourier coefficient of $\hat{\phi}$, then $s\in L$.
        \item If $s\in L$, that is, if $\tilde{q}_m(s,1)\geq \tfrac{\varepsilon^2}{4}$, then $(\hat{\phi}(s))^2 \geq 2\left( \tfrac{\varepsilon^2}{8} - \frac{1}{2^n}\left(1 - \mathbb{E}_{x\sim\mathcal{U}_n}[(\phi(x))^2]\right)\right) \geq \tfrac{\varepsilon^2}{16}$, thus also $\lvert \hat{\phi}(s)\rvert \geq \tfrac{\varepsilon}{4}$ and $s$ is an $(\tfrac{\varepsilon}{4})$-heavy Fourier coefficient of $\hat{\phi}$. In particular, combining this with Parseval's equality and the fact that $\mathbb{E}_{x\sim\mathcal{U}_n} [(\phi(x))^2]\leq 1$, we see that $\lvert L\rvert\leq \tfrac{16 \mathbb{E}_{x\sim\mathcal{U}_n} [(\phi(x))^2]}{\varepsilon^2} \leq \tfrac{16}{\varepsilon^2}$.
    \end{itemize} 
    
    Now, for each of the at most $\tfrac{16}{\varepsilon^2}$ strings in $L$, we estimate the corresponding Fourier coefficient. 
    For any single such string $s$, by Hoeffding's inequality, we know that $\mathcal{O}\left(\tfrac{\log (\nicefrac{1}{\delta})}{\varepsilon^2}\right)$ classical samples from $\mathcal{D}$ suffice produce an empirical estimate $\tilde{\phi}(s)$ that matches $\hat{\phi}(s)$ up to accuracy $\varepsilon$, with success probability $\geq 1-\tfrac{\delta}{2}$.
    By a union bound over $L$, this implies that $m_2 = \mathcal{O}\left(\lvert L\rvert\tfrac{\log (\nicefrac{\lvert L\rvert }{\delta})}{\varepsilon^2}\right) = \mathcal{O}\left(\tfrac{\log (\nicefrac{1}{\delta\varepsilon^2})}{\varepsilon^4}\right)$ classical samples from $\mathcal{D}$ suffice to estimate all $\hat{\phi}(s)$ with $s\in L$ simultaneously up to accuracy $\varepsilon$, with success probability $\geq 1-\tfrac{\delta}{2}$. As $\hat{\phi}(s)\in [-1,1]$ for all $s\in\{0,1\}^n$, these estimates can only improve if we project them to $[-1,1]$. 
    Moreover, building these empirical estimates can be done using classical computation time $\tilde{\mathcal{O}}\left(n\tfrac{\log(\nicefrac{1}{\delta \varepsilon^2})}{\varepsilon^4}\right)$ and classical memory of size $\tilde{\mathcal{O}}\left(n\tfrac{\log(\nicefrac{1}{\delta\varepsilon^2})}{\varepsilon^4}\right)$.
    It remains to observe that a single copy of $\rho_\mathcal{D}$ can be measured in the computational basis to obtain a sample from $\mathcal{D}$ (recall \Cref{lemma:sanity-check-reproduce-classical-sample}), and that, by one more union bound, the produced estimate $\tilde{\phi}$ has the desired properties with probability $\geq 1-\delta$.
\end{proof}

\begin{remark}\label{remark:approximate-fourier-sampling-fourier-approximation}
    We note that the proof of \Cref{corollary:distributional-agnostic-quantum-approximation-fourier-spectrum} did not use the specific form of the additive ``perturbation'' term $\frac{1}{2^n}\left(1 - \mathbb{E}_{x\sim\mathcal{U}_n}[(\phi(x))^2]\right)$ coming from \Cref{theorem:agnostic-quantum-fourier-sampling}, we only used the fact that this perturbation lies in $[0,\tfrac{1}{2^n}]$. Therefore, this proof immediately extends to a more general statement of the form `$\norm{\cdot}_\infty$-approximate Fourier sampling enables succinct approximation of the Fourier spectrum.''
    Also, we see that the assumed lower bound of $\varepsilon > 2^{-(\tfrac{n}{2} - 2)}$ may be relaxed when given prior knowledge about $\mathbb{E}_{x\sim\mathcal{U}_n}[(\phi(x))^2]$.
\end{remark}

With this subroutine for obtaining a succinctly represented approximation to the Fourier spectrum of interest, we can now obtain distributional agnostic quantum learning algorithms. These are presented in the next subsections.

\subsection{Distributional Agnostic Quantum Learning Parities and Fourier-Sparse Functions}\label{subsection:distributional-agnostic-quantum-learning}

First, we show how to apply \Cref{corollary:distributional-agnostic-quantum-approximation-fourier-spectrum} as a subroutine for distributional agnostic quantum parity learning.

\begin{corollary}[Formal statement of \Cref{theorem:main-result-agnostic-quantum-fourier-sampling-learning}, Point 2]\label{corollary:agnostic-quantum-parity-learning}
    Let $\mathcal{D}$ be a probability distribution over $\mathcal{X}_n\times\{0,1\}$ with $\mathcal{D}_{\mathcal{X}_n}=\mathcal{U}_n$.
    Let $\delta,\varepsilon\in (0,1)$. Assume that $\varepsilon > 2^{-(\tfrac{n}{2} - 2)}$.
    There is a quantum algorithm that, given $\mathcal{O}\left(\tfrac{\log(\nicefrac{1}{\delta\varepsilon^2})}{\varepsilon^4}\right)$ copies of $\rho_\mathcal{D}$, uses $\mathcal{O}\left(n\tfrac{\log(\nicefrac{1}{\delta \varepsilon^2})}{\varepsilon^4}\right)$ single-qubit gates, classical computation time $\tilde{\mathcal{O}}\left(n\tfrac{\log(\nicefrac{1}{\delta\varepsilon^2})}{\varepsilon^4}\right)$, and classical memory of size $\tilde{\mathcal{O}}\left(n\tfrac{\log(\nicefrac{1}{\delta\varepsilon^2})}{\varepsilon^4}\right)$, and outputs, with success probability $\geq 1-\delta$, a bit string $s\in\{0,1\}^n$ such that 
    \begin{equation}
        \mathbb{P}_{(x,b)\sim\mathcal{D}}[b\neq s\cdot x]
        \leq \min\limits_{t\in\{0,1\}^n} \mathbb{P}_{(x,b)\sim\mathcal{D}}[b\neq t\cdot x] + \varepsilon\, .
    \end{equation}
    Thus, this quantum algorithm is a distributional agnostic proper quantum parity learner up to inverse-exponen\-tial\-ly small accuracies, assuming a uniform marginal over inputs.
\end{corollary}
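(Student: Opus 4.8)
The plan is to reduce distributional agnostic parity learning to the Fourier-spectrum approximation guarantee of \Cref{corollary:distributional-agnostic-quantum-approximation-fourier-spectrum}, via the standard dictionary between the misclassification rate of a parity and the corresponding Fourier coefficient of $\phi=1-2\varphi$. First I would record that, relabelling $\{0,1\}\to\{-1,1\}$ through $b\mapsto (-1)^b$ and $\chi_t(x)=(-1)^{t\cdot x}$, for every $t\in\{0,1\}^n$ one has
\begin{equation}
    \mathbb{P}_{(x,b)\sim\mathcal{D}}[b\neq t\cdot x]
    = \tfrac{1}{2}\bigl(1 - \mathbb{E}_{(x,b)\sim\mathcal{D}}[(-1)^b\chi_t(x)]\bigr)
    = \tfrac{1}{2}\bigl(1 - \mathbb{E}_{x\sim\mathcal{U}_n}[\phi(x)\chi_t(x)]\bigr)
    = \tfrac{1}{2}\bigl(1 - \hat{\phi}(t)\bigr),
\end{equation}
using $\mathbb{E}[(-1)^b\mid x]=\phi(x)$ and the uniform input marginal. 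Consequently $\min_{t}\mathbb{P}_{(x,b)\sim\mathcal{D}}[b\neq t\cdot x]=\tfrac12(1-M^\ast)$ with $M^\ast\defeq\max_{t\in\{0,1\}^n}\hat{\phi}(t)$, and it suffices to output a string $s$ with $\hat{\phi}(s)\geq M^\ast-2\varepsilon$: then $\mathbb{P}_{(x,b)\sim\mathcal{D}}[b\neq s\cdot x]-\min_t\mathbb{P}_{(x,b)\sim\mathcal{D}}[b\neq t\cdot x]=\tfrac12(M^\ast-\hat{\phi}(s))\leq\varepsilon$. (This equivalence between agnostic parity learning and heavy-Fourier-coefficient finding is exactly the classical fact recalled in \Cref{appendix:useful}.)

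Second, I would invoke \Cref{corollary:distributional-agnostic-quantum-approximation-fourier-spectrum} with accuracy parameter $\varepsilon$ and confidence $\delta$ — this is the single source of the copy/time/memory budget and of the assumption $\varepsilon>2^{-(n/2-2)}$ — obtaining, with probability $\geq1-\delta$, a succinctly represented $\tilde{\phi}:\mathcal{X}_n\to[-1,1]$ with $\norm{\tilde{\phi}-\hat{\phi}}_\infty\leq\varepsilon$ and $\norm{\tilde{\phi}}_0\leq 16/\varepsilon^2$. Since $\tilde{\phi}$ is zero outside a list $L$ of at most $16/\varepsilon^2$ strings, the $\infty$-norm bound forces $|\hat{\phi}(t)|\leq\varepsilon$ for every $t\notin L$; moreover the assumption $\varepsilon>2^{-(n/2-2)}$ gives $16/\varepsilon^2<2^n$, so $L$ is a proper subset of $\{0,1\}^n$. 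The learner then outputs a global maximizer $s\in\arg\max_{t\in\{0,1\}^n}\tilde{\phi}(t)$: concretely, if the maximum of $\tilde{\phi}$ over the explicit list $L$ is strictly positive, take the maximizing string in $L$; otherwise take any string outside $L$ (found by scanning at most $|L|+1$ candidates). This adds only $\poly(n,\nicefrac1\varepsilon)$ classical post-processing, so the resource bounds are those of \Cref{corollary:distributional-agnostic-quantum-approximation-fourier-spectrum}, and since the output is a parity the learner is proper.

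Third, I would verify the output is good. Let $t^\ast\in\arg\max_t\hat{\phi}(t)$, so $\hat{\phi}(t^\ast)=M^\ast$. If $t^\ast\in L$ then $\tilde{\phi}(t^\ast)\geq\hat{\phi}(t^\ast)-\varepsilon=M^\ast-\varepsilon$; if $t^\ast\notin L$ then $\tilde{\phi}(t^\ast)=0$ and $|M^\ast|=|\hat{\phi}(t^\ast)|\leq\varepsilon$, so again $\tilde{\phi}(t^\ast)=0\geq M^\ast-\varepsilon$. Either way $\tilde{\phi}(t^\ast)\geq M^\ast-\varepsilon$, and since $s$ maximizes $\tilde{\phi}$ over all of $\{0,1\}^n$ we get $\tilde{\phi}(s)\geq\tilde{\phi}(t^\ast)\geq M^\ast-\varepsilon$, hence $\hat{\phi}(s)\geq\tilde{\phi}(s)-\varepsilon\geq M^\ast-2\varepsilon$, which by the first paragraph yields the claimed learning guarantee. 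The delicate point — and the one I would flag as the main obstacle worth stating carefully — is that $M^\ast$ need not be attained at a heavy Fourier coefficient: when every parity is near-optimal with error $\approx\tfrac12$ (equivalently $M^\ast\leq\varepsilon$), the algorithm must be allowed to output a string \emph{outside} the approximated support, and this is legitimate precisely because the sparsity bound forces $|L|<2^n$, which is in turn exactly what the lower bound $\varepsilon>2^{-(n/2-2)}$ buys us. All remaining steps (the Fourier-error identity, the union bound for the success probability, the brute-force search over $L$) are routine.
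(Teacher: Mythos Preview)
Your proof is correct and takes the same approach as the paper: reduce via \Cref{lemma:parity-learning-via-heaviest-Fourier-coefficient} (the Fourier/misclassification identity), invoke \Cref{corollary:distributional-agnostic-quantum-approximation-fourier-spectrum}, output $s\in\arg\max_{t\in\{0,1\}^n}\tilde\phi(t)$, and bound $\hat\phi(t^\ast)-\hat\phi(s)\leq 2\varepsilon$ by the standard two-sided $\|\tilde\phi-\hat\phi\|_\infty$ chain. You are more explicit than the paper about the edge case where $\max_{t\in L}\tilde\phi(t)\leq 0$, though your framing slightly overstates the role of the lower bound on $\varepsilon$ --- that assumption is needed already inside \Cref{corollary:distributional-agnostic-quantum-approximation-fourier-spectrum} to control the Fourier-sampling perturbation, not to ensure $|L|<2^n$ (and if one had $L=\{0,1\}^n$ the argmax over $L$ would simply be the global argmax, so the argument would go through unchanged).
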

\begin{proof}
    By \Cref{lemma:parity-learning-via-heaviest-Fourier-coefficient} (and the discussion on complexity bounds thereafter), it suffices to show that there is a quantum algorithm with the claimed complexity bounds that, with success probability $\geq 1-\delta$, outputs a $(2\varepsilon)$-approximately-largest Fourier coefficient of $\phi$.
    To achieve this, first run the procedure from \Cref{corollary:distributional-agnostic-quantum-approximation-fourier-spectrum} to obtain, with probability $\geq 1-\delta$, a succinctly represented $\Tilde{\phi}$ such that $\norm{\Tilde{\phi}-\hat{\phi}}_\infty\leq\varepsilon$ and $\norm{\Tilde{\phi}}_0\leq\tfrac{16}{\varepsilon^2}$. 
    Now, let $s\in\operatorname{argmax}_{t\in\{0,1\}^n} \Tilde{\phi}(t)$. Note that such an $s$ can be found in time $\mathcal{O}\left(\tfrac{n}{\varepsilon^2}\right)$ since $\norm{\Tilde{\phi}}_0\leq\tfrac{16}{\varepsilon^2}$.
    This $s$ now satisfies
    \begin{align}
        \max_{t\in\{0,1\}^n} \hat{\phi}(t) - \hat{\phi}(s)
        &= \max_{t\in\{0,1\}^n} \hat{\phi}(t) - \tilde{\phi}(t) + \tilde{\phi}(t) - \tilde{\phi}(s) + \tilde{\phi}(s) - \hat{\phi}(s)\\
        &\leq \norm{\Tilde{\phi}-\hat{\phi}}_\infty + 0 + \norm{\Tilde{\phi}-\hat{\phi}}_\infty\\
        &\leq 2\varepsilon,
    \end{align}
    as needed. 
    The bounds on copy complexity, quantum gate complexity, classical runtime, and classical memory are all inherited from \Cref{corollary:distributional-agnostic-quantum-approximation-fourier-spectrum}.
\end{proof}

In particular, \Cref{corollary:agnostic-quantum-parity-learning} gives an efficient procedure for agnostic quantum parity learning with inverse-polynomial accuracy parameter $\varepsilon$ and with inverse-exponential confidence parameter $\delta$.
In contrast, by reduction to the widely believed hardness of LPN, we do not expect an efficient classical procedure for the corresponding classical agnostic learning problem to exist.

In a similar vein, \Cref{corollary:distributional-agnostic-quantum-approximation-fourier-spectrum} can serve as a subroutine for distributional agnostic quantum learning of Fourier-sparse functions: 

\begin{corollary}[Formal statement of \Cref{theorem:main-result-agnostic-quantum-fourier-sampling-learning}, Point 3]\label{corollary:agnostic-quantum-fourier-sparse-learning}
    Let $\mathcal{D}$ be a probability distribution over $\mathcal{X}_n\times\{0,1\}$ with $\mathcal{D}_{\mathcal{X}_n}=\mathcal{U}_n$.
    Let $\delta,\varepsilon\in (0,1)$. Assume that $\varepsilon > 2^{-(\tfrac{n}{2} - 2)}$.
    Then, there is a quantum algorithm that, given $\mathcal{O}\left(\tfrac{k^4\log(\nicefrac{k^2}{\delta\varepsilon^2})}{\varepsilon^4}\right)$ copies of $\rho_\mathcal{D}$, uses $\mathcal{O}\left(n\tfrac{k^4\log(\nicefrac{1}{\delta\varepsilon^2})}{\varepsilon^4}\right)$ single-qubit gates, classical computation time $\tilde{\mathcal{O}}\left(n\tfrac{k^4\log(\nicefrac{k^2}{\delta\varepsilon^2})}{\varepsilon^4}\right)$, and classical memory of size $\tilde{\mathcal{O}}\left(n\tfrac{k^4\log(\nicefrac{k^2}{\delta\varepsilon^2})}{\varepsilon^4}\right)$, and outputs, with success probability $\geq 1-\delta$, a randomized hypothesis $h:\mathcal{X}_n\to\{0,1\}$ such that
    \begin{equation}
        \mathbb{P}_{(x,b)\sim\mathcal{D}} \left[b\neq h(x)\right]
        \leq 2 \min\limits_{\substack{\tilde{f}:\mathcal{X}_n\to\{0,1\}\\\mathrm{Fourier-}k\mathrm{-sparse}}} \mathbb{P}_{(x,b)\sim\mathcal{D}}[b\neq \tilde{f}(x)] + \varepsilon\, .
    \end{equation}
    In particular, this quantum algorithm is a distributional $2$-agnostic improper quantum Fourier-sparse learner up to inverse-exponentially small accuracies, assuming a uniform marginal over inputs.
\end{corollary}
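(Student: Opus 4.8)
The plan is to mirror the proof of \Cref{corollary:agnostic-quantum-parity-learning}: the only quantum work is a black-box call to the Fourier-spectrum approximation of \Cref{corollary:distributional-agnostic-quantum-approximation-fourier-spectrum} (which itself rests on the approximate Fourier sampling of \Cref{theorem:agnostic-quantum-fourier-sampling}), and everything specific to Fourier-$k$-sparse benchmarks is handled by the purely classical analysis in \Cref{appendix:useful}. Concretely, I would first run \Cref{corollary:distributional-agnostic-quantum-approximation-fourier-spectrum} with confidence $\nicefrac{\delta}{2}$ and an internal target accuracy $\varepsilon' = \Theta(\nicefrac{\varepsilon}{k})$ — the exponent pattern in the claimed copy bound $\mathcal{O}(\nicefrac{k^4 \log(\nicefrac{k^2}{\delta \varepsilon^2})}{\varepsilon^4})$ is exactly what $\mathcal{O}(\nicefrac{\log(\nicefrac{1}{\delta \varepsilon'^2})}{\varepsilon'^4})$ gives at this $\varepsilon'$. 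This produces, with probability $\ge 1 - \nicefrac{\delta}{2}$, a succinctly represented $\tilde\phi : \mathcal{X}_n \to [-1,1]$ with $\norm{\tilde\phi - \hat\phi}_\infty \le \varepsilon'$ and $\norm{\tilde\phi}_0 \le \nicefrac{16}{\varepsilon'^2}$, together with the stated single-qubit-gate, classical-time, and classical-memory bounds. Here one should check that $\varepsilon'$ still lies above the inverse-exponential accuracy floor needed by \Cref{corollary:distributional-agnostic-quantum-approximation-fourier-spectrum}; since $k$ is thought of as at most $\mathrm{poly}(n)$ this floor remains exponentially small, and it can moreover be relaxed as in \Cref{remark:approximate-fourier-sampling-fourier-approximation} using only the trivial bound $\mathbb{E}_{x \sim \mathcal{U}_n}[(\phi(x))^2] \le 1$.

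Second, I would feed $\tilde\phi$ into the classical Fourier-sparse agnostic-learning routine of \Cref{appendix:useful}. At a high level: extract from $\tilde\phi$ the list $L$ of coefficients exceeding a threshold tuned to $\varepsilon$ and $k$, form the real-valued proxy $\psi = \sum_{s \in L} \tilde\phi(s)\chi_s$, clip it into $[-1,1]$, and output the randomized hypothesis $h$ that on input $x$ returns $1$ with probability $\tfrac{1 - \operatorname{clip}(\psi(x))}{2}$. The analysis uses the correlation identity $\mathrm{err}_\mathcal{D}(h) = \tfrac12 - \tfrac12\, \mathbb{E}_{x \sim \mathcal{U}_n}[\phi(x)(1 - 2h(x))]$; the fact that any Fourier-$k$-sparse Boolean benchmark $\tilde f^{*}$ has $\ell_1$ Fourier mass at most $\sqrt{k}$ on its support by Cauchy--Schwarz, so that thresholding sacrifices only $\mathcal{O}(\varepsilon)$ of its correlation with $\phi$; and the standard squaring step from randomized rounding, which turns a near-optimal correlation into error at most $2\,\mathrm{opt}_\mathcal{D}(\mathcal{B}) + \varepsilon$. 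This routine makes no further oracle calls and runs in $\mathrm{poly}(n, k, \nicefrac{1}{\varepsilon})$ time and memory, so the overall resource bounds are those inherited from the quantum stage, and a union bound over the two $\nicefrac{\delta}{2}$-failure events yields overall success probability $\ge 1 - \delta$.

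The routine bookkeeping — the precise constant in $\varepsilon'$, the comparison of $\operatorname{clip}(\psi)$ with $\psi$, and the Cauchy--Schwarz/squaring estimates — is exactly the content of \Cref{appendix:useful}, which I would cite rather than reproduce. The one genuinely substantive point, and the main obstacle, is entirely classical: establishing that an $\infty$-norm approximation of the Fourier spectrum (rather than an $\ell_2$ approximation, or membership-query access in the style of Kushilevitz--Mansour) already suffices for $2$-agnostic — and not merely $\mathrm{poly}(k)$-agnostic — improper Fourier-sparse learning, with only a polynomial-in-$k$ blow-up in the sample size. The quantum side contributes nothing beyond this single invocation of \Cref{corollary:distributional-agnostic-quantum-approximation-fourier-spectrum}.
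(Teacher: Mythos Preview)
Your proposal is correct and follows essentially the same route as the paper: invoke \Cref{corollary:distributional-agnostic-quantum-approximation-fourier-spectrum} once with internal accuracy $\Theta(\varepsilon/k)$ (the paper uses $\tilde\varepsilon=\varepsilon/4k$), then hand the resulting $\tilde\phi$ to the classical machinery of \Cref{appendix:useful}, specifically \Cref{lemma:Fourier-sparse-learning-via-heaviest-Fourier-coefficients} together with \Cref{lemma:technical}, to extract the $k$ approximately-heaviest coefficients and build the randomized hypothesis. The resource accounting you give matches the paper's.

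One small imprecision in your informal sketch: the randomized hypothesis actually used in \Cref{appendix:useful} is not the linear ``clip and flip with probability $(1-\psi)/2$'' rule you describe, but the quadratic rounding of \Cref{lemma:misclassification-probability-bound-L2-probabilistic-hypothesis}, namely $p(x)=\tfrac{(1-g(x))^2}{2(1+g(x)^2)}$; and the paper selects the top-$k$ coefficients of $\tilde\phi$ rather than thresholding. Since you explicitly defer these details to \Cref{appendix:useful} rather than reproducing them, this does not constitute a gap, but be aware that your linear-clipping description does not literally match what the cited lemma provides. Also, there is only one failure event (the call to \Cref{corollary:distributional-agnostic-quantum-approximation-fourier-spectrum}), so no $\delta/2$ split or union bound is needed.
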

\begin{proof}
    By \Cref{lemma:Fourier-sparse-learning-via-heaviest-Fourier-coefficients} and the accompanying discussion on complexity bounds, it suffices to show that there is a quantum algorithm with the claimed complexity bounds that, with success probability $\geq 1-\delta$, outputs $(\nicefrac{\varepsilon}{2k})$-accurate estimates of $k$ $(\nicefrac{\varepsilon}{2k})$-approximately-heaviest Fourier coefficients of $\phi$.
    To achieve this, let $\tilde{\varepsilon} = \nicefrac{\varepsilon}{4k}$ and run the procedure from \Cref{corollary:distributional-agnostic-quantum-approximation-fourier-spectrum} to obtain, with probability $\geq 1-\delta$, a succinctly represented $\Tilde{\phi}:\mathcal{X}_n\to [-1,1]$ such that $\norm{\Tilde{\phi}-\hat{\phi}}_\infty\leq\tilde{\varepsilon}$ and $\norm{\Tilde{\phi}}_0\leq\tfrac{16}{\tilde{\varepsilon}^2}$. 
    Let $s_1\in\operatorname{argmax}_{t\in\{0,1\}^n} \lvert \Tilde{\phi}(t)\rvert$ and, for $2\leq \ell\leq k$, let $s_\ell \in \operatorname{argmax}_{t\in\{0,1\}^n\setminus\{s_1,\ldots,s_{\ell -1}\}} \lvert \Tilde{\phi}(t)\rvert$. 
    Note that such $s_1,\ldots,s_k$ can be found in time $\mathcal{O}\left(\tfrac{n k^4}{\varepsilon^2}\right)$ since $\norm{\Tilde{\phi}}_0\leq\tfrac{16}{\tilde{\varepsilon}^2}\leq \mathcal{O}\left(\tfrac{k^4}{\varepsilon^2}\right)$.
    As in \Cref{lemma:Fourier-sparse-learning-via-heaviest-Fourier-coefficients}, let $t_1\in \operatorname{argmax}_{t\in\{0,1\}^n} \lvert \hat{\phi}(t)\rvert$, and for $2\leq \ell\leq k$, let $t_\ell\in \operatorname{argmax}_{t\in\{0,1\}^n\setminus\{t_1,\ldots,t_{\ell -1}\}} \lvert \hat{\phi}(t)\rvert$.
    By the technical \Cref{lemma:technical}, $\norm{\lvert\Tilde{\phi}\rvert-\lvert\hat{\phi}\rvert}_\infty\leq\norm{\Tilde{\phi}-\hat{\phi}}_\infty\leq\tilde{\varepsilon}$ implies that, for every $1\leq \ell\leq k$, $\left\lvert  \lvert\hat{\phi}(t_\ell)\rvert - \lvert\hat{\phi}(s_\ell)\rvert\right\rvert\leq 2\tilde{\varepsilon} \leq \nicefrac{\varepsilon}{2k} $, so we can apply \Cref{lemma:Fourier-sparse-learning-via-heaviest-Fourier-coefficients}.
    The bounds on copy complexity, quantum gate complexity, classical runtime, and classical memory are all inherited from \Cref{corollary:distributional-agnostic-quantum-approximation-fourier-spectrum}.
\end{proof}

As $1$-agnostic Fourier-sparse learning is at least as hard as $1$-agnostic parity learning, which in turn is at least as hard as LPN, this task is widely believed to be classically intractable from random examples. 
To the best of our knowledge, currently there are also no classical algorithms for $2$-agnostic Fourier-sparse learning from examples. 
Thus, while \Cref{corollary:agnostic-quantum-fourier-sparse-learning} does not achieve $1$-agnostic quantum Fourier-sparse learning, it serves as an indication for the power of mixture-of-superpositions examples in learning Fourier-sparse functions w.r.t.~uniformly random inputs.

In \Cref{corollary:agnostic-quantum-fourier-sparse-learning}, we aimed to achieve a small misclassification probability.
If instead we focus our attention on the $L_2$-error directly, then we can achieve $1$-agnostic learning w.r.t.~this performance measure:

\begin{corollary}\label{corollary:agnostic-quantum-fourier-sparse-learning-L2}
    Let $\mathcal{D}$ be a probability distribution over $\mathcal{X}_n\times\{0,1\}$ with $\mathcal{D}_{\mathcal{X}_n}=\mathcal{U}_n$.
    Let $\delta,\varepsilon\in (0,1)$. Assume that $\varepsilon > 2^{-(\tfrac{n}{2} - 2)}$.
    There is a quantum algorithm that, given $\mathcal{O}\left(\tfrac{k^4\log(\nicefrac{k^2}{\delta\varepsilon^2})}{\varepsilon^4}\right)$ copies of $\rho_\mathcal{D}$, uses $\mathcal{O}\left(n\tfrac{k^4\log(\nicefrac{1}{\delta\varepsilon^2})}{\varepsilon^4}\right)$ single-qubit gates, classical computation time $\tilde{\mathcal{O}}\left(n\tfrac{k^4\log(\nicefrac{k^2}{\delta\varepsilon^2})}{\varepsilon^4}\right)$, and classical memory of size $\tilde{\mathcal{O}}\left(n\tfrac{k^4\log(\nicefrac{k^2}{\delta\varepsilon^2})}{\varepsilon^4}\right)$, and outputs, with success probability $\geq 1-\delta$, a Fourier-$k$-sparse function $g:\mathcal{X}_n\to\mathbb{R}$ such that 
    \begin{equation}
        \mathbb{E}_{(x,b)\sim\mathcal{D}}[(y-g(x))^2]
        \leq \min\limits_{\substack{\tilde{f}:\mathcal{X}_n\to\{-1,1\}\\\mathrm{Fourier-}k\mathrm{-sparse}}} \mathbb{E}_{(x,b)\sim\mathcal{D}}[(y-\tilde{f}(x))^2] + \varepsilon\, .
    \end{equation}
\end{corollary}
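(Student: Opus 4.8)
The plan is to reduce $L_2$-agnostic learning of Fourier-sparse functions to $\ell_\infty$-approximation of the Fourier spectrum of $\phi$, which \Cref{corollary:distributional-agnostic-quantum-approximation-fourier-spectrum} already provides. Throughout, interpret the label as $\{-1,1\}$-valued via the relabeling convention of \Cref{subsection:fourier-analysis}, so that $\mathbb{E}_{(x,b)\sim\mathcal{D}}[b \mid x] = \phi(x)$; since the input marginal is $\mathcal{U}_n$, for \emph{every} $h:\mathcal{X}_n\to\mathbb{R}$ the $L_2$-risk decomposes as $\mathbb{E}_{(x,b)\sim\mathcal{D}}[(b-h(x))^2] = \mathbb{E}_{x\sim\mathcal{U}_n}[1-\phi(x)^2] + \norm{\phi-h}_2^2 = (1-\norm{\phi}_2^2) + \norm{\phi-h}_2^2$, where $\norm{\cdot}_2$ is the $L_2$-norm w.r.t.\ $\mathcal{U}_n$ and the first term is independent of $h$. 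Writing $t_1,\dots,t_k$ for indices of $k$ heaviest Fourier coefficients of $\phi$, Parseval shows the best real Fourier-$k$-sparse $L_2$-approximation of $\phi$ has squared error $\norm{\phi}_2^2 - \sum_{\ell=1}^k \hat\phi(t_\ell)^2$; since every $\{-1,1\}$-valued Fourier-$k$-sparse function is in particular a real one, $\min_{\tilde f} \mathbb{E}_{(x,b)\sim\mathcal{D}}[(b-\tilde f(x))^2] \geq 1 - \sum_{\ell=1}^k \hat\phi(t_\ell)^2$. Hence it suffices to output a real Fourier-$k$-sparse $g$ with $\norm{\phi-g}_2^2 \leq \norm{\phi}_2^2 - \sum_{\ell=1}^k \hat\phi(t_\ell)^2 + \varepsilon$, which by the decomposition gives $\mathbb{E}_{(x,b)\sim\mathcal{D}}[(b-g(x))^2] \leq 1 - \sum_{\ell=1}^k \hat\phi(t_\ell)^2 + \varepsilon \leq \min_{\tilde f}\mathbb{E}_{(x,b)\sim\mathcal{D}}[(b-\tilde f(x))^2] + \varepsilon$.

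The algorithm is the natural one: run \Cref{corollary:distributional-agnostic-quantum-approximation-fourier-spectrum} with accuracy $\tilde\varepsilon \coloneqq \nicefrac{\varepsilon}{8k}$ and confidence $\delta$ to obtain, with probability $\geq 1-\delta$, a succinctly represented $\tilde\phi$ with $\norm{\tilde\phi - \hat\phi}_\infty \leq \tilde\varepsilon$ and $\norm{\tilde\phi}_0 \leq \nicefrac{16}{\tilde\varepsilon^2} = \mathcal{O}(\nicefrac{k^4}{\varepsilon^2})$; then pick $s_1,\dots,s_k$ to be the $k$ indices of largest $\lvert\tilde\phi(\cdot)\rvert$ (by brute force over the sparse representation, in time $\mathcal{O}(\nicefrac{nk^4}{\varepsilon^2})$) and output $g \coloneqq \sum_{\ell=1}^k \tilde\phi(s_\ell)\chi_{s_\ell}$, which is Fourier-$k$-sparse by construction.

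For the analysis, Parseval gives $\norm{\phi-g}_2^2 = \sum_{\ell=1}^k(\hat\phi(s_\ell)-\tilde\phi(s_\ell))^2 + (\norm{\phi}_2^2 - \sum_{\ell=1}^k\hat\phi(s_\ell)^2) \leq k\tilde\varepsilon^2 + \norm{\phi}_2^2 - \sum_{\ell=1}^k\hat\phi(s_\ell)^2$. Exactly as in the proof of \Cref{corollary:agnostic-quantum-fourier-sparse-learning}, the technical \Cref{lemma:technical} applied to $\norm{\lvert\tilde\phi\rvert - \lvert\hat\phi\rvert}_\infty \leq \norm{\tilde\phi-\hat\phi}_\infty \leq \tilde\varepsilon$ yields $\lvert\, \lvert\hat\phi(t_\ell)\rvert - \lvert\hat\phi(s_\ell)\rvert \,\rvert \leq 2\tilde\varepsilon$ for every $\ell$, hence $\hat\phi(s_\ell)^2 \geq \hat\phi(t_\ell)^2 - 4\tilde\varepsilon$ (using $\lvert\hat\phi(t_\ell)\rvert\leq 1$ when $\lvert\hat\phi(t_\ell)\rvert \geq 2\tilde\varepsilon$; the bound is vacuous otherwise), so $\sum_{\ell=1}^k\hat\phi(s_\ell)^2 \geq \sum_{\ell=1}^k\hat\phi(t_\ell)^2 - 4k\tilde\varepsilon$. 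Combining, $\norm{\phi-g}_2^2 \leq \norm{\phi}_2^2 - \sum_{\ell=1}^k\hat\phi(t_\ell)^2 + k\tilde\varepsilon^2 + 4k\tilde\varepsilon$, and for $\tilde\varepsilon = \nicefrac{\varepsilon}{8k}$ with $\varepsilon\in(0,1)$ the slack $k\tilde\varepsilon^2 + 4k\tilde\varepsilon$ is at most $\varepsilon$; plugging into the $L_2$-risk decomposition finishes correctness. All copy, single-qubit-gate, classical-time, and classical-memory bounds are inherited from \Cref{corollary:distributional-agnostic-quantum-approximation-fourier-spectrum} run at accuracy $\tilde\varepsilon = \Theta(\nicefrac{\varepsilon}{k})$, which yields exactly the claimed expressions (the $\mathcal{O}(\nicefrac{nk^4}{\varepsilon^2})$ time for the top-$k$ extraction is absorbed).

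I do not anticipate a genuine obstacle. The reason the $L_2$-setting gives $1$-agnostic learning, rather than the $2$-agnostic guarantee of \Cref{corollary:agnostic-quantum-fourier-sparse-learning}, is precisely that the squared-loss risk equals an $h$-independent constant plus $\norm{\phi-h}_2^2$ \emph{exactly}, so spectrum approximation and $L_2$-agnostic learning are the same problem and no factor-$2$ slack arises; the $0/1$-loss analysis, by contrast, must pass through an inequality relating misclassification error to $L_2$-error. The only step requiring care — that truncating the $\ell_\infty$-approximate spectrum to its $k$ largest entries retains all but an $\mathcal{O}(k\tilde\varepsilon)$ fraction of the top-$k$ Fourier mass — is handled verbatim by \Cref{lemma:technical}, just as in the preceding proof.
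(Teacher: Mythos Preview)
Your proposal is correct and follows essentially the same approach as the paper: run \Cref{corollary:distributional-agnostic-quantum-approximation-fourier-spectrum} at accuracy $\Theta(\varepsilon/k)$, truncate to the top-$k$ coefficients via \Cref{lemma:technical}, and control the $L_2$-risk through the exact decomposition $\mathbb{E}_{(x,b)\sim\mathcal{D}}[(b-h(x))^2] = (1-\norm{\phi}_2^2) + \norm{\phi-h}_2^2$ (which the paper cites from \Cref{lemma:misclassification-probability-bound-L2}). The only differences are cosmetic: you take $\tilde\varepsilon = \varepsilon/(8k)$ rather than the paper's $\varepsilon/(4k)$, and you spell out the risk decomposition and the $1$-agnostic-vs-$2$-agnostic intuition directly, whereas the paper points back to the proof of \Cref{lemma:Fourier-sparse-learning-via-heaviest-Fourier-coefficients} for the extracted bounds.
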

\begin{proof}
    We start as in the proof of \Cref{corollary:agnostic-quantum-fourier-sparse-learning}, using \Cref{corollary:distributional-agnostic-quantum-approximation-fourier-spectrum} to find approximately largest Fourier coefficients. 
    Then, we reason as in \Cref{lemma:Fourier-sparse-learning-via-heaviest-Fourier-coefficients}.
    Recalling from \Cref{lemma:misclassification-probability-bound-L2} that $\mathbb{E}_{(x,b)\sim\mathcal{D}}[(y-g(x))^2] = \sum_{s\in\{0,1\}^n} \left(\hat{\phi}(s)-\hat{g}(s)\right)^2 + \left(1 - \mathbb{E}_{x\sim \mathcal{U}_n} \left[(\phi(x))^2\right]\right)$, we can extract the following bounds from our proof of \Cref{lemma:Fourier-sparse-learning-via-heaviest-Fourier-coefficients}, using the same notation: On the one hand,
    \begin{align}
        \mathbb{E}_{(x,b)\sim\mathcal{D}}[(y-g(x))^2]
        \leq 1 +  k\tilde{\varepsilon}^2 -   \sum_{\ell = 1}^k  \left(\hat{\phi}(s_\ell)\right)^2.
    \end{align}
    On the other hand, 
    \begin{align}
        \min\limits_{\substack{\tilde{f}:\mathcal{X}_n\to\{-1,1\}\\\mathrm{Fourier-}k\mathrm{-sparse}}} \mathbb{E}_{(x,b)\sim\mathcal{D}}[(y-\tilde{f}(x))^2]
        &\geq 1 -   \sum_{\ell = 1}^k  \left(\hat{\phi}(s_\ell)\right)^2 - 2k\tilde{\varepsilon}.
    \end{align}
    Altogether, we have shown
    \begin{align}
        \mathbb{E}_{(x,b)\sim\mathcal{D}}[(y-g(x))^2]
        &\leq 1 - \sum_{\ell =1}^k (\hat{\phi}(s_\ell))^2 + 2 k \tilde{\varepsilon}\\
        &\leq \min\limits_{\substack{\tilde{f}:\mathcal{X}_n\to\{-1,1\}\\\mathrm{Fourier-}k\mathrm{-sparse}}} \mathbb{E}_{(x,b)\sim\mathcal{D}}[(y-\tilde{f}(x))^2] + 2k\tilde{\varepsilon} + k\tilde{\varepsilon}^2\\
        &= \min\limits_{\substack{\tilde{f}:\mathcal{X}_n\to\{-1,1\}\\\mathrm{Fourier-}k\mathrm{-sparse}}} \mathbb{E}_{(x,b)\sim\mathcal{D}}[(y-\tilde{f}(x))^2] + \varepsilon,
    \end{align}
    if we choose $\tilde{\varepsilon}=\nicefrac{\varepsilon}{4k}$.
    The bounds on copy complexity, quantum gate complexity, classical runtime, and classical memory are all inherited from \Cref{corollary:distributional-agnostic-quantum-approximation-fourier-spectrum}.
\end{proof}

We make two short remarks about \Cref{corollary:agnostic-quantum-fourier-sparse-learning-L2}.
On the one hand, in the proof of \Cref{corollary:agnostic-quantum-fourier-sparse-learning-L2} we implicitly demonstrate a $1$-agnostic analogue of \Cref{lemma:Fourier-sparse-learning-via-heaviest-Fourier-coefficients} when performance is measured according to the $L_2$-error.
On the other hand, while the function $g$ in \Cref{corollary:agnostic-quantum-fourier-sparse-learning-L2} is Fourier-$k$-sparse, it is not $\{-1,1\}$-valued. Therefore, also the quantum learning procedure of \Cref{corollary:agnostic-quantum-fourier-sparse-learning-L2} is improper.

\subsection{Distributional Agnostic Quantum Statistical Query Learning}\label{subsection:distributional-qsq-learning}

The previous two subsections considered quantum agnostic learning when given access to mixture-of-superpositions examples. 
In classical learning theory as well as in quantum learning theory from superposition examples, statistical query access serves as an important way of weakening the learner's data access.
Here, we present a statistical query relaxation of mixture-of-superpositions quantum data access, demonstrate that this relaxed access still allow for a variant of the Goldreich-Levin/Kushilevitz-Mansour algorithm, and argue that it therefore enables distributional agnostic quantum learning.

In a similar vein to \Cref{definition:functional-qsq}, we define QSQs for distributional agnostic learning, based on \Cref{definition:mixture-of-superpositions-quantum-example}:

\begin{definition}[Distributional quantum statistical queries]\label{definition:distributional-agnostic-qsq}
    Let $\mathcal{D}$ be a probability distribution over $\mathcal{X}_n\times\{0,1\}$.
    A \emph{(distributional) quantum statistical query (QSQ) oracle for $\mathcal{D}$} produces, when queried with a bounded $(n+1)$-qubit observable $M=M^\dagger\in\mathcal{B}((\mathbb{C}^2)^{\otimes (n+1)})$ satisfying $\norm{M}\leq 1$ and with a tolerance parameter $\tau > 0$, a number $\mu\in\mathbb{R}$ such that
    \begin{equation}
        \left\lvert \mu - \tr[ M \rho_{\mathcal{D}}] \right\rvert
        \leq \tau ,
    \end{equation}
    where $\rho_{\mathcal{D}}$ is a (distributional) mixture-of-superpositions examples as in \Cref{definition:mixture-of-superpositions-quantum-example}.
\end{definition}

Again, one may additionally impose that $M$ be efficiently implementable. In fact, the observables used in our QSQs below all satisfy this additional requirement.

For the functional case with uniform marginal $\mathcal{D}_{\mathcal{X}_n} = \mathcal{U}_n$, as shown in \cite[Theorem 4.4]{arunachalam2020qsq}, the classical Goldreich-Levin (GL) algorithm \cite{goldreich1989hard, kushilevitz1993learning} for finding a list of heavy Fourier coefficients of an unknown function assuming query access admits a quantum counterpart that only needs QSQ access. 
This relies on the fact that, for any subset $S\subseteq\{0,1\}^n$, when choosing a suitable (efficiently implementable) observable $M = M_S$ inspired by quantum Fourier sampling, the expectation value $\bra{\psi_{(\mathcal{D}_{\mathcal{X}_n}, f)}} M \ket{\psi_{(\mathcal{D}_{\mathcal{X}_n}, f)}}$ exactly equals $\sum_{s\in S} (\hat{f}(s))^2$. 
Therefore, QSQ access in particular suffices to estimate quantities of the form $\sum_{t\in\{0,1\}^{n-k}} (\hat{f}(st))^2 $ for fixed $s\in\{0,1\}^k$, which is exactly what the classical GL procedure uses (classical) membership queries for.

This approach does not immediately work in the distributional setting. To see this, recall from \Cref{theorem:agnostic-quantum-fourier-sampling} that a copy $\rho_{\mathcal{D}} = \rho_{(\mathcal{U}_n, \varphi)}$ does not allow for exact but only for approximate quantum Fourier sampling from $\phi$. 
Therefore, when imitating the reasoning of \cite[Lemma 4.1 and Theorem 4.4]{arunachalam2020qsq}, we only obtain: 
For any subset $S\subseteq\{0,1\}^n$, choosing a suitable (efficiently implementable) observable $M = M_S$ inspired by quantum Fourier sampling, the expectation value $\tr[ M \rho_{\mathcal{D}}]$ equals
\begin{equation}
    \sum_{s\in S} \left( \frac{1}{2^n}\left(1 - \mathbb{E}_{x\sim\mathcal{U}_n}[(\phi(x))^2]\right) + (\hat{\phi}(s))^2\right)
    = \frac{\lvert S\rvert }{2^n}\left(1 - \mathbb{E}_{x\sim\mathcal{U}_n}[(\phi(x))^2]\right) + \sum_{s\in S} (\hat{\phi}(s))^2 .
\end{equation}
Therefore, for sets $S$ with large cardinality, there is an in general non-negligible additive perturbation to the quantity $\sum_{s\in S} (\hat{\phi}(s))^2$ that we need for the GL procedure. 
Such large sets $S$ appear towards the beginning of the GL algorithm.
In our next result, we show how to circumvent this issue with a small modification of the standard GL procedure, namely by allowing for a ``warm-up'' phase. That is, whereas the standard GL iteration is initialized at the empty string, we perform it starting from strings of a carefully chosen length. 

\begin{theorem}[Goldreich-Levin from distributional agnostic QSQs]\label{theorem:distributional-agnostic-qsq-GL}
    Let $\mathcal{D} = (\mathcal{U}_n, \varphi)$ be a probability distribution over $\mathcal{X}_n\times\{0,1\}$ with uniform marginal over $\mathcal{X}_n$.
    Let $\varepsilon\in (0,1)$. 
    Then, there exists an algorithm that, using $\mathcal{O}\left(\tfrac{n}{\varepsilon^2}\right)$ distributional QSQs of tolerance $\nicefrac{\varepsilon^2}{8}$ for observables that can be implemented with $\mathcal{O}(n)$ single-qubit gates, classical computation time $\tilde{\mathcal{O}}\left(\tfrac{n}{\varepsilon^2}\right)$, and classical memory of size $\tilde{\mathcal{O}}\left(\tfrac{n^2}{\varepsilon^2}\right)$, and outputs a list $L=\{s_1,\ldots,s_{\lvert L\rvert}\}\subseteq\{0,1\}^n$ such that \begin{itemize}
        \item[(i)] if $\lvert \hat{\phi}(s)\rvert\geq \varepsilon$, then $s\in L$, and
        \item[(ii)] if $s\in L$, then $\lvert \hat{\phi}(s)\rvert\geq \nicefrac{\varepsilon}{2}$.
    \end{itemize}
    This list has length $\lvert L\rvert\leq \nicefrac{4 \mathbb{E}_{x\sim\mathcal{U}_n}[(\phi(x))^2]}{\varepsilon^2}\leq \nicefrac{4}{\varepsilon^2}$ by Parseval.
\end{theorem}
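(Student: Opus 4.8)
I would run the classical Goldreich--Levin/Kushilevitz--Mansour (GL/KM) procedure with each bucket-weight estimate realised by a single distributional QSQ, inserting a \emph{warm-up phase} to bypass the levels where the approximate-Fourier-sampling perturbation of \Cref{theorem:agnostic-quantum-fourier-sampling} is too coarse. For a prefix $s\in\{0,1\}^k$ write $B_s=\{st:t\in\{0,1\}^{n-k}\}$ and $W_s=\sum_{u\in B_s}(\hat\phi(u))^2$. The first step is to observe, exactly as in \cite[Lemma 4.1]{arunachalam2020qsq}, that for any computational-basis subset $S\subseteq\{0,1\}^n$ there is an efficiently implementable $(n+1)$-qubit observable $M_S$ with $\norm{M_S}\le 1$ --- namely $H^{\otimes(n+1)}$ conjugating the projector $\bigl(\sum_{s\in S}\ketbra{s}{s}\bigr)\otimes\ketbra{1}{1}$, i.e.\ precisely the measurement of \Cref{theorem:agnostic-quantum-fourier-sampling} post-selected on the first register landing in $S$ --- and that repeating the computation in the proof of that theorem yields
\begin{equation*}
    2\,\tr\!\bigl[M_S\,\rho_{\mathcal D}\bigr]
    = \frac{|S|}{2^n}\Bigl(1-\mathbb{E}_{x\sim\mathcal{U}_n}\bigl[(\phi(x))^2\bigr]\Bigr) + \sum_{s\in S}(\hat\phi(s))^2 .
\end{equation*}
Hence one QSQ of tolerance $\Theta(\varepsilon^2)$ estimates $\sum_{s\in S}(\hat\phi(s))^2$ up to $\nicefrac{\varepsilon^2}{8}$, \emph{except} for an additive bias $\beta_S:=\tfrac{|S|}{2^n}(1-\mathbb{E}_x[(\phi(x))^2])\in[0,|S|/2^n]$ which, being a bias rather than noise, cannot be reduced by repeating the query.

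\emph{The obstacle and its resolution.} For a length-$k$ prefix $|B_s|=2^{n-k}$, so $\beta_{B_s}\le 2^{-k}$; this is $\le\nicefrac{\varepsilon^2}{8}$ only once $k\ge k_0:=\ceil{\log_2(\nicefrac{8}{\varepsilon^2})}$. Starting GL/KM from the empty prefix would corrupt the first $k_0$ levels, so instead I would \emph{initialise at level $k_0$}: since $2^{k_0}\le\nicefrac{16}{\varepsilon^2}=\poly(\nicefrac{1}{\varepsilon})$, one can afford to enumerate all length-$k_0$ prefixes, estimate each $W_s$ with one QSQ, and let $\mathrm{Active}_{k_0}$ be those with estimate $\ge\nicefrac{\varepsilon^2}{2}$. (When $n<k_0$ the whole domain has size $2^n<\nicefrac{16}{\varepsilon^2}$, a degenerate small-domain regime handled separately.) For $k=k_0+1,\dots,n$ one then runs the ordinary GL/KM split: split each $s\in\mathrm{Active}_{k-1}$ into its two children, estimate their weights, retain those with estimate $\ge\nicefrac{\varepsilon^2}{2}$, and finally output $L=\mathrm{Active}_n$. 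On every level $k\ge k_0$ the bias obeys $\beta\le 2^{-k_0}\le\nicefrac{\varepsilon^2}{8}$, so the effective estimation error stays $\le\nicefrac{\varepsilon^2}{8}+\nicefrac{\varepsilon^2}{8}=\nicefrac{\varepsilon^2}{4}$ throughout.

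\emph{Correctness and complexity.} With error $\le\nicefrac{\varepsilon^2}{4}$ and threshold $\nicefrac{\varepsilon^2}{2}$ the usual bookkeeping applies. If $(\hat\phi(s))^2\ge\varepsilon^2$ then every bucket containing $s$ (at levels $\ge k_0$) has true weight $\ge\varepsilon^2$, hence estimated weight $\ge\nicefrac{3\varepsilon^2}{4}\ge\nicefrac{\varepsilon^2}{2}$, so $s$ is never discarded, giving (i); any $s\in L$ has $(\hat\phi(s))^2=W_s\ge\nicefrac{\varepsilon^2}{2}-\nicefrac{\varepsilon^2}{4}=\nicefrac{\varepsilon^2}{4}$, i.e.\ $\lvert\hat\phi(s)\rvert\ge\nicefrac{\varepsilon}{2}$, giving (ii); and by Parseval $\sum_{s\in L}(\hat\phi(s))^2\le\mathbb{E}_x[(\phi(x))^2]$, so $\lvert L\rvert\le\nicefrac{4\,\mathbb{E}_x[(\phi(x))^2]}{\varepsilon^2}\le\nicefrac{4}{\varepsilon^2}$. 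The same disjointness/Parseval argument bounds $\lvert\mathrm{Active}_k\rvert\le\nicefrac{4}{\varepsilon^2}$ at each level, so the warm-up costs $2^{k_0}=\mathcal{O}(\nicefrac{1}{\varepsilon^2})$ QSQs and each of the $\le n$ subsequent levels costs $2\lvert\mathrm{Active}_{k-1}\rvert=\mathcal{O}(\nicefrac{1}{\varepsilon^2})$ QSQs, for $\mathcal{O}(\nicefrac{n}{\varepsilon^2})$ QSQs in total, each to an observable of norm $\le1$ built from $H^{\otimes(n+1)}$ and a computational-basis projector, hence implementable with $\mathcal{O}(n)$ single-qubit gates; tracking $\mathcal{O}(\nicefrac{1}{\varepsilon^2})$ prefixes of length $\le n$ over $n$ levels gives classical time $\tilde{\mathcal{O}}(\nicefrac{n}{\varepsilon^2})$ and memory $\tilde{\mathcal{O}}(\nicefrac{n^2}{\varepsilon^2})$. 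The one genuinely new point over the functional QSQ argument is that the Fourier-sampling perturbation enters as an \emph{unknown, query-independent bias} $\beta_S\le|S|/2^n$ that scales with bucket size; I expect the only real obstacle to be organising the recursion so this bias is controlled --- it is harmless on the heavy late levels but not on the light early ones --- which is exactly what the $k_0$-level warm-up achieves, using that there are only $\poly(1/\varepsilon)$ prefixes of that length.
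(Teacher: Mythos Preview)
Your proposal is correct and matches the paper's approach essentially exactly: both identify that the additive bias from approximate Fourier sampling scales as $|S|/2^n$ and is therefore only problematic at the early GL levels, and both resolve this by initialising the branch-and-prune at level $k_0=\Theta(\log(1/\varepsilon^2))$, where the $2^{k_0}=\mathcal{O}(1/\varepsilon^2)$ prefixes can still be enumerated outright. The paper states this more tersely (via its auxiliary \Cref{lemma:distributional-qsq-GL-auxiliary}) and uses $k_0=\lceil\log(16/\varepsilon^2)\rceil$ rather than your $\lceil\log(8/\varepsilon^2)\rceil$, but the content is identical.
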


\begin{remark}\label{remark:qsq-no-efficient-fourier-sampling}
    The QSQ complexity bounds of \Cref{theorem:distributional-agnostic-qsq-GL} are worse than the quantum sample complexity bounds in \Cref{subsection:distributional-agnostic-quantum-learning} in terms of $n$-dependence: The former depend linearly on $n$, the latter are $n$-independent. This discrepancy between QSQs and quantum examples is unavoidable since \cite[Lemma 5.3]{arunachalam2020qsq} gives an $\Omega(n)$ QSQ complexity lower bound  for exact parity learning. As two distinct parities disagree on half of all inputs, this lower bound carries over to realizable parity learning with a constant accuracy ($< \nicefrac{1}{2}$) from QSQs and to the more general tasks of functional and distributional agnostic parity and Fourier-sparse learning from (distributional) QSQs.
    Moreover, as a consequence of our proofs in \Cref{section:functional-agnostic-quantum-learning} and \Cref{subsection:distributional-agnostic-quantum-learning} (see also \Cref{remark:approximate-fourier-sampling-fourier-approximation}), this shows that a sublinear-in-$n$ number of QSQs is not sufficient to produce a single approximate Fourier sample.
\end{remark}

The proof of our distributional QSQ GL theorem relies on the following observation:

\begin{lemma}\label{lemma:distributional-qsq-GL-auxiliary}
    Let $\mathcal{D} = (\mathcal{U}_n, \varphi)$ be a probability distribution over $\mathcal{X}_n\times\{0,1\}$ with uniform marginal over $\mathcal{X}_n$.
    Let $\varepsilon\in (0,1)$.
    Let $k\geq \lceil \log (\tfrac{2}{\varepsilon})\rceil$ and let $s\in \{0,1\}^k$. 
    We can estimate $\sum_{t\in \{0,1\}^{n-k}} (\hat{\phi}(st))^2$ up to additive error $\varepsilon$ using a single distributional QSQ with tolerance $\nicefrac{\varepsilon}{2}$.
    Moreover, the observable used in the QSQ can be implemented with $\mathcal{O}(n)$ single-qubit gates.
\end{lemma}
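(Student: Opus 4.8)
The plan is to execute a single Goldreich--Levin estimation step in the (quantum statistical query) style of \cite{arunachalam2020qsq}, adapted to the distributional setting. The only genuinely new feature is that, by \Cref{theorem:agnostic-quantum-fourier-sampling}, the natural QSQ observable does not read off $\sum_{t\in\{0,1\}^{n-k}}(\hat\phi(st))^2$ exactly but only up to a bias term coming from approximate (rather than exact) Fourier sampling; the hypothesis $k\geq\lceil\log(\tfrac{2}{\varepsilon})\rceil$ is precisely what makes that bias negligible.

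First I would fix the observable. For the given prefix $s\in\{0,1\}^k$, let $\Pi_s := \sum_{t\in\{0,1\}^{n-k}}\ket{st,1}\bra{st,1}$ be the computational-basis projector onto those $(n+1)$-qubit strings whose first $k$ bits equal $s$ and whose last bit equals $1$, and set $M_s := 2\,H^{\otimes(n+1)}\Pi_s H^{\otimes(n+1)} - I$. Since $H^{\otimes(n+1)}\Pi_s H^{\otimes(n+1)}$ is again a projector, $M_s$ is a reflection, so $\norm{M_s}=1$ and $M_s$ is an admissible QSQ observable. Moreover $M_s$ can be measured by applying the $n+1$ single-qubit Hadamard gates, measuring in the computational basis, and returning $+1$ if the first $k$ outcome bits equal $s$ and the last outcome bit equals $1$, and $-1$ otherwise; hence it is implementable with $\mathcal{O}(n)$ single-qubit gates.

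Next I would compute $\tr[M_s\rho_\mathcal{D}]$ exactly. Writing $\tr[H^{\otimes(n+1)}\Pi_s H^{\otimes(n+1)}\rho_\mathcal{D}] = \sum_{t\in\{0,1\}^{n-k}}\bra{st,1}H^{\otimes(n+1)}\rho_\mathcal{D} H^{\otimes(n+1)}\ket{st,1}$ and substituting the per-string value established inside the proof of \Cref{theorem:agnostic-quantum-fourier-sampling}, namely $\bra{s',1}H^{\otimes(n+1)}\rho_\mathcal{D} H^{\otimes(n+1)}\ket{s',1} = \tfrac12\big(\tfrac{1}{2^n}(1-\mathbb{E}_{x\sim\mathcal{U}_n}[(\phi(x))^2]) + (\hat\phi(s'))^2\big)$, the sum over the $2^{n-k}$ values of $t$ gives $\tfrac{1}{2^{k+1}}(1-\mathbb{E}_{x\sim\mathcal{U}_n}[(\phi(x))^2]) + \tfrac12\sum_{t\in\{0,1\}^{n-k}}(\hat\phi(st))^2$. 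Therefore $\tr[M_s\rho_\mathcal{D}] = \sum_{t\in\{0,1\}^{n-k}}(\hat\phi(st))^2 + \tfrac{1}{2^k}\big(1-\mathbb{E}_{x\sim\mathcal{U}_n}[(\phi(x))^2]\big) - 1$.

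Finally I would query the distributional QSQ oracle on $(M_s,\tfrac{\varepsilon}{2})$, obtaining $\mu$ with $\lvert\mu - \tr[M_s\rho_\mathcal{D}]\rvert\leq\tfrac{\varepsilon}{2}$, and output $\mu+1$ (clipped to $[0,1]$, which only helps since the target lies in $[0,1]$ by Parseval) as the estimate of $\sum_{t\in\{0,1\}^{n-k}}(\hat\phi(st))^2$. By the identity above and the triangle inequality, the error is at most $\tfrac{\varepsilon}{2} + \tfrac{1}{2^k}(1-\mathbb{E}_{x\sim\mathcal{U}_n}[(\phi(x))^2]) \leq \tfrac{\varepsilon}{2} + \tfrac{1}{2^k}$, using $0\leq 1-\mathbb{E}_{x\sim\mathcal{U}_n}[(\phi(x))^2]\leq 1$; and $k\geq\lceil\log(\tfrac{2}{\varepsilon})\rceil$ gives $2^k\geq\tfrac{2}{\varepsilon}$, hence $\tfrac{1}{2^k}\leq\tfrac{\varepsilon}{2}$ and the total error is $\leq\varepsilon$. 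The one step that is not a routine computation is recognizing that the per-coefficient ``approximate Fourier sampling'' bias $\tfrac{1}{2^n}(1-\mathbb{E}_{x\sim\mathcal{U}_n}[(\phi(x))^2])$ is amplified by a factor $2^{n-k}$ upon summing over a block of $2^{n-k}$ Fourier coefficients, so the prefix length $k$ must be at least logarithmic in $1/\varepsilon$ to keep it below $\tfrac{\varepsilon}{2}$ — this is exactly the ``warm-up'' that forces the distributional Goldreich--Levin iteration of \Cref{theorem:distributional-agnostic-qsq-GL} to be initialized at strings of positive length rather than the empty string; using the reflection $2H^{\otimes(n+1)}\Pi_s H^{\otimes(n+1)} - I$ instead of the projector $H^{\otimes(n+1)}\Pi_s H^{\otimes(n+1)}$ is a minor device that avoids rescaling the QSQ tolerance by a factor of $2$.
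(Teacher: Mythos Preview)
Your proof is correct and follows essentially the same route as the paper: define the natural Fourier-sampling observable conjugated by Hadamards, compute its expectation on $\rho_\mathcal{D}$ via the per-string identity from \Cref{theorem:agnostic-quantum-fourier-sampling}, and control the accumulated bias $2^{-k}(1-\mathbb{E}_{x\sim\mathcal{U}_n}[(\phi(x))^2])$ using the prefix-length hypothesis. The one cosmetic difference is that you use the reflection $2H^{\otimes(n+1)}\Pi_s H^{\otimes(n+1)}-I$ rather than the paper's projector $H^{\otimes(n+1)}\Pi_s H^{\otimes(n+1)}$; as you note, this cleanly absorbs the factor of $\tfrac12$ coming from post-selection on the last qubit, which the paper's write-up silently drops (without affecting the argument up to constants).
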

\begin{proof}
    Similarly to \cite[Lemma 4.1]{arunachalam2020qsq}, consider the observable 
    \begin{equation}
        M
        = H^{\otimes (n+1)}\cdot\left(\sum_{t\in \{0,1\}^{n-k}} \ket{st}\bra{st} \otimes \ket{1}\bra{1}\right)\cdot H^{\otimes (n+1)} .
    \end{equation}
    By definition of $M$ and $\rho_{\mathcal{D}}$, we have
    \begin{equation}
        \tr[ M \rho_{\mathcal{D}}]
        = \sum_{t\in \{0,1\}^{n-k}} \bra{st, 1} H^{\otimes (n+1)} \rho_{\mathcal{D}} H^{\otimes (n+1)} \ket{st, 1}
        = \sum_{t\in \{0,1\}^{n-k}} \mathbb{E}_{f\sim F_{\mathcal{D}}} \left[ \left\lvert \bra{st, 1}H^{\otimes (n+1)} \ket{\psi_{(\mathcal{U}_n, f)}} \right\rvert^2\right]
    \end{equation}
    Using that $H^{\otimes (n+1)} \ket{\psi_{(\mathcal{U}_n, f)}} = \tfrac{1}{\sqrt{2}}\left(\ket{0}^{\otimes n+1} + \sum_{s\in\{0,1\}^n} \hat{g}_f (s)\ket{s,1}\right)$ with $g_f=(-1)^f$, we see that
    \begin{equation}
        \tr[ M \rho_{\mathcal{D}}]
        = \sum_{t\in \{0,1\}^{n-k}} \mathbb{E}_{f\sim F_{\mathcal{D}}} \left[(\hat{g}_f (st))^2\right] .
    \end{equation}
    Recalling the computation in the proof of \Cref{theorem:agnostic-quantum-fourier-sampling}, we can further rewrite this as
    \begin{equation}
        \tr[ M \rho_{\mathcal{D}}]
        = \sum_{t\in \{0,1\}^{n-k}} \left(\frac{1}{2^n} (1 - \mathbb{E}_{x\sim\mathcal{U}_n}[(\phi(x))^2]) + (\hat{\phi}(st))^2\right) .
    \end{equation}
    In particular, since $0\leq(1-\mathbb{E}_x[\phi^2(x)])\leq1$ this implies
    \begin{equation}
        \left\lvert \tr[ M \rho_{\mathcal{D}}] - \sum_{t\in \{0,1\}^{n-k}}(\hat{\phi}(st))^2\right\rvert
        \leq\tr[ M \rho_{\mathcal{D}}] - \sum_{t\in \{0,1\}^{n-k}}(\hat{\phi}(st))^2
        \leq 2^{-k}
        \leq \frac{\varepsilon}{2}\,,
    \end{equation}
    where the last inequality is by the assumption that $k\geq \lceil \log (\tfrac{2}{\varepsilon})\rceil$.
    As the output $\mu$ of a distributional QSQ with tolerance $\nicefrac{\varepsilon}{2}$ satisfies $\left\lvert \mu - \tr[ M \rho_{\mathcal{D}}] \right\rvert\leq \nicefrac{\varepsilon}{2}$, we conclude that
    \begin{equation}
        \left\lvert \mu - \sum_{t\in \{0,1\}^{n-k}}(\hat{\phi}(st))^2 \right\rvert
        \leq \varepsilon
    \end{equation}
    by triangle inequality.
    
    It remains to argue that measuring the observable $M$ can indeed be implemented using $\mathcal{O}(n)$ single-qubit gates. Clearly, it is sufficient to show that this holds for $\sum_{t\in \{0,1\}^{n-k}} \ket{st}\bra{st}$.
    To see this, it suffices to note that the unitary $\mathds{1}^{\otimes k}\otimes H^{\otimes (n-k)}$ transforms the computational basis into the eigenbasis of $\sum_{t\in \{0,1\}^{n-k}} \ket{st}\bra{st}$.
\end{proof}

We can now prove \Cref{theorem:distributional-agnostic-qsq-GL}:

\begin{proof}[Proof of \Cref{theorem:distributional-agnostic-qsq-GL}]
    Our procedure is very similar to the standard GL procedure (pedagogically presented in \cite[Section 3.5]{odonnell2014analysis}), with two differences: 
    On the one hand, we use distributional agnostic QSQs, instead of classical membership queries, to produce $(\nicefrac{\varepsilon^2}{4})$-accurate estimates for quantities of the form $\sum_{t\in\{0,1\}^{n-k}} (\hat{\phi}(st))^2$ for fixed $s\in\{0,1\}^k$. 
    On the other hand, informed by \Cref{lemma:distributional-qsq-GL-auxiliary}, we initialize the branch-and-prune procedure from the classical GL algorithm not at the empty string ($k=0$) but at strings of length $k=  \lceil \log (\tfrac{16}{\varepsilon^2})\rceil$, for which QSQs of tolerance $\nicefrac{\varepsilon^2}{8}$ are indeed sufficient to get $(\nicefrac{\varepsilon^2}{4})$-accurate estimates for our quantities of interest.
    The analysis then follows along the same lines as that of the standard GL procedure, compare again \cite[Section 3.5]{odonnell2014analysis}.
    Two aspects to note for the complexity analysis: On the one hand, by our choice of $k=  \lceil \log (\tfrac{16}{\varepsilon^2})\rceil$, there are at most $2^k \leq \tfrac{32}{\varepsilon^2}$ different strings at the level at which we start our branching-and-pruning. This matches the at most $\mathcal{O}(\nicefrac{1}{\varepsilon^2})$ strings that the standard GL maintains in each iteration and thereby leads to the claimed time and memory bounds.
    On the other hand, whereas the standard GL analysis incurs a factor of $\tfrac{\log(\nicefrac{n}{\delta})}{\varepsilon^2}$ from a Hoeffding bound when estimating $\sum_{t\in\{0,1\}^{n-k}} (\hat{\phi}(st))^2$, interpreted as expectation values, from samples, we do not incur this factor since QSQs already give estimates for the relevant quantities.
\end{proof}

Given the distributional agnostic QSQ GL procedure of \Cref{theorem:distributional-agnostic-qsq-GL}, we can now obtain distributional agnostic learning results by following a reasoning analogous to that behind \Cref{corollary:agnostic-quantum-parity-learning}, \Cref{corollary:agnostic-quantum-fourier-sparse-learning}, and \Cref{corollary:agnostic-quantum-fourier-sparse-learning-L2}.
However, as the QSQ complexity bound in \Cref{theorem:distributional-agnostic-qsq-GL} is worse than the quantum sample complexity bound of \Cref{corollary:distributional-agnostic-quantum-approximation-fourier-spectrum} in terms of $n$-dependence, this will carry over.
We collect these QSQ variants of our agnostic learning results in the following, slightly informally stated, corollary.

\begin{corollary}\label{corollary:agnostic-qsq-learning}
    Let $\delta,\varepsilon\in (0,1)$. 
    \begin{itemize}
        \item There is an algorithm that uses $\mathcal{O}\left(\tfrac{n}{\varepsilon^2}\right)$ distributional agnostic QSQs with tolerance $\mathcal{O}(\varepsilon^2)$, classical computation time $\mathcal{O}\left(\tfrac{n}{\varepsilon^2}\right)$, and classical memory of size $\tilde{\mathcal{O}}\left(\tfrac{n^2}{\varepsilon^2}\right)$, and achieves distributional $1$-agnostic proper parity learning w.r.t.~uniformly random inputs.
        \item There is an algorithm that uses $\mathcal{O}\left(\tfrac{nk^2}{\varepsilon^2}\right)$ distributional agnostic QSQs with tolerance $\mathcal{O}\left(\tfrac{\varepsilon^2}{k^2}\right)$, classical computation time $\mathcal{O}\left(\tfrac{n k^2}{\varepsilon^2}\right)$, and classical memory of size $\tilde{\mathcal{O}}\left(\tfrac{n^2 k^2}{\varepsilon^2}\right)$, and achieves distributional $2$-agnostic improper Fourier-sparse learning w.r.t.~uniformly random inputs.
    \end{itemize}
\end{corollary}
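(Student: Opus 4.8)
The plan is to transcribe the proofs of \Cref{corollary:agnostic-quantum-parity-learning}, \Cref{corollary:agnostic-quantum-fourier-sparse-learning}, and \Cref{corollary:agnostic-quantum-fourier-sparse-learning-L2}, replacing the examples-based spectrum-approximation subroutine of \Cref{corollary:distributional-agnostic-quantum-approximation-fourier-spectrum} by the distributional agnostic QSQ Goldreich--Levin procedure of \Cref{theorem:distributional-agnostic-qsq-GL}. The key enabling observation is that a distributional QSQ oracle can simulate a classical SQ oracle for $\mathcal{D}$: for bounded $g:\mathcal{X}_n\times\{0,1\}\to[-1,1]$ the diagonal observable $M_g=\sum_{x,b}g(x,b)\ket{x,b}\bra{x,b}$ has $\norm{M_g}\leq 1$, is measurable in the computational basis (no extra gates), and by \Cref{lemma:sanity-check-reproduce-classical-sample} satisfies $\tr[M_g\rho_\mathcal{D}]=\mathbb{E}_{(x,b)\sim\mathcal{D}}[g(x,b)]$. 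Taking $g(x,b)=(1-2b)\chi_s(x)$, a single distributional QSQ of tolerance $\tau$ therefore yields a $\tau$-accurate estimate of the individual Fourier coefficient $\hat\phi(s)=\mathbb{E}_{(x,b)\sim\mathcal{D}}[(1-2b)\chi_s(x)]$.

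For $1$-agnostic parity learning I would run \Cref{theorem:distributional-agnostic-qsq-GL} with accuracy parameter $\varepsilon$ to obtain, using $\mathcal{O}(\nicefrac{n}{\varepsilon^2})$ distributional QSQs of tolerance $\nicefrac{\varepsilon^2}{8}$, a list $L$ with $\lvert L\rvert\leq\nicefrac{4}{\varepsilon^2}$ containing every $\varepsilon$-heavy $s$ and only $(\nicefrac{\varepsilon}{2})$-heavy strings; then estimate $\hat\phi(s)$ for each $s\in L$ to accuracy $\varepsilon$ via one further QSQ each, as above; and then invoke \Cref{lemma:parity-learning-via-heaviest-Fourier-coefficient}. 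Its hypotheses are met because, up to accuracy $\varepsilon$, the list together with these estimates determines the maximizer of $\hat\phi$ over $\{0,1\}^n$: every string outside $L$ has $\lvert\hat\phi\rvert<\varepsilon$, so $0$ is an $\varepsilon$-accurate proxy for the largest coefficient outside $L$, and it suffices to additionally keep one arbitrary string not in $L$ as a fallback candidate and output $\operatorname{argmax}$ of the estimates over the candidate pool. This produces a $(2\varepsilon)$-approximately-largest Fourier coefficient of $\phi$ and hence, by \Cref{lemma:parity-learning-via-heaviest-Fourier-coefficient}, an $\varepsilon$-accurate $1$-agnostic proper parity learner; the bounds $\mathcal{O}(\varepsilon^2)$ tolerance, $\mathcal{O}(\nicefrac{n}{\varepsilon^2})$ queries (the $\lvert L\rvert+1$ extra coefficient estimates are subsumed), $\mathcal{O}(\nicefrac{n}{\varepsilon^2})$ time and $\tilde{\mathcal{O}}(\nicefrac{n^2}{\varepsilon^2})$ memory are inherited from \Cref{theorem:distributional-agnostic-qsq-GL}.

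For $2$-agnostic Fourier-$k$-sparse learning the argument is identical up to rescaling, mirroring the proof of \Cref{corollary:agnostic-quantum-fourier-sparse-learning}: run \Cref{theorem:distributional-agnostic-qsq-GL} with accuracy $\tilde\varepsilon=\nicefrac{\varepsilon}{4k}$, i.e. $\mathcal{O}(\nicefrac{nk^2}{\varepsilon^2})$ QSQs of tolerance $\nicefrac{\tilde\varepsilon^2}{8}=\mathcal{O}(\nicefrac{\varepsilon^2}{k^2})$, producing a list $L$ of size $\mathcal{O}(\nicefrac{k^2}{\varepsilon^2})$; estimate each $\hat\phi(s)$, $s\in L$, to accuracy $\tilde\varepsilon$; select the $k$ strings $s_1,\dots,s_k$ with the largest estimated $\lvert\hat\phi(s_\ell)\rvert$; and apply \Cref{lemma:Fourier-sparse-learning-via-heaviest-Fourier-coefficients}, using as in the proof of \Cref{corollary:agnostic-quantum-fourier-sparse-learning} (and \Cref{lemma:technical}) that $\norm{|\tilde\phi|-|\hat\phi|}_\infty\leq\tilde\varepsilon$ — together with the GL guarantee that only coefficients of magnitude $<\tilde\varepsilon$ can be missed — forces the true and selected $\ell$-th heaviest coefficients to agree up to $\nicefrac{\varepsilon}{2k}$. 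This yields the randomized $2$-agnostic improper Fourier-sparse learner with the stated bounds; the $L_2$-error variant follows verbatim from the proof of \Cref{corollary:agnostic-quantum-fourier-sparse-learning-L2}.

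The step requiring the most care — essentially the only nonroutine one — is the last: checking that the ``approximately-largest-coefficient'' reductions of \Cref{appendix:useful}, which were designed around a uniform $\ell_\infty$-approximation of the whole Fourier spectrum, still go through when one instead has only a Goldreich--Levin list of absolute-value-heavy coefficients together with per-coefficient QSQ estimates. Concretely, one must argue that the largest Fourier coefficient of $\phi$ not appearing in $L$ lies in $(-\varepsilon,\varepsilon)$ (by the GL guarantee) so that it may be treated as $\approx 0$ and a single fallback string outside $L$ suffices to cover that regime, and one must verify that the $\lvert L\rvert$ additional coefficient estimates fit within the advertised query, time, tolerance, and memory budgets. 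Everything else is a direct rewriting of the cited proofs.
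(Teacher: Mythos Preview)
Your proposal is correct and follows essentially the same approach as the paper, which simply states that the result follows by replacing \Cref{corollary:distributional-agnostic-quantum-approximation-fourier-spectrum} with \Cref{theorem:distributional-agnostic-qsq-GL} in the proofs of \Cref{corollary:agnostic-quantum-parity-learning}, \Cref{corollary:agnostic-quantum-fourier-sparse-learning}, and \Cref{corollary:agnostic-quantum-fourier-sparse-learning-L2}. You have usefully made explicit two details the paper leaves implicit: that the GL list must be supplemented by per-string Fourier-coefficient estimates (since \Cref{theorem:distributional-agnostic-qsq-GL} returns only strings, not values), and that these estimates are available via a single diagonal-observable QSQ per string, which simulates a classical SQ by \Cref{lemma:sanity-check-reproduce-classical-sample}.
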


To conclude this section, we comment on how the complexities of \Cref{corollary:agnostic-qsq-learning} compare with those obtained in \Cref{subsection:distributional-agnostic-quantum-learning}. 
At first glance, \Cref{corollary:agnostic-qsq-learning} seems to achieve a better $\varepsilon$- and $k$-dependence than the results based on mixture-of-superpositions examples access. This is partially due to the tolerance of the QSQ oracle. Namely, if we naively were to use mixture-of-superpositions examples to simulate distributional agnostic QSQs with the desired tolerances (via Hoeffding) and then follow the procedure of \Cref{theorem:distributional-agnostic-qsq-GL} and \Cref{corollary:agnostic-qsq-learning}, we would actually end up with a worse $\varepsilon$- and $k$-dependence than in \Cref{subsection:distributional-agnostic-quantum-learning}.

\section{Classical Verification of Agnostic Quantum Learning}\label{section:verification}

\Cref{section:functional-agnostic-quantum-learning} and \Cref{section:distributional-agnostic-quantum-learning} have demonstrated the power of quantum data for agnostic parity learning and Fourier-sparse learning. 
Here, we show that classical verifiers interacting with quantum provers can make use of this power to solve similar learning problems.

\subsection{Verifying Functional Agnostic Quantum Learning}\label{subsection:verification-functional} 

The quantum learning algorithms presented in \Cref{section:functional-agnostic-quantum-learning} and \Cref{section:distributional-agnostic-quantum-learning} worked without any prior assumptions on the unknown function/distribution. 
For our classical verification protocols, however, we rely on an additional assumption:

\begin{definition}[Functional distributions with no small non-zero Fourier coefficients]\label{definition:functional-no-small-fourier-coefficients}
    Let $\vartheta\in (0,1)$.
    We denote the class of probability distributions $\mathcal{D} = (\mathcal{U}_n, f)$ over $\mathcal{X}_n\times\{0,1\}$ that have a uniform marginal over $\mathcal{X}_n$ and whose $\{-1,1\}$-labels are given by a deterministic function $g=(-1)^f$ that has no non-zero Fourier coefficients of magnitude $<\vartheta$ by 
    \begin{equation}
        \mathfrak{D}_{ \mathcal{U}_n; \geq \vartheta}^{\mathrm{func}}
        \coloneqq \left\{ (\mathcal{U}_n, f)~|~ f:\mathcal{X}_n\to\{0,1\}~\wedge (\hat{g}\neq 0~\Rightarrow \lvert \hat{g}\rvert \geq \vartheta ) \right\} .
    \end{equation}
\end{definition}

Using known results about granularity of Fourier-sparse Boolean functions, we can reinterpret the \Cref{definition:functional-no-small-fourier-coefficients} in terms of Fourier-sparsity requirements: 

\begin{lemma}\label{lemma:no-small-nonzero-Fourier-coeff-versus-sparsity}
    Let $\vartheta\in (0,1)$. 
    Then we have the inclusions
    \begin{equation}
        \left\{(\mathcal{U}_n, f)~|~ f:\mathcal{X}_n\to\{0,1\}~\textrm{s.t.}~g=(-1)^f~\textrm{is Fourier}~\left\lfloor\tfrac{2}{\vartheta}\right\rfloor\textrm{-sparse}\right\}
        \subseteq\mathfrak{D}_{ \mathcal{U}_n; \geq \vartheta}^{\mathrm{func}}    
    \end{equation}
    and 
    \begin{equation}
        \mathfrak{D}_{ \mathcal{U}_n; \geq \vartheta}^{\mathrm{func}}
        \subseteq \left\{(\mathcal{U}_n, f)~|~ f:\mathcal{X}_n\to\{0,1\}~\textrm{s.t.}~g=(-1)^f~\textrm{is Fourier}~\left(\tfrac{1}{\vartheta^2}\right)\textrm{-sparse}\right\}.
    \end{equation}
\end{lemma}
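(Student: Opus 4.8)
The statement is a pair of set inclusions, which I would prove separately; one is immediate from Parseval, the other needs a granularity fact for Fourier-sparse Boolean functions. For the inclusion $\mathfrak{D}_{\mathcal{U}_n; \geq \vartheta}^{\mathrm{func}} \subseteq \{(\mathcal{U}_n, f) : g=(-1)^f \text{ is Fourier-}(1/\vartheta^2)\text{-sparse}\}$: if $g=(-1)^f$ has no nonzero Fourier coefficient of magnitude below $\vartheta$, then, since $g$ is $\{-1,1\}$-valued, Parseval gives $1 = \mathbb{E}_{x\sim\mathcal{U}_n}[(g(x))^2] = \sum_{s\in\{0,1\}^n}(\hat{g}(s))^2 = \sum_{s\in\operatorname{supp}(g)}(\hat{g}(s))^2 \geq |\operatorname{supp}(g)|\cdot\vartheta^2$, hence $|\operatorname{supp}(g)|\leq 1/\vartheta^2$, as desired.

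For the inclusion $\{(\mathcal{U}_n, f) : g=(-1)^f \text{ is Fourier-}\lfloor 2/\vartheta\rfloor\text{-sparse}\} \subseteq \mathfrak{D}_{\mathcal{U}_n; \geq \vartheta}^{\mathrm{func}}$, the plan is to invoke a known granularity property: a $\{-1,1\}$-valued function whose Fourier support has size at most $k$ has every Fourier coefficient equal to an integer multiple of $2^{1-\lceil\log_2 k\rceil}$. Applying this with $k=\lfloor 2/\vartheta\rfloor$ shows every nonzero Fourier coefficient of $g=(-1)^f$ has magnitude at least $2^{1-\lceil\log_2 k\rceil}$, and it remains to check this is at least $\vartheta$. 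One does so by combining the granularity statement with Parseval — which forbids all the nonzero coefficients from sitting at the granularity floor simultaneously — and, where needed, with the fact that the admissible Fourier sparsities of $\{-1,1\}$-valued functions are quite restricted (e.g.\ none has Fourier sparsity $2$ or $3$; more generally, translating a small coefficient to the origin by multiplication with a parity forces the value distribution of $g$ into a shape incompatible with small sparsity), to conclude that each nonzero coefficient has magnitude at least $2/k\geq\vartheta$, the last inequality using $k=\lfloor 2/\vartheta\rfloor\leq 2/\vartheta$. Then $(\mathcal{U}_n,f)\in\mathfrak{D}_{\mathcal{U}_n; \geq \vartheta}^{\mathrm{func}}$.

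The main obstacle is pinning down the constant in this second inclusion. The bare granularity bound $2^{1-\lceil\log_2\lfloor 2/\vartheta\rfloor\rceil}$ is in general only guaranteed to exceed $\vartheta/2$, so by itself it only yields the weaker inclusion with $\lfloor 1/\vartheta\rfloor$-sparsity in place of $\lfloor 2/\vartheta\rfloor$-sparsity; closing the factor-of-two gap genuinely needs Parseval and/or the structural restrictions on admissible Fourier sparsities, and the delicate cases are precisely those in which $\lfloor 2/\vartheta\rfloor$ falls just below a power of two. Everything else is routine bookkeeping.
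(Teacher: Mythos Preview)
Your Parseval argument for the inclusion $\mathfrak{D}_{\mathcal{U}_n;\geq\vartheta}^{\mathrm{func}}\subseteq\{\text{Fourier-}(1/\vartheta^2)\text{-sparse}\}$ is correct and matches the paper exactly.

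For the other inclusion you have the right idea (invoke granularity of Fourier-sparse Boolean functions), but you state the granularity bound with $\lceil\log_2 k\rceil$ rather than $\lfloor\log_2 k\rfloor$, and that is the entire source of the ``factor-of-two gap'' you then struggle to close. The result the paper cites, \cite[Theorem~8.1]{gopalan2011testing}, says that for $k\geq 2$ a $k$-sparse $\{-1,1\}$-valued function has every Fourier coefficient an integer multiple of $2^{1-\lfloor\log_2 k\rfloor}$. Since $2^{\lfloor\log_2 k\rfloor}\leq k$, this immediately gives $\lvert\hat g(s)\rvert\geq 2^{1-\lfloor\log_2 k\rfloor}\geq 2/k$ for every nonzero coefficient, and with $k=\lfloor 2/\vartheta\rfloor\leq 2/\vartheta$ one gets $\lvert\hat g(s)\rvert\geq\vartheta$ directly. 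No additional argument is needed.

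Your proposed gap-closing devices are not only unnecessary but also do not work as stated. In particular, the claim that ``Parseval forbids all the nonzero coefficients from sitting at the granularity floor simultaneously'' is false: the $\{-1,1\}$-valued AND on two bits has sparsity $4$ and all four nonzero Fourier coefficients equal to $\pm\tfrac12$, which is exactly the granularity floor. So the fix is not to layer extra arguments on top of a weak granularity bound, but simply to quote the sharp (floor) version of the granularity theorem.
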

\begin{proof}
    We first show that $(\mathcal{U}_n, f)\in \mathfrak{D}_{ \mathcal{U}_n; \geq \vartheta}^{\mathrm{func}}$ implies that $g=(-1)^f$ is $(\nicefrac{1}{\vartheta^2})$-sparse. This is an easy consequence of Parseval and the definition of $\mathfrak{D}_{ \mathcal{U}_n; \geq \vartheta}^{\mathrm{func}}$, and can be seen as follows:
    %\small
    \begin{align}
        1 
        &= \sum_{s\in\{0,1\}^n} \left(\hat{g}(s)\right)^2
        %= \sum_{s\in\{0,1\}^n: g(s)\neq 0} \left(\hat{g}(s)\right)^2
        = \sum_{s\in\{0,1\}^n: \lvert \hat{g}(s)\rvert\geq \vartheta} \left(\hat{g}(s)\right)^2
        \geq \vartheta^2\cdot \left\lvert \left\{ s\in\{0,1\}^n~|~\lvert \hat{g}(s)\rvert\geq \varepsilon \right\}\right\rvert
        = \vartheta^2\cdot\left\lvert \left\{ s\in\{0,1\}^n~|~\hat{g}(s)\neq 0 \right\}\right\rvert,
    \end{align}
    %\normalsize
    so $g$ is $(\tfrac{1}{\vartheta^2})$-sparse.

    Now for the converse: Suppose $f:\mathcal{X}_n\to\{0,1\}$ is such that $g=(-1)^f$ is $\left\lfloor\nicefrac{2}{\vartheta}\right\rfloor$-sparse. According to \cite[Theorem 8.1]{gopalan2011testing}, for $k\geq 2$, a $k$-sparse $\{-1,1\}$-valued function on $\mathcal{X}_n$ is $(\lfloor \log(k)\rfloor - 1)$-granular. That is, every Fourier coefficient is an integer multiple of $2^{1 - \lfloor \log(k)\rfloor }$.
    This in particular means that every non-zero Fourier coefficient has absolute value $\geq \tfrac{1}{2^{\lfloor \log(k)\rfloor - 1}}\geq \tfrac{2}{k}$.
    In our case, that means: If $\hat{g}(s)\neq 0$, then $\lvert \hat{g}(s)\rvert\geq \tfrac{2}{\left\lfloor\nicefrac{2}{\vartheta}\right\rfloor}\geq \vartheta$. This gives the claimed inclusion.
\end{proof}

According to \Cref{lemma:no-small-nonzero-Fourier-coeff-versus-sparsity}, functional agnostic learning under the promise of no small non-zero Fourier coefficients is essentially equivalent to a promise that the unknown labeling function is Fourier-sparse. 
Therefore, functional agnostic learning parities or Fourier-sparse functions under such a promise can be viewed as starting from a Fourier-sparsity assumption and aiming to find a best-approximating function relative to an even Fourier-sparser benchmark class. 
Note that these comparative learning problems (borrowing language from \cite{hu2022comparative}) are unconditionally hard for a classical learner with SQ access, this is inherited immediately from the unconditional hardness of parity learning from SQs \cite{kearns1998efficient}. (This already gives the SQ-part of \Cref{theorem:main-result-verification} (a).)
Therefore, the positive results about classical-quantum verification given below establish an unconditional SQ-query complexity separation between what a classical verifier can learn on their own versus what they can learn when interacting with a quantum prover.
If the classical learner has random example access, we are not aware of any computational hardness results for the noiseless learning problems considered here. (In particular, the random example part of \Cref{theorem:main-result-verification} (a) will only follow from the results in the next two subsections.) In the context of our work, it would be interesting to investigate the classical computational resources required to solve these agnostic tasks under a sparsity promise.

In the following, we give two different approaches to interactive classical verification of functional agnostic learning under a promise as in \Cref{definition:functional-no-small-fourier-coefficients}. 
In the first approach, the verifier asks the prover to make use of their quantum data acces to obtain and send a list of relevant Fourier coefficients. To notice any potential dishonesty (or incompetence) on the part of the prover, the verifier then uses their classical data access to ensure that they have received a list whose accumulated Fourier weight is sufficiently large. 
For our second approach, the verifier asks the prover to exactly learn the unknown function, relying on their quantum data access and the promised Fourier-sparsity, and to then simulate the action of a classical membership query oracle. Thus, the verifier can use the interactive Goldreich-Levin algorithm of \cite{goldwasser2021interactive}.
We begin with the formalization of the former approach and present the strongest version of our results for the functional agnostic case. Here, we consider a classical verifier with classical SQ access and a quantum prover with distributional QSQ access.

\begin{theorem}\label{theorem:functional-agnostic-quantum-parity-verification-qsq-no-small-non-zero-Fourier-coeff}
    Let $\delta, \varepsilon,\vartheta\in (0,1)$.
    The class of $n$-bit parities is efficiently proper $1$-agnostic verifiable w.r.t.~$\mathfrak{D}_{ \mathcal{U}_n; \geq \vartheta}^{\mathrm{func}}$ by a classical verifier $V$ with classical SQ access interacting with a quantum prover $P$ with QSQ access.
    There is a verifier-prover pair $(V,P)$ such that $P$ uses $\mathcal{O}\left(\tfrac{n}{\vartheta^2}\right)$ functional agnostic QSQs of tolerance at worst $\nicefrac{\vartheta}{8}$ for observables that can be implemented with $\mathcal{O}(n)$ single-qubit gates, a classical memory of size $\tilde{\mathcal{O}}\left(\tfrac{n^2}{\vartheta^2}\right)$, and classical running time $\tilde{\mathcal{O}}\left(\tfrac{n}{\vartheta^2}\right)$, and such that $V$ uses $\mathcal{O}\left(\tfrac{1}{\vartheta^2}\right)$ classical SQs of tolerance at worst $\nicefrac{(\varepsilon^2 \vartheta^2)}{64}$, $\tilde{\mathcal{O}}\left(\tfrac{n}{\vartheta^2}\right)$ classical running time, and a classical memory of size $\tilde{\mathcal{O}}\left(\tfrac{n}{\vartheta^2}\right)$.
    Moreover, this can be achieved by a pair $(V,P)$ that uses only a single round of communication consisting of at most $\mathcal{O}\left(\tfrac{n}{\vartheta^2}\right)$ classical bits.
\end{theorem}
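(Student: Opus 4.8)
The plan is to realize the list-based verification strategy outlined in \Cref{ss:techniques}, using the promise $\mathcal{D}\in\mathfrak{D}_{\mathcal{U}_n;\geq\vartheta}^{\mathrm{func}}$ to turn a \emph{global} Fourier-weight check into a certificate for the completeness of the prover's message. Write $g=(-1)^f$. First, the honest prover $P$ runs the functional QSQ Goldreich--Levin procedure of \Cref{theorem:functional-agnostic-qsq-GL} with parameter $\vartheta$; by guarantees (i)--(ii) there together with the promise (no non-zero coefficient of magnitude $<\vartheta$), the output list satisfies $L=\operatorname{supp}(\hat g)$ exactly, has size $\lvert L\rvert\leq\nicefrac{4}{\vartheta^2}$ (one could instead invoke \Cref{lemma:no-small-nonzero-Fourier-coeff-versus-sparsity}), and is produced within the claimed QSQ, time, and memory budget. $P$ sends $L$ to $V$ in a single message of $\mathcal{O}(\nicefrac{n}{\vartheta^2})$ bits. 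Given a received list $L'$, the verifier $V$ immediately rejects if $\lvert L'\rvert>\nicefrac{4}{\vartheta^2}$; otherwise it fixes an arbitrary string $s_0\in\{0,1\}^n\setminus L'$ and, for each $s\in L'\cup\{s_0\}$, issues a classical SQ with query function $(x,y)\mapsto(-1)^{y+s\cdot x}$, whose $\mathcal{D}$-expectation equals $\hat g(s)$, obtaining an estimate $\tilde g(s)$ with $\lvert\tilde g(s)-\hat g(s)\rvert\leq\tau$ for a tolerance $\tau$ polynomial in $\varepsilon,\vartheta$ (as in the statement). It then checks whether $\widehat W\defeq\sum_{s\in L'}\tilde g(s)^2\geq 1-\nicefrac{\vartheta^2}{2}$; if not, it rejects; if so, it outputs the parity $x\mapsto\hat s\cdot x$ for $\hat s\in\operatorname{argmax}_{s\in L'\cup\{s_0\}}\tilde g(s)$. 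Both the output being a parity (hence properness) and the single-round structure are immediate.

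For completeness I would use that $\operatorname{err}_{\mathcal{D}}(s\cdot x)=\tfrac{1-\hat g(s)}{2}$ for every parity, hence $\operatorname{opt}_{\mathcal{D}}(\text{parities})=\tfrac{1-\max_{s}\hat g(s)}{2}$. When $P$ is honest, $L'=\operatorname{supp}(\hat g)$, so Parseval gives $\sum_{s\in L'}\hat g(s)^2=\sum_s\hat g(s)^2=1$ and therefore $\widehat W\geq 1-2\tau\lvert L'\rvert$, which exceeds $1-\nicefrac{\vartheta^2}{2}$ once $\tau$ is a small enough polynomial in $\vartheta$ (roughly $\tau\lesssim\vartheta^4$); thus $V$ accepts. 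Since $s_0\notin\operatorname{supp}(\hat g)$, the promise forces $\hat g(s_0)=0$, so $\max_{s\in L'\cup\{s_0\}}\hat g(s)=\max_{s\in\{0,1\}^n}\hat g(s)$, and together with $\lvert\tilde g-\hat g\rvert\leq\tau$ on $L'\cup\{s_0\}$ this yields $\hat g(\hat s)\geq\max_s\hat g(s)-2\tau$, i.e.\ $\operatorname{err}_{\mathcal{D}}(\hat s\cdot x)\leq\operatorname{opt}_{\mathcal{D}}+2\tau\leq\operatorname{opt}_{\mathcal{D}}+\varepsilon$ for $\tau\leq\nicefrac{\varepsilon}{2}$.

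For soundness, fix any (possibly unbounded) $P'$ and suppose $V$ accepts the list $L'$ it sent; then $\lvert L'\rvert\leq\nicefrac{4}{\vartheta^2}$ and $\widehat W\geq 1-\nicefrac{\vartheta^2}{2}$. Hence the true weight carried by the list obeys $\sum_{s\in L'}\hat g(s)^2\geq\widehat W-2\tau\lvert L'\rvert>1-\vartheta^2$ (again using $\tau\lesssim\vartheta^4$), so $\sum_{s\notin L'}\hat g(s)^2<\vartheta^2$. This is exactly where the promise does the work: every non-zero Fourier coefficient of $g$ has $\hat g(s)^2\geq\vartheta^2$, so none can lie outside $L'$, i.e.\ $L'\supseteq\operatorname{supp}(\hat g)$. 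From this point the completeness argument applies verbatim: $\hat g(s_0)=0$, and the $\operatorname{argmax}$ over $L'\cup\{s_0\}$ evaluated through $\tau$-accurate estimates picks a parity with $\operatorname{err}_{\mathcal{D}}\leq\operatorname{opt}_{\mathcal{D}}+2\tau\leq\operatorname{opt}_{\mathcal{D}}+\varepsilon$. The auxiliary string $s_0$ is precisely what handles the degenerate case $\max_s\hat g(s)\leq 0$ (e.g.\ $f\equiv 1$, $g\equiv-1$), in which the optimal parity is a string \emph{outside} $\operatorname{supp}(\hat g)$ and would otherwise be missed.

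It then remains to collect the resource bounds — the prover's from \Cref{theorem:functional-agnostic-qsq-GL}, the verifier's $\mathcal{O}(\nicefrac{1}{\vartheta^2})$ SQs and $\tilde{\mathcal{O}}(\nicefrac{n}{\vartheta^2})$ time and memory from running over a list of $\mathcal{O}(\nicefrac{1}{\vartheta^2})$ strings of $n$ bits, and the $\mathcal{O}(\nicefrac{n}{\vartheta^2})$-bit one-round communication — all of which are routine bookkeeping. The conceptual crux, and the main obstacle to anticipate, is the soundness step: in the agnostic setting the verifier has no direct handle on $\operatorname{opt}_{\mathcal{D}}$, and it is the no-small-coefficients promise that converts ``the prover omitted a relevant coefficient'' into ``the list's total Fourier weight is noticeably below $1$''. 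Care is needed to calibrate the weight check finely enough relative to $\vartheta$ (so that omitting even a single coefficient of weight $\geq\vartheta^2$ is detected) while still respecting the SQ tolerance stated in the theorem, and to ensure the degenerate zero-optimal-coefficient case is covered by the extra candidate $s_0$.
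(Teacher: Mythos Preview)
Your approach follows the paper's: the honest prover runs QSQ Goldreich--Levin (\Cref{theorem:functional-agnostic-qsq-GL}) and sends the resulting list; the verifier checks the accumulated Fourier weight and then takes an argmax. Two differences are worth noting.

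First, you set the acceptance threshold at $1-\vartheta^2/2$, whereas the paper uses $1-\varepsilon^2/8$. Your choice buys the stronger soundness conclusion $L'\supseteq\operatorname{supp}(\hat g)$, but your completeness and soundness estimates then force $\tau\lesssim\vartheta^4$ (as you write), which is strictly smaller than the stated tolerance $\varepsilon^2\vartheta^2/64$ whenever $\varepsilon>2\vartheta$; so your protocol as calibrated does not match the theorem's quantitative claim in that regime. The paper's $\varepsilon$-based threshold only yields $|\hat g(s)|\leq\varepsilon/2$ for $s\notin L$, which is weaker but already enough for the final argmax step, and is precisely what makes the stated tolerance go through. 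Switching to the $1-\varepsilon^2/8$ threshold (and the ``outside coefficients are small'' conclusion rather than full support containment) resolves the tension you flag at the end of your write-up.

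Second, your extra candidate $s_0\notin L'$ is a genuine refinement over the paper's protocol. There $s_{\mathrm{out}}$ is taken as $\operatorname{argmax}_{s\in L}\hat\gamma(s)$, and the subsequent step $\hat\gamma(s_{\mathrm{out}})\geq\max_{t\in\{0,1\}^n}\hat\gamma(t)$ tacitly assumes $\max_{s\in L}\hat\gamma(s)\geq 0$, which fails e.g.\ for $f\equiv 1$ (where $L=\{0^n\}$, $\hat g(0^n)=-1$, and the optimal parity lies outside $L$). Your $s_0$ with $\hat g(s_0)=0$ patches this edge case cleanly.
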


Here and in the following results, note that the complexities of the honest quantum prover are independent of $\varepsilon$ but depend on $\vartheta$. This happens in our verification approach because, due to our assumption $\mathcal{D}\in \mathfrak{D}_{ \mathcal{U}_n; \geq \vartheta}^{\mathrm{func}}$, it is sufficient for the quantum prover to ``resolve'' the Fourier spectrum to accuracy $\sim\vartheta$.
In contrast, the verifier in our procedure does need to aim for the accuracy $\varepsilon$, which therefore appears in the tolerances of the verifier's SQs in \Cref{theorem:functional-agnostic-quantum-parity-verification-qsq-no-small-non-zero-Fourier-coeff} (and in the verifier's complexities in \Cref{theorem:functional-agnostic-quantum-parity-verification-no-small-non-zero-Fourier-coeff} below).

\begin{proof}
    Let $\delta,\varepsilon\in (0,1)$.
    Let $\mathcal{D}\in \mathfrak{D}_{ \mathcal{U}_n; \geq \vartheta}$.
    We begin the proof by describing the actions of the classical verifier $V$ and the honest quantum prover $P$:
    \begin{enumerate}
        \item $V$ asks $P$ to provide a list $L=\{s_1,\ldots ,s_{\lvert L\rvert}\}\subset \{0,1\}^n$ of length $\lvert L \rvert\leq \nicefrac{4}{\vartheta^2}$ consisting of pairwise distinct $n$-bit strings whose associated Fourier coefficients are non-zero.
        \item $P$ follows the procedure in \Cref{theorem:functional-agnostic-qsq-GL} to produce a list $L=\{s_1,\ldots,s_{\lvert L\rvert}\}\subseteq\{0,1\}^n$ such that
        (i) if $\lvert \hat{g}(s)\rvert\geq \vartheta$, then $s\in L$, and (ii) if $s\in L$, then $\lvert \hat{g}(s)\rvert\geq \nicefrac{\vartheta}{2}$.
        $P$ then sends the list $L$ to $V$.
        \item If $V$ receives a list $L$ that is of length $\lvert L\rvert> \nicefrac{4}{\vartheta^2} $, $V$ rejects the interaction. Otherwise, $V$ uses $\lvert L\rvert$ classical SQs of tolerance $\nicefrac{\varepsilon^2}{16 \lvert L\rvert}$ to obtain estimates $\hat{\gamma}(s)$ of $\hat{g}(s)$ for all $s\in L$. (For $t\not\in L$, the verifier's estimate $\hat{\gamma}(t)$ for $\hat{g}(t)$ is just $0$.)
        \item If $\sum_{\ell = 1}^{\lvert L\rvert} \left( \hat{\gamma}(s_\ell) \right)^2 \geq 1 - \tfrac{\varepsilon^2}{8}$, then $V$ determines $s_{\rm out}\in\operatorname{argmax}_{1\leq \ell\leq \lvert L \rvert} \hat{\gamma}(s)$ and outputs the hypothesis $h:\mathcal{X}_n\to \{0,1\}$, $h(x)=s_{\rm out}\cdot x$. If $\sum_{\ell = 1}^{\lvert L\rvert} \left( \hat{\gamma}(s_\ell) \right)^2 < 1 - \tfrac{\varepsilon^2}{8}$, then $V$ outputs $\mathrm{reject}$.
    \end{enumerate}
    We now show that the pair $(V,P)$ has the desired completeness and soundness properties. As a first step towards this goal, we show that $V$ accepts an interaction with $P$ with high probability. To this end, observe that, according to \Cref{theorem:functional-agnostic-qsq-GL}, the list produced by $P$ has length $\lvert L \rvert\leq\nicefrac{4}{\vartheta^2}$, $V$ never rejects $P$ in Step 3. Moreover, we have
    \begin{align}
        \sum_{\ell = 1}^{\lvert L\rvert} \left( \hat{\gamma}(s_\ell) \right)^2
        &\geq \sum_{\ell = 1}^{\lvert L\rvert} \left( \hat{g}(s_\ell) \right)^2 - 2\sum_{\ell = 1}^{\lvert L\rvert} \lvert\hat{g}(s_\ell)-\hat{\gamma}(s_\ell)\rvert\\
        &\geq \sum_{s:\hat{g}(s)\neq 0} \lvert\hat{g}(s)\rvert^2 - 2\sum_{\ell = 1}^{\lvert L\rvert} \lvert\hat{g}(s_\ell)-\hat{\gamma}(s_\ell)\rvert\\
        &\geq 1 - 2\lvert L\rvert \cdot \frac{\varepsilon^2}{16 \lvert L\rvert}\\
        &= 1 - \frac{\varepsilon^2}{8}\, ,
    \end{align}
    where the first step used that the function $[-1,1]\ni \xi\to \xi^2$ is $2$-Lipschitz, the second step used that $\mathcal{D}\in \mathfrak{D}_{ \mathcal{U}_n; \geq \vartheta}^{\mathrm{func}}$ implies $(\hat{g}(s)\neq 0~ \Rightarrow ~ s\in L)$ according to Step 2, and the third step used the approximation guarantee in Step 3.
    
    Moreover, whenever $V$ not reject in Steps 3 or 4, then the output string $s_{\rm out}\in\operatorname{argmax}_{1\leq \ell\leq \lvert L \rvert} \gamma(s)$ of $V$ is as desired. This can be seen as follows: If $V$ does not reject in Steps 3 or 4, then this implies that for any $s\not\in L$,
    \begin{align}
        \left(\hat{g}(s)\right)^2
        &\leq \sum_{t\not\in L} \left(\hat{g}(t)\right)^2\\
        &= 1 -  \sum_{t\in L} \left(\hat{g}(t)\right)^2\\
        &\leq 1 - \sum_{t\in L} \left(\hat{\gamma}(t)\right)^2 + 2 \sum_{\ell = 1}^{\lvert L\rvert} \lvert\hat{g}(s_\ell)-\hat{\gamma}(s_\ell)\rvert\\
        &\leq \frac{\varepsilon^2}{8} + 2\lvert L\rvert \cdot \frac{\varepsilon^2}{16 \lvert L\rvert}\\
        &= \frac{\varepsilon^2}{4},
    \end{align}
    where we again used that $[-1,1]\ni \xi\to \xi^2$ is $2$-Lipschitz.
    This tells us that $\lvert \hat{g}(t)\rvert \leq\nicefrac{\varepsilon}{2}$ holds for every $t\not\in L$, which now allows us to show that the output $s_{\rm out}\in\operatorname{argmax}_{1\leq \ell\leq \lvert L \rvert} \gamma(s)$ of $V$ has the desired property:   
    \begingroup
    \allowdisplaybreaks
    \begin{align}
        \mathbb{P}_{x\sim\mathcal{U}_n}[f(x)\neq h(x)]
        &= \mathbb{P}_{x\sim\mathcal{U}_n}[g(x)\neq \chi_{s_{\rm out}}(x)]\\
        &= \frac{1-\hat{g}(s_{\rm out})}{2}\\
        &= \frac{1-\hat{\gamma}(s_{\rm out})}{2} + \frac{\hat{g}(s_{\rm out})-\hat{\gamma}(s_{\rm out})}{2}\\
        &\leq \frac{1-\max_{t\in\{0,1\}^n}\hat{\gamma}(t)}{2} + \frac{\varepsilon}{2}\\
        &= \min_{t\in\{0,1\}^n} \frac{1-\hat{\gamma}(t)}{2} + \frac{\varepsilon}{2}\\
        &= \min_{t\in\{0,1\}^n} \frac{1-\hat{g}(t)}{2} + \frac{\hat{\gamma}(t)-\hat{g}(t)}{2} + \frac{\varepsilon}{2}\\
        &\leq \min_{t\in\{0,1\}^n} \frac{1-\hat{g}(t)}{2} + \max_{v\in\{0,1\}^n}\frac{\hat{\gamma}(v)-\hat{g}(v)}{2}+ \frac{\varepsilon}{2}\\
        &\leq \min_{t\in\{0,1\}^n} \frac{1-\hat{g}(t)}{2} + \max\left\{\max_{v\in L}\frac{\hat{\gamma}(v)-\hat{g}(v)}{2}, \max_{v\not\in L}\frac{\hat{\gamma}(v)-\hat{g}(v)}{2}\right\}+ \frac{\varepsilon}{2}\\
        &\leq \min_{t\in\{0,1\}^n} \frac{1-\hat{g}(t)}{2} + \max\left\{\frac{\varepsilon}{2}, \max_{v\not\in L}\frac{-\hat{g}(v)}{2}\right\}+ \frac{\varepsilon}{2}\\
        &\leq \min_{t\in\{0,1\}^n} \frac{1-\hat{g}(t)}{2} + \max\left\{\frac{\varepsilon}{2}, \max_{v\not\in L}\frac{\lvert\hat{g}(v)\rvert}{2}\right\}+ \frac{\varepsilon}{2}\\
        &\leq \min_{t\in\{0,1\}^n} \frac{1-\hat{g}(t)}{2} + \frac{\varepsilon}{2}+ \frac{\varepsilon}{2}\\
        &= \min_{t\in\{0,1\}^n} \frac{1-\hat{g}(t)}{2} + \varepsilon\, .
    \end{align}
    \endgroup
    Note that this last part of our reasoning only relied on $V$ not rejecting in Steps 3 or 4, but was independent of the action of the quantum prover. Therefore, with this we have already established the desired soundness.

    It remains to bound the sample and time complexities for $V$ and $P$.
    \Cref{theorem:functional-agnostic-qsq-GL} directly tells us that $P$ uses $\mathcal{O}\left(\tfrac{n}{\vartheta^2}\right)$ QSQs of tolerance $\nicefrac{\tilde\varepsilon^2}{8}$ for observables that can be implemented with $\mathcal{O}(n)$ single-qubit gates as well as a classical memory of size $\tilde{\mathcal{O}}\left(\tfrac{n^2}{\vartheta^2}\right)$, and classical running time $\tilde{\mathcal{O}}\left(\tfrac{n}{\vartheta^2}\right)$.
    The classical SQ complexity of $V$ is $\mathcal{O}\left(\lvert L\rvert\right)\leq \mathcal{O}\left(\tfrac{1}{\vartheta^2}\right)$, where each SQ is with tolerance $\nicefrac{\varepsilon^2}{16 \lvert L\rvert} \geq \nicefrac{\varepsilon^2 \vartheta^2}{64}$. 
    The classical running time of $V$ is $\mathcal{O}\left(n\lvert L\rvert\right)\leq \mathcal{O}\left(\tfrac{n}{\vartheta^2}\right)$ for Step 3. 
    The classical memory used by $V$ for Step 3 is of size $\mathcal{O}\left(n\lvert L\rvert\right)\leq \mathcal{O}\left(\tfrac{n}{\vartheta^2}\right)$.
    The computational cost and memory required for Step 4 are negligible in comparison to those of Step 3.
    Finally, our protocol clearly requires only a single round of communication, and the communicated object is $L$, which consists of at most $\nicefrac{4}{\vartheta^2}$ strings of $n$ bits.
    This finishes the proof.
\end{proof}

The above result proves a SQ complexity separation between what is classically query-efficiently achievable versus what is classically query-efficiently achievable when interacting with an untrusted quantum prover with QSQ access. (Here, we consider the regime $\varepsilon, \vartheta\geq\Omega(\nicefrac{1}{\poly (n)})$, so that the verifier and prover in \Cref{theorem:functional-agnostic-quantum-parity-verification-qsq-no-small-non-zero-Fourier-coeff} are efficient.) Thus, \Cref{theorem:functional-agnostic-quantum-parity-verification-qsq-no-small-non-zero-Fourier-coeff} already establishes the SQ-parts of \Cref{theorem:main-result-verification}.

Next, we give a version of our interactive verification result in which we allow both the classical verifier and the quantum prover to have their respective classical and quantum example access. The underlying proof strategy is the same as for \Cref{theorem:functional-agnostic-quantum-parity-verification-qsq-no-small-non-zero-Fourier-coeff} and would even work for a classical verifier with SQ access, but here we give example access on both sides in the spirit of symmetry.

\begin{theorem}\label{theorem:functional-agnostic-quantum-parity-verification-no-small-non-zero-Fourier-coeff}
    Let $\delta, \varepsilon,\vartheta\in (0,1)$.
    The class of $n$-bit parities is efficiently proper $1$-agnostic verifiable w.r.t.~$\mathfrak{D}_{ \mathcal{U}_n; \geq \vartheta}^{\mathrm{func}}$ by a classical verifier $V$ with access to classical random examples interacting with a quantum prover $P$ with access to quantum superposition examples.
    There is a verifier-prover pair $(V,P)$ such that $P$ uses $\mathcal{O}\left(\tfrac{\log(\nicefrac{1}{\delta\vartheta^2})}{\vartheta^4}\right)$ copies of $\ket{\psi_{(\mathcal{U}_n, f)}}$ , $\mathcal{O}\left(n\tfrac{\log(\nicefrac{1}{\delta\vartheta^2})}{\vartheta^4}\right)$ single-qubit gates, a classical memory of size $\tilde{\mathcal{O}}\left(n\tfrac{\log(\nicefrac{1}{\delta\vartheta^2})}{\vartheta^4}\right)$, and classical running time $\tilde{\mathcal{O}}\left(n\tfrac{\log(\nicefrac{1}{\delta \vartheta^2})}{\vartheta^4}\right)$, and such that $V$ uses $\mathcal{O}\left(\tfrac{\log(\nicefrac{1}{\delta\vartheta^2})}{\varepsilon^4 \vartheta^4}\right)$ classical random examples, $\tilde{\mathcal{O}}\left(n\tfrac{\log(\nicefrac{1}{\delta\vartheta^2})}{\varepsilon^4 \vartheta^4}\right)$ classical running time, and a classical memory of size $\tilde{\mathcal{O}}\left(n\tfrac{\log(\nicefrac{1}{\delta\vartheta^2})}{\varepsilon^4 \vartheta^4}\right)$.
    Moreover, this can be achieved by a pair $(V,P)$ that uses only a single round of communication consisting of at most $\mathcal{O}\left(\tfrac{n}{\vartheta^2}\right)$ classical bits.
\end{theorem}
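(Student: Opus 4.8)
The plan is to transcribe the proof of \Cref{theorem:functional-agnostic-quantum-parity-verification-qsq-no-small-non-zero-Fourier-coeff}, replacing the two places where data access entered: on the prover's side, the QSQ-based Goldreich--Levin procedure is replaced by the quantum-example-based Fourier-spectrum approximation of \Cref{corollary:quantum-approximation-fourier-spectrum}; on the verifier's side, the classical statistical queries are replaced by empirical averages over classical random examples (the argument would also go through for a verifier with classical SQ access, but we phrase everything with example access for symmetry).

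Concretely, the protocol is as follows. \textbf{(1)} $V$ asks $P$ for a list $L\subseteq\{0,1\}^n$ of at most $\nicefrac{4}{\vartheta^2}$ pairwise distinct strings carrying the nonzero Fourier coefficients of $g=(-1)^f$. \textbf{(2)} $P$ runs \Cref{corollary:quantum-approximation-fourier-spectrum} with accuracy $\nicefrac{\vartheta}{4}$ and confidence $\nicefrac{\delta}{2}$, obtaining a succinct $\tilde g$ with $\norm{\tilde g-\hat g}_\infty\le\nicefrac{\vartheta}{4}$ and $\norm{\tilde g}_0\le\nicefrac{16}{\vartheta^2}$, and sends $L=\{s:\lvert\tilde g(s)\rvert\ge\nicefrac{\vartheta}{2}\}$; since $\mathcal{D}\in\mathfrak{D}_{\mathcal{U}_n;\geq\vartheta}^{\mathrm{func}}$, every nonzero $\hat g(s)$ has $\lvert\hat g(s)\rvert\ge\vartheta$, so thresholding at $\nicefrac{\vartheta}{2}$ recovers exactly the support of $\hat g$, which by Parseval has size $\le\nicefrac{1}{\vartheta^2}$. \textbf{(3)} If the received list has length $>\nicefrac{4}{\vartheta^2}$, $V$ rejects; otherwise $V$ draws $\mathcal{O}\!\big(\tfrac{\lvert L\rvert^2\log(\nicefrac{\lvert L\rvert}{\delta})}{\varepsilon^4}\big)$ classical random examples and forms, for each $s\in L$, the empirical estimate $\hat\gamma(s)$ of $\hat g(s)=\mathbb{E}_{x\sim\mathcal{U}_n}[g(x)\chi_s(x)]$, accurate to $\nicefrac{\varepsilon^2}{16\lvert L\rvert}$ with failure probability $\nicefrac{\delta}{2\lvert L\rvert}$ by Hoeffding and a union bound (the same sample is reused for all $s\in L$); for $t\notin L$ set $\hat\gamma(t)=0$. \textbf{(4)} If $\sum_{s\in L}(\hat\gamma(s))^2\ge 1-\nicefrac{\varepsilon^2}{8}$, $V$ outputs the parity $h(x)=s_{\mathrm{out}}\cdot x$ with $s_{\mathrm{out}}\in\operatorname{argmax}_{s\in L}\hat\gamma(s)$; otherwise $V$ outputs $\mathrm{reject}$.

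For \emph{completeness}, with probability $\ge 1-\delta$ the honest list equals the support of $\hat g$ and all estimates are accurate; then $\sum_{s\in L}(\hat g(s))^2=1$ and, using that $\xi\mapsto\xi^2$ is $2$-Lipschitz on $[-1,1]$, $\sum_{s\in L}(\hat\gamma(s))^2\ge 1-\nicefrac{\varepsilon^2}{8}$, so $V$ accepts and the output parity is $\varepsilon$-agnostically optimal by the same chain of inequalities as in \Cref{theorem:functional-agnostic-quantum-parity-verification-qsq-no-small-non-zero-Fourier-coeff}. For \emph{soundness}, whenever $V$ does not reject we have $\sum_{s\in L}(\hat\gamma(s))^2\ge 1-\nicefrac{\varepsilon^2}{8}$, which (again by $2$-Lipschitzness, Parseval, and the $\nicefrac{\varepsilon^2}{16\lvert L\rvert}$-accuracy) forces $\sum_{t\notin L}(\hat g(t))^2\le\nicefrac{\varepsilon^2}{4}$, hence $\lvert\hat g(t)\rvert\le\nicefrac{\varepsilon}{2}$ for all $t\notin L$; the identical inequality chain then bounds $\mathbb{P}_{x\sim\mathcal{U}_n}[f(x)\ne h(x)]$ by $\min_t\tfrac{1-\hat g(t)}{2}+\varepsilon=\mathrm{opt}_{\mathcal{D}}(\text{parities})+\varepsilon$. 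Crucially, this step uses only that $V$ did not reject and that $V$'s own (prover-independent) estimates were accurate, so it holds against an arbitrary dishonest prover. The complexity bounds for $P$ are inherited from \Cref{corollary:quantum-approximation-fourier-spectrum} at accuracy $\nicefrac{\vartheta}{4}$; those for $V$ follow from the Hoeffding sample count with $\lvert L\rvert=\mathcal{O}(\nicefrac{1}{\vartheta^2})$; and the interaction is a single message of $\lvert L\rvert\cdot n=\mathcal{O}(\nicefrac{n}{\vartheta^2})$ bits.

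I do not expect a genuine obstacle, since this is essentially a transcription; the one point requiring care is parameter bookkeeping — choosing the prover's target accuracy $\sim\vartheta$ and the verifier's per-coefficient tolerance $\sim\varepsilon^2/\lvert L\rvert$ so that completeness, soundness, and the claimed dependences on $\varepsilon$, $\vartheta$, and $\delta$ all line up. It is also worth flagging the structural reason this works in the functional case: since $g$ is $\{-1,1\}$-valued, Parseval pins the total Fourier weight to exactly $1$, giving the verifier an a priori known benchmark against which to test $\sum_{s\in L}(\hat\gamma(s))^2$; in the distributional setting treated later, this exact value must be replaced by an a priori interval $[a^2,b^2]$, which is where additional promises on $\mathcal{D}$ become necessary.
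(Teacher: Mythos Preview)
Your proposal is correct and follows essentially the same approach as the paper: the paper's proof also describes the analogous four-step protocol (with \Cref{corollary:quantum-approximation-fourier-spectrum} replacing \Cref{theorem:functional-agnostic-qsq-GL} on the prover side and Hoeffding-plus-union-bound replacing SQs on the verifier side) and then simply states that completeness and soundness follow from the same reasoning as in \Cref{theorem:functional-agnostic-quantum-parity-verification-qsq-no-small-non-zero-Fourier-coeff}, with the only difference being the need for union bounds. Your constant choices differ slightly from the paper's (e.g., you use prover accuracy $\nicefrac{\vartheta}{4}$ and list-length cutoff $\nicefrac{4}{\vartheta^2}$, whereas the paper uses $\nicefrac{\vartheta}{2}$ and a larger cutoff), but these are immaterial and your observation that thresholding at $\nicefrac{\vartheta}{2}$ recovers the support exactly is a clean touch.
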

\begin{proof}
    Let $\delta,\varepsilon\in (0,1)$.
    Let $\mathcal{D}\in \mathfrak{D}_{ \mathcal{U}_n; \geq \vartheta}$.
    We begin the proof by describing the actions of the classical verifier $V$ and the honest quantum prover $P$:
    \begin{enumerate}
        \item $V$ asks $P$ to provide a list $L=\{s_1,\ldots ,s_{\lvert L\rvert}\}\subset \{0,1\}^n$ of length $\lvert L \rvert\leq \nicefrac{64}{\vartheta^2}$ consisting of pairwise distinct $n$-bit strings whose associated Fourier coefficients are non-zero.
        \item $P$ follows the procedure in \Cref{corollary:quantum-approximation-fourier-spectrum} to produce, with success probability $\geq 1-\tfrac{\delta}{2}$, a succinctly represented $\Tilde{g}:\mathcal{X}_n\to [-1,1]$ such that $\norm{\Tilde{g}-\hat{g}}_\infty\leq\nicefrac{\vartheta}{2}$ and $\norm{\Tilde{g}}_0\leq\nicefrac{16}{\vartheta^2}$. If $P$ obtains an output that violates the $\norm{\cdot}_0$-bound, then $P$ declares failure and the interaction aborts.
        Otherwise, $P$ then sends the list $L = \{s\in\{0,1\}^n~|~ \lvert \Tilde{g}(s)\rvert\geq \nicefrac{\vartheta}{2} \}$ to $V$.
        \item If $V$ receives a list $L$ that is of length $\lvert L\rvert> \nicefrac{16}{\vartheta^2} $, $V$ rejects the interaction. Otherwise, $V$ uses $\mathcal{O}\left(\tfrac{\lvert L\rvert^2\log(\nicefrac{\lvert L\rvert}{\delta})}{\varepsilon^4}\right)$ classical random examples from $\mathcal{D}$ to obtain simultaneously $(\nicefrac{\varepsilon^2}{16 \lvert L\rvert})$-accurate estimates $\hat{\gamma}(s)$ of $\hat{g}(s)$ for all $s\in L$, with success probability $\geq 1-\tfrac{\delta}{2}$, via Chernoff-Hoeffding combined with a union bound over $L$. (For $t\not\in L$, the verifier's estimate $\hat{\gamma}(t)$ for $\hat{g}(t)$ is just $0$.)
        \item If $\sum_{\ell = 1}^{\lvert L\rvert} \left( \hat{\gamma}(s_\ell) \right)^2 \geq 1 - \tfrac{\varepsilon^2}{8}$, then $V$ determines $s_{\rm out}\in\operatorname{argmax}_{1\leq \ell\leq \lvert L \rvert} \hat{\gamma}(s)$ and outputs the hypothesis $h:\mathcal{X}_n\to \{0,1\}$, $h(x)=s_{\rm out}\cdot x$. If $\sum_{\ell = 1}^{\lvert L\rvert} \left( \hat{\gamma}(s_\ell) \right)^2 < 1 - \tfrac{\varepsilon^2}{8}$, then $V$ outputs $\mathrm{reject}$.
    \end{enumerate}
    We can prove completeness and soundness of the pair $(V,P)$ with essentially the same reasoning as in the proof of \Cref{theorem:functional-agnostic-quantum-parity-verification-qsq-no-small-non-zero-Fourier-coeff}. The only difference in the analysis is that some simple union bounds are required.

    It remains to bound the sample and time complexities for $V$ and $P$.
    \Cref{corollary:quantum-approximation-fourier-spectrum} directly tells us that $P$ uses $\mathcal{O}\left(\tfrac{\log(\nicefrac{1}{\delta\vartheta^2})}{\vartheta^4}\right)$ copies of the quantum superposition state $\ket{\psi_{(\mathcal{U}_n, f)}}$ as well as $\mathcal{O}\left(n\tfrac{\log(\nicefrac{1}{\delta\vartheta^2})}{\vartheta^4}\right)$ single-qubit gates, a classical memory of size $\tilde{\mathcal{O}}\left(n\tfrac{\log(\nicefrac{1}{\delta\vartheta^2})}{\vartheta^4}\right)$, and classical running time $\tilde{\mathcal{O}}\left(n\tfrac{\log(\nicefrac{1}{\delta \vartheta^2})}{\vartheta^4}\right)$.
    The classical sample complexity of $V$ is $\mathcal{O}\left(\tfrac{\lvert L\rvert^2\log(\nicefrac{\lvert L\rvert}{\delta})}{\varepsilon^4}\right)\leq \mathcal{O}\left(\tfrac{\log(\nicefrac{1}{\delta\vartheta^2})}{\varepsilon^4 \vartheta^4}\right)$, as noted in Step 3. The classical running time of $V$ is $\tilde{\mathcal{O}}\left(n\tfrac{\lvert L\rvert^2\log(\nicefrac{\lvert L\rvert}{\delta})}{\varepsilon^4}\right)\leq \tilde{\mathcal{O}}\left(n\tfrac{\log(\nicefrac{1}{\delta\vartheta^2})}{\varepsilon^4 \vartheta^4}\right)$ for Step 3. 
    The classical memory used by $V$ for Step 3 is of size $\tilde{\mathcal{O}}\left(n\tfrac{\lvert L\rvert^2\log(\nicefrac{\lvert L\rvert}{\delta})}{\varepsilon^4}\right)\leq \tilde{\mathcal{O}}\left(n\tfrac{\log(\nicefrac{1}{\delta\vartheta^2})}{\varepsilon^4 \vartheta^4}\right)$.
    The computational cost and memory required for Step 4 are negligible in comparison to those of Step 3.
    Finally, our protocol clearly requires only a single round of communication, and the communicated object is $L$, which consists of at most $\nicefrac{64}{\vartheta^2}$ strings of $n$ bits.
    This finishes the proof.
\end{proof}

\Cref{theorem:functional-agnostic-quantum-parity-verification-no-small-non-zero-Fourier-coeff} again achieves a separation between the capabilities of a lone classical learner and those of a classical-quantum verifier-prover pair. 
Namely, classical exact parity learning requires linearly-in-$n$ many random examples (even in the realizable case, see \cite[Theorem 5]{caro2020quantum} for a simple information-theoretic proof), whereas the verifier in \Cref{theorem:functional-agnostic-quantum-parity-verification-no-small-non-zero-Fourier-coeff} only needs an $n$-independent number of copies. (While we do not explicitly consider exact learning in \Cref{theorem:functional-agnostic-quantum-parity-verification-no-small-non-zero-Fourier-coeff}, this can be obtained as a special case, namely by focusing on the realizable case with $\vartheta=\varepsilon=\nicefrac{1}{3}$, since two distinct parities differ on half of all inputs.)
However, we emphasize that \Cref{theorem:functional-agnostic-quantum-parity-verification-no-small-non-zero-Fourier-coeff} does not yet establish an analogous computational complexity separation conditioned on LPN. While functional agnostic parity learning is at least as hard as LPN and thus conjectured to be hard (compare the discussion in \cite{goldwasser2021interactive}), we do not believe that the problem of functional agnostic parity learning under the promise of no small non-zero Fourier coefficients has been considered in the literature, and we are not aware of any reductions establishing its hardness.
Therefore, to get a conditional separation specifically in terms of computational complexities, we extend our verification results in the next two subsections to also allow for noise and for the distributional agnostic case.

Next, we now show the Fourier-sparse $2$-agnostic analogues of \Cref{theorem:functional-agnostic-quantum-parity-verification-qsq-no-small-non-zero-Fourier-coeff} and \Cref{theorem:functional-agnostic-quantum-parity-verification-no-small-non-zero-Fourier-coeff}:

\begin{theorem}\label{theorem:functional-agnostic-quantum-fourier-sparse-verification-qsq-no-small-non-zero-Fourier-coeff}
    Let $\varepsilon,\vartheta\in (0,1)$.
    The class of Fourier-$k$-sparse functions on $n$ bits is efficiently improper $2$-agnostic verifiable w.r.t.~$\mathfrak{D}_{ \mathcal{U}_n; \geq \vartheta}^{\mathrm{func}}$ by a classical verifier $V$ with classical SQ access interacting with a quantum prover $P$ with QSQ access.
    There is a verifier-prover pair $(V,P)$ such that $P$ uses $\mathcal{O}\left(\tfrac{n}{\vartheta^2}\right)$ functional agnostic QSQs of tolerance at worst $\nicefrac{\vartheta}{8}$ for observables that can be implemented with $\mathcal{O}(n)$ single-qubit gates, a classical memory of size $\tilde{\mathcal{O}}\left(\tfrac{n^2 }{\vartheta^2}\right)$, and classical running time $\tilde{\mathcal{O}}\left(\tfrac{n }{\vartheta^2}\right)$, and such that $V$ uses $\mathcal{O}\left(\tfrac{1}{\vartheta^2}\right)$ classical SQs of tolerance at worst $\nicefrac{(\varepsilon^2 \vartheta^2)}{64 k^2}$, $\tilde{\mathcal{O}}\left(\tfrac{n }{\vartheta^2}\right)$ classical running time, and a classical memory of size $\tilde{\mathcal{O}}\left(\tfrac{n }{\vartheta^2}\right)$.
    Moreover, this can be achieved by a pair $(V,P)$ that uses only a single round of communication consisting of at most $\mathcal{O}\left(\tfrac{n}{\vartheta^2}\right)$ classical bits.
\end{theorem}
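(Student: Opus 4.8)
The strategy is to run essentially the same interactive protocol as in the proof of \Cref{theorem:functional-agnostic-quantum-parity-verification-qsq-no-small-non-zero-Fourier-coeff}, with the only change being that at the end the verifier retains not a single heaviest Fourier coefficient but the $k$ estimated-heaviest ones, and then builds a randomized Fourier-$k$-sparse hypothesis by invoking the classical reduction ``Fourier-sparse learning from approximately-heaviest Fourier coefficients'' from \Cref{appendix:useful} (the same lemma, call it \Cref{lemma:Fourier-sparse-learning-via-heaviest-Fourier-coefficients}, that underlies \Cref{corollary:agnostic-quantum-fourier-sparse-learning} and \Cref{corollary:quantum-functional-agnostic-fourier-sparse-learning}). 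Concretely: (1) $V$ asks $P$ for a list $L=\{s_1,\dots,s_{\lvert L\rvert}\}$ of pairwise distinct strings with non-zero Fourier coefficients, of length at most $\nicefrac{4}{\vartheta^2}$; (2) $P$ runs the distributional/functional QSQ Goldreich--Levin procedure of \Cref{theorem:functional-agnostic-qsq-GL} with threshold $\vartheta$, obtaining a list $L$ that contains every $s$ with $\lvert\hat g(s)\rvert\geq\vartheta$ and only $s$ with $\lvert\hat g(s)\rvert\geq\nicefrac{\vartheta}{2}$, and $P$ sends $L$; (3) if $\lvert L\rvert>\nicefrac{4}{\vartheta^2}$ then $V$ rejects, otherwise $V$ uses $\lvert L\rvert$ classical SQs of tolerance $\Theta\!\left(\nicefrac{\varepsilon^2}{k^2\lvert L\rvert}\right)$ to obtain estimates $\hat\gamma(s)$ of $\hat g(s)$ for $s\in L$ (and $\hat\gamma(t)=0$ for $t\notin L$); (4) if $\sum_{\ell}\hat\gamma(s_\ell)^2<1-\Theta\!\left(\nicefrac{\varepsilon^2}{k^2}\right)$ then $V$ rejects, otherwise $V$ selects the indices of the $k$ largest values $\lvert\hat\gamma(s_\ell)\rvert$ and outputs the randomized hypothesis built from these $k$ estimated coefficients exactly as in \Cref{lemma:Fourier-sparse-learning-via-heaviest-Fourier-coefficients}.

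\textbf{Completeness.} For $\mathcal{D}\in\mathfrak{D}_{\mathcal{U}_n;\geq\vartheta}^{\mathrm{func}}$, every non-zero $\hat g(s)$ has magnitude $\geq\vartheta$, so by \Cref{theorem:functional-agnostic-qsq-GL} the honest list $L$ contains the entire support of $\hat g$; in particular $\lvert L\rvert\leq\nicefrac{4}{\vartheta^2}$ by Parseval, so $V$ does not reject in Step~3. Using Parseval ($\sum_s\hat g(s)^2=1$) together with the $2$-Lipschitzness of $\xi\mapsto\xi^2$ on $[-1,1]$ and the SQ tolerance, one gets $\sum_\ell\hat\gamma(s_\ell)^2\geq 1-\Theta\!\left(\nicefrac{\varepsilon^2}{k^2}\right)$, so $V$ does not reject in Step~4. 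Finally, since $L$ contains $\operatorname{supp}(\hat g)$ and all the $\hat\gamma(s_\ell)$ are $\Theta\!\left(\nicefrac{\varepsilon^2}{k^2\lvert L\rvert}\right)$-accurate, the $k$ estimated-heaviest entries of $\hat\gamma$ are, by the sorting argument of \Cref{lemma:technical}, within $\Theta\!\left(\nicefrac{\varepsilon}{k}\right)$ of the $k$ true heaviest Fourier coefficients of $g$, which is exactly the input precision required by \Cref{lemma:Fourier-sparse-learning-via-heaviest-Fourier-coefficients} to guarantee $\operatorname{err}_{\mathcal{D}}(h)\leq 2\cdot\operatorname{opt}_{\mathcal{D}}(\mathcal{B})+\varepsilon$ for the Fourier-$k$-sparse benchmark $\mathcal{B}$.

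\textbf{Soundness.} This part is independent of the prover's behaviour and only uses that $V$ did not reject in Steps~3--4. If $\sum_\ell\hat\gamma(s_\ell)^2\geq 1-\Theta\!\left(\nicefrac{\varepsilon^2}{k^2}\right)$ then, by Parseval and Lipschitzness again, $\sum_{t\notin L}\hat g(t)^2\leq\Theta\!\left(\nicefrac{\varepsilon^2}{k^2}\right)$, hence $\lvert\hat g(t)\rvert\leq\Theta\!\left(\nicefrac{\varepsilon}{k}\right)$ for all $t\notin L$. Consequently the true $k$ heaviest Fourier coefficients of $g$ are all (up to $\Theta\!\left(\nicefrac{\varepsilon}{k}\right)$) captured among the estimated coefficients of $L$, and the $k$ estimated-heaviest entries of $\hat\gamma$ again approximate them to precision $\Theta\!\left(\nicefrac{\varepsilon}{k}\right)$; \Cref{lemma:Fourier-sparse-learning-via-heaviest-Fourier-coefficients} then yields the $2$-agnostic guarantee for the hypothesis actually output. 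Thus whenever $V$ outputs $h\neq\mathrm{reject}$, $h$ is good with certainty (up to the confidence loss from the SQ estimates), establishing soundness.

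\textbf{Main obstacle and complexities.} The only genuinely new bookkeeping relative to \Cref{theorem:functional-agnostic-quantum-parity-verification-qsq-no-small-non-zero-Fourier-coeff} is tracking the $k$-dependent precision: because the Fourier-sparse reduction needs each of $k$ coefficients known to accuracy $\Theta\!\left(\nicefrac{\varepsilon}{k}\right)$ rather than one coefficient to accuracy $\Theta(\varepsilon)$, the verifier's SQ tolerance tightens by a $\operatorname{poly}(k)$ factor to $\Theta\!\left(\nicefrac{\varepsilon^2\vartheta^2}{k^2}\right)$, and the Step-4 threshold must be set to $1-\Theta\!\left(\nicefrac{\varepsilon^2}{k^2}\right)$ so that it is simultaneously loose enough to pass in the honest case and tight enough to force the conclusion $\lvert\hat g(t)\rvert\leq\Theta\!\left(\nicefrac{\varepsilon}{k}\right)$ for $t\notin L$ in the cheating case; verifying these two numerical conditions are mutually consistent is the crux. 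The prover's complexity is inherited verbatim from \Cref{theorem:functional-agnostic-qsq-GL} ($\mathcal{O}\!\left(\nicefrac{n}{\vartheta^2}\right)$ QSQs of tolerance $\nicefrac{\vartheta}{8}$ for $\mathcal{O}(n)$-gate observables, $\tilde{\mathcal{O}}\!\left(\nicefrac{n^2}{\vartheta^2}\right)$ memory, $\tilde{\mathcal{O}}\!\left(\nicefrac{n}{\vartheta^2}\right)$ time), since $P$ does exactly the same thing as in the parity case; the verifier makes $\lvert L\rvert\leq\mathcal{O}\!\left(\nicefrac{1}{\vartheta^2}\right)$ classical SQs of tolerance $\Theta\!\left(\nicefrac{\varepsilon^2\vartheta^2}{k^2}\right)$ and spends $\tilde{\mathcal{O}}\!\left(\nicefrac{n}{\vartheta^2}\right)$ time and memory to store and process the $n$-bit strings in $L$, and the single round of communication transmits $L$, i.e.\ $\mathcal{O}\!\left(\nicefrac{n}{\vartheta^2}\right)$ bits.
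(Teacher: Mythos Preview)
Your proposal is correct and matches the paper's proof essentially line-for-line: the same four-step protocol (prover runs the QSQ Goldreich--Levin of \Cref{theorem:functional-agnostic-qsq-GL} at threshold $\vartheta$, verifier checks $|L|\le 4/\vartheta^2$, estimates Fourier coefficients via SQs of tolerance $\varepsilon^2/(64k^2|L|)$, tests $\sum_\ell \hat\gamma(s_\ell)^2\ge 1-\varepsilon^2/(32k^2)$, then applies \Cref{lemma:Fourier-sparse-learning-via-heaviest-Fourier-coefficients}), the same completeness argument via Parseval and $2$-Lipschitzness, and the same soundness argument that the accumulated-weight test forces $|\hat g(t)|\le \varepsilon/(4k)$ for $t\notin L$ and then invokes \Cref{lemma:technical} to conclude the estimated $k$ heaviest are $(\varepsilon/2k)$-approximately correct. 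Your identification of the ``main obstacle'' (the mutually consistent choice of SQ tolerance $\Theta(\varepsilon^2/(k^2|L|))$ and acceptance threshold $1-\Theta(\varepsilon^2/k^2)$) is exactly the bookkeeping the paper carries out explicitly.
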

\begin{proof}
    Let $\delta,\varepsilon\in (0,1)$.
    Let $\mathcal{D}\in \mathfrak{D}_{ \mathcal{U}_n; \geq \vartheta}$.
    We begin the proof by describing the actions of the classical verifier $V$ and the honest quantum prover $P$:
    \begin{enumerate}
        \item $V$ asks $P$ to provide a list $L=\{s_1,\ldots ,s_{\lvert L\rvert}\}\subset \{0,1\}^n$ of length $\lvert L \rvert\leq \nicefrac{4}{\vartheta^2}$ consisting of pairwise distinct $n$-bit strings whose associated Fourier coefficients are non-zero.
        \item $P$ follows the procedure in \Cref{theorem:functional-agnostic-qsq-GL} to produce a list $L=\{s_1,\ldots,s_{\lvert L\rvert}\}\subseteq\{0,1\}^n$ such that
        (i) if $\lvert \hat{g}(s)\rvert\geq \vartheta$, then $s\in L$, and (ii) if $s\in L$, then $\lvert \hat{g}(s)\rvert\geq \nicefrac{\vartheta}{2}$.
        $P$ then sends the list $L$ to $V$.
        \item If $V$ receives a list $L$ of length $\lvert L\rvert> \nicefrac{4}{\vartheta^2} $, $V$ rejects the interaction. Otherwise, $V$ uses $\lvert L\rvert$ classical SQs of tolerance $\nicefrac{\varepsilon^2}{64 k^2 \lvert L\rvert}$ to obtain estimates $\hat{\gamma}(s)$ of $\hat{g}(s)$ for all $s\in L$. (For $t\not\in L$, the verifier's estimate $\hat{\gamma}(t)$ for $\hat{g}(t)$ is just $0$.)
        \item If $\sum_{\ell = 1}^{\lvert L\rvert} \left( \hat{\gamma}(s_\ell) \right)^2 \geq 1 - \tfrac{\varepsilon^2}{32 k^2}$, then $V$ determines the $k$ heaviest Fourier coefficients in $L$. That is, $V$ determines $s_1\in\operatorname{argmax}_{t\in L} \lvert \hat{\gamma}(t)\rvert$ and, for $2\leq \ell\leq k$, $s_\ell \in \operatorname{argmax}_{t\in L\setminus\{s_1,\ldots,s_{\ell -1}\}} \lvert \hat{\gamma}(t)\rvert$, and then outputs the randomized hypothesis  $h:\mathcal{X}_n\to \{0,1\}$, $h(x)=\sum_{\ell=1}^k \hat{\gamma}(s_\ell) \chi_{s_\ell}(x)$ from \Cref{lemma:Fourier-sparse-learning-via-heaviest-Fourier-coefficients}. (If $L$ has fewer than $k$ elements, $V$ just picks the strings $s_{\lvert L\rvert +1},\ldots,s_k$ at random from $\{0,1\}^n\setminus L$. These strings don't matter since $V$ estimates their Fourier weight as $0$.)
        If $\sum_{\ell = 1}^{\lvert L\rvert} \left( \hat{\gamma}(s_\ell) \right)^2 < 1 - \tfrac{\varepsilon^2}{32 k^2}$, then $V$ outputs $\mathrm{reject}$.
    \end{enumerate}
    We now show that the pair $(V,P)$ has the desired completeness and soundness properties. As a first step towards this goal, we show that $V$ accepts an interaction with $P$ with high probability. To this end, observe that, according to \Cref{theorem:functional-agnostic-qsq-GL}, the list produced by $P$ has length $\lvert L \rvert\leq\nicefrac{4}{\vartheta^2}$, $V$ never rejects $P$ in Step 3. Moreover, we have
    \begin{align}
        \sum_{\ell = 1}^{\lvert L\rvert} \left( \hat{\gamma}(s_\ell) \right)^2
        &\geq \sum_{\ell = 1}^{\lvert L\rvert} \left( \hat{g}(s_\ell) \right)^2 - 2\sum_{\ell = 1}^{\lvert L\rvert} \lvert\hat{g}(s_\ell)-\hat{\gamma}(s_\ell)\rvert\\
        &\geq \sum_{s:\hat{g}(s)\neq 0} \lvert\hat{g}(s)\rvert^2 - 2\sum_{\ell = 1}^{\lvert L\rvert} \lvert\hat{g}(s_\ell)-\hat{\gamma}(s_\ell)\rvert\\
        &\geq 1 - 2\lvert L\rvert \cdot \frac{\varepsilon^2}{64 k^2 \lvert L\rvert}\\
        &= 1 - \frac{\varepsilon^2}{32 k^2}\, ,
    \end{align}
    where the first step used that the function $[-1,1]\ni \xi\to \xi^2$ is $2$-Lipschitz, the second step used that $\mathcal{D}\in \mathfrak{D}_{ \mathcal{U}_n; \geq \vartheta}^{\mathrm{func}}$ implies $(\hat{g}(s)\neq 0~ \Rightarrow ~ s\in L)$ according to Step 2, and the third step used the approximation guarantee in Step 3.
    
    Moreover, whenever $V$ does not reject in Steps 3 or 4, then the output hypothesis of $V$ is as desired. This can be seen as follows: If $V$ does not reject in Steps 3 or 4, then this implies that
    \begin{align}
        \sum_{t\not\in L}\left(\hat{g}(t)\right)^2
        &= 1 -  \sum_{t\in L} \left(\hat{g}(t)\right)^2\\
        &\leq 1 - \sum_{t\in L} \left(\hat{\gamma}(t)\right)^2 + 2 \sum_{\ell = 1}^{\lvert L\rvert} \lvert\hat{g}(s_\ell)-\hat{\gamma}(s_\ell)\rvert\\
        &\leq \frac{\varepsilon^2}{32 k^2} + 2\lvert L\rvert \cdot \frac{\varepsilon^2}{64 k^2 \lvert L\rvert}\\
        &= \frac{\varepsilon^2}{16 k^2},
    \end{align}
    where we again used that $[-1,1]\ni \xi\to \xi^2$ is $2$-Lipschitz.
    This tells us that $\lvert \hat{g}(t)\rvert \leq\nicefrac{\varepsilon}{4k}$ holds for every $t\not\in L$, which now allows us to show that the output hypothesis of $V$ has the desired property.
    Let $t_1,\ldots,t_k\in \{0,1\}^n$ be $k$ heaviest Fourier coefficients of $g$. That is, let $t_1\in \operatorname{argmax}_{t\in \{0,1\}^n} \lvert \hat{g}(t)\rvert$, and for $2\leq \ell\leq k$, let $t_\ell\in \operatorname{argmax}_{t\in \{0,1\}^n\setminus\{t_1,\ldots,t_{\ell -1}\}} \lvert \hat{g}(t)\rvert$.
    By \Cref{lemma:technical}, we have 
    \begin{align}
        \max_{1\leq \ell\leq k} \lvert \hat{g}(t_\ell) - \hat{g}(s_\ell)\rvert
        &\leq 2 \max_{t\in\{0,1\}^n} \lvert \hat{g}(t) - \hat{\gamma}(t)\rvert\\
        &= 2\max\left\{\max_{t\in L}\lvert \hat{g}(t) - \hat{\gamma}(t)\rvert,\max_{t\not\in L}\lvert \hat{g}(t) - \hat{\gamma}(t)\rvert \right\}\\
        &= 2\max\left\{\max_{t\in L}\lvert \hat{g}(t) - \hat{\gamma}(t)\rvert,\max_{t\not\in L}\lvert \hat{g}(t)\rvert \right\}\\
        &\leq 2\max\left\{\frac{\varepsilon^2}{64 k^2 |L|}, \frac{\varepsilon}{4k}\right\}\\
        &= \frac{\varepsilon}{2k} .
    \end{align}
    As $\tfrac{\varepsilon^2}{64 k^2 |L|}\leq \tfrac{\varepsilon}{2k}$, this shows that the verifier has $(\nicefrac{\varepsilon}{2k})$-accurate estimates of $k$ $(\nicefrac{\varepsilon}{2k})$-approximately-heaviest Fourier coefficients of $\phi$. 
    Thus, by \Cref{lemma:Fourier-sparse-learning-via-heaviest-Fourier-coefficients}, the randomized hypothesis produced by $V$ is as desired.
    Note that this last part of our reasoning only relied on $V$ not rejecting in Steps 3 or 4, but was independent of the action of the quantum prover. Therefore, with this we have already established the desired soundness.

    It remains to bound the sample and time complexities for $V$ and $P$.
    \Cref{theorem:functional-agnostic-qsq-GL} directly tells us that $P$ uses $\mathcal{O}\left(\tfrac{n}{\vartheta^2}\right)$ QSQs of tolerance $\nicefrac{\tilde\varepsilon^2}{8}$ for observables that can be implemented with $\mathcal{O}(n)$ single-qubit gates as well as a classical memory of size $\tilde{\mathcal{O}}\left(\tfrac{n^2}{\vartheta^2}\right)$, and classical running time $\tilde{\mathcal{O}}\left(\tfrac{n}{\vartheta^2}\right)$.
    The classical SQ complexity of $V$ is $\mathcal{O}\left(\lvert L\rvert\right)\leq \mathcal{O}\left(\tfrac{1}{\vartheta^2}\right)$, where each SQ is with tolerance $\nicefrac{\varepsilon^2}{64 k^2 \lvert L\rvert} \geq \nicefrac{(\varepsilon^2 \vartheta^2)}{64 k^2}$. 
    The classical running time of $V$ is $\mathcal{O}\left(n\lvert L\rvert\right)\leq \mathcal{O}\left(\tfrac{n}{\vartheta^2}\right)$ for Step 3. 
    The classical memory used by $V$ for Step 3 is of size $\tilde{\mathcal{O}}\left(n\lvert L\rvert\right)\leq \tilde{\mathcal{O}}\left(\tfrac{n}{\vartheta^2}\right)$.
    The computational cost and memory required for Step 4 are negligible in comparison to those of Step 3.
    Finally, our protocol clearly requires only a single round of communication, and the communicated object is $L$, which consists of at most $\nicefrac{4}{\vartheta^2}$ strings of $n$ bits.
    This finishes the proof.
\end{proof}

\begin{theorem}\label{theorem:functional-agnostic-quantum-fourier-sparse-verification-no-small-non-zero-Fourier-coeff}
    Let $\delta, \varepsilon,\vartheta\in (0,1)$.
    The class of Fourier-$k$-sparse functions on $n$ bits is efficiently improper $2$-agnostic verifiable w.r.t.~$\mathfrak{D}_{ \mathcal{U}_n; \geq \vartheta}^{\mathrm{func}}$ by a classical verifier $V$ with access to classical random examples interacting with a quantum prover $P$ with access to quantum superposition examples.
    There is a verifier-prover pair $(V,P)$ such that $P$ uses $\mathcal{O}\left(\tfrac{\log(\nicefrac{1}{\delta\vartheta^2})}{\vartheta^4}\right)$ copies of $\ket{\psi_{(\mathcal{U}_n, f)}}$ , $\mathcal{O}\left(n\tfrac{\log(\nicefrac{1}{\delta\vartheta^2})}{\vartheta^4}\right)$ single-qubit gates, a classical memory of size $\tilde{\mathcal{O}}\left(n\tfrac{\log(\nicefrac{1}{\delta\vartheta^2})}{\vartheta^4}\right)$, and classical running time $\tilde{\mathcal{O}}\left(n\tfrac{\log(\nicefrac{1}{\delta \vartheta^2})}{\vartheta^4}\right)$, and such that $V$ uses $\mathcal{O}\left(\tfrac{k^4\log(\nicefrac{1}{\delta\vartheta^2})}{\varepsilon^4 \vartheta^4}\right)$ classical random examples, $\tilde{\mathcal{O}}\left(n\tfrac{k^4\log(\nicefrac{1}{\delta\vartheta^2})}{\varepsilon^4 \vartheta^4}\right)$ classical running time, and a classical memory of size $\tilde{\mathcal{O}}\left(n\tfrac{k^4\log(\nicefrac{1}{\delta\vartheta^2})}{\varepsilon^4 \vartheta^4}\right)$.
    Moreover, this can be achieved by a pair $(V,P)$ that uses only a single round of communication consisting of at most $\mathcal{O}\left(\tfrac{n}{\vartheta^2}\right)$ classical bits.
\end{theorem}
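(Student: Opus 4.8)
The plan is to run exactly the protocol from the proof of \Cref{theorem:functional-agnostic-quantum-fourier-sparse-verification-qsq-no-small-non-zero-Fourier-coeff}, but with the prover's QSQ-based Goldreich--Levin step replaced by the quantum-Fourier-sampling-based spectrum approximation of \Cref{corollary:quantum-approximation-fourier-spectrum}, and the verifier's SQ-based coefficient estimates replaced by empirical estimates from classical random examples (just as \Cref{theorem:functional-agnostic-quantum-parity-verification-no-small-non-zero-Fourier-coeff} did for the parity case). Concretely: the honest prover $P$ uses $\mathcal{O}(\vartheta^{-4}\log(1/(\delta\vartheta^2)))$ copies of $\ket{\psi_{(\mathcal{U}_n,f)}}$ to produce, with probability $\geq 1-\delta/2$, a succinct $\tilde g:\mathcal{X}_n\to[-1,1]$ with $\norm{\tilde g-\hat g}_\infty\leq\vartheta/2$ and $\norm{\tilde g}_0\leq 16/\vartheta^2$; if the sparsity bound is violated $P$ aborts, otherwise $P$ sends $L=\{s:\lvert\tilde g(s)\rvert\geq\vartheta/2\}$. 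The verifier $V$ rejects if $\lvert L\rvert$ exceeds the appropriate $\mathcal{O}(\vartheta^{-2})$ bound; otherwise it draws $\mathcal{O}(k^4\lvert L\rvert^2\log(\lvert L\rvert/\delta)/\varepsilon^4)$ random examples and, via Hoeffding plus a union bound over $L$, gets with probability $\geq 1-\delta/2$ estimates $\hat\gamma(s)$ of $\hat g(s)$ for $s\in L$ accurate to $\varepsilon^2/(64k^2\lvert L\rvert)$ (and $\hat\gamma(t)=0$ for $t\notin L$). Finally $V$ checks whether $\sum_\ell\hat\gamma(s_\ell)^2\geq 1-\varepsilon^2/(32k^2)$: if so it outputs the randomized Fourier-$k$-sparse hypothesis assembled from the $k$ heaviest $\hat\gamma$-coefficients as in \Cref{lemma:Fourier-sparse-learning-via-heaviest-Fourier-coefficients}, and otherwise outputs $\mathrm{reject}$.

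Completeness and soundness then follow the QSQ proof essentially verbatim, modulo two routine union bounds over the failure events above. For completeness, $\mathcal{D}\in\mathfrak{D}^{\mathrm{func}}_{\mathcal{U}_n;\geq\vartheta}$ forces every nonzero Fourier coefficient of $g=(-1)^f$ to have magnitude $\geq\vartheta$, so $\norm{\tilde g-\hat g}_\infty\leq\vartheta/2$ gives $\{s:\hat g(s)\neq 0\}\subseteq L$ and hence $\sum_\ell\hat g(s_\ell)^2=1$ by Parseval; $2$-Lipschitzness of $\xi\mapsto\xi^2$ on $[-1,1]$ together with the estimation accuracy yields $\sum_\ell\hat\gamma(s_\ell)^2\geq 1-2\lvert L\rvert\cdot\varepsilon^2/(64k^2\lvert L\rvert)=1-\varepsilon^2/(32k^2)$, so $V$ accepts and outputs a hypothesis. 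For soundness, whenever $V$ does not reject --- whatever the (possibly unbounded) dishonest prover did --- the accepting threshold forces $\sum_{t\notin L}\hat g(t)^2=1-\sum_{t\in L}\hat g(t)^2\leq\varepsilon^2/(16k^2)$, hence $\max_{t\notin L}\lvert\hat g(t)\rvert\leq\varepsilon/(4k)$; combining this with the estimation accuracy and \Cref{lemma:technical} shows that the $k$ heaviest strings of $L$ with their $\hat\gamma$-values form $(\varepsilon/(2k))$-accurate estimates of $k$ $(\varepsilon/(2k))$-approximately-heaviest Fourier coefficients of $\phi=g$, so \Cref{lemma:Fourier-sparse-learning-via-heaviest-Fourier-coefficients} certifies the claimed $2$-agnostic guarantee. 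The resource bounds are then read off: $P$'s complexities come directly from \Cref{corollary:quantum-approximation-fourier-spectrum} with accuracy $\vartheta/2$; $V$'s sample complexity is $\mathcal{O}(k^4\lvert L\rvert^2\log(\lvert L\rvert/\delta)/\varepsilon^4)\leq\mathcal{O}(k^4\vartheta^{-4}\log(1/(\delta\vartheta^2))/\varepsilon^4)$ after substituting $\lvert L\rvert\leq 16/\vartheta^2$, with running time and memory a factor $\tilde{\mathcal{O}}(n)$ larger, and the protocol is one round communicating the $\mathcal{O}(n/\vartheta^2)$ bits of $L$.

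I do not expect a genuine conceptual obstacle here, since all the heavy machinery --- quantum Fourier sampling from superposition examples, the DKW-based succinct spectrum approximation, the technical lemma on approximately-heaviest coefficients, and the Fourier-sparse-learning-from-heavy-coefficients lemma --- is already established; the only real care needed is the bookkeeping. Specifically, the main points to get right are (i) checking that the soundness argument genuinely uses nothing about the prover beyond the fact that $V$'s two internal checks passed, so that it applies against arbitrary $P'$, and (ii) tracking the three constant-level slacks ($\vartheta/2$ in the prover's spectrum accuracy, $\varepsilon^2/(64k^2\lvert L\rvert)$ in the verifier's estimation tolerance, and the $1-\varepsilon^2/(32k^2)$ acceptance threshold) so that they combine to the $\varepsilon/(2k)$ precision required by \Cref{lemma:Fourier-sparse-learning-via-heaviest-Fourier-coefficients} and to the final additive $\varepsilon$. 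Both are straightforward adaptations of the corresponding steps already carried out in the proofs of \Cref{theorem:functional-agnostic-quantum-parity-verification-no-small-non-zero-Fourier-coeff} and \Cref{theorem:functional-agnostic-quantum-fourier-sparse-verification-qsq-no-small-non-zero-Fourier-coeff}.
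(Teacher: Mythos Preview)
Your proposal is correct and follows essentially the same approach as the paper: the paper's proof simply states that the result is obtained by modifying the proof of \Cref{theorem:functional-agnostic-quantum-fourier-sparse-verification-qsq-no-small-non-zero-Fourier-coeff} analogously to how one passes from \Cref{theorem:functional-agnostic-quantum-parity-verification-qsq-no-small-non-zero-Fourier-coeff} to \Cref{theorem:functional-agnostic-quantum-parity-verification-no-small-non-zero-Fourier-coeff}, which is exactly the substitution (prover's QSQ-GL $\to$ \Cref{corollary:quantum-approximation-fourier-spectrum}; verifier's SQs $\to$ Hoeffding-based empirical estimates, with the requisite union bounds) that you spell out.
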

\begin{proof}
    This result can be proved by modifying the proof of \Cref{theorem:functional-agnostic-quantum-fourier-sparse-verification-qsq-no-small-non-zero-Fourier-coeff} analogously to how we got from \cref{theorem:functional-agnostic-quantum-parity-verification-qsq-no-small-non-zero-Fourier-coeff} to \Cref{theorem:functional-agnostic-quantum-parity-verification-no-small-non-zero-Fourier-coeff}.
\end{proof}

Before moving beyond the functional agnostic setting, we present our second approach towards classical verification of quantum learning.
In contrast to the first approach discussed above, which also worked with an SQ verifier and a QSQ prover, we currently only know how to execute this second strategy if the classical verifier has random example access and the quantum prover has superposition (or, equivalently for the functional case, mixture-of-superpositions) example access.

\begin{proposition}\label{proposition:functional-agnostic-quantum-fourier-sparse-verification-second-approach}
    Let $\delta, \varepsilon,\vartheta\in (0,1)$.
    The class of Fourier-$k$-sparse functions on $n$ bits is efficiently improper $2$-agnostic verifiable w.r.t.~$\mathfrak{D}_{ \mathcal{U}_n; \geq \vartheta}^{\mathrm{func}}$ by a classical verifier $V$ with access to classical random examples interacting with a quantum prover $P$ with access to quantum superposition examples.
    There is a verifier-prover pair $(V,P)$ such that $P$ uses $\mathcal{O}\left(\poly(\nicefrac{1}{\vartheta}, \log(\nicefrac{1}{\delta}))\right)$ copies of $\ket{\psi_{(\mathcal{U}_n, f)}}$ , $\mathcal{O}\left(n\cdot \poly(\nicefrac{1}{\vartheta}, \log(\nicefrac{1}{\delta}))\right)$ single-qubit gates, and classical computation time $\mathcal{O}\left(\poly(n, \nicefrac{1}{\vartheta}, \nicefrac{1}{\varepsilon}, \log(\nicefrac{1}{\delta}))\right)$, and such that $V$ uses $\mathcal{O}\left(\poly(n, k, \nicefrac{1}{\varepsilon}, \log(\nicefrac{1}{\delta}))\right)$ classical random examples and $\mathcal{O}\left(\poly(n, k, \nicefrac{1}{\varepsilon}, \log(\nicefrac{1}{\delta}))\right)$ classical running time.
\end{proposition}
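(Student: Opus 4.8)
The plan is to use the interactive Goldreich--Levin protocol of \cite{goldwasser2021interactive} essentially as a black box, with the quantum prover stepping into the role of the classical membership-query prover that protocol was designed for. Recall that \cite{goldwasser2021interactive} shows the class of Fourier-$k$-sparse functions to be efficiently improper $2$-agnostic verifiable with respect to \emph{all} functional distributions $(\mathcal{U}_n, f)$, $f:\mathcal{X}_n\to\{0,1\}$, by a classical verifier $V_{\mathrm{GRV}}$ with random-example access interacting with a classical prover $P_{\mathrm{GRV}}$ that has membership-query access to $f$, where moreover the honest $P_{\mathrm{GRV}}$ runs in polynomial time given that membership-query access. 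So the only new ingredients I would need are: (i) under the promise $\mathcal{D}\in\mathfrak{D}_{\mathcal{U}_n;\geq\vartheta}^{\mathrm{func}}$, an honest \emph{quantum} prover with superposition-example access can exactly emulate a classical membership-query oracle for $f$; and (ii) substituting this emulation into the protocol of \cite{goldwasser2021interactive} preserves completeness, soundness, and efficiency.

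For (i), I would first invoke \Cref{lemma:no-small-nonzero-Fourier-coeff-versus-sparsity} to see that $\mathcal{D}\in\mathfrak{D}_{\mathcal{U}_n;\geq\vartheta}^{\mathrm{func}}$ forces $g=(-1)^f$ to be Fourier-$\lfloor\nicefrac{1}{\vartheta^2}\rfloor$-sparse. The honest quantum prover then runs the exact realizable Fourier-sparse quantum learner of \Cref{corollary:functional-exact-fourier-sparse-learning} (equivalently: it runs \Cref{corollary:quantum-approximation-fourier-spectrum} to resolve the Fourier spectrum of $g$ to $\infty$-norm accuracy below half the granularity $\Theta(\vartheta^2)$ of Fourier-$\lfloor\nicefrac{1}{\vartheta^2}\rfloor$-sparse Boolean functions guaranteed by \cite{gopalan2011testing}, and rounds each coefficient estimate to the nearest granular value). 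With $\poly(\nicefrac{1}{\vartheta}, \log(\nicefrac{1}{\delta}))$ copies of $\ket{\psi_{(\mathcal{U}_n, f)}}$, $n\cdot\poly(\nicefrac{1}{\vartheta}, \log(\nicefrac{1}{\delta}))$ single-qubit gates, and polynomial classical post-processing, this recovers an explicit description of $f$ \emph{exactly} with success probability $\geq 1-\nicefrac{\delta}{2}$; from there the prover answers any membership query $x\mapsto f(x)$ in classical time $\poly(n, \nicefrac{1}{\vartheta})$, i.e.\ it perfectly simulates the oracle $\mathsf{MQ}(\mathcal{D})$.

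For (ii), I would take $V = V_{\mathrm{GRV}}$ verbatim and let the honest quantum $P$ first execute the reconstruction step and then play the strategy of $P_{\mathrm{GRV}}$, answering the verifier's membership queries from its reconstructed $f$. Completeness follows by a union bound over the $\nicefrac{\delta}{2}$ failure probability of the reconstruction and the $\nicefrac{\delta}{2}$ failure probability of the completeness guarantee of \cite{goldwasser2021interactive}. Soundness is the conceptual point, and it comes for free: the protocol of \cite{goldwasser2021interactive} communicates only classically, so any transcript distribution that an arbitrary --- possibly quantum, possibly computationally unbounded --- dishonest prover $P'$ can induce is also inducible by an unbounded classical dishonest prover, against which soundness already holds; this is precisely the generality built into \Cref{definition:interactive-verification-of-learning}. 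The complexity bounds are then inherited: $V$ keeps $V_{\mathrm{GRV}}$'s $\poly(n, k, \nicefrac{1}{\varepsilon}, \log(\nicefrac{1}{\delta}))$ random-example and time complexity, while $P$ adds to $P_{\mathrm{GRV}}$'s cost only the one-time reconstruction together with $\poly(n, \nicefrac{1}{\vartheta})$ classical time per membership query answered. (In contrast to the first verification approach of \Cref{theorem:functional-agnostic-quantum-parity-verification-qsq-no-small-non-zero-Fourier-coeff,theorem:functional-agnostic-quantum-fourier-sparse-verification-qsq-no-small-non-zero-Fourier-coeff}, this route genuinely seems to need example access --- not just SQ access --- on both sides, since the prover must respond to pointwise membership queries.)

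I expect the main obstacle to be bookkeeping rather than mathematics: one has to extract from \cite{goldwasser2021interactive} a clean statement that their interactive Goldreich--Levin / Kushilevitz--Mansour protocol (a) targets exactly the Fourier-$k$-sparse benchmark with an improper $2$-agnostic guarantee, (b) uses only membership queries on the prover side and random examples on the verifier side, (c) has an efficient honest prover, and (d) is sound against arbitrary provers --- it is (d) that makes the swap of a quantum prover for the classical one costless in soundness. The one genuinely mathematical subtlety is the exact-reconstruction step in (i): the protocol needs correct and consistent membership-query answers for completeness, so merely an $\infty$-norm-approximate recovery of $f$'s Fourier spectrum does not suffice, and I would have to use the no-small-nonzero-coefficient promise (via \Cref{lemma:no-small-nonzero-Fourier-coeff-versus-sparsity} and the granularity of Fourier-sparse Boolean functions) to upgrade the approximate spectrum estimate of \Cref{corollary:quantum-approximation-fourier-spectrum} to exact identification of $f$.
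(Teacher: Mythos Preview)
Your proposal is correct and follows essentially the same approach as the paper: exactly learn $f$ from quantum superposition examples using the Fourier-sparsity promise from \Cref{lemma:no-small-nonzero-Fourier-coeff-versus-sparsity}, then have the quantum prover simulate a membership-query oracle inside the interactive Goldreich--Levin protocol of \cite{goldwasser2021interactive}. The only minor difference is that the paper invokes the exact-learning result of \cite{arunachalam2021twonewresults} for the reconstruction step, whereas you use the paper's own \Cref{corollary:functional-exact-fourier-sparse-learning} (equivalently, \Cref{corollary:quantum-approximation-fourier-spectrum} plus granularity rounding); both give $\poly(\nicefrac{1}{\vartheta},\log(\nicefrac{1}{\delta}))$ sample complexity and hence suffice for the stated bounds.
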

\begin{proof}
    Let $\delta,\varepsilon\in (0,1)$.
    Let $\mathcal{D}\in \mathfrak{D}_{ \mathcal{U}_n; \geq \vartheta}^{\rm func}$.
    By \Cref{lemma:no-small-nonzero-Fourier-coeff-versus-sparsity}, there exists a function $f:\mathcal{X}_n\to\{0,1\}$ such that $g=(-1)^f$ is Fourier $(\tfrac{1}{\vartheta^2})$-sparse and $\mathcal{D} = (\mathcal{U}_n,f)$.
    Therefore, according to the exact learning result of \cite{arunachalam2021twonewresults}, an honest quantum prover can use $\mathcal{O}\left(\poly(\nicefrac{1}{\vartheta}, \log(\nicefrac{1}{\delta}))\right)$ copies of $\ket{\psi_{(\mathcal{U}_n, f)}}$ , $\mathcal{O}\left(n\cdot\poly(\nicefrac{1}{\vartheta}, \log(\nicefrac{1}{\delta}))\right)$ single-qubit gates, and classical computation time $\mathcal{O}\left(\poly(n, \nicefrac{1}{\vartheta}, \nicefrac{1}{\varepsilon}, \log(\nicefrac{1}{\delta}))\right)$
    to obtain, with success probability $\geq 1-\tfrac{\delta}{2}$, a succinct representation of $f$ and thus act as a membership query oracle for $f$.
    Therefore, we can use the interactive Goldreich-Levin algorithm of \cite[Lemma 8]{goldwasser2021interactive}, followed by our by now familiar analysis proving that sufficiently accurate approximations to sufficiently many of the heaviest Fourier coefficients suffice for $2$-agnostic Fourier-sparse learning.
\end{proof}

Some comments about the two approaches are in order.
First, we clarify that, while \cite{goldwasser2021interactive} proved interactive verification of $1$-agnostic Fourier-sparse learning in the classical case, a closer inspection of \cite{goldwasser2020interactive-full-version} shows that this is for the case in which performance is measured according to the $L_2$-error. (With this performance measure, also our first approach achieves $1$-agnostic Fourier-sparse verification.)
If we instead focus the misclassification probability, then \cite{goldwasser2021interactive} does not give $1$-agnostic interactive verification guarantees. However, the above proof implies that their approach gives rise to a $2$-agnostic Fourier-sparse verification of learning scheme in this setting. Moreover, based on our previous analyses, \cite{goldwasser2021interactive} also gives a $1$-agnostic interactive verification scheme for parities w.r.t.~the misclassification probability.

Second, we highlight an advantage of the second classical verification approach over the first one. Namely, in this second approach, only the prover needs to know the promised value $\vartheta$, whereas in the first approach both prover and verifier make use of the knowledge of $\vartheta$.
The verifier does not need the promise $\mathcal{D}\in \mathfrak{D}_{ \mathcal{U}_n; \geq \vartheta}$ in the second approach because the interactive GL procedure of \cite{goldwasser2021interactive} is designed to work without any assumptions on the unknown labeling function. Thus, it is conceivable that this second approach can be further improved by modifying the interactive GL procedure to take $\mathcal{D}\in \mathfrak{D}_{ \mathcal{U}_n; \geq \vartheta}$ into account.
However, such a modification would likely still lead to a verifier with $n$-dependent classical sample complexity, whereas the sample complexity for the verifier in the first approach is $n$-independent.

\subsection{Verifying Noisy Functional Agnostic Quantum Learning} 

In \Cref{subsection:noisy-functional-agnostic} we have seen that quantum data, in contrast to its classical counterpart, can be powerful even for learning from noisy data. Therefore, we next consider extensions of the verification results from the previous subsection to the noisy setting. For this, we use the following noisy variant of \Cref{definition:functional-no-small-fourier-coefficients}, which considers noisy versions of elements of $\mathfrak{D}_{ \mathcal{U}_n; \geq \varepsilon}^{\mathrm{func}}$:

\begin{definition}[Noisy functional distributions with no small non-zero Fourier coefficients]\label{{definition:noisy-functional-no-small-fourier-coefficients}}
    Let $\vartheta\in (0,1)$. Let $0\leq \eta <\nicefrac{1}{2}$.
    We denote the class of probability distributions $\mathcal{D} = (\mathcal{U}_n, \varphi)$ over $\mathcal{X}_n\times\{0,1\}$ that have a uniform marginal over $\mathcal{X}_n$ and whose $\{-1,1\}$-label is given as $\eta$-noisy version of deterministic function $g=(-1)^f$ that has no non-zero Fourier coefficients of magnitude $<\vartheta$ by 
    \begin{equation}
        \mathfrak{D}_{ \mathcal{U}_n; \geq \vartheta}^{\mathrm{func},\eta}
        \coloneqq \left\{ (\mathcal{U}_n, \varphi)~|~ \exists f:\mathcal{X}_n\to\{0,1\} \textrm{ s.t. } (\varphi = \eta + (1-2\eta)f)  \wedge ((\mathcal{U}_n,f)\in \mathfrak{D}_{ \mathcal{U}_n; \geq \vartheta}^{\mathrm{func}}) \right\} .
    \end{equation}
\end{definition}

If both the verifier $V$ and the honest quantum prover $P$ know the noise rate $\eta$ in advance, then our approach towards verified learning via checking accumulated Fourier weight can be applied. In fact, this works quite similarly to what we describe in \Cref{subsection:verification-functional}, with some small modifications, depending on which version of noisy quantum example from \Cref{definition:noisy-functional-quantum-examples} is used and relying on the corresponding noisy version of \Cref{corollary:quantum-approximation-fourier-spectrum} (compare the discussion in \Cref{subsection:noisy-functional-agnostic}) .
Doing so in particular relies on the fact that in the noise case $\phi = (1-2\eta) g$, so, knowing $\eta$, $V$ can estimate $\hat{g}$ via $\hat{\phi}$ from classical noisy SQs or noisy random examples.

In this noisy setting, we immediately obtain computational hardness of the learning task for the classical verifier without the help of the quantum prover conditioned on classical hardness of LPN.
Importantly, in contrast to the results in \Cref{subsection:verification-functional}, this gives us a separation between a lone classical learner and an interacting pair of classical verifier and quantum prover even if the classical resource are (noisy) random examples.
This already is the random example part of \Cref{theorem:main-result-verification} (a).

\subsection{Verifying Distributional Agnostic Quantum Learning} 

Finally, we discuss classical verification of quantum learning for the distributional agnostic case. 
The results in this subsection serve to fully establish \Cref{theorem:main-result-verification} when focusing on the regime $\varepsilon, \vartheta\geq\Omega(\nicefrac{1}{\poly (n)})$ and $\delta\geq\Omega(\nicefrac{
1}{\exp (n)})$.
Again, we require a suitable variant of \Cref{definition:functional-no-small-fourier-coefficients}.

\begin{definition}[Distributions with no small non-zero Fourier coefficients]\label{definition:distributional-no-small-fourier-coefficients}
    Let $\vartheta\in (0,1)$.
    We denote the class of probability distributions $\mathcal{D} = (\mathcal{U}_n, \varphi)$ over $\mathcal{X}_n\times\{0,1\}$ that have a uniform marginal over $\mathcal{X}_n$ and whose $\{-1,1\}$-label expectation $\phi$ has no non-zero Fourier coefficients of magnitude $<\vartheta$ by 
    \begin{equation}
        \mathfrak{D}_{ \mathcal{U}_n; \geq \vartheta}
        \coloneqq \left\{ (\mathcal{U}_n, \varphi)~|~ \hat{\phi}\neq 0~\Rightarrow \lvert \hat{\phi}\rvert \geq \vartheta \right\} .
    \end{equation}
\end{definition}

Directly from the definitions we get the inclusions $\mathfrak{D}_{ \mathcal{U}_n; \geq \varepsilon}\supset \mathfrak{D}_{ \mathcal{U}_n; \geq \varepsilon}^{\mathrm{func}}$ as well as $\mathfrak{D}_{ \mathcal{U}_n; \geq \varepsilon}\supset \mathfrak{D}_{ \mathcal{U}_n; \geq \nicefrac{\varepsilon}{(1-2\eta)}}^{\mathrm{func},\eta}$.
Noisy parity distributions are even contained in $\mathfrak{D}_{ \mathcal{U}_n; \geq (1-2\eta)}$.
So, when considering learning problems under the promise $\mathcal{D}\in \mathfrak{D}_{ \mathcal{U}_n; \geq \varepsilon}$, we in particular still allow scenarios in which the unknown distribution is a (noisy) parity or Fourier-sparse function.
However, for the distributional agnostic setting considered in this subsection, we rely on the following additional assumption.

\begin{definition}[Distributions with $L_2$-bounded bias]\label{definition:l2-bounded-bias}
    Let $0\leq a \leq b\leq 1$.
    We denote the class of probability distributions $\mathcal{D} = (\mathcal{U}_n, \varphi)$ over $\mathcal{X}_n\times\{0,1\}$ that have a uniform marginal over $\mathcal{X}_n$ and whose $\{-1,1\}$-label expectation $\phi$ has squared $L_2$ norm in $[a^2,b^2]$ by
    \begin{equation}
        \mathfrak{D}_{ \mathcal{U}_n; [a^2,b^2]}
        \coloneqq\left\{(\mathcal{U}_n, \varphi)~|~ \mathbb{E}_{x\sim\mathcal{U}_n}[(\phi(x))^2]\in [a^2,b^2]\right\} .
    \end{equation}
\end{definition}

The motivation behind \Cref{definition:l2-bounded-bias} is as follows: In our verification protocol, the verifier checks whether the prover has provided a list with sufficient accumulated Fourier weight. A promise as in \Cref{definition:l2-bounded-bias} ensures that the verifier knows what ``sufficient'' means. Without such a promise, the total Fourier weight in the distributional agnostic case may take any value between $0$ and $1$.
Fortunately, even with an added promise of this form, we still generalize beyond the noiseless and noisy functional agnostic cases. Namely, the noiseless functional case comes with the strong promise of $\mathcal{D}\in \mathfrak{D}_{ \mathcal{U}_n; [a^2, b^2]}$ for $a=b=1$, and for the noisy functional case we can take $a=b=(1-2\eta)$. 

With the relevant definitions established, we now state the distributional agnostic versions of classical verification for quantum parity learning.

\begin{theorem}\label{theorem:distributional-agnostic-qsq-parity-verification-no-small-non-zero-Fourier-coeff}
    Let $\vartheta\in (0,1)$.
    Let $0\leq a \leq b\leq 1$.
    Let $\varepsilon\geq 2\sqrt{b^2-a^2}$.
    The class of $n$-bit parities is efficiently proper $1$-agnostic verifiable w.r.t.~$\mathfrak{D}_{ \mathcal{U}_n; \geq \vartheta} \cap \mathfrak{D}_{ \mathcal{U}_n; [a^2,b^2]}$ by a classical verifier $V$ with access to classical SQs interacting with a quantum prover $P$ with distributional QSQ access.
    There is a verifier-prover pair $(V,P)$ such that $P$ uses $\mathcal{O}\left(\tfrac{n}{\vartheta^2}\right)$ distributional QSQs of tolerance at worst $\nicefrac{\vartheta}{8}$ for observables that can be implemented with $\mathcal{O}(n)$ single-qubit gates, a classical memory of size $\tilde{\mathcal{O}}\left(\tfrac{n^2}{\vartheta^2}\right)$, and classical running time $\tilde{\mathcal{O}}\left(\tfrac{n}{\vartheta^2}\right)$, and such that $V$ uses $\mathcal{O}\left(\tfrac{b^2}{\vartheta^2}\right)$ classical SQs of tolerance at worst $\nicefrac{(\varepsilon^2 \vartheta^2)}{16}$, $\tilde{\mathcal{O}}\left(\tfrac{n b^2}{\vartheta^2}\right)$ classical running time, and a classical memory of size $\tilde{\mathcal{O}}\left(\tfrac{n b^2}{\vartheta^2}\right)$.
    Moreover, this can be achieved by a pair $(V,P)$ that uses only a single round of communication consisting of at most $\mathcal{O}\left(\tfrac{n }{\vartheta^2}\right)$ classical bits.
\end{theorem}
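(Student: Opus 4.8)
The plan is to follow the blueprint of \Cref{theorem:functional-agnostic-quantum-parity-verification-qsq-no-small-non-zero-Fourier-coeff}, replacing the deterministic label sign $g=(-1)^f$ by the conditional $\{-1,1\}$-label expectation $\phi=1-2\varphi$ and the functional Goldreich--Levin routine by its distributional counterpart \Cref{theorem:distributional-agnostic-qsq-GL}. The protocol is: $(1)$ $V$ asks $P$ for a list $L=\{s_1,\dots,s_{|L|}\}\subseteq\{0,1\}^n$ of pairwise distinct strings carrying non-zero Fourier coefficients of $\phi$, of length at most $4b^2/\vartheta^2$; $(2)$ the honest $P$ runs the distributional QSQ Goldreich--Levin procedure of \Cref{theorem:distributional-agnostic-qsq-GL} with accuracy parameter $\vartheta$, obtaining $L$ with $|\hat\phi(s)|\ge\vartheta\Rightarrow s\in L$, with $s\in L\Rightarrow|\hat\phi(s)|\ge\vartheta/2$, and with $|L|\le 4\,\mathbb{E}_{x\sim\mathcal{U}_n}[(\phi(x))^2]/\vartheta^2\le 4b^2/\vartheta^2$, and sends it; $(3)$ $V$ rejects if the received list is too long, else uses $|L|$ classical SQs of tolerance $\Theta(\varepsilon^2\vartheta^2)$ to obtain estimates $\hat\gamma(s)$ with $|\hat\gamma(s)-\hat\phi(s)|$ small for $s\in L$ (and sets $\hat\gamma(t)=0$ for $t\notin L$); $(4)$ $V$ accepts iff the accumulated estimated squared weight $\sum_{\ell}\hat\gamma(s_\ell)^2$ exceeds a threshold $\theta$ of the form $b^2-\Theta(\varepsilon^2)$, in which case it outputs the parity $\chi_{s_{\mathrm{out}}}$ for $s_{\mathrm{out}}\in\operatorname{argmax}_{t\in\{0,1\}^n}\hat\gamma(t)$ (which lies in $L$ unless every estimate on $L$ is negative, in which case the constant parity $0^n$ serves as a harmless fallback, cf.\ the corner case $\phi=-\vartheta\chi_s$).

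For completeness I would argue: by the promise $\mathcal{D}\in\mathfrak{D}_{\mathcal{U}_n;\ge\vartheta}$ every non-zero $\hat\phi(s)$ has magnitude $\ge\vartheta$, so property (i) of the Goldreich--Levin output forces $\operatorname{supp}(\hat\phi)\subseteq L$; hence by Parseval and the promise $\mathcal{D}\in\mathfrak{D}_{\mathcal{U}_n;[a^2,b^2]}$ we have $\sum_{\ell}\hat\phi(s_\ell)^2=\mathbb{E}_{x\sim\mathcal{U}_n}[(\phi(x))^2]\ge a^2$, and the SQ error degrades this by only $O(\varepsilon^2)$, so the accept test passes as long as $a^2-O(\varepsilon^2)\ge b^2-\Theta(\varepsilon^2)$ --- which is exactly where the hypothesis $\varepsilon\ge 2\sqrt{b^2-a^2}$ (equivalently $b^2-a^2\le\varepsilon^2/4$) enters, leaving a positive budget for the SQ-tolerance terms. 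For soundness against an arbitrary, possibly unbounded prover $P'$, I would argue purely from the accept condition: if $V$ accepts then $\sum_{s\in L}\hat\phi(s)^2\ge\theta-O(\varepsilon^2)$, so Parseval together with $\mathbb{E}_{x\sim\mathcal{U}_n}[(\phi(x))^2]\le b^2$ gives $\sum_{t\notin L}\hat\phi(t)^2\le b^2-\theta+O(\varepsilon^2)\le\varepsilon^2$, whence $|\hat\phi(t)|\le\varepsilon$ for every $t\notin L$; combining this with the accuracy of $\hat\gamma$ on $L$, the defining maximality of $s_{\mathrm{out}}$, and the $2$-Lipschitzness of $\xi\mapsto\xi^2$ --- via the same case analysis (argmax inside vs.\ outside $L$) as in \Cref{theorem:functional-agnostic-quantum-parity-verification-qsq-no-small-non-zero-Fourier-coeff} --- shows $\max_{t}\hat\phi(t)-\hat\phi(s_{\mathrm{out}})\le 2\varepsilon$. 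Since a short computation using $\mathbb{E}[(-1)^b\mid x]=\phi(x)$ for $(x,b)\sim\mathcal{D}$ gives $\mathrm{err}_{\mathcal{D}}(\chi_s)=\tfrac{1-\hat\phi(s)}{2}$ and $\mathrm{opt}_{\mathcal{D}}(\text{parities})=\min_s\mathrm{err}_{\mathcal{D}}(\chi_s)=\tfrac{1-\max_t\hat\phi(t)}{2}$ (cf.\ \Cref{lemma:parity-learning-via-heaviest-Fourier-coefficient}), this is precisely $\mathrm{err}_{\mathcal{D}}(\chi_{s_{\mathrm{out}}})\le\mathrm{opt}_{\mathcal{D}}(\text{parities})+\varepsilon$. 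As this reasoning never referred to $P'$'s behaviour, soundness holds against all provers, completeness having been shown for the honest $P$; and since the output is always a parity (or $\mathrm{reject}$), the verification is proper.

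Finally I would do the bookkeeping: $P$'s complexities --- $\mathcal{O}(n/\vartheta^2)$ distributional QSQs of tolerance $\mathcal{O}(\vartheta^2)$ for $\mathcal{O}(n)$-gate observables, classical memory $\tilde{\mathcal O}(n^2/\vartheta^2)$, classical time $\tilde{\mathcal O}(n/\vartheta^2)$ --- are inherited verbatim from \Cref{theorem:distributional-agnostic-qsq-GL} run with parameter $\vartheta$; $V$ makes $|L|=\mathcal{O}(b^2/\vartheta^2)$ SQs of tolerance $\Theta(\varepsilon^2\vartheta^2)$, spends $\tilde{\mathcal O}(nb^2/\vartheta^2)$ time and memory to process the list and the estimates, and the single communication round transmits $|L|$ strings of $n$ bits, i.e.\ $\mathcal{O}(n/\vartheta^2)$ bits. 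The main obstacle, and the only genuinely new ingredient over the (noiseless and noisy) functional cases, is that the total Fourier weight $W=\mathbb{E}_{x\sim\mathcal{U}_n}[(\phi(x))^2]$ is no longer pinned to $1$ (nor to $(1-2\eta)^2$) but only promised to lie in $[a^2,b^2]$; making the completeness threshold ``meetable'' and the soundness threshold ``safe'' simultaneously forces one to spend the non-negative gap $\varepsilon^2/4-(b^2-a^2)$ on the unavoidable SQ-estimation slack, which is why the hypothesis $\varepsilon\ge 2\sqrt{b^2-a^2}$ is needed and why the constants in $\theta$ and in the SQ tolerance must be chosen with some care.
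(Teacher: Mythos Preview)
Your proposal is correct and follows essentially the same approach as the paper: the honest prover runs the distributional QSQ Goldreich--Levin procedure of \Cref{theorem:distributional-agnostic-qsq-GL} at accuracy $\vartheta$ and sends the list, while the verifier estimates the listed Fourier coefficients with classical SQs, tests the accumulated squared weight against a threshold, and outputs the parity corresponding to the maximizing estimate. The only (inconsequential) difference is the placement of the threshold: the paper takes $\theta=a^2-\varepsilon^2/8$ and invokes the hypothesis $\varepsilon\ge 2\sqrt{b^2-a^2}$ in the \emph{soundness} step (bounding $\sum_{t\notin L}(\hat\phi(t))^2\le (b^2-a^2)+\varepsilon^2/4$ and then $|\hat\phi(t)|\le\sqrt{b^2-a^2}+\varepsilon/2\le\varepsilon$), whereas you take $\theta=b^2-\Theta(\varepsilon^2)$ and invoke the hypothesis in the \emph{completeness} step; since $b^2-a^2\le\varepsilon^2/4$, the two thresholds differ by $O(\varepsilon^2)$ and both choices work.
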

\begin{proof}
    This proof differs from that of \Cref{theorem:functional-agnostic-quantum-parity-verification-qsq-no-small-non-zero-Fourier-coeff} only by a few technical adjustments to the procedure and its analysis, and by replacing \Cref{theorem:functional-agnostic-qsq-GL} with \Cref{theorem:distributional-agnostic-qsq-GL}.
    We present the relevant technical adjustments in detail in the proof of \Cref{theorem:distributional-agnostic-quantum-parity-verification-no-small-non-zero-Fourier-coeff}.
\end{proof}

\begin{theorem}\label{theorem:distributional-agnostic-quantum-parity-verification-no-small-non-zero-Fourier-coeff}
    Let $\vartheta\in (2^{-(\tfrac{n}{2} - 3)},1)$. Let $0\leq a \leq b\leq 1$.
    Let $\delta\in (0,1)$ and $\varepsilon\geq 2\sqrt{b^2-a^2}$.
    The class of $n$-bit parities is efficiently proper $1$-agnostic verifiable w.r.t.~$\mathfrak{D}_{ \mathcal{U}_n; \geq \vartheta} \cap \mathfrak{D}_{ \mathcal{U}_n; [a^2,b^2]}$ by a classical verifier $V$ with access to classical random examples interacting with a quantum prover $P$ with access to mixture-of-superpositions quantum examples.
    There is a verifier-prover pair $(V,P)$ such that $P$ uses $\mathcal{O}\left(\tfrac{\log(\nicefrac{1}{\delta\vartheta^2})}{\vartheta^4}\right)$ copies of $\rho_{\mathcal{D}}$, $\mathcal{O}\left(n\tfrac{\log(\nicefrac{1}{\delta\vartheta^2})}{\vartheta^4}\right)$ single-qubit gates, a classical memory of size $\tilde{\mathcal{O}}\left(n\tfrac{\log(\nicefrac{1}{\delta\vartheta^2})}{\vartheta^4}\right)$, and classical running time $\tilde{\mathcal{O}}\left(n\tfrac{\log(\nicefrac{1}{\delta \vartheta^2})}{\vartheta^4}\right)$, and such that $V$ uses $\mathcal{O}\left(\tfrac{b^4\log(\nicefrac{1}{\delta\vartheta^2})}{\varepsilon^4 \vartheta^4}\right)$ classical random examples, $\tilde{\mathcal{O}}\left(n\tfrac{b^4\log(\nicefrac{1}{\delta\vartheta^2})}{\varepsilon^4 \vartheta^4}\right)$ classical running time, and a classical memory of size $\tilde{\mathcal{O}}\left(n\tfrac{b^4\log(\nicefrac{1}{\delta\vartheta^2})}{\varepsilon^4 \vartheta^4}\right)$.
    Moreover, this can be achieved by a pair $(V,P)$ that uses only a single round of communication consisting of at most $\mathcal{O}\left(\tfrac{n}{\vartheta^2}\right)$ classical bits.
\end{theorem}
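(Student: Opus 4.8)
The plan is to run the same list-and-check protocol as in the proof of \Cref{theorem:functional-agnostic-quantum-parity-verification-no-small-non-zero-Fourier-coeff}, replacing the functional Fourier-spectrum approximation of \Cref{corollary:quantum-approximation-fourier-spectrum} by its distributional analogue \Cref{corollary:distributional-agnostic-quantum-approximation-fourier-spectrum}, and recalibrating the verifier's acceptance test to the fact that in the distributional setting the total Fourier weight $\sum_{s}(\hat\phi(s))^2=\mathbb{E}_{x\sim\mathcal{U}_n}[(\phi(x))^2]$ is not fixed at $1$ but only promised to lie in $[a^2,b^2]$. Concretely: (i) $V$ asks $P$ for a list $L\subseteq\{0,1\}^n$ of strings carrying non-zero Fourier coefficients of $\phi$; (ii) the honest $P$ runs \Cref{corollary:distributional-agnostic-quantum-approximation-fourier-spectrum} on copies of $\rho_\mathcal{D}$ with accuracy parameter $\nicefrac{\vartheta}{2}$ and confidence $\nicefrac{\delta}{2}$ --- admissible precisely because $\vartheta>2^{-(\nicefrac{n}{2}-3)}$ guarantees $\nicefrac{\vartheta}{2}>2^{-(\nicefrac{n}{2}-2)}$ --- obtaining a succinct $\tilde\phi$ with $\|\tilde\phi-\hat\phi\|_\infty\le\nicefrac{\vartheta}{2}$ and $\|\tilde\phi\|_0\le\nicefrac{64\,\mathbb{E}_{x\sim\mathcal{U}_n}[(\phi(x))^2]}{\vartheta^2}\le\nicefrac{64b^2}{\vartheta^2}$, and sends $L=\{s:|\tilde\phi(s)|\ge\nicefrac{\vartheta}{2}\}$; (iii) $V$ rejects if $|L|>\nicefrac{64b^2}{\vartheta^2}$, and otherwise uses $\mathcal{O}(\nicefrac{|L|^2\log(\nicefrac{|L|}{\delta})}{\varepsilon^4})$ classical random examples to obtain, by Hoeffding and a union bound over $L$ with failure probability $\le\nicefrac{\delta}{2}$, estimates $\hat\gamma(s)$ with $|\hat\gamma(s)-\hat\phi(s)|\le\tau$ for all $s\in L$, where $\tau=\Theta(\nicefrac{\varepsilon^2}{|L|})$ is small enough that $2|L|\tau\le\nicefrac{\varepsilon^2}{16}$, and sets $\hat\gamma(t):=0$ for $t\notin L$ --- note that $V$ can estimate these coefficients directly from its examples since $\hat\phi(s)=\mathbb{E}_{(x,y)\sim\mathcal{D}}[(1-2y)\chi_s(x)]$ with $(1-2y)\chi_s(x)\in[-1,1]$; (iv) $V$ accepts iff $\sum_{s\in L}(\hat\gamma(s))^2\ge a^2-\nicefrac{\varepsilon^2}{8}$, in which case it outputs the parity $h(x)=s_{\mathrm{out}}\cdot x$ for some $s_{\mathrm{out}}\in\operatorname{argmax}_{t\in\{0,1\}^n}\hat\gamma(t)$, ties broken arbitrarily (in particular among the $t\notin L$, where $\hat\gamma(t)=0$).

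For completeness, on any $\mathcal{D}\in\mathfrak{D}_{\mathcal{U}_n;\ge\vartheta}\cap\mathfrak{D}_{\mathcal{U}_n;[a^2,b^2]}$ the honest list contains the whole Fourier support: if $\hat\phi(s)\ne0$ then $|\hat\phi(s)|\ge\vartheta$ by the no-small-coefficient promise, hence $|\tilde\phi(s)|\ge\nicefrac{\vartheta}{2}$ and $s\in L$. Therefore $\sum_{s\in L}(\hat\phi(s))^2=\mathbb{E}_{x\sim\mathcal{U}_n}[(\phi(x))^2]\ge a^2$, so $\sum_{s\in L}(\hat\gamma(s))^2\ge a^2-2|L|\tau\ge a^2-\nicefrac{\varepsilon^2}{16}$ and $V$ accepts; moreover $\|\hat\gamma-\hat\phi\|_\infty\le\tau$ (exact off $L$, $\tau$-accurate on $L$) forces $\max_t\hat\phi(t)-\hat\phi(s_{\mathrm{out}})\le2\tau\le2\varepsilon$, which by the reasoning behind \Cref{corollary:agnostic-quantum-parity-learning} (i.e.\ \Cref{lemma:parity-learning-via-heaviest-Fourier-coefficient}) gives $\operatorname{err}_{\mathcal{D}}(h)\le\operatorname{opt}_{\mathcal{D}}(\text{parities})+\varepsilon$. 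A union bound over the $\le\nicefrac{\delta}{2}$ failure probability of $P$'s spectrum approximation and the $\le\nicefrac{\delta}{2}$ failure probability of $V$'s estimates yields overall confidence $1-\delta$.

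For soundness, as in the proof of \Cref{theorem:functional-agnostic-quantum-parity-verification-qsq-no-small-non-zero-Fourier-coeff}, everything hinges on what acceptance implies and is independent of the prover's honesty. Whenever $V$ accepts, $\sum_{s\in L}(\hat\phi(s))^2\ge\sum_{s\in L}(\hat\gamma(s))^2-2|L|\tau\ge a^2-\nicefrac{\varepsilon^2}{8}-\nicefrac{\varepsilon^2}{16}$, so, combining $\sum_s(\hat\phi(s))^2\le b^2$ with the hypothesis $\varepsilon\ge2\sqrt{b^2-a^2}$ (equivalently $b^2-a^2\le\nicefrac{\varepsilon^2}{4}$), the off-list weight obeys $\sum_{t\notin L}(\hat\phi(t))^2\le(b^2-a^2)+\nicefrac{\varepsilon^2}{8}+\nicefrac{\varepsilon^2}{16}\le\nicefrac{\varepsilon^2}{2}$, hence $|\hat\phi(t)|\le\nicefrac{\varepsilon}{\sqrt{2}}$ for every $t\notin L$. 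Feeding this, together with the $\tau$-accuracy of $\hat\gamma$ on $L$, into the same chain of inequalities used in the proof of \Cref{theorem:functional-agnostic-quantum-parity-verification-qsq-no-small-non-zero-Fourier-coeff} --- a short case split on whether the true maximizer of $\hat\phi$ lies in $L$, where the convention $s_{\mathrm{out}}\in\operatorname{argmax}_{t\in\{0,1\}^n}\hat\gamma(t)$ disposes of the degenerate cases with vanishing or negative optimal coefficient --- yields $\max_t\hat\phi(t)-\hat\phi(s_{\mathrm{out}})\le2\varepsilon$ and thus $\operatorname{err}_{\mathcal{D}}(h)\le\operatorname{opt}_{\mathcal{D}}(\text{parities})+\varepsilon$. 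The stated complexities are then bookkeeping: $P$ inherits $\mathcal{O}(\nicefrac{\log(\nicefrac{1}{\delta\vartheta^2})}{\vartheta^4})$ copies of $\rho_\mathcal{D}$, $\mathcal{O}(\nicefrac{n\log(\nicefrac{1}{\delta\vartheta^2})}{\vartheta^4})$ single-qubit gates, and $\tilde{\mathcal{O}}(\nicefrac{n\log(\nicefrac{1}{\delta\vartheta^2})}{\vartheta^4})$ time and memory from \Cref{corollary:distributional-agnostic-quantum-approximation-fourier-spectrum} at accuracy $\nicefrac{\vartheta}{2}$; with $|L|=\mathcal{O}(\nicefrac{b^2}{\vartheta^2})$ and $\tau=\Theta(\nicefrac{\varepsilon^2\vartheta^2}{b^2})$, the verifier's $\mathcal{O}(\nicefrac{\log(\nicefrac{|L|}{\delta})}{\tau^2})$ random examples become $\mathcal{O}(\nicefrac{b^4\log(\nicefrac{1}{\delta\vartheta^2})}{\varepsilon^4\vartheta^4})$ with matching time and memory, and the protocol uses one round transmitting $|L|\cdot n=\mathcal{O}(\nicefrac{n}{\vartheta^2})$ bits.

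The step I expect to be the main obstacle is precisely this soundness accounting: replacing the functional identity $\sum_s(\hat g(s))^2=1$ by the interval promise $[a^2,b^2]$ leaves an irreducible ``missing Fourier weight'' of up to $b^2-a^2$ that a dishonest prover could keep outside an accepted list, and one must verify that the assumption $\varepsilon\ge2\sqrt{b^2-a^2}$ --- after additionally budgeting the acceptance-threshold slack $\nicefrac{\varepsilon^2}{8}$ and the estimation error $2|L|\tau$ --- still leaves enough room to conclude $|\hat\phi(t)|=\mathcal{O}(\varepsilon)$ off the list, which is the only fact the downstream argument needs. The remaining subtlety, shared with the functional proofs, is handling the corner cases in which the optimal parity corresponds to a zero or negative Fourier coefficient; these are resolved by having $V$ maximize the extended estimate $\hat\gamma$ over all of $\{0,1\}^n$ rather than only over $L$.
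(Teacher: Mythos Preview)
Your proposal is correct and follows essentially the same approach as the paper: the same list-and-check protocol, with the honest prover running \Cref{corollary:distributional-agnostic-quantum-approximation-fourier-spectrum} at accuracy $\nicefrac{\vartheta}{2}$, the verifier accepting when the accumulated estimated weight exceeds $a^2-\nicefrac{\varepsilon^2}{8}$, and the soundness argument bounding the off-list Fourier weight via $\sum_s(\hat\phi(s))^2\le b^2$ together with $\varepsilon\ge 2\sqrt{b^2-a^2}$. Your constants differ slightly from the paper's (you use $2|L|\tau\le\nicefrac{\varepsilon^2}{16}$ versus the paper's $\tau=\nicefrac{\varepsilon^2}{16|L|}$, and you obtain $|\hat\phi(t)|\le\nicefrac{\varepsilon}{\sqrt{2}}$ off-list versus the paper's $|\hat\phi(t)|\le\varepsilon$), but these are immaterial to both the argument and the stated complexity bounds.
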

\begin{proof}
    This proof differs from that of \Cref{theorem:functional-agnostic-quantum-parity-verification-no-small-non-zero-Fourier-coeff} only by a few technical adjustments to the procedure and its analysis, and by replacing \Cref{corollary:quantum-approximation-fourier-spectrum} with \Cref{corollary:distributional-agnostic-quantum-approximation-fourier-spectrum}.
    Let $\delta,\varepsilon\in (0,1)$.
    Let $0\leq a \leq b\leq 1$.
    Let $\mathcal{D}\in \mathfrak{D}_{ \mathcal{U}_n; \geq \vartheta} \cap \mathfrak{D}_{ \mathcal{U}_n; [a^2,b^2]}$.
    Assume that $\varepsilon\geq 2\sqrt{b^2-a^2}$, with $\vartheta\in (2^{-(\tfrac{n}{2} - 3)},1)$.
    We begin the proof by describing the actions of the classical verifier $V$ and the honest quantum prover $P$:
    \begin{enumerate}
        \item $V$ asks $P$ to provide a list $L=\{s_1,\ldots ,s_{\lvert L\rvert}\}\subset \{0,1\}^n$ of length $\lvert L \rvert\leq \nicefrac{64 b^2}{\vartheta^2}$ consisting of pairwise distinct $n$-bit strings whose associated Fourier coefficients are non-zero.
        \item $P$ follows the procedure in \Cref{corollary:distributional-agnostic-quantum-approximation-fourier-spectrum} to produce, with success probability $\geq 1-\tfrac{\delta}{2}$, a succinctly represented $\Tilde{\phi}:\mathcal{X}_n\to [-1,1]$ such that $\norm{\Tilde{\phi}-\hat{\phi}}_\infty\leq\nicefrac{\vartheta}{2}$ and $\norm{\Tilde{\phi}}_0\leq\tfrac{64 b^2}{\vartheta^2}$. If $P$ obtains an output that violates the $\norm{\cdot}_0$-bound, then $P$ declares failure and the interaction aborts.
        Otherwise, $P$ then sends the list $L = \{s\in\{0,1\}^n~|~ \lvert \Tilde{\phi}(s)\rvert\geq \nicefrac{\vartheta}{2} \}$ to $V$.
        \item If $V$ receives a list $L$ of length $\lvert L\rvert> \nicefrac{64 b^2}{\vartheta^2} $, $V$ rejects the interaction. Otherwise, $V$ uses $\mathcal{O}\left(\tfrac{\lvert L\rvert^2\log(\nicefrac{\lvert L\rvert}{\delta})}{\varepsilon^4}\right)$ classical random examples from $\mathcal{D}$ to obtain simultaneously $(\nicefrac{\varepsilon^2}{16 \lvert L\rvert})$-accurate estimates $\hat{\xi}(s)$ of $\hat{\phi}(s)$ for all $s\in L$, with success probability $\geq 1-\tfrac{\delta}{2}$, via Chernoff-Hoeffding combined with a union bound over $L$. (For $t\not\in L$, the verifier's estimate $\hat{\gamma}(t)$ for $\hat{g}(t)$ is just $0$.)
        \item If $\sum_{\ell = 1}^{\lvert L\rvert} \left( \hat{\xi}(s_\ell) \right)^2 \geq a^2 - \tfrac{\varepsilon^2}{8}$, then $V$ determines $s_{\rm out}\in\operatorname{argmax}_{1\leq \ell\leq \lvert L \rvert} \hat{\xi}(s)$ and outputs the hypothesis $h:\mathcal{X}_n\to \{0,1\}$, $h(x)=s_{\rm out}\cdot x$. If $\sum_{\ell = 1}^{\lvert L\rvert} \left( \hat{\xi}(s_\ell) \right)^2 < a^2 - \tfrac{\varepsilon^2}{8}$, then $V$ outputs $\mathrm{reject}$.
    \end{enumerate}
    We now show that the pair $(V,P)$ has the desired completeness and soundness properties. As a first step towards this goal, we show that $V$ accepts an interaction with $P$ with high probability. To this end, observe that, conditioned on $P$ succeeding in Step 2, $V$ never rejects in Step 3. If we then further condition on $V$ succeeding in Step 3, we have
    \begin{align}
        \sum_{\ell = 1}^{\lvert L\rvert} \left( \hat{\xi}(s_\ell) \right)^2
        &\geq \sum_{\ell = 1}^{\lvert L\rvert} \left( \hat{\phi}(s_\ell) \right)^2 - 2\sum_{\ell = 1}^{\lvert L\rvert} \lvert\hat{\phi}(s_\ell)-\hat{\xi}(s_\ell)\rvert\\
        &\geq \sum_{s:\hat{\phi}(s)\neq 0} \lvert\hat{\phi}(s)\rvert^2 - 2\sum_{\ell = 1}^{\lvert L\rvert} \lvert\hat{\phi}(s_\ell)-\hat{\xi}(s_\ell)\rvert\\
        &\geq a^2 - 2\lvert L\rvert \cdot \frac{\varepsilon^2}{16 \lvert L\rvert}\\
        &= a^2 - \frac{\varepsilon^2}{8}\, ,
    \end{align}
    where the first step used that the function $[-1,1]\ni \xi\to \xi^2$ is $2$-Lipschitz, the second step used that $\mathcal{D}\in \mathfrak{D}_{ \mathcal{U}_n; \geq \vartheta}$ implies $(\hat{\phi}(s)\neq 0~ \Rightarrow ~ s\in L)$ if Step 2 succeeds, and the third step used the approximation guarantee in Step 3 as well as Parseval together with $\mathcal{D}\in \mathfrak{D}_{ \mathcal{U}_n; [a^2,b^2]}$.
    Thus, if both Step 2 and Step 3 succeed, which by a union bound happens with probability $\geq 1-\delta$, then $V$ accepts in Step 4. 
    
    Moreover, whenever Step 3 is successful and $V$ does not reject in Step 4, then the output string $s_{\rm out}\in\operatorname{argmax}_{1\leq \ell\leq \lvert L \rvert} \gamma(s)$ of $V$ is as desired. This can be seen as follows: If $V$ does not reject in Step 4 and if Step 3 was successful, then this implies that for any $s\not\in L$,
    \begingroup
    \allowdisplaybreaks
    \begin{align}
        \left(\hat{\phi}(s)\right)^2
        &\leq \sum_{t\not\in L} \left(\hat{\phi}(t)\right)^2\\
        &= \mathbb{E}_{x\sim\mathcal{U}_n}[(\phi(x))^2] -  \sum_{t\in L} \left(\hat{\phi}(t)\right)^2\\
        &\leq b^2 - \sum_{t\in L} \left(\hat{\xi}(t)\right)^2 + 2 \sum_{\ell = 1}^{\lvert L\rvert} \lvert\hat{\phi}(s_\ell)-\hat{\xi}(s_\ell)\rvert\\
        &\leq (b^2-a^2) + \frac{\varepsilon^2}{8} + 2\lvert L\rvert \cdot \frac{\varepsilon^2}{16 \lvert L\rvert}\\
        &= (b^2-a^2) + \frac{\varepsilon^2}{4},
    \end{align}
    \endgroup
    where we again used that $[-1,1]\ni \xi\to \xi^2$ is $2$-Lipschitz.
    This tells us that $\lvert \hat{\phi}(s)\rvert \leq \sqrt{(b^2-a^2) + \frac{\varepsilon^2}{4}}\leq \sqrt{b^2-a^2} + \nicefrac{\varepsilon}{2} \leq \varepsilon$ holds for every $s\not\in L$, which now allows us to show that the output $s_{\rm out}\in\operatorname{argmax}_{1\leq \ell\leq \lvert L \rvert} \gamma(s)$ of $V$ has the desired property with a reasoning analogous to that in the proof of \Cref{theorem:functional-agnostic-quantum-parity-verification-qsq-no-small-non-zero-Fourier-coeff}. Again, this last part of our reasoning only relies on $V$ succeeding in Step 3 and accepting in Step 4, but is independent of the action of the quantum prover. Therefore, we have also established the desired soundness.

    It remains to bound the sample and time complexities for $V$ and $P$.
    \Cref{corollary:distributional-agnostic-quantum-approximation-fourier-spectrum} directly tells us that $P$ uses $\mathcal{O}\left(\tfrac{\log(\nicefrac{1}{\delta\varepsilon^2})}{\vartheta^4}\right)$ copies mixture-of-superpositions state $\rho_{\mathcal{D}}$ as well as $\mathcal{O}\left(n\tfrac{\log(\nicefrac{1}{\delta\vartheta^2})}{\vartheta^4}\right)$ single-qubit gates, a classical memory of size $\tilde{\mathcal{O}}\left(n\tfrac{\log(\nicefrac{1}{\delta\vartheta^2})}{\vartheta^4}\right)$, and classical running time $\tilde{\mathcal{O}}\left(n\tfrac{\log(\nicefrac{1}{\delta \vartheta^2})}{\vartheta^4}\right)$.
    The classical sample complexity of $V$ is $\mathcal{O}\left(\tfrac{\lvert L\rvert^2\log(\nicefrac{\lvert L\rvert}{\delta})}{\varepsilon^4}\right)\leq \mathcal{O}\left(\tfrac{b^4\log(\nicefrac{1}{\delta\vartheta^2})}{\varepsilon^4 \vartheta^4}\right)$, as noted in Step 3. The classical running time of $V$ is $\tilde{\mathcal{O}}\left(n\tfrac{\lvert L\rvert^2\log(\nicefrac{\lvert L\rvert}{\delta})}{\varepsilon^4}\right)\leq \tilde{\mathcal{O}}\left(n\tfrac{b^4\log(\nicefrac{1}{\delta\vartheta^2})}{\varepsilon^4 \vartheta^4}\right)$. 
    The classical memory used by $V$ is of size $\mathcal{O}\left(n\tfrac{\lvert L\rvert^2\log(\nicefrac{\lvert L\rvert}{\delta})}{\varepsilon^4}\right)\leq \mathcal{O}\left(n\tfrac{b^4\log(\nicefrac{1}{\delta\vartheta^2})}{\varepsilon^4 \Tilde{\varepsilon}^4}\right)$.
    Finally, our protocol clearly requires only a single round of communication, and the communicated object is $L$, which consists of at most $\nicefrac{64 b^4}{\vartheta^2}$ strings of $n$ bits.
    This finishes the proof.
\end{proof}

In \Cref{theorem:distributional-agnostic-quantum-parity-verification-no-small-non-zero-Fourier-coeff}, the achievable accuracy is limited by $2\sqrt{b^2-a^2}$. Next, we show that such a limitation is necessary for interactive classical-quantum verification of learning with a sublinear-in-$n$ sample complexity for the classical verifier:

\begin{theorem}\label{theorem:limitation-improvement-distributional-agnostic-verification}
    Let $\eta\in [0,\nicefrac{1}{6})$. Define $a=0$ and $b=\vartheta=1-2\eta$.
    Let $\delta = \nicefrac{1}{3}$ and $\varepsilon = \nicefrac{(1-2\eta)}{3} = \tfrac{1}{3}\cdot \sqrt{b^2-a^2}$.
    Proper $1$-PAC verification for the class of $n$-bit parities w.r.t.~$\mathfrak{D}_{ \mathcal{U}_n; \geq \vartheta} \cap \mathfrak{D}_{ \mathcal{U}_n; [a^2,b^2]}$ by a classical verifier $V$ with access to classical random examples interacting with a quantum prover $P$ with access to mixture-of-superpositions quantum examples requires the verifier to use at least $\Omega (n)$ classical examples.
\end{theorem}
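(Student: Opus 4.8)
The plan is to prove the bound by a reduction: a verifier that used $o(n)$ random examples could be turned into an algorithm that distinguishes, from $o(n)$ samples, two distributions that are information-theoretically indistinguishable. Inside $\mathfrak D_{ \mathcal U_n; \geq \vartheta}\cap\mathfrak D_{ \mathcal U_n; [a^2,b^2]}$ (recall $a=0$ and $b=\vartheta=1-2\eta$), single out the \emph{null distribution} $\mathcal D_\star=(\mathcal U_n,\varphi\equiv\nicefrac{1}{2})$, whose $\{-1,1\}$-bias is $\phi\equiv 0$ — so it trivially lies in the class and $\operatorname{opt}_{\mathcal D_\star}(\mathrm{parities})=\nicefrac{1}{2}$ — and the \emph{noisy parities} $\mathcal D_s=(\mathcal U_n,\varphi_s)$ with $\phi_s=(1-2\eta)\chi_s$, $s\in\{0,1\}^n$, whose only nonzero Fourier coefficient is $\widehat{\phi_s}(s)=1-2\eta=\vartheta=b$ — so $\mathcal D_s$ also lies in the class, $\operatorname{opt}_{\mathcal D_s}(\mathrm{parities})=\eta$, attained \emph{only} by $\chi_s$ since every other parity has error $\nicefrac{1}{2}>\eta+\varepsilon$ (here $\varepsilon=\nicefrac{(1-2\eta)}{3}<\nicefrac{(1-2\eta)}{2}$). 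The information-theoretic ingredient is the standard full-rank identity: if the $m\times n$ matrix $X$ whose rows are the examples' inputs has full row rank $m$ — which a uniform $X$ has with probability $\geq 1-2^{m-n}$ for $m\leq n$ — then for a uniform secret $s$ the noisy labels $Xs\oplus e$ are uniform on $\{0,1\}^m$. Consequently $m$ random examples from $\mathcal D_\star$ are $2^{m-n}$-close in total variation to $m$ random examples from $\mathcal D_s$ with $s$ uniform; and, by the same count, any (even unbounded) function of $m$ such examples guesses $s$ with probability at most $2^{m-n}$.

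Now assume $(V,P)$ is a proper $1$-agnostic verification pair as in \Cref{definition:interactive-verification-of-learning} with $\delta=\nicefrac{1}{3}$ in which $V$ uses at most $m$ random examples, and consider the following distinguisher for ``$\mathcal D=\mathcal D_\star$'' versus ``$\mathcal D=\mathcal D_s$, $s$ uniform'': simulate $V$, playing the prover's part by internally running the honest prover $P$ on self-prepared copies of $\rho_{\mathcal D_\star}$ (which an unbounded prover can produce) and answering $V$'s example queries from the $m$ real samples; output ``$\mathcal D_\star$'' iff $V$ does not reject. If the true distribution is $\mathcal D_\star$, this is an \emph{honest} execution of the protocol, so by completeness $\Pr[V\ \text{outputs}\ h\neq\mathrm{reject}]\geq\nicefrac{2}{3}$, and any parity hypothesis automatically meets the agnostic bound against $\mathcal D_\star$. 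If the true distribution is $\mathcal D_s$ with uniform $s$, the prover's messages are a fixed strategy that does not depend on $s$, so the only information $V$ ever obtains about $s$ is carried by its $\leq m$ examples; hence $\Pr[h=\chi_s]\leq 2^{m-n}$, while soundness applied to $\mathcal D_s$ against this dishonest prover gives $\Pr[h\neq\mathrm{reject}\wedge h\neq\chi_s]\leq\nicefrac{1}{3}$, so $\Pr[V\ \text{rejects}]\geq\nicefrac{2}{3}-2^{m-n}$. Averaging, the distinguisher is correct with probability at least $\nicefrac{2}{3}-2^{m-n-1}$; since its input differs by at most $2^{m-n}$ in total variation between the two cases, this is at most $\nicefrac{1}{2}+2^{m-n-1}$, forcing $2^{m-n}\geq\nicefrac{1}{6}$, i.e.\ $m\geq n-\log_2 6=\Omega(n)$.

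The step that needs the most care is the interactivity: one must make sure the verifier's \emph{full} view — the whole transcript of messages it exchanges with the prover, together with its examples and private coins — depends on the hidden secret $s$ only through the $\leq m$ examples. This is exactly why the dishonest prover is specified to follow the honest strategy on the $s$-independent state $\rho_{\mathcal D_\star}$ rather than something adaptive: the prover never sees $s$, and $V$'s own messages, although they may depend on its $s$-correlated examples, are computed \emph{after} those examples and hence leak no extra information about $s$ beyond what the examples already contain. Once this is in place, the two quantitative claims used above — ``$V$ cannot output $\chi_s$ with probability more than $2^{m-n}$'' and ``$m$ examples from $\mathcal D_\star$ and from a random $\mathcal D_s$ are $2^{m-n}$-close'' — are two faces of the single full-rank identity, so no approximate entropy arguments are needed. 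The remaining checks are routine: that $\mathcal D_\star$ and every $\mathcal D_s$ belong to $\mathfrak D_{ \mathcal U_n; \geq \vartheta}\cap\mathfrak D_{ \mathcal U_n; [0,b^2]}$ for the stated parameters (immediate from \Cref{definition:distributional-no-small-fourier-coefficients} and \Cref{definition:l2-bounded-bias}), and the optimum values $\operatorname{opt}_{\mathcal D_\star}(\mathrm{parities})=\nicefrac{1}{2}$ and $\operatorname{opt}_{\mathcal D_s}(\mathrm{parities})=\eta$, which follow from the identity $\operatorname{err}_{\mathcal D}(\chi_t)=\tfrac{1-\widehat{\phi}(t)}{2}$.
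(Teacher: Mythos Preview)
Your proof is correct and follows the same reduction as the paper: simulate the honest prover on the null state $\rho_{\mathcal{D}_\star}=\rho_{\mathcal{U}_{n+1}}$, feed the verifier the real samples, and reduce to the indistinguishability of $\mathcal{U}_{n+1}$ from a random $\eta$-noisy parity. The one substantive difference is the decision rule. The paper's tester draws an additional $\mathcal{O}(1)$ test samples and declares ``noisy parity'' whenever $V$ rejects \emph{or} the returned hypothesis has low empirical error on the test set; this works without appealing to properness. You instead exploit properness directly: since any parity $\chi_t$ with $t\neq s$ has error $\nicefrac{1}{2}>\operatorname{opt}_{\mathcal{D}_s}+\varepsilon$, soundness already gives $\Pr[h\neq\mathrm{reject}\wedge h\neq\chi_s]\leq\nicefrac{1}{3}$, and the information-theoretic guessing bound $\Pr[h=\chi_s]\leq 2^{m-n}$ (which indeed follows from the full-rank argument, with noise only making guessing harder via data processing) then forces $V$ to reject with high probability---no test samples needed. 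Your route is a little cleaner for the proper setting at hand; the paper's route would carry over unchanged to improper verification.
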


Here, we consider $\eta$ to be a constant and focus on the scaling with $n$.
\Cref{theorem:limitation-improvement-distributional-agnostic-verification} tells us that the accuracy lower bound $\varepsilon\geq 2\sqrt{b^2-a^2}$ in \Cref{theorem:distributional-agnostic-quantum-parity-verification-no-small-non-zero-Fourier-coeff} cannot be significantly improved without at the same time worsening the number of examples used by the classical verifier from $n$-independent to linear-in-$n$. 

\begin{proof}
    We adapt the proof strategy of \cite[Theorem 8]{mutreja2022pac-verification} to our setting. That is, we use the assumed pair $(V,P)$ of a classical verifier and a quantum prover to construct a testing algorithm $T$ that can distinguish between $\mathcal{D} = \mathcal{U}_{n+1}$ and $\mathcal{D}\in \{(\mathcal{U}_n, (1-2\eta)\chi_s)\}_{s\in\{0,1\}^n}$ using $m_T=m_V + \mathcal{O}(1)$ classical random examples of the unknown distribution. Then we appeal to the lower bound of \Cref{lemma:fully-uniform-vs-random-noisy-parity}.
    
    In more detail, we construct the tester $T$ as follows:
    \begin{enumerate}
        \item Let $m_V$ and $m_P$ be the classical and quantum sample complexities of $V$ and $P$, respectively. $T$ draws a sample $S_V\sim\mathcal{D}^{\otimes m_V}$ from the unknown distribution $\mathcal{D}$ and prepares $m_P$ (classical descriptions of) copies of the mixture-of-superpositions quantum example state $\rho_{\mathcal{U}_{n+1}}$. Note that the latter is possible because $\mathcal{U}_{n+1}$ is known to $T$.
        \item $T$ simulates $(V,P)$, where $V$ gets the classical data $S_V$ and $P$ gets the quantum data $\rho_{\mathcal{U}_{n+1}}^{\otimes m}$, to obtain the output $h\in\{\mathrm{reject}\}\cup \{\chi_s\}_{s\in\{0,1\}^n}$. Note that the simulation of the quantum prover $P$ via the classical tester $T$ may be computationally inefficient, but this is irrelevant for our purposes since we focus on sample complexity.
        \item $T$ draws a sample $S_{\mathrm{test}}\sim\mathcal{D}^{\otimes m_{\mathrm{test}}}$ from the unknown distribution $\mathcal{D}$, where $m_{\mathrm{test}} = \tfrac{1}{2}\cdot (\tfrac{18}{1+\eta})^2 \log(24)$.
        \item If $h=\mathrm{reject}$ or $(h\neq \mathrm{reject}~\wedge~\tfrac{1}{m_{\mathrm{test}}}\lvert\{ (x,y)\in S_{\mathrm{test}} ~|~ h(x)\neq y \}\rvert) \leq \tfrac{7(1+\eta)}{18}$, then $T$ outputs ``$\mathcal{D}\in \{(\mathcal{U}_n, (1-2\eta)\chi_s)\}_{s\in\{0,1\}^n}$.'' Otherwise, $T$ outputs ``$\mathcal{D} = \mathcal{U}_{n+1}$.''
    \end{enumerate}
    We claim that $T$ succeeds at the distinguishing task with probability $\geq \nicefrac{7}{12}$. This can be seen as follows:
    \begin{itemize}
        \item If $\mathcal{D} = (\mathcal{U}_n, (1-2\eta)\chi_t)$ for some $t\in\{0,1\}^n$, then $\min_{s\in\{0,1\}^n} \mathbb{P}_{(x,y)\sim\mathcal{D}}[y\neq \chi_s(x)]=\eta$. Hence, by the soundness of the interactive verifier-prover pair $(V,P)$, we see that $h=\mathrm{reject}$ or $(h\neq \mathrm{reject}~\wedge~\mathbb{P}_{(x,y)\sim\mathcal{D}}[y\neq h(x)] \leq \eta + \tfrac{1-2\eta}{3} = \tfrac{1+\eta}{3}$) holds with probability $\geq \nicefrac{2}{3}$.
        Conditioned on this event occurring, Hoeffding's inequality and our choice of $m_{\mathrm{test}}$ guarantee that $\tfrac{1}{m_{\mathrm{test}}}\lvert\{ (x,y)\in S_{\mathrm{test}} ~|~ h(x)\neq y \}\rvert) \leq \tfrac{1+\eta}{3} + \tfrac{1+\eta}{18} = \tfrac{7(1+\eta)}{18}$ holds with probability $\geq \nicefrac{11}{12}$. 
        By a union bound, we can therefore conclude: If $\mathcal{D} = (\mathcal{U}_n, (1-2\eta)\chi_t)$ for some $t\in\{0,1\}^n$, then $T$ outputs ``$\mathcal{D}\in \{(\mathcal{U}_n, (1-2\eta)\chi_s)\}_{s\in\{0,1\}^n}$'' with probability $\geq 1 - \nicefrac{1}{3} - \nicefrac{1}{12} = \nicefrac{7}{12}$.
        \item If $\mathcal{D} = \mathcal{U}_{n+1}$, then $\mathbb{P}_{(x,y)\sim\mathcal{D}}[y\neq \chi_s(x)]=\nicefrac{1}{2}$ holds for every $s\in\{0,1\}^n$. In particular, if $h\neq \mathrm{reject}$, then Hoeffding's inequality and our choice of $m_{\mathrm{test}}$ guarantee that $\tfrac{1}{m_{\mathrm{test}}}\lvert\{ (x,y)\in S_{\mathrm{test}} ~|~ h(x)\neq y \}\rvert)\geq \tfrac{1}{2} - \tfrac{1+\eta}{18} = \tfrac{7(1+\eta)}{18} + \tfrac{1-6\eta}{18}>\tfrac{7(1+\eta)}{18}$ holds with probability $\geq \nicefrac{11}{12}$.
        By completeness of the interactive verifier-prover pair $(V,P)$, we see that $h\neq \mathrm{reject}$ with probability $\geq \nicefrac{2}{3}$. By a union bound, we can therefore conclude: 
        If $\mathcal{D} = \mathcal{U}_{n+1}$, then $T$ outputs ``$\mathcal{D} = \mathcal{U}_{n+1}$'' with probability $\geq 1 - \nicefrac{1}{3} - \nicefrac{1}{12} = \nicefrac{7}{12}$.
    \end{itemize}
    As $T$ uses $m_V + m_{\mathrm{test}} = m_V + \mathcal{O}(1)$ random examples from the unknown distribution $\mathcal{D}$ to solve the distinguishing task, a comparison to the $\Omega (n)$ sample complexity lower bound of \Cref{lemma:fully-uniform-vs-random-noisy-parity} implies that $m_V\geq \Omega(n)$.
\end{proof}

\begin{remark}\label{remark:mutreja-shafer-computational}
    While \Cref{theorem:limitation-improvement-distributional-agnostic-verification} focuses on sample complexities, the proof has immediate computational complexity implications. 
    To see this, notice that (by \Cref{theorem:agnostic-quantum-fourier-sampling}) it is trivial to classically simulate distributional agnostic quantum Fourier sampling if $\mathcal{D}=\mathcal{U}_{n+1}$ and thus $\phi \equiv 0$. Namely, we first toss a fair coin to decide whether the sampling attempt succeeds or fails, and in the case of success we then sample a uniformly random $n$-bit string $s$.
    Thus, a classical $T$ can efficiently simulate the actions of a quantum $P$ with access to copies of $\rho_{\mathcal{U}_{n+1}}$.
    Consequently, with the same parameter choices as in \Cref{theorem:limitation-improvement-distributional-agnostic-verification}, a computationally classically efficient $V$ would lead to a computationally efficient classical tester $T$ able to distinguish between the uniform distribution and random noisy parities. 
    Therefore, assuming that this decision version of LPN is hard, we cannot meaningfully improve the the accuracy lower bound $\varepsilon\geq 2\sqrt{b^2-a^2}$ in \Cref{theorem:distributional-agnostic-quantum-parity-verification-no-small-non-zero-Fourier-coeff} without losing computational efficiency of $V$.
\end{remark}

As in \Cref{subsection:verification-functional}, we also explicitly state our results for classical verification of quantum learning Fourier-sparse functions in the distributional agnostic case.

\begin{theorem}\label{theorem:distributional-agnostic-qsq-fourier-sparse-verification-no-small-non-zero-Fourier-coeff}
    Let $\vartheta\in (0,1)$.
    Let $0\leq a \leq b\leq 1$.
    Let $\varepsilon\geq 4k\sqrt{b^2-a^2}$.
    The class of Fourier-$k$-sparse functions of $n$ bits is efficiently improper $2$-agnostic verifiable w.r.t.~$\mathfrak{D}_{ \mathcal{U}_n; \geq \vartheta} \cap \mathfrak{D}_{ \mathcal{U}_n; [a^2,b^2]}$ by a classical verifier $V$ with access to classical SQs interacting with a quantum prover $P$ with distributional QSQ access.
    There is a verifier-prover pair $(V,P)$ such that $P$ uses $\mathcal{O}\left(\tfrac{n}{\vartheta^2}\right)$ distributional QSQs of tolerance at worst $\nicefrac{\vartheta}{8}$ for observables that can be implemented with $\mathcal{O}(n)$ single-qubit gates, a classical memory of size $\tilde{\mathcal{O}}\left(\tfrac{n^2}{\vartheta^2}\right)$, and classical running time $\tilde{\mathcal{O}}\left(\tfrac{n}{\vartheta^2}\right)$, and such that $V$ uses $\mathcal{O}\left(\tfrac{b^2}{\vartheta^2}\right)$ classical SQs of tolerance at worst $\nicefrac{(\varepsilon^2 \vartheta^2)}{256 k^2}$, $\tilde{\mathcal{O}}\left(\tfrac{n b^2}{\vartheta^2}\right)$ classical running time, and a classical memory of size $\tilde{\mathcal{O}}\left(\tfrac{n b^2}{\vartheta^2}\right)$.
    Moreover, this can be achieved by a pair $(V,P)$ that uses only a single round of communication consisting of at most $\mathcal{O}\left(\tfrac{n}{\vartheta^2}\right)$ classical bits.
\end{theorem}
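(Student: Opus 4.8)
\textit{Proof proposal.} The plan is to follow the proof of \Cref{theorem:functional-agnostic-quantum-fourier-sparse-verification-qsq-no-small-non-zero-Fourier-coeff} essentially verbatim, making the two modifications that were needed to pass from the functional to the distributional parity case in \Cref{theorem:distributional-agnostic-quantum-parity-verification-no-small-non-zero-Fourier-coeff}: replace the functional Goldreich--Levin routine \Cref{theorem:functional-agnostic-qsq-GL} by its distributional counterpart \Cref{theorem:distributional-agnostic-qsq-GL}, and use the two-sided promise $\mathcal{D}\in\mathfrak{D}_{\mathcal{U}_n;[a^2,b^2]}$ to tell the verifier what ``enough accumulated Fourier weight'' means. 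Concretely, $V$ asks $P$ for a list $L=\{s_1,\dots,s_{|L|}\}$ of pairwise distinct $n$-bit strings with nonzero Fourier coefficients; the honest $P$ runs the distributional QSQ GL procedure of \Cref{theorem:distributional-agnostic-qsq-GL} with accuracy parameter set to $\vartheta$, which (using Parseval together with $\mathbb{E}_{x\sim\mathcal{U}_n}[(\phi(x))^2]\le b^2$) outputs a list of length $\le \mathcal{O}(b^2/\vartheta^2)$ that contains every $s$ with $|\hat\phi(s)|\ge\vartheta$ and only strings with $|\hat\phi(s)|\ge\vartheta/2$. Upon receipt, $V$ rejects if $|L|$ exceeds the promised bound; otherwise $V$ spends $|L|$ classical SQs of tolerance $\sim\varepsilon^2\vartheta^2/k^2$ to obtain estimates $\hat\gamma(s)\approx\hat\phi(s)$ for $s\in L$ (setting $\hat\gamma(t)=0$ for $t\notin L$), accepts iff $\sum_\ell\hat\gamma(s_\ell)^2\ge a^2-O(\varepsilon^2/k^2)$, and if it accepts it selects the $k$ largest values $|\hat\gamma(s_\ell)|$ and outputs the randomized hypothesis $h(x)=\sum_{\ell=1}^k\hat\gamma(s_\ell)\chi_{s_\ell}(x)$ of \Cref{lemma:Fourier-sparse-learning-via-heaviest-Fourier-coefficients}.

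For completeness I would argue that, because $\mathcal{D}\in\mathfrak{D}_{\mathcal{U}_n;\ge\vartheta}$, the support of $\hat\phi$ consists entirely of coefficients of magnitude $\ge\vartheta$, so the honest prover's list captures all of $\operatorname{supp}(\hat\phi)$; hence $\sum_\ell\hat\phi(s_\ell)^2=\mathbb{E}_{x\sim\mathcal{U}_n}[(\phi(x))^2]\ge a^2$, and the $2$-Lipschitzness of $\xi\mapsto\xi^2$ together with the SQ accuracy forces $\sum_\ell\hat\gamma(s_\ell)^2\ge a^2-O(\varepsilon^2/k^2)$, so $V$ accepts and outputs a hypothesis. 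Soundness is a purely verifier-side argument, independent of the (possibly dishonest) prover: whenever $V$ does not reject, the accepted list satisfies $\sum_\ell\hat\gamma(s_\ell)^2\ge a^2-O(\varepsilon^2/k^2)$, whence (again by $2$-Lipschitzness and $\mathbb{E}[\phi^2]\le b^2$) $\sum_{t\notin L}\hat\phi(t)^2\le (b^2-a^2)+O(\varepsilon^2/k^2)$; combined with the gap assumption $\varepsilon\ge 4k\sqrt{b^2-a^2}$ this makes every $|\hat\phi(t)|$ with $t\notin L$ at most $O(\varepsilon/k)$, small enough that \Cref{lemma:technical} shows the $k$ largest entries of $L$ (by $|\hat\gamma|$) are $(\varepsilon/2k)$-approximately-heaviest among all of $\{0,1\}^n$ and that $V$'s estimates are $(\varepsilon/2k)$-accurate. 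An application of \Cref{lemma:Fourier-sparse-learning-via-heaviest-Fourier-coefficients} then delivers the $2$-agnostic guarantee $\mathbb{P}_{(x,b)\sim\mathcal{D}}[b\ne h(x)]\le 2\operatorname{opt}_{\mathcal{D}}(\mathcal{B})+\varepsilon$ for the benchmark class $\mathcal{B}$ of Fourier-$k$-sparse functions.

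The remaining work is bookkeeping. The prover's resources are read off from \Cref{theorem:distributional-agnostic-qsq-GL} with its accuracy parameter set to $\vartheta$: $\mathcal{O}(n/\vartheta^2)$ distributional QSQs of tolerance $\vartheta^2/8$ (hence at worst $\vartheta/8$, since $\vartheta<1$) for $\mathcal{O}(n)$-gate observables, classical memory $\tilde{\mathcal{O}}(n^2/\vartheta^2)$, and classical time $\tilde{\mathcal{O}}(n/\vartheta^2)$; the verifier uses $\mathcal{O}(|L|)=\mathcal{O}(b^2/\vartheta^2)$ classical SQs of tolerance at worst $\varepsilon^2\vartheta^2/(256k^2)$, with classical time and memory $\tilde{\mathcal{O}}(n|L|)=\tilde{\mathcal{O}}(nb^2/\vartheta^2)$; and the single round of communication transmits the $\mathcal{O}(b^2/\vartheta^2)=\mathcal{O}(1/\vartheta^2)$ strings of $L$, i.e.\ $\mathcal{O}(n/\vartheta^2)$ bits. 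I expect the only genuinely delicate point to be pinning down the chain of constants so that a single SQ tolerance simultaneously (i) sharpens the accumulated-weight test enough to certify $\sum_{t\notin L}\hat\phi(t)^2\le (b^2-a^2)+O(\varepsilon^2/k^2)$ and (ii) supplies per-coefficient estimates accurate to $\varepsilon/(2k)$ as required by \Cref{lemma:Fourier-sparse-learning-via-heaviest-Fourier-coefficients}; this is exactly where the factor $k^2$ in the tolerances and the gap condition $\varepsilon\ge 4k\sqrt{b^2-a^2}$ are consumed. No new idea beyond the functional QSQ Fourier-sparse case and the distributional parity case is needed.
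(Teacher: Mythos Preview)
Your proposal is correct and follows essentially the same approach as the paper: the paper's proof simply refers back to \Cref{theorem:functional-agnostic-quantum-fourier-sparse-verification-qsq-no-small-non-zero-Fourier-coeff}, replaces \Cref{theorem:functional-agnostic-qsq-GL} by \Cref{theorem:distributional-agnostic-qsq-GL}, and imports the technical adjustments spelled out in the proof of \Cref{theorem:distributional-agnostic-quantum-fourier-sparse-verification-no-small-non-zero-Fourier-coeff}, which is exactly what you describe. Your write-up is in fact more detailed than the paper's own (deferred) argument, and your bookkeeping matches the stated complexities.
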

\begin{proof}
    This proof differs from that of \Cref{theorem:functional-agnostic-quantum-fourier-sparse-verification-qsq-no-small-non-zero-Fourier-coeff} only by a few technical adjustments to the procedure and its analysis, and by replacing \Cref{theorem:functional-agnostic-qsq-GL} with \Cref{theorem:distributional-agnostic-qsq-GL}.
    We present the relevant technical adjustments in detail in the proof of the next theorem.
\end{proof}

\begin{theorem}\label{theorem:distributional-agnostic-quantum-fourier-sparse-verification-no-small-non-zero-Fourier-coeff}
    Let $\vartheta\in (2^{-(\tfrac{n}{2} - 3)},1)$.
    Let $0\leq a \leq b\leq 1$.
    Let $\delta\in (0,1)$ and $\varepsilon\geq 4k\sqrt{b^2-a^2}$.
    The class of Fourier-$k$-sparse functions on $n$ bits is efficiently improper $2$-agnostic verifiable w.r.t.~$\mathfrak{D}_{ \mathcal{U}_n; \geq \vartheta} \cap \mathfrak{D}_{ \mathcal{U}_n; [a^2,b^2]}$ by a classical verifier $V$ with access to classical random examples interacting with a quantum prover $P$ with access to mixture-of-superpositions examples.
    There is a verifier-prover pair $(V,P)$ such that $P$ uses $\mathcal{O}\left(\tfrac{\log(\nicefrac{1}{\delta\vartheta^2})}{\vartheta^4}\right)$ copies of $\rho_{\mathcal{D}}$ , $\mathcal{O}\left(n\tfrac{\log(\nicefrac{1}{\delta\vartheta^2})}{\vartheta^4}\right)$ single-qubit gates, a classical memory of size $\tilde{\mathcal{O}}\left(n\tfrac{\log(\nicefrac{1}{\delta\vartheta^2})}{\vartheta^4}\right)$, and classical running time $\tilde{\mathcal{O}}\left(n\tfrac{\log(\nicefrac{1}{\delta \vartheta^2})}{\vartheta^4}\right)$, and such that $V$ uses $\mathcal{O}\left(\tfrac{b^4 k^4\log(\nicefrac{1}{\delta\vartheta^2})}{\varepsilon^4 \vartheta^4}\right)$ classical random examples, $\tilde{\mathcal{O}}\left(n\tfrac{b^4 k^4\log(\nicefrac{1}{\delta\vartheta^2})}{\varepsilon^4 \vartheta^4}\right)$ classical running time, and a classical memory of size $\tilde{\mathcal{O}}\left(n\tfrac{b^4 k^4\log(\nicefrac{1}{\delta\vartheta^2})}{\varepsilon^4 \vartheta^4}\right)$.
    Moreover, this can be achieved by a pair $(V,P)$ that uses only a single round of communication consisting of at most $\mathcal{O}\left(\tfrac{n}{\vartheta^2}\right)$ classical bits.
\end{theorem}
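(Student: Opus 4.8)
The plan is to combine two protocols that are already in place: the distributional-agnostic parity verification protocol of \Cref{theorem:distributional-agnostic-quantum-parity-verification-no-small-non-zero-Fourier-coeff}, which supplies the ``prover sends a short list of heavy Fourier coefficients, verifier checks the accumulated Fourier weight against the promised value'' template for a random-example verifier and a mixture-of-superpositions prover, and the $2$-agnostic Fourier-sparse hypothesis construction used in \Cref{theorem:functional-agnostic-quantum-fourier-sparse-verification-qsq-no-small-non-zero-Fourier-coeff}, which turns estimates of the $k$ heaviest coefficients into a good randomized hypothesis via \Cref{lemma:Fourier-sparse-learning-via-heaviest-Fourier-coefficients}. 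Concretely, I would have the verifier $V$ ask $P$ for a list $L\subseteq\{0,1\}^n$ of pairwise distinct strings with $\lvert L\rvert\le 64 b^2/\vartheta^2$; the honest $P$ runs \Cref{corollary:distributional-agnostic-quantum-approximation-fourier-spectrum} at accuracy $\vartheta/2$ (legitimate since $\vartheta>2^{-(n/2-3)}$) to obtain $\tilde\phi$ with $\norm{\tilde\phi-\hat\phi}_\infty\le\vartheta/2$ and $\norm{\tilde\phi}_0\le 16\,\mathbb{E}_{x\sim\mathcal{U}_n}[(\phi(x))^2]/(\vartheta/2)^2\le 64 b^2/\vartheta^2$, and sends $L=\{s:\lvert\tilde\phi(s)\rvert\ge\vartheta/2\}$. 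Then $V$ rejects if $\lvert L\rvert$ violates the bound; otherwise it draws $\mathcal{O}(\lvert L\rvert^2 k^4\log(\lvert L\rvert/\delta)/\varepsilon^4)$ random examples and, by Hoeffding plus a union bound over $L$, gets estimates $\hat\xi(s)$ of $\hat\phi(s)=\mathbb{E}_{(x,y)\sim\mathcal{D}}[y\chi_s(x)]$ accurate to $\Theta(\varepsilon^2/(k^2\lvert L\rvert))$ for all $s\in L$ (with $\hat\xi(t)=0$ for $t\notin L$), each step succeeding with probability $\ge 1-\delta/2$. Finally, if $\sum_{s\in L}(\hat\xi(s))^2\ge a^2-\Theta(\varepsilon^2/k^2)$, then $V$ outputs the hypothesis built from the $k$ largest $\lvert\hat\xi\rvert$-strings of $L$ (padded with fresh strings outside $L$, of estimated weight $0$, if $\lvert L\rvert<k$) as in \Cref{lemma:Fourier-sparse-learning-via-heaviest-Fourier-coefficients}, and otherwise rejects.

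For completeness I would condition on $P$'s Fourier-approximation step and $V$'s estimation step both succeeding, which by a union bound holds with probability $\ge 1-\delta$. Then $\lvert L\rvert\le 64 b^2/\vartheta^2$ so there is no rejection in the first check; the promise $\mathcal{D}\in\mathfrak{D}_{\mathcal{U}_n;\geq\vartheta}$ forces every nonzero Fourier coefficient of $\phi$ into $L$ (since $\hat\phi(s)\ne 0$ implies $\lvert\tilde\phi(s)\rvert\ge\vartheta-\vartheta/2$); and $\mathcal{D}\in\mathfrak{D}_{\mathcal{U}_n;[a^2,b^2]}$ with Parseval gives $\sum_{s\in L}(\hat\phi(s))^2=\mathbb{E}_{x\sim\mathcal{U}_n}[(\phi(x))^2]\ge a^2$, so by $2$-Lipschitzness of $\xi\mapsto\xi^2$ on $[-1,1]$ the quantity $\sum_{s\in L}(\hat\xi(s))^2$ clears the accept threshold and $V$ outputs a hypothesis that is $2$-agnostically good by \Cref{lemma:Fourier-sparse-learning-via-heaviest-Fourier-coefficients}.

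For soundness the decisive point, inherited from the parity case, is that the following argument never refers to the prover's behaviour, so it applies verbatim to any unbounded $P'$: I would condition only on $V$'s own estimation step succeeding (probability $\ge 1-\delta/2\ge 1-\delta$) and show that whenever $V$ does not reject, its output is good. Not rejecting means $\sum_{s\in L}(\hat\xi(s))^2\ge a^2-\Theta(\varepsilon^2/k^2)$, so by $2$-Lipschitzness $\sum_{s\in L}(\hat\phi(s))^2\ge a^2-\Theta(\varepsilon^2/k^2)$, whence the tail obeys $\sum_{s\notin L}(\hat\phi(s))^2=\mathbb{E}_{x\sim\mathcal{U}_n}[(\phi(x))^2]-\sum_{s\in L}(\hat\phi(s))^2\le (b^2-a^2)+\Theta(\varepsilon^2/k^2)$. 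The hypothesis $\varepsilon\ge 4k\sqrt{b^2-a^2}$ is exactly what makes this tail $\Theta(\varepsilon^2/k^2)$, hence $\lvert\hat\phi(s)\rvert\le\varepsilon/(4k)$ for every $s\notin L$. Applying \Cref{lemma:technical} to the $k$ heaviest coefficients $t_1,\dots,t_k$ of $\phi$ and the $k$ largest $\lvert\hat\xi\rvert$-strings $s_1,\dots,s_k$ of $L$ then gives $\max_{\ell\le k}\lvert\hat\phi(t_\ell)-\hat\phi(s_\ell)\rvert\le 2\max\{\max_{t\in L}\lvert\hat\phi(t)-\hat\xi(t)\rvert,\ \max_{t\notin L}\lvert\hat\phi(t)\rvert\}\le\varepsilon/(2k)$, and since $V$'s estimates are themselves $(\varepsilon/2k)$-accurate, \Cref{lemma:Fourier-sparse-learning-via-heaviest-Fourier-coefficients} certifies the hypothesis.

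Everything else is bookkeeping: the prover's cost is read off \Cref{corollary:distributional-agnostic-quantum-approximation-fourier-spectrum} at accuracy $\vartheta/2$; the verifier's $\mathcal{O}(\lvert L\rvert^2 k^4\log(\lvert L\rvert/\delta)/\varepsilon^4)\le\mathcal{O}(b^4 k^4\log(1/(\delta\vartheta^2))/(\varepsilon^4\vartheta^4))$ random examples, together with $\tilde{\mathcal{O}}(n b^4 k^4\log(1/(\delta\vartheta^2))/(\varepsilon^4\vartheta^4))$ time and memory, come from Hoeffding and the union bound; and the single message $L$ consists of at most $64 b^2/\vartheta^2\le 64/\vartheta^2$ strings of $n$ bits. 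The step I expect to be the main obstacle is purely the constant-chasing: because we must resolve $k$ distinct heavy coefficients each to accuracy $\Theta(\varepsilon/k)$, the $k$-dependence has to be threaded consistently through the verifier's estimation tolerance, through the accept/reject threshold, through the imbalance tail $(b^2-a^2)$ where the promise $\varepsilon\ge 4k\sqrt{b^2-a^2}$ enters, and through \Cref{lemma:technical} into \Cref{lemma:Fourier-sparse-learning-via-heaviest-Fourier-coefficients}; the padding of $L$ up to size $k$ by zero-weight strings is harmless precisely because those strings lie among the $s\notin L$ already shown to have small $\lvert\hat\phi\rvert$. The companion statement \Cref{theorem:distributional-agnostic-qsq-fourier-sparse-verification-no-small-non-zero-Fourier-coeff} is then the same argument with \Cref{corollary:distributional-agnostic-quantum-approximation-fourier-spectrum} replaced by \Cref{theorem:distributional-agnostic-qsq-GL} and $V$'s empirical estimates replaced by classical SQs of tolerance $\varepsilon^2\vartheta^2/(256 k^2)$.
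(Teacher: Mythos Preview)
Your proposal is correct and follows essentially the same approach as the paper: the prover runs \Cref{corollary:distributional-agnostic-quantum-approximation-fourier-spectrum} at accuracy $\vartheta/2$ and sends the resulting short list, the verifier estimates the listed coefficients to tolerance $\Theta(\varepsilon^2/(k^2\lvert L\rvert))$, checks accumulated weight against the threshold $a^2-\Theta(\varepsilon^2/k^2)$, and then invokes \Cref{lemma:technical} and \Cref{lemma:Fourier-sparse-learning-via-heaviest-Fourier-coefficients} exactly as you describe (the paper's explicit constants are $\varepsilon^2/(256k^2\lvert L\rvert)$ and $a^2-\varepsilon^2/(128k^2)$). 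One tiny slip: $\hat\phi(s)=\mathbb{E}_{(x,y)\sim\mathcal{D}}[(1-2y)\chi_s(x)]$, not $\mathbb{E}[y\chi_s(x)]$, but this does not affect your argument.
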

\begin{proof}
    This proof differs from that of \Cref{theorem:functional-agnostic-quantum-fourier-sparse-verification-no-small-non-zero-Fourier-coeff} only by a few technical adjustments to the procedure and its analysis, and by replacing \Cref{corollary:quantum-approximation-fourier-spectrum} with \Cref{corollary:distributional-agnostic-quantum-approximation-fourier-spectrum}.
    Let $\delta,\varepsilon\in (0,1)$.
    Let $0\leq a \leq b\leq 1$.
    Let $\mathcal{D}\in \mathfrak{D}_{ \mathcal{U}_n; \geq \vartheta} \cap \mathfrak{D}_{ \mathcal{U}_n; [a^2,b^2]}$.
    Assume that $\varepsilon\geq 4k\sqrt{b^2-a^2}$, with $\vartheta\in (2^{-(\tfrac{n}{2} - 3)},1)$.
    We begin the proof by describing the actions of the classical verifier $V$ and the honest quantum prover $P$:
    \begin{enumerate}
        \item $V$ asks $P$ to provide a list $L=\{s_1,\ldots ,s_{\lvert L\rvert}\}\subset \{0,1\}^n$ of length $\lvert L \rvert\leq \nicefrac{64 b^2}{\vartheta^2}$ consisting of pairwise distinct $n$-bit strings whose associated Fourier coefficients are non-zero.
        \item $P$ follows the procedure in \Cref{corollary:distributional-agnostic-quantum-approximation-fourier-spectrum} to produce, with success probability $\geq 1-\tfrac{\delta}{2}$, a succinctly represented $\Tilde{\phi}:\mathcal{X}_n\to [-1,1]$ such that $\norm{\Tilde{\phi}-\hat{\phi}}_\infty\leq\nicefrac{\vartheta}{2}$ and $\norm{\Tilde{\phi}}_0\leq\tfrac{64 b^2}{\vartheta^2}$. If $P$ obtains an output that violates the $\norm{\cdot}_0$-bound, then $P$ declares failure and the interaction aborts.
        Otherwise, $P$ then sends the list $L = \{s\in\{0,1\}^n~|~ \lvert \Tilde{\phi}(s)\rvert\geq \nicefrac{\vartheta}{2} \}$ to $V$.
        \item If $V$ receives a list $L$ of length $\lvert L\rvert> \nicefrac{64 b^2}{\vartheta^2} $, $V$ rejects the interaction. Otherwise, $V$ uses $\mathcal{O}\left(\tfrac{k^4 \lvert L\rvert^2\log(\nicefrac{\lvert L\rvert}{\delta})}{\varepsilon^4}\right)$ classical random examples from $\mathcal{D}$ to obtain simultaneously $(\nicefrac{\varepsilon^2}{256 k^2\lvert L\rvert})$-accurate estimates $\hat{\xi}(s)$ of $\hat{\phi}(s)$ for all $s\in L$, with success probability $\geq 1-\tfrac{\delta}{2}$, via Chernoff-Hoeffding combined with a union bound over $L$. (For $t\not\in L$, the verifier's estimate $\hat{\gamma}(t)$ for $\hat{g}(t)$ is just $0$.)
        \item If $\sum_{\ell = 1}^{\lvert L\rvert} \left( \hat{\xi}(s_\ell) \right)^2 \geq a^2 - \tfrac{\varepsilon^2}{128 k^2}$, then $V$ determines the $k$ heaviest Fourier coefficients in $L$. That is, $V$ determines $s_1\in\operatorname{argmax}_{t\in L} \lvert \hat{\xi}(t)\rvert$ and, for $2\leq \ell\leq k$, $s_\ell \in \operatorname{argmax}_{t\in L\setminus\{s_1,\ldots,s_{\ell -1}\}} \lvert \hat{\xi}(t)\rvert$, and then outputs the randomized hypothesis  $h:\mathcal{X}_n\to \{0,1\}$ from \Cref{lemma:Fourier-sparse-learning-via-heaviest-Fourier-coefficients}. (If $L$ has fewer than $k$ elements, $V$ just picks the strings $s_{\lvert L\rvert +1},\ldots,s_k$ at random from $\{0,1\}^n\setminus L$. These strings do not matter since $V$ estimates their Fourier weight as $0$.) If $\sum_{\ell = 1}^{\lvert L\rvert} \left( \hat{\xi}(s_\ell) \right)^2 < a^2 - \tfrac{\varepsilon^2}{128 k^2}$, then $V$ outputs $\mathrm{reject}$.
    \end{enumerate}
    We now show that the pair $(V,P)$ has the desired completeness and soundness properties. 
    As a first step towards this goal, we show that $V$ accepts an interaction with $P$ with high probability. To this end, observe that, conditioned on $P$ succeeding in Step 2, $V$ never rejects in Step 3. If we then further condition on $V$ succeeding in Step 3, we have
    \begingroup
    \allowdisplaybreaks
    \begin{align}
        \sum_{\ell = 1}^{\lvert L\rvert} \left( \hat{\xi}(s_\ell) \right)^2
        &\geq \sum_{\ell = 1}^{\lvert L\rvert} \left( \hat{\phi}(s_\ell) \right)^2 - 2\sum_{\ell = 1}^{\lvert L\rvert} \lvert\hat{\phi}(s_\ell)-\hat{\xi}(s_\ell)\rvert\\
        &\geq \sum_{s:\hat{\phi}(s)\neq 0} \lvert\hat{\phi}(s)\rvert^2 - 2\sum_{\ell = 1}^{\lvert L\rvert} \lvert\hat{\phi}(s_\ell)-\hat{\xi}(s_\ell)\rvert\\
        &\geq a^2 - 2\lvert L\rvert \cdot \frac{\varepsilon^2}{256 k^2 \lvert L\rvert}\\
        &= a^2 - \frac{\varepsilon^2}{128 k^2}\, ,
    \end{align}
    \endgroup
    where the first step used that the function $[-1,1]\ni \xi\to \xi^2$ is $2$-Lipschitz, the second step used that $\mathcal{D}\in \mathfrak{D}_{ \mathcal{U}_n; \geq \vartheta}$ implies $(\hat{\phi}(s)\neq 0~ \Rightarrow ~ s\in L)$ if Step 2 succeeds, and the third step used the approximation guarantee in Step 3 as well as Parseval together with $\mathcal{D}\in \mathfrak{D}_{ \mathcal{U}_n; [a^2,b^2]}$.
    Thus, if both Step 2 and Step 3 succeed, which by a union bound happens with probability $\geq 1-\delta$, then $V$ accepts in Step 4.
    
    Moreover, whenever Step 3 is successful and $V$ does not reject in Step 4, then the output of $V$ is as desired. This can be seen as follows: If $V$ does not reject in Step 4 and if Step 3 was successful, then this implies that
    \begingroup
    \allowdisplaybreaks
    \begin{align}
        \sum_{t\in L} \left(\hat{\phi}(t)\right)^2
        &=      \mathbb{E}_{x\sim\mathcal{U}_n}[(\phi(x))^2] -  \sum_{t\in L} \left(\hat{\phi}(t)\right)^2\\
        &\leq b^2 - \sum_{t\in L} \left(\hat{\xi}(t)\right)^2 + 2 \sum_{\ell = 1}^{\lvert L\rvert} \lvert\hat{\phi}(s_\ell)-\hat{\xi}(s_\ell)\rvert\\
        &\leq (b^2-a^2) + \frac{\varepsilon^2}{128 k^2} + 2\lvert L\rvert \cdot \frac{\varepsilon^2}{256 k^2\lvert L\rvert}\\
        &= (b^2-a^2) + \frac{\varepsilon^2}{64 k^2}\, ,    
    \end{align}
    \endgroup
    where we again used that $[-1,1]\ni \xi\to \xi^2$ is $2$-Lipschitz.
    This tells us that $\lvert \hat{\phi}(t)\rvert \leq \sqrt{(b^2-a^2) + \tfrac{\varepsilon^2}{64 k^2}}\leq \sqrt{b^2-a^2} + \tfrac{\varepsilon}{8 k} \leq \tfrac{\varepsilon}{4k}$ holds for every $t\not\in L$, which now allows us to show that the output hypothesis of $V$ has the desired property.
    Let $t_1,\ldots,t_k\in \{0,1\}^n$ be $k$ heaviest Fourier coefficients of $\phi$. That is, let $t_1\in \operatorname{argmax}_{t\in \{0,1\}^n} \lvert \hat{\phi}(t)\rvert$, and for $2\leq \ell\leq k$, let $t_\ell\in \operatorname{argmax}_{t\in \{0,1\}^n\setminus\{t_1,\ldots,t_{\ell -1}\}} \lvert \hat{\phi}(t)\rvert$.
    By \Cref{lemma:technical}, we have 
    \begin{align}
        \max_{1\leq \ell\leq k} \lvert \hat{\phi}(t_\ell) - \hat{\phi}(s_\ell)\rvert
        &\leq 2 \max_{t\in\{0,1\}^n} \lvert \hat{\phi}(t) - \hat{\xi}(t)\rvert\\
        &= 2\max\left\{\max_{t\in L}\lvert \hat{\phi}(t) - \hat{\xi}(t)\rvert,\max_{t\not\in L}\lvert \hat{\phi}(t) - \hat{\xi}(t)\rvert \right\}\\
        &= 2\max\left\{\max_{t\in L}\lvert \hat{\phi}(t) - \hat{\xi}(t)\rvert,\max_{t\not\in L}\lvert \hat{\phi}(t)\rvert \right\}\\
        &\leq 2\max\left\{\frac{\varepsilon^2}{256 k^2 |L|}, \frac{\varepsilon}{4k}\right\}\\
        &= \frac{\varepsilon}{2k}
    \end{align}
    As $\tfrac{\varepsilon^2}{256 k^2 |L|}\leq \tfrac{\varepsilon}{2k}$, this shows that the verifier has $(\nicefrac{\varepsilon}{2k})$-accurate estimates of $k$ $(\nicefrac{\varepsilon}{2k})$-approximately-heaviest Fourier coefficients of $\phi$. 
    Thus, by \Cref{lemma:Fourier-sparse-learning-via-heaviest-Fourier-coefficients}, the randomized hypothesis produced by $V$ is as desired.
    Note that this last part of our reasoning only relied on $V$ not rejecting in Steps 3 or 4, but was independent of the action of the quantum prover. Therefore, with this we have already established the desired soundness.
    The complexity bounds are obtained similarly to the proofs of \Cref{theorem:functional-agnostic-quantum-fourier-sparse-verification-qsq-no-small-non-zero-Fourier-coeff} and \Cref{theorem:distributional-agnostic-quantum-parity-verification-no-small-non-zero-Fourier-coeff}.
\end{proof}

To conclude our discussion of interactive verification of quantum learning, we extract the central routine underlying our verification protocols.
Namely, the above verification results rely on the fact that a classical verifier interacting with an untrusted quantum prover can construct an approximation to the Fourier spectrum of the unknown distribution. We make this explicit in the following result:

\begin{theorem}\label{proposition:interactive-verification-fourier-approximation}
    Let $\vartheta\in (2^{-(\tfrac{n}{2} - 3)},1)$.
    Let $0\leq a \leq b\leq 1$.
    Let $\delta\in (0,1)$ and $\varepsilon\geq 2\sqrt{b^2-a^2}$.
    There is a classical-quantum verifier-prover pair $(V,P)$ that achieves the following for any $\mathcal{D}=(\mathcal{U}_n, \varphi)\in \mathfrak{D}_{ \mathcal{U}_n; \geq \vartheta} \cap \mathfrak{D}_{ \mathcal{U}_n; [a^2,b^2]}$:
    \begin{itemize}
        \item $P$ uses $\mathcal{O}\left(\tfrac{\log(\nicefrac{1}{\delta\vartheta^2})}{\vartheta^4}\right)$ copies of $\rho_{\mathcal{D}}$ , $\mathcal{O}\left(n\tfrac{\log(\nicefrac{1}{\delta\vartheta^2})}{\vartheta^4}\right)$ single-qubit gates, a classical memory of size $\tilde{\mathcal{O}}\left(n\tfrac{\log(\nicefrac{1}{\delta\vartheta^2})}{\vartheta^4}\right)$, and classical running time $\tilde{\mathcal{O}}\left(n\tfrac{\log(\nicefrac{1}{\delta \vartheta^2})}{\vartheta^4}\right)$.
        \item $V$ uses $\mathcal{O}\left(b^4\tfrac{\log(\nicefrac{1}{\delta\vartheta^2})}{\varepsilon^4 \vartheta^4}\right)$ classical random examples, $\tilde{\mathcal{O}}\left(n\tfrac{b^4\log(\nicefrac{1}{\delta\vartheta^2})}{\varepsilon^4 \vartheta^4}\right)$ classical running time, and a classical memory of size $\tilde{\mathcal{O}}\left(n\tfrac{b^4 \log(\nicefrac{1}{\delta\vartheta^2})}{\varepsilon^4 \vartheta^4}\right)$.
        \item $(V,P)$ uses only a single round of communication consisting of at most $\mathcal{O}\left(\tfrac{n}{\vartheta^2}\right)$ classical bits.
        \item Completeness: If $V$ interacts with the honest prover $P$, then, with success probability $\geq 1-\delta$, $V$ accepts the interaction and outputs a succinctly represented $\tilde{\phi}$ such that $\norm{\tilde{\phi}-\hat{\phi}}_1\leq\varepsilon$ and $\norm{\tilde{\phi}}_0 \leq \mathcal{O}(\nicefrac{1}{\vartheta^2})$.
        \item Soundness: If $V$ interacts with any (possibly unbounded) prover $P'$, then, with failure probability at most $\delta$, $V$ accepts the interaction and outputs a $\tilde{\phi}$ such that $\norm{\tilde{\phi}-\hat{\phi}}_1 >\varepsilon$. 
    \end{itemize}
    In that sense, the Fourier spectrum of any distribution in $\mathfrak{D}_{ \mathcal{U}_n; \geq \vartheta} \cap \mathfrak{D}_{ \mathcal{U}_n; [a^2,b^2]}$ is efficiently verifiable through classical-quantum interactions. 
\end{theorem}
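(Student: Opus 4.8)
The plan is to reuse the interactive protocol from the proof of \Cref{theorem:distributional-agnostic-quantum-parity-verification-no-small-non-zero-Fourier-coeff} essentially unchanged, modifying only the verifier's final step: rather than returning a single $\operatorname{argmax}$ string, the verifier outputs the full succinctly represented estimated spectrum. Concretely, I would have the honest prover $P$ run the procedure of \Cref{corollary:distributional-agnostic-quantum-approximation-fourier-spectrum} with accuracy parameter $\nicefrac{\vartheta}{2}$ — admissible since $\vartheta>2^{-(n/2-3)}$ — producing a succinct $\tilde{\phi}_P$ with $\norm{\tilde{\phi}_P-\hat{\phi}}_\infty\le\nicefrac{\vartheta}{2}$ and $\norm{\tilde{\phi}_P}_0\le\nicefrac{64\,\mathbb{E}_{x\sim\mathcal{U}_n}[(\phi(x))^2]}{\vartheta^2}\le\nicefrac{64 b^2}{\vartheta^2}$, and send the list $L=\{s:\lvert\tilde{\phi}_P(s)\rvert\ge\nicefrac{\vartheta}{2}\}$. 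The verifier $V$ would reject outright if $\lvert L\rvert>\nicefrac{64 b^2}{\vartheta^2}$; otherwise it uses $\mathcal{O}(\nicefrac{\lvert L\rvert^2\log(\nicefrac{\lvert L\rvert}{\delta})}{\varepsilon^4})$ classical random examples to get, by Chernoff--Hoeffding and a union bound, simultaneous estimates $\hat{\gamma}(s)$ of $\hat{\phi}(s)$ for $s\in L$ to some tolerance $\tau_V$ on the scale $\nicefrac{\varepsilon\vartheta^2}{b^2}$ (set $\hat{\gamma}(t)=0$ for $t\notin L$), then checks whether $\sum_{s\in L}\hat{\gamma}(s)^2\ge a^2-\nicefrac{\varepsilon^2}{8}$; if so it accepts and outputs $\tilde{\phi}=\sum_{s\in L}\hat{\gamma}(s)\chi_s$ (stored as the list of pairs), and otherwise it rejects. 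The copy/time/memory bounds for $P$ then come directly from \Cref{corollary:distributional-agnostic-quantum-approximation-fourier-spectrum}, those for $V$ from the Chernoff--Hoeffding estimate over $\lvert L\rvert$ strings, and the single communication round carries only $L$.

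For completeness, the key observation is that the promise $\mathcal{D}\in\mathfrak{D}_{\mathcal{U}_n;\ge\vartheta}$ upgrades ``$L$ contains all $\vartheta$-heavy coefficients of $\hat\phi$'' to ``$L\supseteq\operatorname{supp}(\hat{\phi})$'', since a nonzero $\hat{\phi}(s)$ has $\lvert\hat{\phi}(s)\rvert\ge\vartheta$ and hence $\lvert\tilde{\phi}_P(s)\rvert\ge\nicefrac{\vartheta}{2}$. Conditioning on the successes of \Cref{corollary:distributional-agnostic-quantum-approximation-fourier-spectrum} and of $V$'s estimation (each with probability $\ge 1-\nicefrac{\delta}{2}$), we then have $\sum_{s\in L}\hat{\phi}(s)^2=\mathbb{E}_{x\sim\mathcal{U}_n}[(\phi(x))^2]\ge a^2$, so $V$'s check passes, and $\norm{\tilde{\phi}-\hat{\phi}}_1=\sum_{s\in L}\lvert\hat{\gamma}(s)-\hat{\phi}(s)\rvert\le\lvert L\rvert\tau_V\le\varepsilon$ by the choice of $\tau_V$.

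For soundness I would argue as follows. Condition on the good event that all of $V$'s estimates are $\tau_V$-accurate, which holds with probability $\ge 1-\delta$ no matter what the prover does; on its complement there is nothing to prove. If $V$ rejects, the bad event cannot occur, so suppose $V$ accepts: then $\sum_{s\in L}\hat{\gamma}(s)^2\ge a^2-\nicefrac{\varepsilon^2}{8}$, and $2$-Lipschitzness of $\xi\mapsto\xi^2$ on $[-1,1]$ gives $\sum_{s\in L}\hat{\phi}(s)^2\ge a^2-\nicefrac{\varepsilon^2}{4}$, whence $\sum_{s\notin L}\hat{\phi}(s)^2=\mathbb{E}_{x\sim\mathcal{U}_n}[(\phi(x))^2]-\sum_{s\in L}\hat{\phi}(s)^2\le (b^2-a^2)+\nicefrac{\varepsilon^2}{4}\le\nicefrac{\varepsilon^2}{2}$, the last inequality being precisely the hypothesis $\varepsilon\ge 2\sqrt{b^2-a^2}$. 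Using the no-small-coefficients promise once more — every nonzero $\hat{\phi}(s)$ obeys $\lvert\hat{\phi}(s)\rvert\le\hat{\phi}(s)^2/\vartheta$ — this controls the $\ell_1$-tail: $\sum_{s\notin L}\lvert\hat{\phi}(s)\rvert\le\nicefrac{1}{\vartheta}\sum_{s\notin L}\hat{\phi}(s)^2\le\nicefrac{\varepsilon^2}{2\vartheta}$ (indeed, once $\nicefrac{\varepsilon^2}{2}<\vartheta^2$ the tail must vanish entirely, since a single stray coefficient already contributes $\ge\vartheta^2$). Adding the on-$L$ error $\lvert L\rvert\tau_V$ gives $\norm{\tilde{\phi}-\hat{\phi}}_1\le\varepsilon$, so on the good event an accepting verifier never outputs a bad $\tilde{\phi}$, and the bad event has probability $\le\delta$.

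The main obstacle I anticipate is getting the guarantee in $\norm{\cdot}_1$ rather than in the weaker $\norm{\cdot}_\infty$ or $\norm{\cdot}_2$. Completeness is unproblematic because the honest list exhausts $\operatorname{supp}(\hat{\phi})$, but for soundness a dishonest prover can omit heavy coefficients, and one must rule out that the accepted $\tilde{\phi}$ nonetheless has large $\ell_1$-deviation. This is exactly where the two promises interact: the $L_2$-bounded-bias check pins down the accumulated Fourier weight of $L$, and the granularity of $\mathfrak{D}_{\mathcal{U}_n;\ge\vartheta}$ then forces whatever weight is missing to sit on a handful of $\ge\vartheta$-large coefficients, so its $\ell_1$-mass is dominated by its $\ell_2$-mass. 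The remaining work is just bookkeeping — choosing $\tau_V$ and the acceptance threshold so that all the $\varepsilon$- and $\vartheta$-dependent slacks add up correctly — together with checking that the honest prover's $\norm{\tilde{\phi}_P}_0$ bound (again a consequence of the $L_2$-bounded-bias promise) keeps $V$'s length check from ever falsely rejecting.
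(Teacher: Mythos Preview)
Your proposal is correct and takes essentially the same approach as the paper: the paper's own proof is the single sentence ``This is a consequence of the proofs of \Cref{theorem:distributional-agnostic-quantum-parity-verification-no-small-non-zero-Fourier-coeff} and \Cref{theorem:distributional-agnostic-quantum-fourier-sparse-verification-no-small-non-zero-Fourier-coeff},'' and you are doing precisely that---reusing the protocol and analysis of those theorems, only changing the verifier's final output to the full estimated spectrum. Your write-up in fact supplies more detail than the paper does, including the key observation that the no-small-coefficients promise lets the $\ell_2$ control on the off-$L$ weight (which the earlier proofs already establish) be upgraded to the $\ell_1$ control needed here.
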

\begin{proof}
    This is a consequence of the proofs of \Cref{theorem:distributional-agnostic-quantum-parity-verification-no-small-non-zero-Fourier-coeff} and \Cref{theorem:distributional-agnostic-quantum-fourier-sparse-verification-no-small-non-zero-Fourier-coeff}.
\end{proof}

An analogous interactive verification protocol for Fourier spectrum approximation is also possible in the statistical query setting.

\begin{remark}
    We complement \Cref{theorem:distributional-agnostic-qsq-fourier-sparse-verification-no-small-non-zero-Fourier-coeff,theorem:distributional-agnostic-quantum-fourier-sparse-verification-no-small-non-zero-Fourier-coeff,proposition:interactive-verification-fourier-approximation} with a short discussion of an alternative, simplified verification procedure. Assuming $\varepsilon> \sqrt{b^2 - a^2}$, this alternative for a classical SQ verifier and a quantum mixture-of-superpositions example prover looks as follows in the case of Fourier spectrum approximation up to error $\varepsilon$ in $2$-norm:
    \begin{enumerate}
        \item  $V$ asks $P$ to provide a list $L=\{(s_1,\hat{\phi}'(s_1)),\ldots , (s_{\lvert L\rvert},\hat{\phi}'(s_{\lvert L\rvert}))\}\subset \{0,1\}^n$ of length $\lvert L \rvert\leq \nicefrac{64 b^2}{\vartheta^2}$ consisting of pairwise distinct $n$-bit strings $s_\ell$ whose associated Fourier coefficients are non-zero and sufficiently accurate estimates $\hat{\phi}'(s_\ell)$ of the associated Fourier coefficients $\hat{\phi}(s_\ell)$.
        \item $P$ follows the procedure in \Cref{corollary:distributional-agnostic-quantum-approximation-fourier-spectrum} to produce, with success probability $\geq 1-\delta$, a succinctly represented $\hat{\phi}':\mathcal{X}_n\to [-1,1]$ such that $\norm{\hat{\phi}'-\hat{\phi}}_\infty\leq\Tilde{\varepsilon}\coloneqq \tfrac{\vartheta}{8b}\cdot\sqrt{\tfrac{\varepsilon^2 - (b^2-a^2)}{2}}$ and $\norm{\hat{\phi}'}_0\leq\tfrac{64 b^2}{\vartheta^2}$. 
        While this $0$-norm bound is not immediate from a direct application of \Cref{corollary:distributional-agnostic-quantum-approximation-fourier-spectrum}, it can be obtained via Parseval when using that every non-zero Fourier coefficient of $\phi$ has absolute value $\geq\vartheta$, that the total Fourier weight of $\phi$ is at most $b^2$, and that $\Tilde{\varepsilon}\leq\nicefrac{\vartheta}{3}$.
        If $P$ obtains an output that violates the $\norm{\cdot}_0$-bound, then $P$ declares failure and the interaction aborts.
        Otherwise, $P$ then sends the list $L = \{(s,\hat{\phi}'(s))\in\{0,1\}^n~|~ \lvert \hat{\phi}'(s)\rvert\geq \vartheta (1 - \tfrac{1}{8b}\cdot\sqrt{\tfrac{\varepsilon^2 - (b^2-a^2)}{2}}) \}$ to $V$. 
        \item If $V$ receives a list $L$ of length $\lvert L\rvert> \nicefrac{64 b^2}{\vartheta^2} $, $V$ rejects the interaction. %\mcc{<- Is this reject condition even needed?} 
        %If $\sum_{\ell=1}^{\lvert L\rvert} \left(\hat{\phi}'(s_\ell)\right)^2< a^2 - \ldots$ or $\sum_{\ell=1}^{\lvert L\rvert} \left(\hat{\phi}'(s_\ell)\right)^2 > b^2 + \ldots$, $V$ rejects the interaction.
        Otherwise, $V$ uses a single classical SQ of tolerance $\tau = \tfrac{\varepsilon^2 - (b^2-a^2)}{8}$ to obtain an estimate $\iota$ with $\lvert \iota - \langle \hat{\phi}', \hat{\phi}\rangle\rvert\leq\tau$. 
        This quantity can be estimated via an SQ since, by Plancherel, it can be rewritten as $\langle \hat{\phi}', \hat{\phi}\rangle = \sum_{s}\hat{\phi}'(s) \hat{\phi}(s) = \mathbb{E}_{x\sim\mathcal{U}_n}[\phi'(x)\phi(x)] = \mathbb{E}_{(x,y)\sim\mathcal{D}}[\phi'(x) (1-2y)]$, an expectation value of a function (which is known to $V$) w.r.t.~$\mathcal{D}$.
        \item If $\sum_{\ell=1}^{\lvert L\rvert} \left(\hat{\phi}'(s_\ell)\right)^2 - 2\iota> \tfrac{3\varepsilon^2 - 3b^2 + a^2}{4}$, $V$ rejects the interaction. Otherwise, $V$ outputs $\hat{\phi}'$.
    \end{enumerate}
    This verification procedure has the same soundness and completeness guarantees as \Cref{proposition:interactive-verification-fourier-approximation}, only with $2$-norms instead of $1$-norms. As we are dealing with sparse Fourier spectra, we can easily translate from $2$- to $1$-norm guarantees. Concretely, to achieve $1$-norm accuracy $\varepsilon$, it suffices to ensure a $2$-norm accuracy of $\varepsilon (\mathrm{sparsity})^{-1/2}\leq \mathcal{O}(\nicefrac{\varepsilon\vartheta}{b})$.\\
    Compared to our previous protocol, this alternative has the advantage that $V$ uses only a single SQ of similar tolerance and little classical computation.
    This is achieved by ``outsourcing'' more work to the quantum prover, leading to an $\varepsilon$-dependence and thus an increase in the number of copies and single-qubit gates as well as in the classical memory size and running time used by the honest $P$.\\
    To see that the protocol is sound, assume that a dishonest prover provides $\hat{\phi}'$ with $\norm{\hat{\phi}'-\hat{\phi}}_2^2> \varepsilon^2$. Then, expanding $\norm{\hat{\phi}'-\hat{\phi}}_2^2 = \norm{\hat{\phi}'}_2^2 + \norm{\hat{\phi}}_2^2 - 2\langle \hat{\phi}', \hat{\phi}\rangle$, rearranging, using $\norm{\hat{\phi}}_2^2\leq b^2$, and using the definition of $\iota$ gives us $\norm{\hat{\phi}'}_2^2 - 2\iota > \varepsilon^2 - b^2 - 2\tau = \tfrac{3\varepsilon^2 - 3b^2 + a^2}{4}$. Thus, $V$ rejects the interaction.
    In contrast, if $P$ is honest and provides $\hat{\phi}'$ with $\norm{\hat{\phi}'-\hat{\phi}}_2^2\leq \tfrac{64}{\vartheta^2}\cdot\norm{\hat{\phi}'-\hat{\phi}}_\infty^2 \leq \tfrac{\varepsilon^2 - (b^2-a^2)}{2}$, then expanding $\norm{\hat{\phi}'-\hat{\phi}}_2^2$, rearranging, using $\norm{\hat{\phi}}_2^2\geq a^2$, and using the definition of $\iota$ gives us$\norm{\hat{\phi}'}_2^2 - 2\iota \leq \tfrac{\varepsilon^2 - (b^2-a^2)}{2} - a^2 + 2\tau = \tfrac{3\varepsilon^2 - 3b^2 + a^2}{4}$. Thus, $V$ accepts the interaction and outputs a $\hat{\phi}'$ that is sufficiently accurate in $2$-norm.
\end{remark}

Finally, we note that, similarly to \Cref{theorem:limitation-improvement-distributional-agnostic-verification}, the accuracy limitation $\varepsilon\geq 2\sqrt{b^2-a^2}$ in \Cref{proposition:interactive-verification-fourier-approximation} cannot be significantly improved while keeping the sample complexity of the verifier $n$-independent.

\begin{proposition}\label{proposition:limitation-interactive-fourier-approximation}
    Let $0<a\leq b\leq 1$.
    Let $\alpha = a$ and $\beta = \sqrt{b^2-a^2}$.
    Define $\vartheta = \alpha$, $\varepsilon = \nicefrac{\beta}{3} = \tfrac{1}{3}\cdot \sqrt{b^2 -a^2}$ and $\delta=\nicefrac{1}{3}$.
    Any protocol for interactive verification for Fourier spectrum approximation (in the sense of \Cref{proposition:interactive-verification-fourier-approximation}) with a classical verifier $V$ with access to classical random examples interacting with a quantum prover $P$ with access to mixture-of-superpositions quantum examples that succeeds for any $\mathcal{D}\in \mathfrak{D}_{ \mathcal{U}_n; \geq \vartheta} \cap \mathfrak{D}_{ \mathcal{U}_n; [a^2,b^2]}$ requires the verifier to use at least $\Omega (n)$ classical examples. 
\end{proposition}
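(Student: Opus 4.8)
I would prove the statement by a testing-to-verification reduction, exactly in the spirit of the proof of \Cref{theorem:limitation-improvement-distributional-agnostic-verification} (which adapts \cite[Theorem~8]{mutreja2022pac-verification}). Suppose $(V,P)$ is a classical--quantum verifier--prover pair realizing interactive verification of Fourier spectrum approximation in the sense of \Cref{proposition:interactive-verification-fourier-approximation}, with $V$ using $m_V$ classical random examples and $P$ using $m_P$ mixture-of-superpositions examples. Using $(V,P)$ as a black box, I would construct a \emph{classical} tester $T$ that distinguishes two explicit families of distributions contained in $\mathfrak{D}_{\mathcal{U}_n;\geq\vartheta}\cap\mathfrak{D}_{\mathcal{U}_n;[a^2,b^2]}$ using only $m_V+\mathcal{O}(1)$ classical examples, and then invoke a sample-complexity lower bound of the type in \Cref{lemma:fully-uniform-vs-random-noisy-parity} for distinguishing the (essentially) uniform distribution from a random noisy parity, which forces $m_V\geq\Omega(n)$.

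\textbf{The two families and the tester.}
As the ``accept case'' I would use a single fixed $\mathcal{D}_0\in\mathfrak{D}_{\mathcal{U}_n;\geq\vartheta}\cap\mathfrak{D}_{\mathcal{U}_n;[a^2,b^2]}$, the natural candidate being $\phi_0=\vartheta\,\chi_{t_0}$ for a fixed nonzero $t_0$: this has the minimal admissible Fourier weight $\vartheta^2=a^2$ and, for $\vartheta$ small, is exponentially close in total-variation distance to the fully uniform distribution $\mathcal{U}_{n+1}$. As the ``reject case'' I would use the family $\{\mathcal{D}_s\}_{s\neq 0^n,\,s\neq t_0}$ with $\phi_s=b\,\chi_s$, i.e.\ single parities of maximal admissible weight $b^2$ at random locations; each $\mathcal{D}_s$ lies in both promise classes (since $b\geq\vartheta=a$ and $b^2\in[a^2,b^2]$) and satisfies $\|\hat\phi_0-\hat\phi_s\|_1=\vartheta+b>b>2\varepsilon$ because $\varepsilon=\beta/3<b/2$. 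The tester $T$, given an unknown $\mathcal{D}$ from one of the two families, draws $S_V\sim\mathcal{D}^{\otimes m_V}$, writes down $m_P$ copies of $\rho_{\mathcal{D}_0}$ (possible, if computationally inefficiently, since $\mathcal{D}_0$ is known), simulates $V(S_V)$ interacting with $P$ fed those copies, and obtains $\mathrm{reject}$ or a succinct $\tilde\phi$; $T$ outputs ``accept case'' iff $V$ did not reject and $\|\tilde\phi-\hat\phi_0\|_1\leq\varepsilon$, comparing against the \emph{known} spectrum $\hat\phi_0$, so no extra samples are needed.

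\textbf{Correctness of the reduction.}
If $\mathcal{D}=\mathcal{D}_0$, then $P$ receives the correct data, so by completeness of $(V,P)$, with probability $\geq 1-\delta$, $V$ accepts and outputs $\tilde\phi$ with $\|\tilde\phi-\hat\phi_0\|_1\leq\varepsilon$; hence $T$ answers ``accept case''. If $\mathcal{D}=\mathcal{D}_s$, then $P$ is fed data for the wrong distribution, but the combined behaviour of this $P$ is just \emph{some} prover strategy against $V$, so soundness still applies: with probability $\geq 1-\delta$, either $V$ rejects or $\|\tilde\phi-\hat\phi_s\|_1\leq\varepsilon$, and in the latter case the triangle inequality with $\|\hat\phi_0-\hat\phi_s\|_1>2\varepsilon$ gives $\|\tilde\phi-\hat\phi_0\|_1>\varepsilon$; in either case $T$ answers ``reject case''. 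Thus $T$ solves the distinguishing task with success probability $\geq 1-\delta=2/3$ using $m_V$ classical examples, and a comparison with the $\Omega(n)$ lower bound of \Cref{lemma:fully-uniform-vs-random-noisy-parity} yields $m_V\geq\Omega(n)$.

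\textbf{Main obstacle.}
The technical heart is the choice of hard instances: they must simultaneously (i) have no nonzero Fourier coefficient below $\vartheta=a$; (ii) lie in $\mathfrak{D}_{\mathcal{U}_n;[a^2,b^2]}$; (iii) be pairwise $\ell_1$-separated in Fourier space by more than $2\varepsilon=\tfrac23\sqrt{b^2-a^2}$; and (iv) reduce to a genuinely $\Omega(n)$-hard distinguishing problem, which in essence forces the accept-case distribution to be statistically close to the fully uniform distribution over $o(n)$ samples and hence constrains the scale of $\vartheta$. Verifying that a candidate pair of families meets all four requirements simultaneously, and isolating the precise instance of \Cref{lemma:fully-uniform-vs-random-noisy-parity} needed (for a random noisy parity with correlation $b$, i.e.\ noise rate $(1-b)/2$), is where the real work lies; once the instances are fixed, the tester and its completeness/soundness analysis go through verbatim as above.
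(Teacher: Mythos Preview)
Your reduction template and the completeness/soundness analysis via the triangle inequality are fine, but the choice of hard instances is fatal. In your accept case $\phi_0 = a\chi_{t_0}$ the Fourier coefficient at the \emph{known, fixed} location $t_0$ equals $a$, while in every reject case $\phi_s = b\chi_s$ (with $s\neq t_0$) it equals $0$. A tester that simply estimates $\hat\phi(t_0)$ from $\mathcal{O}(1/a^2)$ classical samples therefore distinguishes the two families, independently of $n$. Consequently your distinguishing problem does \emph{not} require $\Omega(n)$ samples, and no appeal to \Cref{lemma:fully-uniform-vs-random-noisy-parity} can yield the desired bound. Relatedly, the assertion that $(\mathcal{U}_n, a\chi_{t_0})$ is ``exponentially close in total-variation distance'' to $\mathcal{U}_{n+1}$ is false: the single-sample TV distance is $a/2$, a constant in the regime of interest (the paper explicitly treats $a,b$ as constants and studies the $n$-scaling).

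The paper's proof repairs exactly this point. Its accept distribution is $\phi=\alpha\chi_s$ and its reject family is $\mathfrak{D}_{\alpha,\beta}^s=\{\alpha\chi_s+\beta\chi_t: t\neq s\}$; i.e., the two cases \emph{share} the known Fourier component $\alpha\chi_s$ and differ only by a hidden random parity of weight $\beta$. Now probing any known coefficient yields identical values in both cases, and distinguishing genuinely requires locating the hidden $t$. To certify hardness, the paper supplies an additional sample-by-sample post-processing reduction (\Cref{lemma:fourier-modification-reduction}), which maps $\mathcal{U}_{n+1}$ to $(\mathcal{U}_n,\alpha\chi_s)$ and each $(\mathcal{U}_n,(1-2\eta)\chi_t)$ to $(\mathcal{U}_n,\alpha\chi_s+\beta\chi_t)$; this shows that the paper's distinguishing problem is at least as hard as the ``uniform vs.\ random noisy parity'' problem of \Cref{lemma:fully-uniform-vs-random-noisy-parity}, giving $\Omega(n)$. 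Your tester and its analysis go through verbatim once the instances are chosen this way; what you flagged as ``the real work'' is precisely (i) making the accept and reject families share all known Fourier mass, and (ii) providing this extra post-processing lemma.
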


Here, we consider $a$ and $b$ to be constant, and we focus on the scaling with $n$. %Unfortunately, as in \Cref{theorem:limitation-improvement-distributional-agnostic-verification}, our reduction is not computationally efficient. Thus, it does not allow us to inherit about computational complexity limitations from LPN.

\begin{proof}   
    Let $s\in \{0,1\}^n$ be arbitrary. 
    Define the class of distributions
    \begin{equation}
        \mathfrak{D}_{\alpha, \beta}^{s}
        =\left\{(\mathcal{U}_n, \varphi')~|~\exists t\in\{0,1\}^n\setminus\{s\}: \phi' =1-2\varphi' = \alpha\chi_s + \beta\chi_t\right\} .
    \end{equation}
    Note that $\mathfrak{D}_{\alpha, \beta}^{s}\subset \mathfrak{D}_{ \mathcal{U}_n; \geq \vartheta} \cap \mathfrak{D}_{ \mathcal{U}_n; [a^2,b^2]}$.
    Using once more the strategy of \cite[Theorem 8]{mutreja2022pac-verification} similarly to the proof of \Cref{theorem:limitation-improvement-distributional-agnostic-verification}, one can show that any classical-quantum interactive verification procedure for Fourier spectrum approximation (in the sense of \Cref{proposition:interactive-verification-fourier-approximation}) with parameters as in the statement of the proposition gives rise to a classical tester that can distinguish a uniformly random element of $\mathfrak{D}_{\alpha, \beta}^{s}$ from $(\mathcal{U}_n, \alpha\chi_s)$, with success probability $\geq 1-\delta$.
    In \Cref{lemma:fourier-modification-reduction}, we show that this latter problem is at least as hard as distinguishing a uniformly random $\eta$-noisy parity (acting on uniformly random inputs) from the uniform distribution $\mathcal{U}_{n+1}$, for any $\eta\in [0,\tfrac{1-\nicefrac{\beta}{(1-\alpha)}}{2}]$.
    This last problem requires at least $\Omega (n)$ classical examples (compare \Cref{lemma:fully-uniform-vs-random-noisy-parity}).
    Thus, by our reduction, so does the problem that we started from.
\end{proof}

\section{Distribution-Independent Agnostic Quantum Learning and its Verification}\label{section:distribution-independent-quantum-limitations}

So far, we have focused on agnostic learning under a promise on the input marginal. Namely, we assumed that $\mathcal{D}_{\mathcal{X}_n}=\mathcal{U}_n$ is the uniform distribution. 
While this focus on distribution-dependent learning is common in computational learning theory (to avoid computational infeasibility results), statistical learning theory also often considers a setting of distribution-independent learning, where no prior assumptions on the input marginal of the unknown distribution are made. 
The sample complexity of learning in this distribution-independent agnostic model has long been fully characterized in the classical case \cite{vapnik1971uniform, blumer1989learnability, talagrand1994sharper}.
Moreover, \cite{arunachalam2018optimal} recently established that the optimal quantum sample complexity when using superposition examples coincides with the classical one up to constant factors. 
Here, we demonstrate that such a limitation of quantum learning also applies to our mixture-of-superpositions examples.

\begin{theorem}[Formal statement of \Cref{theorem:main-result-distribution-independent-lower-bound}]\label{theorem:distribution-independent-lower-bound}
    Let $\mathcal{F}\subseteq\{0,1\}^{\mathcal{X}_n}$ be a benchmark class with VC-dimension $\operatorname{VC}(\mathcal{F})=d\geq 1$.
    Then at least
    \begin{equation}
        m\geq \tilde{\Omega} \left( \frac{d + \log(1/\delta)}{\varepsilon^2} \right)
    \end{equation}
    copies of $\rho_\mathcal{D}$, with $\mathcal{D}$ an unknown probability distribution over $\{0,1\}^n\times\{0,1\}$, are necessary for distribution-independent quantum agnostic learning of $\mathcal{F}$ with accuracy $\varepsilon\in (0,\tfrac{1}{4})$ and confidence parameter $\delta \in (0,\tfrac{1}{2})$. 
    Here, the $\tilde{\Omega}$ hides prefactors logarithmic in $d$.
\end{theorem}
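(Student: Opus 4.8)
# Proof Plan for Theorem (Distribution-Independent Lower Bound)

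The plan is to adapt the information-theoretic / Fano-method lower bound of \cite{arunachalam2018optimal} from standard superposition examples to mixture-of-superpositions examples, following the strategy already announced in \Cref{ss:techniques}. The key structural observation is that a mixture-of-superpositions example $\rho_\mathcal{D}$ is a \emph{classical mixture} of pure superposition states $\ket{\psi_{(\mathcal{D}_{\mathcal{X}_n}, f)}}$ over $f\sim F_\mathcal{D}$, so any quantum learner receiving $m$ copies of $\rho_\mathcal{D}$ can be viewed as receiving a state in a larger Hilbert space that carries no more information than $m$ copies of the ``worst-case'' pure superposition example; intuitively, the extra classical randomness in the mixture can only hurt, not help, the learner. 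I would make this precise via a data-processing / purification argument: the mixed state $\rho_\mathcal{D}^{\otimes m}$ can be obtained from a purification in which an auxiliary register holds the i.i.d.\ draws $f_1,\dots,f_m\sim F_\mathcal{D}$, and tracing out that register is a quantum channel, so by monotonicity of the relevant distinguishability measure (trace distance, or the quantity used in the Fano-type bound of \cite{arunachalam2018optimal}) the learner's success probability is at most what it would be given the purified ensemble.

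Concretely, the steps I would carry out are: (1) Recall the classical lower-bound construction: fix a shattered set $\{x_1,\dots,x_d\}\subseteq\mathcal{X}_n$, and for each $a\in\{0,1\}^d$ build a distribution $\mathcal{D}_a$ over $\mathcal{X}_n\times\{0,1\}$ supported on these $d$ points, where the conditional label bias on $x_i$ is $\tfrac12 + (-1)^{a_i}\cdot c\varepsilon$ for a suitable constant $c$; since the set is shattered, for each $a$ there is a benchmark function $b_a\in\mathcal{F}$ with $\operatorname{opt}_{\mathcal{D}_a}(\mathcal{F})$ small, and any $\varepsilon$-good hypothesis must agree with $a$ on most coordinates, so learning forces recovery of (most of) $a$. (2) Upper-bound the information a learner extracts from $m$ copies of $\rho_{\mathcal{D}_a}$: here is where the mixture structure enters — I would compute (or bound) the pairwise fidelity / trace distance between $\rho_{\mathcal{D}_a}$ and $\rho_{\mathcal{D}_{a'}}$, or better, between the purified $m$-copy ensembles, and show it is $O(\varepsilon)$ per coordinate of disagreement, exactly as in the pure-state case. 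The point is that $\rho_{\mathcal{D}_a} = \Ex_{f\sim F_{\mathcal{D}_a}}[\ket{\psi_{(\mathcal{U}_n,f)}}\bra{\psi_{(\mathcal{U}_n,f)}}]$ and the per-point structure makes $F_{\mathcal{D}_a}$ a product distribution over the $d$ coordinates, so both the relevant states and their distinguishability tensorize over coordinates up to controlled cross terms. (3) Feed these distinguishability bounds into the quantum Fano inequality (or the Holevo-information bound, as used in \cite{arunachalam2018optimal}) to conclude that recovering $a$ up to the required Hamming accuracy needs $m = \tilde\Omega(d/\varepsilon^2)$ copies; the additional $\log(1/\delta)/\varepsilon^2$ term comes from the standard two-distribution / confidence argument (distinguishing two distributions differing in bias $\Theta(\varepsilon)$ at one point with confidence $1-\delta$), which transfers verbatim since it only involves one-qubit-marginal statistics that a mixture-of-superpositions example reproduces (compare \Cref{lemma:sanity-check-reproduce-classical-sample}).

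The main obstacle I anticipate is step (2): controlling the cross terms in the fidelity/trace-distance computation between $\rho_{\mathcal{D}_a}^{\otimes m}$ and $\rho_{\mathcal{D}_{a'}}^{\otimes m}$ when $a$ and $a'$ differ in several coordinates. In the pure-superposition-example analysis of \cite{arunachalam2018optimal} one exploits that $\braket{\psi_{\mathcal{D}_a}}{\psi_{\mathcal{D}_{a'}}}$ factorizes cleanly; for the mixed states the inner product is replaced by $\tr[\rho_{\mathcal{D}_a}\rho_{\mathcal{D}_{a'}}] = \Ex_{f,f'}[|\braket{\psi_{(\mathcal{U}_n,f)}}{\psi_{(\mathcal{U}_n,f')}}|^2]$ with $f\sim F_{\mathcal{D}_a}$, $f'\sim F_{\mathcal{D}_{a'}}$ independent, and one must verify this still decays at the right rate; I expect the product structure of $F_{\mathcal{D}_a}$ over the shattered coordinates makes this go through, yielding per-coordinate factors of the form $1 - \Theta(\varepsilon^2)$, which is exactly what is needed for the $\varepsilon^{-2}$ scaling. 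The logarithmic-in-$d$ slack in $\tilde\Omega$ is the price of using the mixture (rather than pure) states in the Fano bound — in the pure case one gets the tight constant — and matching the conjectured tight constant-factor bound is explicitly left open, so I would not attempt it here.
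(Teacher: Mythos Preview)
Your high-level plan matches the paper's: the same hard family $\{\mathcal{D}_a\}_{a\in\{0,1\}^d}$ on a shattered set with per-coordinate bias $\Theta(\varepsilon)$, then a Holevo/Fano mutual-information argument as in \cite{arunachalam2018optimal}, plus the two-distribution argument for the $\log(1/\delta)/\varepsilon^2$ term. However, the technical execution you propose for the $d$-dependent part diverges from the paper and contains gaps.

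First, the purification/data-processing detour does not reduce to the known pure-state case. Passing to the purification $\sum_f \sqrt{F_{\mathcal{D}_a}(f)}\,\ket{f}\otimes\ket{\psi_{(\mathcal{U}_S,f)}}$ gives the learner \emph{more} power, so a lower bound there would indeed imply one for the mixed state---but this purification is \emph{not} the standard superposition example $\ket{\psi_{\mathcal{D}_a}}$ treated in \cite{arunachalam2018optimal}, so you cannot simply cite their bound. Second, the claimed ``tensorization over coordinates'' for $\tr[\rho_{\mathcal{D}_a}\rho_{\mathcal{D}_{a'}}]$ fails: the state $\ket{\psi_{(\mathcal{U}_S,f)}}=\tfrac{1}{\sqrt d}\sum_i\ket{x_i,f(x_i)}$ is a \emph{sum}, not a tensor product, over the $d$ coordinates, so $|\braket{\psi_f}{\psi_{f'}}|^2 = \tfrac{1}{d^2}\lvert\{i:f(x_i)=f'(x_i)\}\rvert^2$ does not factorize, and averaging over $f,f'$ does not yield per-coordinate factors of the form $1-\Theta(\varepsilon^2)$.

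The paper takes a more direct route: it bounds $I(A;B_1)=S(\overline{\rho})-\tfrac{1}{2^d}\sum_a S(\rho_{\mathcal{D}_a})$ by \emph{explicitly diagonalizing} both the average state $\overline{\rho}=\tfrac{1}{2^d}\sum_a\rho_{\mathcal{D}_a}$ and the states $\rho_{\mathcal{D}_a}$ (which are all unitarily conjugate, so one diagonalization suffices). The eigenvalues are found in closed form, the two entropies are expanded to second order in $\varepsilon$, and one obtains $I(A;B_1)=O(\varepsilon^2\log d)$. Combined with $I(A;B_1,\ldots,B_m)\leq m\cdot I(A;B_1)$ and the $\Omega(d)$ mutual-information lower bound needed for approximate recovery of $a$, this gives $m=\Omega(d/(\varepsilon^2\log d))$. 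The $\log d$ loss arises from this explicit entropy calculation, not from ``using the mixture rather than pure states in the Fano bound'' as you conjecture.
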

\begin{proof}
    We first prove the lower bound in terms of $\delta$ and $\varepsilon$. In fact, we establish this lower bound for any non-trivial benchmark class $\mathcal{F}$, similarly to \cite[Lemma 12]{arunachalam2018optimal} and \cite[Lemma 5.1]{caro2021binary}. 
    As $\mathcal{F}$ is non-trivial, then there are two concepts $f_1,f_2\in\mathcal{F}$ and an input $x_0\in\mathcal{X}_n$ such that $f_1(x_0)\neq f_2(x_0)$. 
    Define two probability distributions $D_{\pm}$ over $\mathcal{X}_n\times\{0,1\}$ as follows: $D_{\pm}(x,y)=0$ if $x\neq x_0$, $D_{\pm}(x_0,f_1(x)) = \tfrac{1\pm\varepsilon}{2}$, and $D_{\pm}(x_0,f_2(x)) = \tfrac{1\mp\varepsilon}{2}$.
    It is easy to see that any quantum agnostic learning algorithm that, with success probability $\geq 1-\delta$, achieves an additive error $\leq \varepsilon$ compared to the optimal achievable risk in particular also distinguishes between $D_+$ and $D_-$ with success probability $\geq 1-\delta$, given access to the corresponding training data states $\rho_{\mathcal{D}_\pm}^{\otimes m}$.
    The optimal success probability for this distinguishing task is given by (compare, e.g., \cite{nielsen2000quantum}) $p_{\mathrm{opt}} = \tfrac{1}{2}(1 + \tfrac{1}{2}\norm{\rho_{\mathcal{D}_+}^{\otimes m} - \rho_{\mathcal{D}_-}^{\otimes m}}_1)$. With the fidelity defined as $F(\rho,\sigma)=\Tr[\sqrt{\rho^{\nicefrac{1}{2}}\sigma\rho^{\nicefrac{1}{2}}}]$, the Fuchs-van de Graaf inequalities \cite{fuchs1999cryptographic} now tell us
    \begin{equation}
        \frac{1}{2}\norm{\rho_{\mathcal{D}_+}^{\otimes m} - \rho_{\mathcal{D}_-}^{\otimes m}}_1
        \leq \sqrt{1 - F(\rho_{\mathcal{D}_+}^{\otimes m}, \rho_{\mathcal{D}_-}^{\otimes m})^2}
        = \sqrt{1 - F(\rho_{\mathcal{D}_+}, \rho_{\mathcal{D}_-})^{2m}} ,
    \end{equation}
    where the last step holds because the fidelity is multiplicative under tensor products.
    Combining this upper bound with the requirement $p_{\mathrm{opt}}\geq 1-\delta$, we see after a rearranging that $m\geq \tfrac{\log(4\delta(1-\delta))}{\log(F(\rho_{\mathcal{D}_+}, \rho_{\mathcal{D}_-})^2)}$. 
    Using that the fidelity is strongly concave \cite[Theorem 9.7]{nielsen2000quantum} and that $\rho_{\mathcal{D}_{\pm}} = \tfrac{1\pm\varepsilon}{2}\ket{x_0,f_1(x_0)}\bra{x_0,f_1(x_0)} + \tfrac{1\mp\varepsilon}{2}\ket{x_0,f_2(x_0)}\bra{x_0,f_2(x_0)}$, we get the fidelity lower bound
    \begin{align}
        F(\rho_{\mathcal{D}_+}, \rho_{\mathcal{D}_-})
        &\geq \sqrt{\frac{1+\varepsilon}{2}\cdot \frac{1-\varepsilon}{2}} F(\ket{x_0,f_1(x_0)}\bra{x_0,f_1(x_0)}, \ket{x_0,f_1(x_0)}\bra{x_0,f_1(x_0)}) \\
        &\hphantom{\geq}~ + \sqrt{\frac{1-\varepsilon}{2}\cdot \frac{1+\varepsilon}{2}} F(\ket{x_0,f_2(x_0)}\bra{x_0,f_2(x_0)}, \ket{x_0,f_2(x_0)}\bra{x_0,f_2(x_0)})\\
        &= \sqrt{1-\varepsilon^2} .
    \end{align}
    Plugging this lower bound on $F(\rho_{\mathcal{D}_+}, \rho_{\mathcal{D}_-})$ into our previous lower bound on $m$, we get
    \begin{equation}
        m
        \geq \frac{\log(4\delta(1-\delta))}{\log(F(\rho_{\mathcal{D}_+}, \rho_{\mathcal{D}_-})^2)}
        = \frac{\log(\tfrac{1}{4\delta(1-\delta)})}{\log(\tfrac{1}{F(\rho_{\mathcal{D}_+}, \rho_{\mathcal{D}_-})^2})}
        \geq \frac{\log(\tfrac{1}{4\delta(1-\delta)})}{\log(\tfrac{1}{1-\varepsilon^2})} .
    \end{equation}
    A Taylor expansion of the logarithm in the denominator now gives the lower bound
    \begin{equation}
        m\geq \Omega \left(\frac{\log(1/\delta)}{\varepsilon^2} \right) .
    \end{equation}
    
    Next, we prove the $d$-dependent part of the lower bound. For this, we adapt the information-theoretic proof strategy from \cite{arunachalam2018optimal}.
    Let $\varepsilon\in (0,\tfrac{1}{4})$.
    As $\operatorname{VC}(\mathcal{F})=d$, we can find a set $S=\{x_1,\ldots , x_d\}\subset\mathcal{X}_n$ of $d$ distinct points that is shattered by $\mathcal{F}$. That is, for every $a\in\{0,1\}^d$, there exists $f_a\in\mathcal{F}$ such that $f_a(x_i)=a_i$ holds for all $1\leq i\leq d$.
    Now, for each $a\in\{0,1\}^d$, we define the probability distribution $\mathcal{D}_a$ over $\mathcal{X}_n\times \{0,1\}$ as follows:
    \begin{equation}
        \mathcal{D}_a (x,b)
        = 
        \begin{cases}
            \frac{1}{2d}\left(1 + (-1)^{a_i + b}\cdot 4\varepsilon\right)\quad &\textrm{ if } x = x_i\\
            0 &\textrm{ else}
        \end{cases}.
    \end{equation}
    By construction, for every $a\in\{0,1\}^d$ and for every $f\in\mathcal{F}$, we have
    \begin{equation}
        \mathbb{P}_{(x,b)\sim \mathcal{D}_a} [b\neq f(x)]
        = \frac{1}{2d}\sum_{i=1}^d \left( \left(1 + 4\varepsilon\right) \delta_{f(x_i), a_i\oplus 1} + \left(1 - 4\varepsilon\right) \delta_{f(x_i), a_i} \right)
    \end{equation}
    Thus, $f\in\mathcal{F}$ is a minimum-error concept in $\mathcal{F}$ w.r.t.~$\mathcal{D}_a$ if an only if $f\rvert_{S}(x_i) = a_i$ holds for all $1\leq i\leq d$. Moreover, if $f\rvert_{S}(x_i) = c_i$ for some $c\in\{0,1\}^d$ with $c\neq a$, then such an $f$ incurs excess risk $\tfrac{4\varepsilon}{d}\cdot d_{\rm H}(a,c)$, where $d_{\rm H}(\cdot,\cdot)$ denotes the Hamming distance. 
    Accordingly, any quantum algorithm for distribution-independent quantum agnostic learning $\mathcal{F}$ from $m$ copies of $\rho_{\mathcal{D}_a}$, $a\in\{0,1\}^d$ unknown, has to output a hypothesis that, when restricted to $S$, becomes a $d$-bit string that is $\tfrac{d}{4}$-close to $a$ in Hamming distance, with success probability $\geq 1-\delta$. This still holds if the quantum learner is promised in advance that the unknown distribution is supported on $S$.

    Let us consider the CQ state
    \begin{equation}
        \rho
        \coloneqq \frac{1}{2^d} \sum_{a\in\{0,1\}^d} \ket{a}\bra{a} \otimes \rho_{\mathcal{D}_a}^{\otimes m},
    \end{equation}
    where $m$ is the training data size. We will refer to its classical subsystem as the $A$-subsystem and to the $m$ quantum registers as subsystems $B_1,\ldots ,B_m$.
    Using that the output of the quantum agnostic learner upon input of $\rho_{\mathcal{D}_a}^{\otimes m}$ is $\tfrac{d}{4}$-close to $a$, one can show that the mutual information between the classical subsystem and the quantum subsystems in $\rho$ satisfies $I(A;B_1,\ldots ,B_m)_\rho\geq \Omega (d)$, compare \cite[Proof of Theorem 12]{arunachalam2018optimal}.
    Next, since $\rho_{\mathcal{D}_a}^{\otimes m}$ is a tensor power for every $a$, we have $I(A;B_1,\ldots ,B_m)_\rho\leq m\cdot I(A;B_1)_\rho$, compare again \cite[Proof of Theorem 12]{arunachalam2018optimal}.
    Thus, the remainder of the proof is concerned with upper bounding $I(A;B_1)_\rho$. As $\rho_{A B_1}$ is a CQ-state, $I(A;B_1)_\rho$ equals the Holevo information of the ensemble $\{(\tfrac{1}{2^d}, \rho_{\mathcal{D}_a})\}_{a\in\{0,1\}^d}$, compare \cite[Exercise 11.6.9]{wilde2011bookarxiv}. That is,
    \begin{equation}
        I(A;B_1)_\rho
        = S\left(\frac{1}{2^d} \sum_{a\in\{0,1\}^d} \rho_{\mathcal{D}_a}\right) - \frac{1}{2^d} \sum_{a\in\{0,1\}^d} S\left(\rho_{\mathcal{D}_a}\right),
        \label{eq:mutual_info_holevo_information}
    \end{equation}
    where $S(\rho)$ denotes the von Neumann entropy of the state $\rho$.
    Henceforth, we refer to $\overline{\rho}:=\frac{1}{2^d} \sum_{a\in\{0,1\}^d} \rho_{\mathcal{D}_a}$ as the average state. 
    We can diagonalize both $\overline{\rho}$ as well as $\rho_{\mathcal{D}_a}$ to obtain their respective spectra and hence their von Neumann entropies.
    We start by writing out $\rho_{\mathcal{D}_a}$, which is the mixture-of-superpositions example state corresponding to the distribution $\mathcal{D}_a$. Note that the marginal distribution $\mathcal{D}_{a}\rvert_{ \mathcal{X}_n}$ is uniform over the set shattered $S=\{x_1, \dots, x_d\}$. We denote this marginal distribution by $\mathcal{U}_S$. Hence, we have
    \begin{equation}
        \rho_{\mathcal{D}_a} 
        = \mathbb{E}_{f\sim F_{\mathcal{D}_a}} \left[ \ket{\psi_{(\mathcal{U}_S, f)}}\bra{\psi_{(\mathcal{U}_S, f)}} \right]
        = \sum_{c\in\left\{ 0,1\right\} ^{d}}\left(\prod_{i=1}^{d}\frac{1}{2}\left(1+\left(-1\right)^{a_{i}+c_{i}}4\epsilon\right)\right)\ket{\psi_{\mathcal{U}_S,f_{c}}}\bra{\psi_{\mathcal{U}_S,f_{c}}} ,
    \end{equation}
    where we defined the pure states $     \ket{\psi_{\mathcal{U}_S,f_{c}}}=\frac{1}{\sqrt{d}}\sum_{i=1}^{d}\ket{x_{i},f_{c}\left(x_{i}\right)}=\frac{1}{\sqrt{d}}\sum_{i=1}^{d}\ket{x_{i},c_{i}}$.
    The average state $\overline{\rho}$ is given by
    \begin{align}
        \overline{\rho}&=\frac{1}{2^{d}}\sum_{a\in\left\{ 0,1\right\} ^{d}}\rho_{\mathcal{D}_{a}} \\
	 &=\frac{1}{2^{d}} \sum_{a\in\left\{ 0,1\right\} ^{d}} \sum_{c\in\left\{ 0,1\right\} ^{d}} \left(\prod_{i=1}^{d}\frac{1}{2}\left(1+\left(-1\right)^{a_{i}+c_{i}}4\epsilon\right)\right) \ket{\psi_{\mathcal{U}_S,f_{c}}}\bra{\psi_{\mathcal{U}_S,f_{c}}} \\
	&=\frac{1}{d2^{d}}\sum_{c\in\left\{ 0,1\right\} ^{d}}\sum_{i,j=1}^{d}\ket{x_{i},c_{i}}\bra{x_{j},c_{j}} \\
        &= \frac{1}{2d}\sum_{i=1}^{d}\sum_{c_{i}\in\left\{ 0,1\right\} }\ket{x_{i},c_{i}}\bra{x_{i},c_{i}}+\frac{1}{4d}\sum_{\substack{i,j\\i\neq j}}\sum_{c_{i}}\sum_{c_{j}}\ket{x_{i},c_{i}}\bra{x_{j},c_{j}} .
    \end{align}
    where in the third equality we have used that $\sum_{a_i\in \{0,1\}}\left(-1\right)^{a_{i}+c_{i}}=0$. 
    The average state $\overline{\rho}$ can be diagonalized as follows: 
    Its eigenvalues are given by $\lambda_{1}(\overline{\rho}) =1/2$, $\lambda_{2}(\overline{\rho}) =\frac{1}{2d}$, and $\lambda_{3}(\overline{\rho}) =0$.
    The $\lambda_{1}(\overline{\rho})$-eigenspace is $1$-dimensional and spanned by the uniform superposition state
    \begin{equation}
        \ket{\psi^{\left(1\right)}}=\frac{1}{\sqrt{2d}}\sum_{i=1}^{d}\sum_{b\in\left\{ 0,1\right\} }\ket{x_{i},b} \, .
    \end{equation}
    The $\lambda_{2}(\overline{\rho})$-eigenspace is $d$-dimensional and spanned by the eigenvectors
    \begin{equation}
        \ket{\psi_{i}^{\left(2\right)}}=\frac{1}{\sqrt{2}}\sum_{b\in\left\{ 0,1\right\} }\left(-1\right)^{b}\ket{x_{i},b}
        =\frac{1}{\sqrt{2}}\left(\ket{x_{i},0}-\ket{x_{i},1}\right)\text{ for }i=1,\dots,d \, .
    \end{equation}
    Lastly, the $\lambda_{3}(\overline{\rho})$-eigenspace is $\left(d-1\right)$-dimensional and spanned by the eigenvectors
    \begin{equation}
        \ket{\psi_{i}^{\left(3\right)}} =\frac{1}{2}\left(\sum_{b\in\left\{ 0,1\right\} }\ket{x_{1},b}-\sum_{b\in\left\{ 0,1\right\} }\ket{x_{i},b}\right)\text{ for }i=2,\dots,d \, .
    \end{equation}
    We thus find that the von Neumann entropy of the average state is given by
    \begin{equation}
        S\left(\overline{\rho}\right) 
        =\tfrac{1}{2}\log2+\frac{d}{2d}\,\log\left(2d\right)
        =\frac{1}{2}\left(1+\log\left(2d\right)\right)
        =1+\frac{1}{2}\log\left(d\right) . \label{eq:entropy_average_state}
    \end{equation}
    Next, we deal with diagonalizing the $\rho_{\mathcal{D}_a}$. We observe that the states $\rho_{\mathcal{D}_a}$ for different $a\in\left\{ 0,1\right\} ^{d}$ are just permuted versions of one another. To see this, consider the expression
    \begin{align}
        \rho_{a} & 
        =\frac{1}{d2^{d}}\sum_{c\in\left\{ 0,1\right\} ^{d}}\left(\prod_{i=1}^{d}\left(1+\left(-1\right)^{c_{i}+a_{i}}4\epsilon\right)\right)\sum_{i,j=1}^{d}\ket{x_{i},c_{i}}\bra{x_{j},c_{j}}
    \end{align}
    and note that all the $\rho_{a}$ are related to $\rho_{a=0^d}$
    by a basis permutation $P_{a}\ket{x_{i},c_{i}}\to\ket{x_{i},c_{i}\oplus a_{i}}$, which is a unitary transformation. That is, $\rho_{0\dots0}=P_{a}\rho_{a}P_{a}^{\dagger}$ holds for all $a\in\{0,1\}^d$. 
    Hence, the $\rho_{\mathcal{D}_a}$ share the same spectrum for all $a\in\left\{ 0,1\right\} ^{d}$ and it suffices to diagonalize $\rho_{\mathcal{D}_a}$ for a single $a$, say $a=0^d$.
    Now, writing out $\rho_{\mathcal{D}_{a=0^{d}}}$, we have
    \begingroup
    \allowdisplaybreaks
    \begin{align}
    \rho_{\mathcal{D}_{a=0^{d}}} & =\frac{1}{d2^{d}}\sum_{c\in\left\{ 0,1\right\} ^{d}}\left(\prod_{i=1}^{d}\left(1+\left(-1\right)^{c_{i}}4\epsilon\right)\right)\sum_{i,j=1}^{d}\ket{x_{i},c_{i}}\bra{x_{j},c_{j}}\\
     & =\frac{1}{d2^{d}}\sum_{c\in\left\{ 0,1\right\} ^{d}}\left(\prod_{i=1}^{d}\left(1+\left(-1\right)^{c_{i}}4\epsilon\right)\right)\left[\underbrace{\sum_{i=1}^{d}\ket{x_{i},c_{i}}\bra{x_{i},c_{i}}}_{i=j}+\sum_{\substack{i,j\\i\neq j}}^{d}\ket{x_{i},c_{i}}\bra{x_{j},c_{j}}\right]\\
     & =\frac{1}{2d}\sum_{i=1}^{d}\sum_{c_{i}\in\left\{ 0,1\right\} }\left(1+\left(-1\right)^{c_{i}}4\epsilon\right)\ket{x_{i},c_{i}}\bra{x_{i},c_{i}}\\
     &\hphantom{=}+\frac{1}{4d}\sum_{\substack{i,j\\i\neq j}}^{d}\sum_{c_{i},c_{j}}\left(1+\left(-1\right)^{c_{i}}4\epsilon\right)\left(1+\left(-1\right)^{c_{j}}4\epsilon\right)\ket{x_{i},c_{i}}\bra{x_{j},c_{j}}\\
     & = \frac{1}{2d} M \, .
    \end{align}
    \endgroup
    where in the last step, we defined the matrix $M$ in order to factor out $\tfrac{1}{2d}$.
    While a bit more tedious than for the average state, we can still diagonalize the matrix $M$ and hence $\rho_{\mathcal{D}_{a=0^{d}}}$ exactly.
    Letting $\tilde{\epsilon}=4\epsilon$, the eigenvalues of $M$ are 
    \begin{align}
        \lambda_{1}(M) & =\frac{1}{2}\left(d\left(1+\tilde{\epsilon}^{2}\right)+\left(1-\tilde{\epsilon}^{2}\right)+\sqrt{\left(d-1\right)^{2}+2\left(d^{2}+2d-1\right)\tilde{\epsilon}^{2}+\left(d-1\right)^{2}\tilde{\epsilon}^{4}}\right) ,\\
        \lambda_{2}(M) & =\frac{1}{2}\left(d\left(1+\tilde{\epsilon}^{2}\right)+\left(1-\tilde{\epsilon}^{2}\right)-\sqrt{\left(d-1\right)^{2}+2\left(d^{2}+2d-1\right)\tilde{\epsilon}^{2}+\left(d-1\right)^{2}\tilde{\epsilon}^{4}}\right) ,\\
        \lambda_{3}(M) & =1-\tilde{\epsilon}^{2} ,\\
        \lambda_{4}(M) & =0 .
    \end{align}
    The $\lambda_{1}(M)$-eigenspace is 1-dimensional and spanned by
    \begin{equation}
        \ket{v^{\left(1\right)}} =\sum_{i=1}^{d}\sum_{b\in\left\{ 0,1\right\} }(\alpha_{+})^{1-b}\ket{x_{i},b},
    \end{equation} 
    where
    \begin{equation}
        \alpha_{+}=\frac{\frac{2d}{d-1}\tilde{\epsilon}+\sqrt{\left(\frac{2d}{d-1}\tilde{\epsilon}\right)^{2}+\left(1-\tilde{\epsilon}^{2}\right)^{2}}}{1-\tilde{\epsilon}^{2}}.
    \end{equation}
    The $\lambda_{2}(M)$-eigenspace is 1-dimensional and spanned by
    \begin{equation}
        \ket{v^{\left(2\right)}} =\sum_{i=1}^{d}\sum_{b\in\left\{ 0,1\right\} }(\alpha_{-})^{1-b}\ket{x_{i},b} ,
    \end{equation}
    where 
    \begin{equation}
        \alpha_{-}=\frac{\frac{2d}{d-1}\tilde{\epsilon}-\sqrt{\left(\frac{2d}{d-1}\tilde{\epsilon}\right)^{2}+\left(1-\tilde{\epsilon}^{2}\right)^{2}}}{1-\tilde{\epsilon}^{2}}.
    \end{equation}
    The $\lambda_{3}(M)$-eigenspace is $\left(d-1\right)$-dimensional and
    spanned by
    \begin{align}
        \ket{v_{i}^{\left(3\right)}} & =\frac{1}{2}\left(\sum_{b\in\left\{ 0,1\right\} }\left(-1\right)^{b}\ket{x_{1},b}-\sum_{b\in\left\{ 0,1\right\} }\left(-1\right)^{1-b}\ket{x_{i},b}\right)\text{ for }i=2,\dots,d \, .\\
    \end{align}
    The $\lambda_{4}(M)$-eigenspace is $\left(d-1\right)$-dimensional and
    spanned by
    \begin{align}
        \ket{v_{i}^{\left(4\right)}} & =\frac{1}{2}\left(\sum_{b\in\left\{ 0,1\right\} }\ket{x_{1},b}-\sum_{b\in\left\{ 0,1\right\} }\ket{x_{i},b}\right)\text{ for }i=2,\dots,d \, .
    \end{align}
    Hence, we can calculate the von Neumann entropy of $\rho_{\mathcal{D}_a}$ for any $a$ directly. It is given by
     \begin{align}
    S\left( \rho_{\mathcal{D}_a} \right) & =-\frac{\lambda_{1}}{2d}\log\left(\frac{\lambda_{1}}{2d}\right)-\frac{\lambda_{2}}{2d}\log\left(\frac{\lambda_{2}}{2d}\right)-\frac{(d-1)\lambda_{3}}{2d}\log\left(\frac{\lambda_{3}}{2d}\right)\\
     & =1+\frac{1}{2}\log\left(d\right)-\frac{d}{2\left(d-1\right)}\log\left(d\right) \tilde{\epsilon}^{2}+O\left(\tilde{\epsilon}^{4}\right), \label{eq:entropy_rho_a}
    \end{align}
    where the last line was obtained via a series expansion in $\tilde{\epsilon}$.
    
    Now, plugging $S(\overline{\rho})$ from \cref{eq:entropy_average_state} and $S\left(\rho_{\mathcal{D}_a} \right)$  from \cref{eq:entropy_rho_a} into the expression for the mutual information  $I(A;B_1)_\rho$ in \cref{eq:mutual_info_holevo_information}, we obtain
    \begin{equation}
        I(A;B_1)_\rho  =\frac{d}{2\left(d-1\right)}\log\left(d\right)\tilde{\epsilon}^{2}+O\left(\tilde{\epsilon}^{4}\right)
        =O\left(\epsilon^{2}\log d \right)
    \end{equation}
    Hence, we have an overall upper bound on the mutual information 
    \begin{equation}
        I(A;B_1,\ldots ,B_m)_\rho\leq O( m \epsilon^{2} \log d) .
    \end{equation}
    Contrasting this with the lower bound found above that is required for agnostic learning, namely of $I(A;B_1,\ldots ,B_m)_\rho=\Omega(d)$, we hence find that the number of copies $m$ necessary to learn must be at least
    \begin{equation}
        m = \Omega\left( \frac{d}{\epsilon^{2} \log d}  \right) = \tilde{\Omega}\left( \frac{d}{\epsilon^{2}} \right).
    \end{equation}
    This proves the $d$-dependent part of the overall lower bound on the sample complexity of distribution-independent agnostic learning and hence completes the proof.
\end{proof}

\cite{hadiashar2023optimal} recently proposed an improved variant of the information-theoretic proof strategy of \cite{arunachalam2018optimal}. This variant allowed them to recover the optimal sample complexity lower bounds for (realizable and agnostic) quantum PAC learning from superposition examples. It would be interesting to explore whether ideas from \cite{hadiashar2023optimal} can also help to remove the logarithmic-in-$d$ factors from the sample complexity lower bound in \Cref{theorem:distribution-independent-lower-bound}.

With \Cref{theorem:distribution-independent-lower-bound}, we have seen that mixture-of-superpositions examples do not significantly impact the landscape of distribution-independent agnostic learning compared to their classical data counterpart when focusing on sample complexities.
Our next result is similar in spirit, it shows that mixture-of-superpositions examples are also not information-theoretically more powerful than classical examples for verification of learning. Namely, when we consider interactive verification of distribution-independent learning, we match the classical upper and lower bounds of \cite{mutreja2022pac-verification} for this task.

\begin{theorem}\label{theorem:distribution-independent-verification-lower-bound}
    Let $\mathcal{F}\subseteq\{0,1\}^{\mathcal{X}_n}$ be a benchmark class with VC-dimension $\operatorname{VC}(\mathcal{F})=d\geq 1$.
    Assume that $(V,P)$ is an interactive classical-quantum verifier-prover pair that $1$-agnostic verifies $\mathcal{F}$ with accuracy parameter $\varepsilon=\nicefrac{1}{3}$ and confidence parameter $\delta=\nicefrac{1}{3}$.
    If we assume that $V$ uses $m_V$ classical random examples and $P$ uses $m_P$ mixture-of-superpositions quantum examples, then $m_V\geq\Omega(\sqrt{d})$, independently of $m_P$.
\end{theorem}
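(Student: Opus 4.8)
The plan is to adapt the reduction-from-testing strategy of \cite{mutreja2022pac-verification} (used already in the proof of \Cref{theorem:limitation-improvement-distributional-agnostic-verification}) to the distribution-independent agnostic setting. Fix a set $S=\{x_1,\dots,x_d\}\subseteq\mathcal{X}_n$ shattered by $\mathcal{F}$; for each $a\in\{0,1\}^d$ let $f_a\in\mathcal{F}$ satisfy $f_a|_S=a$, and let $\mathcal{D}_a=(\mathcal{U}_S,f_a)$ be the realizable distribution with input marginal uniform over $S$. Let $\mathcal{D}_0$ be the ``fully noisy'' distribution putting mass $\tfrac{1}{2d}$ on every $(x_i,b)\in S\times\{0,1\}$, so that $\mathrm{opt}_{\mathcal{D}_0}(\mathcal{F})=\tfrac{1}{2}$ and, crucially, $\mathrm{err}_{\mathcal{D}_0}(h)=\tfrac{1}{2}$ for \emph{every} hypothesis $h:\mathcal{X}_n\to\{0,1\}$. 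The first ingredient is a testing lower bound: distinguishing $\mathcal{D}_0$ from $\mathcal{D}_a$ with $a$ drawn uniformly at random, given i.i.d.\ classical examples, requires $\Omega(\sqrt{d})$ samples. This is a birthday-type argument — the two experiments induce identical distributions on any sample whose $d$ input indices are pairwise distinct, and a collision among $k$ draws from $[d]$ occurs with probability only $O(k^2/d)$ — and it is exactly the testing lemma of \cite{mutreja2022pac-verification}, which carries over verbatim since no quantum access is involved.

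The second ingredient is the reduction. Given a verifier-prover pair $(V,P)$ as in the statement, I would build a classical tester $T$ for the above problem that uses $m_V+O(1)$ classical examples of the unknown $\mathcal{D}\in\{\mathcal{D}_0\}\cup\{\mathcal{D}_a\}_a$: $T$ draws $m_V$ examples of $\mathcal{D}$ and hands them to a simulated copy of $V$; it prepares $m_P$ (classical descriptions of) copies of the mixture-of-superpositions state $\rho_{\mathcal{D}_0}$ — possible because $\mathcal{D}_0$ is a fixed, fully known distribution supported on $S$ — and hands these to a simulated copy of $P$; it simulates the interaction (the classical simulation of the quantum $P$ may be exponentially slow, but this is irrelevant for a sample-complexity bound), obtaining an output $h\in\{\mathrm{reject}\}\cup\{0,1\}^{\mathcal{X}_n}$; and finally $T$ draws $m_{\mathrm{test}}=O(1)$ fresh examples of $\mathcal{D}$, forms the empirical error $\widehat{\mathrm{err}}(h)$ of $h$ on them, and outputs ``$\mathcal{D}\in\{\mathcal{D}_a\}_a$'' iff $h=\mathrm{reject}$ or $\widehat{\mathrm{err}}(h)\le\tfrac{5}{12}$, and ``$\mathcal{D}=\mathcal{D}_0$'' otherwise.

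Correctness of $T$ splits into two cases, mirroring \Cref{theorem:limitation-improvement-distributional-agnostic-verification}. If $\mathcal{D}=\mathcal{D}_0$, then $P$ has been given the honest data $\rho_{\mathcal{D}_0}^{\otimes m_P}$, so completeness gives $h\neq\mathrm{reject}$ with probability $\geq 1-\delta=\tfrac{2}{3}$ (the error clause $\mathrm{err}(h)\le\mathrm{opt}+\varepsilon=\tfrac{5}{6}$ is vacuous); since $\mathrm{err}_{\mathcal{D}_0}(h)=\tfrac{1}{2}$ for every $h$, a Hoeffding bound makes $\widehat{\mathrm{err}}(h)>\tfrac{5}{12}$ with probability $\geq\tfrac{11}{12}$, and a union bound yields the correct output with probability $\geq\tfrac{7}{12}$. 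If $\mathcal{D}=\mathcal{D}_a$, then $P$ has been fed the ``wrong'' state $\rho_{\mathcal{D}_0}\neq\rho_{\mathcal{D}_a}$ and is thus an arbitrary (hence ``dishonest'') prover for $\mathcal{D}_a$; soundness gives, with probability $\geq 1-\delta=\tfrac{2}{3}$, that $h=\mathrm{reject}$ or $\mathrm{err}_{\mathcal{D}_a}(h)\le\mathrm{opt}_{\mathcal{D}_a}(\mathcal{F})+\varepsilon=\tfrac{1}{3}$, and in either case (using Hoeffding in the latter) $T$ outputs ``$\{\mathcal{D}_a\}_a$'' with probability $\geq\tfrac{7}{12}$. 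Since $T$ consumes $m_V+m_{\mathrm{test}}=m_V+O(1)$ classical examples and solves a task requiring $\Omega(\sqrt{d})$ examples, we conclude $m_V\geq\Omega(\sqrt{d})$, independently of $m_P$.

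I expect the main obstacle to be the one already flagged: the tester is classical and cannot feed $P$ quantum examples of the \emph{actual} unknown distribution. The resolution is structural — always feed $P$ the state of the single fixed instance $\mathcal{D}_0$, so that $P$ is honest precisely when $\mathcal{D}=\mathcal{D}_0$ and soundness covers all other cases — and this dictates the design choices above: $\mathcal{D}_0$ must be a known, preparable distribution, and (since the verifier here need not be proper, in contrast to \Cref{theorem:limitation-improvement-distributional-agnostic-verification}) $\mathcal{D}_0$ must be chosen so that \emph{every} output hypothesis has large error under it, which is why $\mathcal{D}_0$ is taken to be uniform noise on the shattered set. A secondary point to check carefully is that the realizable family $\{\mathcal{D}_a\}_a$ and the noise instance $\mathcal{D}_0$ all lie within the scope of the distribution-independent verifier-prover pair, and that the gap $\tfrac{1}{2}$ versus $\tfrac{1}{3}$ is a constant, so that $m_{\mathrm{test}}=O(1)$ indeed suffices.
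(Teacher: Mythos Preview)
Your proposal is correct and follows essentially the same approach as the paper: the paper's proof simply notes that the reduction of \cite[Theorem 8]{mutreja2022pac-verification} applies verbatim to the quantum-prover setting because (i) the fixed ``noise'' state is fully known and hence classically preparable for simulation, and (ii) sample-complexity bounds are unaffected by the cost of classically simulating the quantum prover---exactly the two structural points you isolate. You have, in effect, spelled out the Mutreja--Shafer construction (shattered set, realizable $\mathcal{D}_a$'s, uniform-noise $\mathcal{D}_0$, birthday-type $\Omega(\sqrt{d})$ testing bound) and grafted onto it the same quantum adaptation used in the proof of \Cref{theorem:limitation-improvement-distributional-agnostic-verification}; your observation that $\mathcal{D}_0$ must make \emph{every} hypothesis bad (since the verifier may be improper here) is precisely why the noise instance in \cite{mutreja2022pac-verification} is chosen as it is.
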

\begin{proof}
    As already argued in the proof of \Cref{theorem:limitation-improvement-distributional-agnostic-verification}, the reduction strategy used in \cite[Theorem 8]{mutreja2022pac-verification} is also immediately applicable to a scenario with a quantum prover, because the relevant quantum mixture-of-superpositions state $\rho_{\mathcal{U}_{n+1}}$ is completely known and because we can, due to the focus on sample complexity, ignore computational efficiency issues arising from classically simulating a quantum computation.
    Thus, we get the lower bound exactly as in \cite[Theorem 8]{mutreja2022pac-verification} (with tiny corrections to the chosen constants, see the proof of \Cref{theorem:limitation-improvement-distributional-agnostic-verification}).
\end{proof}

We note that, as a consequence of \cite[Theorem 9]{mutreja2022pac-verification}, the lower bound of \Cref{theorem:distribution-independent-verification-lower-bound} cannot be further improved in general. 
To see this, note the following two facts:
On the one hand, recall that mixture-of-superpositions examples can simulate classical random examples via computational basis measurements, so the honest quantum prover can in particular play the role of the classical prover in \cite[Protocol 1]{mutreja2022pac-verification}.
On the other hand, the argument for soundness in proving \cite[Theorem 9]{mutreja2022pac-verification} did not rely on the prover being classical and also applies to quantum provers with quantum data access.

\section*{Acknowledgments}

First and foremost, we thank Jens Eisert for his valuable input to discussions on this project and for many helpful suggestions for improving the draft.
We also thank Srinivasan Arunachalam, Jack O'Connor, Yihui Quek, Jonathan Shafer, and Thomas Vidick for insightful discussions.

The authors gratefully acknowledge support from the BMWK (PlanQK, EniQmA), the BMBF (Hybrid), and the Munich Quantum Valley (K-8). This work has also been funded by the Deutsche Forschungsgemeinschaft (DFG) under Germany’s Excellence Strategy, The Berlin Mathematics Research Center MATH+ (EXC-2046/1, project ID: 390685689) as well as CRC 183 (B1).
MCC was supported by a DAAD PRIME fellowship.
The Institute for Quantum Information and Matter is an NSF Physics Frontiers Center.

\newpage

\printbibliography

@inproceedings{goldwasser2021interactive,
  title={Interactive proofs for verifying machine learning},
  author={Goldwasser, Shafi and Rothblum, Guy N and Shafer, Jonathan and Yehudayoff, Amir},
  booktitle={12th Innovations in Theoretical Computer Science Conference (ITCS 2021)},
  year={2021},
  organization={Schloss Dagstuhl-Leibniz-Zentrum f{\"u}r Informatik},
  doi={10.4230/LIPIcs.ITCS.2021.41}
}

@article{goldwasser2020interactive-full-version,
  author    = {Shafi Goldwasser and
               Guy N. Rothblum and
               Jonathan Shafer and
               Amir Yehudayoff},
  title     = {Interactive Proofs for Verifying Machine Learning},
  journal   = {Electron. Colloquium Comput. Complex.},
  volume    = {20},
  number    = {58},
  year      = {2020},
  url       = {https://eccc.weizmann.ac.il/report/2020/058}
}

@article{feldman2009agnostic,
  title={On agnostic learning of parities, monomials, and halfspaces},
  author={Feldman, Vitaly and Gopalan, Parikshit and Khot, Subhash and Ponnuswami, Ashok Kumar},
  journal={SIAM Journal on Computing},
  volume={39},
  number={2},
  pages={606--645},
  year={2009},
  publisher={SIAM},
  doi={10.1137/070684914}
}

@article{feldman2009power,
  author  = {Vitaly Feldman},
  title   = {On The Power of Membership Queries in Agnostic Learning},
  journal = {Journal of Machine Learning Research},
  year    = {2009},
  volume  = {10},
  number  = {7},
  pages   = {163--182},
  url     = {http://jmlr.org/papers/v10/feldman09a.html}
}

@article{kanade2019learning,
 author = {Kanade, Varun and Rocchetto, Andrea and Severini, Simone},
 year = {2019},
 title = {Learning DNFs under product distributions via {$\mu$}-biased quantum Fourier sampling},
 pages = {1261--1278},
 volume = {19},
 number = {15{\&}16},
 journal = {Quantum Information {\&} Computation}, 
 doi={10.26421/QIC19.15-16}
}

@article{caro2020pseudo,
  title={Pseudo-dimension of quantum circuits},
  author={Caro, Matthias C. and Datta, Ishaun},
  journal={Quantum Machine Intelligence},
  volume={2},
  pages={14},
  year={2020},
  publisher={Springer},
  doi={10.1007/s42484-020-00027-5}
}

@article{caro2021encodingdependent,
  doi = {10.22331/q-2021-11-17-582},
  title = {Encoding-dependent generalization bounds for parametrized quantum circuits},
  author = {Caro, Matthias C. and Gil-Fuster, Elies and Meyer, Johannes Jakob and Eisert, Jens and Sweke, Ryan},
  journal = {{Quantum}},
  publisher = {{Verein zur F{\"{o}}rderung des Open Access Publizierens in den Quantenwissenschaften}},
  volume = {5},
  pages = {582},
  year = {2021}
}

@article{caro2022generalization,
  title={Generalization in quantum machine learning from few training data},
  author={Caro, Matthias C and Huang, Hsin-Yuan and Cerezo, Marco and Sharma, Kunal and Sornborger, Andrew and Cincio, Lukasz and Coles, Patrick J},
  journal={Nature {C}ommunications},
  volume={13},
  eid={4919},
  year={2022},
  publisher={Nature Publishing Group},
  doi={10.1038/s41467-022-32550-3}
}

@article{caro2021binary,
	author={Caro, Matthias C.},
	title={Binary Classification with Classical Instances and Quantum Labels},
	journal={Quantum Machine Intelligence},
	year={2021},
	volume={3},
	eid={18},
	doi={10.1007/s42484-021-00043-z}
}

@book{nielsen2000quantum,
	author = {Michael A. Nielsen and Isaac L. Chuang},
	year = {2000},
	title = {Quantum Computation and Quantum Information},
	publisher = {Cambridge University Press}
}

@article{fuchs1999cryptographic,
  title={Cryptographic distinguishability measures for quantum-mechanical states},
  author={Fuchs, Christopher A and Van De Graaf, Jeroen},
  journal={IEEE Transactions on Information Theory},
  volume={45},
  number={4},
  pages={1216--1227},
  year={1999},
  publisher={IEEE},
  doi={10.1109/18.761271}
}

@article{caro2020quantum,
  title={Quantum learning Boolean linear functions w.r.t.~product distributions},
  author={Caro, Matthias C.},
  journal={Quantum Information Processing},
  volume={19},
  eid={172},
  year={2020},
  publisher={Springer},
  doi={10.1007/s11128-020-02661-1}
}

@article{cross2015quantum,
  title={Quantum learning robust against noise},
  author={Cross, Andrew W and Smith, Graeme and Smolin, John A},
  journal={Physical Review A},
  volume={92},
  number={1},
  pages={012327},
  year={2015},
  publisher={APS},
  doi={10.1103/PhysRevA.92.012327}
}

@article{grilo2019learning,
  title={Learning-with-errors problem is easy with quantum samples},
  author={Grilo, Alex B and Kerenidis, Iordanis and Zijlstra, Timo},
  journal={Physical Review A},
  volume={99},
  number={3},
  pages={032314},
  year={2019},
  publisher={APS},
  doi={10.1103/PhysRevA.99.032314}
}

@article{caro2022out-of-distribution,
      title={Out-of-distribution generalization for learning quantum dynamics}, 
      author={Caro, Matthias C. and Huang, Hsin-Yuan and Ezzell, Nicholas and Gibbs, Joe and Sornborger, Andrew T. and Cincio, Lukasz and Coles, Patrick J. and Holmes, Zoë},
      journal={Nature Communications},
      volume={14},
      eid={3751},
      year={2023},
      publisher={Nature Publishing Group},
      doi={10.1038/s41467-023-39381-w}
}

@article{banchi2021generalization,
  title={Generalization in Quantum Machine Learning: A Quantum Information Standpoint},
  author={Banchi, Leonardo and Pereira, Jason and Pirandola, Stefano},
  journal={PRX Quantum},
  volume={2},
  number={4},
  pages={040321},
  year={2021},
  publisher={APS},
  doi={10.1103/PRXQuantum.2.040321}
}

@online{wilde2011bookarxiv,
  title={From Classical to Quantum Shannon Theory},
  author={Mark M. Wilde},
  date={2019-07-14},
  version=8,
  eprinttype={arxiv},
  eprint={1106.1445},
  primaryClass={quant-ph}
}

@article{sweke2021quantum,
  title={On the quantum versus classical learnability of discrete distributions},
  author={Sweke, Ryan and Seifert, Jean-Pierre and Hangleiter, Dominik and Eisert, Jens},
  journal={Quantum},
  volume={5},
  pages={417},
  year={2021},
  publisher={Verein zur F{\"o}rderung des Open Access Publizierens in den Quantenwissenschaften},
  doi={10.22331/q-2021-03-23-417}
}

@article{haussler1992decision,
  title={Decision theoretic generalizations of the PAC model for neural net and other learning applications},
  author={Haussler, David},
  journal={Information and computation},
  volume={100},
  number={1},
  pages={78--150},
  year={1992},
  publisher={Elsevier},
  doi={10.1016/0890-5401(92)90010-D}
}

@article{vapnik1971uniform,
 author = {Vapnik, Vladimir N. and Chervonenkis, Alexei Ya.},
 year = {1971},
 title = {{On the Uniform Convergence of Relative Frequencies of Events to Their Probabilities}},
 pages = {264--280},
 volume = {16},
 number = {2},
 journal = {{Theory of Probability {\&} Its Applications}},
 doi = {10.1137/1116025}
}

@article{blumer1989learnability,
  title={Learnability and the Vapnik-Chervonenkis dimension},
  author={Blumer, Anselm and Ehrenfeucht, Andrzej and Haussler, David and Warmuth, Manfred K.},
  journal={Journal of the ACM (JACM)},
  volume={36},
  number={4},
  pages={929--965},
  year={1989},
  publisher={ACM New York, NY, USA},
  doi={10.1145/76359.76371}
}

@article{kearns1998efficient,
  title={Efficient noise-tolerant learning from statistical queries},
  author={Kearns, Michael},
  journal={Journal of the ACM (JACM)},
  volume={45},
  number={6},
  pages={983--1006},
  year={1998},
  publisher={ACM New York, NY, USA},
  doi={10.1145/293347.293351}
}

@article{du2022efficient,
  title={Efficient measure for the expressivity of variational quantum algorithms},
  author={Du, Yuxuan and Tu, Zhuozhuo and Yuan, Xiao and Tao, Dacheng},
  journal={Physical Review Letters},
  volume={128},
  number={8},
  pages={080506},
  year={2022},
  publisher={APS},
  doi={10.1103/PhysRevLett.128.080506}
}

@article{abbas2021power,
  title={The power of quantum neural networks},
  author={Abbas, Amira and Sutter, David and Zoufal, Christa and Lucchi, Aur{\'e}lien and Figalli, Alessio and Woerner, Stefan},
  journal={Nature Computational Science},
  volume={1},
  number={6},
  pages={403--409},
  year={2021},
  publisher={Nature Publishing Group},
  doi={10.1038/s43588-021-00084-1}
}

@misc{Mahadev,
   year={2018},
      eprint={1804.01082},
      archivePrefix={arXiv},
title={Classical verification of quantum computations},
author={U. Mahadev}}

@article{kearns1994toward,
author = {Kearns, Michael J. and Schapire, Robert E. and Sellie, Linda M.},
title = {Toward Efficient Agnostic Learning},
year = {1994},
issue_date = {Nov./Dec. 1994},
publisher = {Kluwer Academic Publishers},
address = {USA},
volume = {17},
number = {2–3},
issn = {0885-6125},
url = {https://doi.org/10.1007/BF00993468},
doi = {10.1007/BF00993468},
journal = {Mach. Learn.},
month = {11},
pages = {115–141},
numpages = {27},
keywords = {PAC learning, computational learning theory, agnostic learning, machine learning}
}

@misc{arunachalam2020qsq,
      title={Quantum statistical query learning}, 
      author={Srinivasan Arunachalam and Alex B. Grilo and Henry Yuen},
      year={2020},
      eprint={2002.08240},
      archivePrefix={arXiv},
      primaryClass={quant-ph}
}

@inproceedings{rubinfeld2022testing,
author = {Rubinfeld, Ronitt and Vasilyan, Arsen},
title = {Testing Distributional Assumptions of Learning Algorithms},
year = {2023},
isbn = {9781450399135},
publisher = {Association for Computing Machinery},
address = {New York, NY, USA},
url = {https://doi.org/10.1145/3564246.3585117},
doi = {10.1145/3564246.3585117},
abstract = {There are many important high dimensional function classes that have fast agnostic learning algorithms when strong assumptions on the distribution of examples can be made, such as Gaussianity or uniformity over the domain. But how can one be sufficiently confident that the data indeed satisfies the distributional assumption, so that one can trust in the output quality of the agnostic learning algorithm? We propose a model by which to systematically study the design of tester-learner pairs (A,T), such that if the distribution on examples in the data passes the tester T then one can safely trust the output of the agnostic learner A on the data. To demonstrate the power of the model, we apply it to the classical problem of agnostically learning halfspaces under the standard Gaussian distribution and present a tester-learner pair with a combined run-time of n\~{O}(1/є4). This qualitatively matches that of the best known ordinary agnostic learning algorithms for this task. In contrast, finite sample Gaussian distribution testers do not exist for the L1 and EMD distance measures. Previously it was known that half-spaces are well-approximated with low-degree polynomials relative to the Gaussian distribution. A key step in our analysis is showing that this is the case even relative to distributions whose low-degree moments approximately match those of a Gaussian. We also go beyond spherically-symmetric distributions, and give a tester-learner pair for halfspaces under the uniform distribution on {0,1}n with combined run-time of n\~{O}(1/є4). This is achieved using polynomial approximation theory and critical index machinery of [Diakonikolas, Gopalan, Jaiswal, Servedio, and Viola 2009]. Can one design agnostic learning algorithms under distributional assumptions and count on future technical work to produce, as a matter of course, tester-learner pairs with similar run-time? Our answer is a resounding no, as we show there exist some well-studied settings for which 2\~{O}(√n) run-time agnostic learning algorithms are available, yet the combined run-times of tester-learner pairs must be as high as 2Ω(n). On that account, the design of tester-learner pairs is a research direction in its own right independent of standard agnostic learning. To be specific, our lower bounds apply to the problems of agnostically learning convex sets under the Gaussian distribution and for monotone Boolean functions under the uniform distribution over {0,1}n.},
booktitle = {Proceedings of the 55th Annual ACM Symposium on Theory of Computing},
pages = {1643–1656},
numpages = {14},
keywords = {distribution testing, learning theory, agnostic learning},
location = {Orlando, FL, USA},
series = {STOC 2023}
}

@inproceedings{gollakota2022momentmatching,
author = {Gollakota, Aravind and Klivans, Adam R. and Kothari, Pravesh K.},
title = {A Moment-Matching Approach to Testable Learning and a New Characterization of Rademacher Complexity},
year = {2023},
isbn = {9781450399135},
publisher = {Association for Computing Machinery},
address = {New York, NY, USA},
url = {https://doi.org/10.1145/3564246.3585206},
doi = {10.1145/3564246.3585206},
abstract = {A remarkable recent paper by Rubinfeld and Vasilyan (2022) initiated the study of testable learning, where the goal is to replace hard-to-verify distributional assumptions (such as Gaussianity) with efficiently testable ones and to require that the learner succeed whenever the unknown distribution passes the corresponding test. In this model, they gave an efficient algorithm for learning halfspaces under testable assumptions that are provably satisfied by Gaussians.In this paper we give a powerful new approach for developing algorithms for testable learning using tools from moment matching and metric distances in probability. We obtain efficient testable learners for any concept class that admits low-degree sandwiching polynomials, capturing most important examples for which we have ordinary agnostic learners. We recover the results of Rubinfeld and Vasilyan as a corollary of our techniques while achieving improved, near-optimal sample complexity bounds for a broad range of concept classes and distributions.Surprisingly, we show that the information-theoretic sample complexity of testable learning is tightly characterized by the Rademacher complexity of the concept class, one of the most well-studied measures in statistical learning theory. In particular, uniform convergence is necessary and sufficient for testable learning. This leads to a fundamental separation from (ordinary) distribution-specific agnostic learning, where uniform convergence is sufficient but not necessary.},
booktitle = {Proceedings of the 55th Annual ACM Symposium on Theory of Computing},
pages = {1657–1670},
numpages = {14},
keywords = {sandwiching polynomials, Rademacher complexity, generalization, PAC learning, moment matching},
location = {Orlando, FL, USA},
series = {STOC 2023}
}

@article{gopalan2011testing,
  title={Testing Fourier dimensionality and sparsity},
  author={Gopalan, Parikshit and O'Donnell, Ryan and Servedio, Rocco A and Shpilka, Amir and Wimmer, Karl},
  journal={SIAM Journal on Computing},
  volume={40},
  number={4},
  pages={1075--1100},
  year={2011},
  publisher={SIAM},
  doi={10.1137/100785429}
}

@book{odonnell2014analysis,
  title={Analysis of boolean functions},
  author={O'Donnell, Ryan},
  year={2014},
  publisher={Cambridge University Press},
  doi={10.1017/CBO9781139814782}
}

@article{arunachalam2021twonewresults,
  doi = {10.22331/q-2021-11-24-587},
  url = {https://doi.org/10.22331/q-2021-11-24-587},
  title = {Two new results about quantum exact learning},
  author = {Arunachalam, Srinivasan and Chakraborty, Sourav and Lee, Troy and Paraashar, Manaswi and de Wolf, Ronald},
  journal = {{Quantum}},
  issn = {2521-327X},
  publisher = {{Verein zur F{\"{o}}rderung des Open Access Publizierens in den Quantenwissenschaften}},
  volume = {5},
  pages = {587},
  month = nov,
  year = {2021}
}

@inproceedings{gopalan2008agnostically,
  title={Agnostically learning decision trees},
  author={Gopalan, Parikshit and Kalai, Adam Tauman and Klivans, Adam R},
  booktitle={Proceedings of the fortieth annual ACM symposium on Theory of computing},
  pages={527--536},
  year={2008},
  doi={10.1145/1374376.1374451}
}

@incollection{mcdiarmid1989onthemethod,
	Author = {McDiarmid, Colin},
	Booktitle = {Surveys in combinatorics, 1989 ({N}orwich, 1989)},
	Pages = {148--188},
	Publisher = {Cambridge Univ. Press, Cambridge},
	Series = {London Math. Soc. Lecture Note Ser.},
	Title = {On the method of bounded differences},
	Volume = {141},
	Year = {1989},
	doi={10.1017/CBO9781107359949.008}
}

@article{valiant1984theory,
 author = {Valiant, Leslie G.},
 year = {1984},
 title = {{A Theory of the Learnable}},
 pages = {1134--1142},
 volume = {27},
 number = {11},
 issn = {00010782},
 journal = {{Communications of the ACM}},
 doi = {10.1145/1968.1972}
}

@article{talagrand1994sharper,
  title={Sharper bounds for Gaussian and empirical processes},
  author={Talagrand, Michel},
  journal={The Annals of Probability},
  pages={28--76},
  year={1994},
  publisher={Institute of Mathematical Statistics},
  doi={10.1214/aop/1176988847}
}

@inproceedings{kalai2008agnostic,
author = {Kalai, Adam Tauman and Mansour, Yishay and Verbin, Elad},
title = {On Agnostic Boosting and Parity Learning},
year = {2008},
isbn = {9781605580470},
publisher = {Association for Computing Machinery},
address = {New York, NY, USA},
url = {https://doi.org/10.1145/1374376.1374466},
doi = {10.1145/1374376.1374466},
abstract = {The motivating problem is agnostically learning parity functions, i.e., parity with arbitrary or adversarial noise. Specifically, given random labeled examples from an *arbitrary* distribution, we would like to produce an hypothesis whose accuracy nearly matches the accuracy of the best parity function. Our algorithm runs in time 2O(n/log n), which matches the best known for the easier cases of learning parities with random classification noise (Blum et al, 2003) and for agnostically learning parities over the uniform distribution on inputs (Feldman et al, 2006).Our approach is as follows. We give an agnostic boosting theorem that is capable of nearly achieving optimal accuracy, improving upon earlier studies (starting with Ben David et al, 2001). To achieve this, we circumvent previous lower bounds by altering the boosting model. We then show that the (random noise) parity learning algorithm of Blum et al (2000) fits our new model of agnostic weak learner. Our agnostic boosting framework is completely general and may be applied to other agnostic learning problems. Hence, it also sheds light on the actual difficulty of agnostic learning by showing that full agnostic boosting is indeed possible.},
booktitle = {Proceedings of the Fortieth Annual ACM Symposium on Theory of Computing},
pages = {629–638},
numpages = {10},
keywords = {sub-exponential algorithms, learning parity with noise, agnostic learning, agnostic boosting},
location = {Victoria, British Columbia, Canada},
series = {STOC '08}
}

@inproceedings{kanade2009potential,
 author = {Kanade, Varun and Kalai, Adam},
 booktitle = {Advances in Neural Information Processing Systems},
 editor = {Y. Bengio and D. Schuurmans and J. Lafferty and C. Williams and A. Culotta},
 pages = {},
 publisher = {Curran Associates, Inc.},
 title = {Potential-Based Agnostic Boosting},
 url = {https://proceedings.neurips.cc/paper/2009/file/13f9896df61279c928f19721878fac41-Paper.pdf},
 volume = {22},
 year = {2009}
}

@inproceedings{feldman2010agnostic-boosting,
author = {Feldman, Vitaly},
title = {Distribution-Specific Agnostic Boosting},
isbn = {9787302217527},
year = {2010},
publisher = {Tsinghua University Press},
address = {Tsinghua University, Beijing, China},
url = {https://arxiv.org/abs/0909.2927},
booktitle = {Proceedings Innovations in Computer Science - ICS 2010},
pages = {241-250},
numpages = {10},
location = {Tsinghua University, Beijing, China},
series = {ICS 2010}
}

@article{bshouty1998learning,
 author = {Bshouty, Nader H. and Jackson, Jeffrey C.},
 year = {1998},
 title = {Learning {DNF} over the Uniform Distribution Using a Quantum Example Oracle},
 pages = {1136--1153},
 volume = {28},
 number = {3},
 journal = {SIAM Journal on Computing},
 doi = {10.1137/S0097539795293123}
}

@article{arunachalam2017survey,
 author = {Arunachalam, Srinivasan and de Wolf, Ronald},
 year = {2017},
 title = {Guest Column: A Survey of Quantum Learning Theory},
 volume = {48},
 journal = {SIGACT News},
 doi = {10.1145/3106700.3106710}
}

@InProceedings{chung2021sample,
  author =	{Chung, Kai-Min and Lin, Han-Hsuan},
  title =	{{Sample Efficient Algorithms for Learning Quantum Channels in PAC Model and the Approximate State Discrimination Problem}},
  booktitle =	{16th Conference on the Theory of Quantum Computation, Communication and Cryptography (TQC 2021)},
  pages =	{3:1--3:22},
  series =	{Leibniz International Proceedings in Informatics (LIPIcs)},
  ISBN =	{978-3-95977-198-6},
  ISSN =	{1868-8969},
  year =	{2021},
  volume =	{197},
  editor =	{Hsieh, Min-Hsiu},
  publisher =	{Schloss Dagstuhl -- Leibniz-Zentrum f{\"u}r Informatik},
  address =	{Dagstuhl, Germany},
  doi =		{10.4230/LIPIcs.TQC.2021.3}
}

@misc{fanizza2022learning,
      title={Learning quantum processes without input control}, 
      author={Marco Fanizza and Yihui Quek and Matteo Rosati},
      year={2022},
      eprint={2211.05005},
      archivePrefix={arXiv},
      primaryClass={quant-ph}
}

@misc{bera2022efficient,
      title={Efficient Quantum Agnostic Improper Learning of Decision Trees}, 
      author={Debajyoti Bera and Sagnik Chatterjee},
      year={2022},
      eprint={2210.00212},
      archivePrefix={arXiv},
      primaryClass={quant-ph}
}

@misc{hadiashar2023optimal,
      title={Optimal lower bounds for Quantum Learning via Information Theory}, 
      author={Shima Bab Hadiashar and Ashwin Nayak and Pulkit Sinha},
      year={2023},
      eprint={2301.02227},
      archivePrefix={arXiv},
      primaryClass={quant-ph}
}

@book{kosorok2008introduction,
  title={Introduction to empirical processes and semiparametric inference.},
  author={Kosorok, Michael R},
  year={2008},
  publisher={Springer},
  doi={https://doi.org/10.1007/978-0-387-74978-5}
}

@article{dvoretzky1956asymptotic,
  title={Asymptotic minimax character of the sample distribution function and of the classical multinomial estimator},
  author={Dvoretzky, Aryeh and Kiefer, Jack and Wolfowitz, Jacob},
  journal={The Annals of Mathematical Statistics},
  volume={27},
  number={3},
  pages={642--669},
  year={1956},
  publisher={Institute of Mathematical Statistics},
  url={https://www.jstor.org/stable/2237374}
}

@article{preskill2018quantum,
  title={Quantum computing in the NISQ era and beyond},
  author={Preskill, John},
  journal={Quantum},
  volume={2},
  pages={79},
  year={2018},
  publisher={Verein zur F{\"o}rderung des Open Access Publizierens in den Quantenwissenschaften},
  doi={10.22331/q-2018-08-06-79}
}

@article{massart1990tight,
  title={The tight constant in the Dvoretzky-Kiefer-Wolfowitz inequality},
  author={Massart, Pascal},
  journal={The Annals of Probability},
  volume={18},
  number={3},
  pages={1269--1283},
  year={1990},
  publisher={Institute of Mathematical Statistics},
  url={https://www.jstor.org/stable/2244426}
}

@article{arunachalam2018optimal,
  author  = {Srinivasan Arunachalam and Ronald de Wolf},
  title   = {Optimal Quantum Sample Complexity of Learning Algorithms},
  journal = {Journal of Machine Learning Research},
  year    = {2018},
  volume  = {19},
  number  = {71},
  pages   = {1--36},
  url     = {http://jmlr.org/papers/v19/18-195.html}
}

@incollection{mansour1994learning,
  title={Learning Boolean functions via the Fourier transform},
  author={Mansour, Yishay},
  booktitle={Theoretical advances in neural computation and learning},
  pages={391--424},
  year={1994},
  publisher={Springer},
  doi={10.1007/978-1-4615-2696-4_11}
}

@article{kushilevitz1993learning,
author = {Kushilevitz, Eyal and Mansour, Yishay},
title = {Learning Decision Trees Using the Fourier Spectrum},
journal = {SIAM Journal on Computing},
volume = {22},
number = {6},
pages = {1331-1348},
year = {1993},
doi = {10.1137/0222080},
URL = {https://doi.org/10.1137/0222080},
eprint = {https://doi.org/10.1137/0222080}
}

@article{linial1993constant,
  title={Constant depth circuits, Fourier transform, and learnability},
  author={Linial, Nathan and Mansour, Yishay and Nisan, Noam},
  journal={Journal of the ACM (JACM)},
  volume={40},
  number={3},
  pages={607--620},
  year={1993},
  publisher={ACM New York, NY, USA},
  doi={10.1145/174130.174138}
}

@inproceedings{goldreich1989hard,
  title={A hard-core predicate for all one-way functions},
  author={Goldreich, Oded and Levin, Leonid A},
  booktitle={Proceedings of the twenty-first annual ACM symposium on Theory of computing},
  pages={25--32},
  year={1989},
  doi={10.1145/73007.73010}
}

@article{huang2020predicting,
  title={Predicting many properties of a quantum system from very few measurements},
  author={Huang, Hsin-Yuan and Kueng, Richard and Preskill, John},
  journal={Nature Physics},
  volume={16},
  number={10},
  pages={1050--1057},
  year={2020},
  publisher={Nature Publishing Group},
  doi={10.1038/s41567-020-0932-7}
}

@article{aaronson2019shadow,
  title={Shadow tomography of quantum states},
  author={Aaronson, Scott},
  journal={SIAM Journal on Computing},
  volume={49},
  number={5},
  pages={STOC18--368},
  year={2019},
  publisher={SIAM},
  doi={10.1137/18M120275X}
}

@article{cheng2016learnability,
author = {Cheng, Hao-Chung and Hsieh, Min-Hsiu and Yeh, Ping-Cheng},
title = {The Learnability of Unknown Quantum Measurements},
year = {2016},
issue_date = {May 2016},
publisher = {Rinton Press, Incorporated},
address = {Paramus, NJ},
volume = {16},
number = {7–8},
issn = {1533-7146},
journal = {Quantum Info. Comput.},
month = may,
pages = {615–656},
numpages = {42},
keywords = {matrix concentration inequalities, quantum tomography, quantum machine learning, sample complexity},
doi = {10.5555/3179466.3179470}
}

@article{servedio2004equivalences,
  title={Equivalences and Separations Between Quantum and Classical Learnability},
  author={Servedio, Rocco A. and Gortler, Steven J.},
  journal={SIAM Journal on Computing},
  volume={33},
  number={5},
  pages={1067--1092},
  year={2004},
  publisher={SIAM},
  doi={10.1137/S0097539704412910}
}

@article{montanaro2012quantum,
  title={The quantum query complexity of learning multilinear polynomials},
  author={Montanaro, Ashley},
  journal={Information Processing Letters},
  volume={112},
  number={11},
  pages={438--442},
  year={2012},
  publisher={Elsevier},
  doi={10.1016/j.ipl.2012.03.002}
}

@article{atici2007quantum,
 author = {At{\i}c{\i}, Alp and Servedio, Rocco A.},
 year = {2007},
 title = {Quantum Algorithms for Learning and Testing Juntas},
 journal={Quantum Information Processing},
 pages = {323--348},
 volume = {6},
 number = {5},
 issn = {1570-0755},
 doi = {10.1007/s11128-007-0061-6}
}

@inproceedings{jackson2002quantum,
  title={Quantum DNF learnability revisited},
  author={Jackson, Jeffrey C. and Tamon, Christino and Yamakami, Tomoyuki},
  booktitle={International Computing and Combinatorics Conference},
  pages={595--604},
  year={2002},
  organization={Springer},
  doi={10.1007/3-540-45655-4_63}
}

@article{atici2005improved,
  title={Improved Bounds on Quantum Learning Algorithms},
  author={At{\i}c{\i}, Alp and Servedio, Rocco A.},
  journal={Quantum Information Processing},
  volume={4},
  number={5},
  pages={355--386},
  year={2005},
  publisher={Springer},
  doi={10.1007/s11128-005-0001-2}
}

@article{zhang2010improved,
  title={An improved lower bound on query complexity for quantum PAC learning},
  author={Zhang, Chi},
  journal={Information Processing Letters},
  volume={111},
  number={1},
  pages={40--45},
  year={2010},
  publisher={Elsevier},
  doi={10.1016/j.ipl.2010.10.007}
}

@article{aaronson2007learnability,
author = {Aaronson, Scott},
title = {The learnability of quantum states},
journal = {Proceedings of the Royal Society A: Mathematical, Physical and Engineering Sciences},
volume = {463},
number = {2088},
pages = {3089-3114},
year = {2007},
doi = {10.1098/rspa.2007.0113},
URL = {https://royalsocietypublishing.org/doi/abs/10.1098/rspa.2007.0113}
}

@inproceedings{eskenazis2022learning,
author = {Eskenazis, Alexandros and Ivanisvili, Paata},
title = {Learning Low-Degree Functions from a Logarithmic Number of Random Queries},
year = {2022},
isbn = {9781450392648},
publisher = {Association for Computing Machinery},
address = {New York, NY, USA},
url = {https://doi.org/10.1145/3519935.3519981},
doi = {10.1145/3519935.3519981},
abstract = {We prove that every bounded function f:{−1,1}n→[−1,1] of degree at most d can be learned with L2-accuracy ε and confidence 1−δ from log(n/δ) ε−d−1 Cd3/2√logd random queries, where C>1 is a universal finite constant.},
booktitle = {Proceedings of the 54th Annual ACM SIGACT Symposium on Theory of Computing},
pages = {203–207},
numpages = {5},
keywords = {Bohnenblust--Hille inequality, learning theory, Discrete hypercube},
location = {Rome, Italy},
series = {STOC 2022}
}

@article{bernstein1997quantum,
author = {Bernstein, Ethan and Vazirani, Umesh},
title = {Quantum Complexity Theory},
journal = {SIAM Journal on Computing},
volume = {26},
number = {5},
pages = {1411-1473},
year = {1997},
doi = {10.1137/S0097539796300921}
}

@misc{huang2023learning,
      title={Learning to predict arbitrary quantum processes}, 
      author={Hsin-Yuan Huang and Sitan Chen and John Preskill},
      year={2023},
      eprint={2210.14894},
      archivePrefix={arXiv},
      primaryClass={quant-ph}
}

@misc{chen2022complexity,
      title={The Complexity of NISQ}, 
      author={Sitan Chen and Jordan Cotler and Hsin-Yuan Huang and Jerry Li},
      year={2022},
      eprint={2210.07234},
      archivePrefix={arXiv},
      primaryClass={quant-ph}
}

@misc{caro2022learning,
      title={Learning Quantum Processes and Hamiltonians via the Pauli Transfer Matrix}, 
      author={Matthias C. Caro},
      year={2022},
      eprint={2212.04471},
      archivePrefix={arXiv},
      primaryClass={quant-ph}
}

@article{huang2021information,
  title={Information-theoretic bounds on quantum advantage in machine learning},
  author={Huang, Hsin-Yuan and Kueng, Richard and Preskill, John},
  journal={Physical Review Letters},
  volume={126},
  number={19},
  pages={190505},
  year={2021},
  publisher={APS},
  doi={10.1103/PhysRevLett.126.190505}
}

@article{aharonov2022quantum,
  title={Quantum algorithmic measurement},
  author={Aharonov, Dorit and Cotler, Jordan and Qi, Xiao-Liang},
  journal={Nature Communications},
  volume={13},
  number={1},
  pages={1--9},
  year={2022},
  publisher={Nature Publishing Group},
  doi={10.1038/s41467-021-27922-0},
  url={https://www.nature.com/articles/s41467-021-27922-0}
}

@inproceedings{chen2022exponential,
  title={Exponential separations between learning with and without quantum memory},
  author={Chen, Sitan and Cotler, Jordan and Huang, Hsin-Yuan and Li, Jerry},
  booktitle={2021 IEEE 62nd Annual Symposium on Foundations of Computer Science (FOCS)},
  pages={574--585},
  year={2022},
  organization={IEEE},
  doi={10.1109/FOCS52979.2021.00063}
}

@article{huang2022quantum-advantage,
  title={Quantum advantage in learning from experiments},
  author={Huang, Hsin-Yuan and Broughton, Michael and Cotler, Jordan and Chen, Sitan and Li, Jerry and Mohseni, Masoud and Neven, Hartmut and Babbush, Ryan and Kueng, Richard and Preskill, John and others},
  journal={Science},
  volume={376},
  number={6598},
  pages={1182--1186},
  year={2022},
  publisher={American Association for the Advancement of Science},
  doi={10.1126/science.abn7293}
}

@book{dudley1999uniform, 
 title={Uniform central limit theorems}, 
 publisher={Cambridge University Press}, 
 author={Dudley, Richard M.}, 
 year={1999}
}

@inproceedings{blum1994weakly,
  title={Weakly learning DNF and characterizing statistical query learning using Fourier analysis},
  author={Blum, Avrim and Furst, Merrick and Jackson, Jeffrey and Kearns, Michael and Mansour, Yishay and Rudich, Steven},
  booktitle={Proceedings of the twenty-sixth annual ACM symposium on Theory of computing},
  pages={253--262},
  year={1994},
  doi={10.1145/195058.195147}
}

@article{bshouty1996fourier,
  title={On the Fourier spectrum of monotone functions},
  author={Bshouty, Nader H and Tamon, Christino},
  journal={Journal of the ACM (JACM)},
  volume={43},
  number={4},
  pages={747--770},
  year={1996},
  publisher={ACM New York, NY, USA},
  doi={10.1145/234533.234564}
}

@InProceedings{fefferman2018quantum-vs-classical,
  author =	{Bill Fefferman and Shelby Kimmel},
  title =	{{Quantum vs. Classical Proofs and Subset Verification}},
  booktitle =	{43rd International Symposium on Mathematical Foundations  of Computer Science (MFCS 2018)},
  pages =	{22:1--22:23},
  series =	{Leibniz International Proceedings in Informatics (LIPIcs)},
  ISBN =	{978-3-95977-086-6},
  ISSN =	{1868-8969},
  year =	{2018},
  volume =	{117},
  editor =	{Igor Potapov and Paul Spirakis and James Worrell},
  publisher =	{Schloss Dagstuhl--Leibniz-Zentrum fuer Informatik},
  address =	{Dagstuhl, Germany},
  URL =		{http://drops.dagstuhl.de/opus/volltexte/2018/9604},
  URN =		{urn:nbn:de:0030-drops-96040},
  doi =		{10.4230/LIPIcs.MFCS.2018.22},
  annote =	{Keywords: Quantum Complexity Theory, Quantum Proofs}
}

@InProceedings{natarajan2022distribution,
  author =	{Natarajan, Anand and Nirkhe, Chinmay},
  title =	{{A Distribution Testing Oracle Separating QMA and QCMA}},
  booktitle =	{38th Computational Complexity Conference (CCC 2023)},
  pages =	{22:1--22:27},
  series =	{Leibniz International Proceedings in Informatics (LIPIcs)},
  ISBN =	{978-3-95977-282-2},
  ISSN =	{1868-8969},
  year =	{2023},
  volume =	{264},
  editor =	{Ta-Shma, Amnon},
  publisher =	{Schloss Dagstuhl -- Leibniz-Zentrum f{\"u}r Informatik},
  address =	{Dagstuhl, Germany},
  URL =		{https://drops.dagstuhl.de/opus/volltexte/2023/18292},
  URN =		{urn:nbn:de:0030-drops-182928},
  doi =		{10.4230/LIPIcs.CCC.2023.22},
  annote =	{Keywords: quantum non-determinism, complexity theory}
}

@article{gheorghiu2019verification,
  title={Verification of quantum computation: An overview of existing approaches},
  author={Gheorghiu, Alexandru and Kapourniotis, Theodoros and Kashefi, Elham},
  journal={Theory of computing systems},
  volume={63},
  pages={715--808},
  year={2019},
  publisher={Springer},
  doi={10.1007/s00224-018-9872-3}
}

@article{chen2022unitarity,
  author={Chen, Kean and Wang, Qisheng and Long, Peixun and Ying, Mingsheng},
  journal={IEEE Transactions on Information Theory},
  title={Unitarity Estimation for Quantum Channels}, 
  year={2023},
  volume={69},
  number={8},
  pages={5116-5134},
  doi={10.1109/TIT.2023.3263645}
}

@article{harrow2014uselessness,
  author    = {Aram W. Harrow and
               David J. Rosenbaum},
  title     = {Uselessness for an Oracle model with internal randomness},
  journal   = {Quantum Inf. Comput.},
  volume    = {14},
  number    = {7-8},
  pages     = {608--624},
  year      = {2014},
  url       = {https://doi.org/10.26421/QIC14.7-8-5},
  doi       = {10.26421/QIC14.7-8-5}
}

@InProceedings{bassirian2022power,
  author =	{Bassirian, Roozbeh and Fefferman, Bill and Marwaha, Kunal},
  title =	{{On the Power of Nonstandard Quantum Oracles}},
  booktitle =	{18th Conference on the Theory of Quantum Computation, Communication and Cryptography (TQC 2023)},
  pages =	{11:1--11:25},
  series =	{Leibniz International Proceedings in Informatics (LIPIcs)},
  ISBN =	{978-3-95977-283-9},
  ISSN =	{1868-8969},
  year =	{2023},
  volume =	{266},
  editor =	{Fawzi, Omar and Walter, Michael},
  publisher =	{Schloss Dagstuhl -- Leibniz-Zentrum f{\"u}r Informatik},
  address =	{Dagstuhl, Germany},
  URL =		{https://drops.dagstuhl.de/opus/volltexte/2023/18321},
  URN =		{urn:nbn:de:0030-drops-183215},
  doi =		{10.4230/LIPIcs.TQC.2023.11},
  annote =	{Keywords: quantum complexity, QCMA, expander graphs, representation theory}
}

@InProceedings{hu2022comparative,
  author =	{Hu, Lunjia and Peale, Charlotte},
  title =	{{Comparative Learning: A Sample Complexity Theory for Two Hypothesis Classes}},
  booktitle =	{14th Innovations in Theoretical Computer Science Conference (ITCS 2023)},
  pages =	{72:1--72:30},
  series =	{Leibniz International Proceedings in Informatics (LIPIcs)},
  ISBN =	{978-3-95977-263-1},
  ISSN =	{1868-8969},
  year =	{2023},
  volume =	{251},
  editor =	{Tauman Kalai, Yael},
  publisher =	{Schloss Dagstuhl -- Leibniz-Zentrum f{\"u}r Informatik},
  address =	{Dagstuhl, Germany},
  URL =		{https://drops.dagstuhl.de/opus/volltexte/2023/17575},
  URN =		{urn:nbn:de:0030-drops-175752},
  doi =		{10.4230/LIPIcs.ITCS.2023.72},
  annote =	{Keywords: Comparative learning, mutual VC dimension, realizable multiaccuracy and multicalibration, sample complexity}
}

@InProceedings{mutreja2022pac-verification,
  title = 	 {PAC Verification of Statistical Algorithms},
  author =       {Mutreja, Saachi and Shafer, Jonathan},
  booktitle = 	 {Proceedings of Thirty Sixth Conference on Learning Theory},
  pages = 	 {5021--5043},
  year = 	 {2023},
  editor = 	 {Neu, Gergely and Rosasco, Lorenzo},
  volume = 	 {195},
  series = 	 {Proceedings of Machine Learning Research},
  month = 	 {Jul},
  publisher =    {PMLR},
  pdf = 	 {https://proceedings.mlr.press/v195/mutreja23a/mutreja23a.pdf},
  url = 	 {https://proceedings.mlr.press/v195/mutreja23a.html},
  abstract = 	 {Goldwasser et al. (2021) recently proposed the setting of PAC verification, where a hypothesis (machine learning model) that purportedly satisfies the agnostic PAC learning objective is verified using an interactive proof. In this paper we develop this notion further in a number of ways. First, we prove a lower bound of $\Omega(\sqrt{d}/\varepsilon^2)$ i.i.d. samples for PAC verification of hypothesis classes of VC dimension $d$. Second, we present a protocol for PAC verification of unions of intervals over $\mathbb{R}$ that improves upon their proposed protocol for that task, and matches our lower bound’s dependence on $d$. Third, we introduce a natural generalization of their definition to verification of general statistical algorithms, which is applicable to a wider variety of settings beyond agnostic PAC learning. Showcasing our proposed definition, our final result is a protocol for the verification of statistical query algorithms that satisfy a combinatorial constraint on their queries.}
}

@inproceedings{canetti2021covert,
  title={Covert Learning: How to Learn with an Untrusted Intermediary},
  author={Canetti, Ran and Karchmer, Ari},
  booktitle={Theory of Cryptography Conference},
  pages={1--31},
  year={2021},
  organization={Springer},
  doi={10.1007/978-3-030-90456-2_1}
}

@inproceedings{herman2022verifying,
  title={Verifying the unseen: interactive proofs for label-invariant distribution properties},
  author={Herman, Tal and Rothblum, Guy N},
  booktitle={Proceedings of the 54th Annual ACM SIGACT Symposium on Theory of Computing},
  pages={1208--1219},
  year={2022},
  doi={10.1145/3519935.3519987}
}

@InProceedings{chiesa2018proofs,
  author =	{Alessandro Chiesa and Tom Gur},
  title =	{{Proofs of Proximity for Distribution Testing}},
  booktitle =	{9th Innovations in Theoretical Computer Science Conference (ITCS 2018)},
  pages =	{53:1--53:14},
  series =	{Leibniz International Proceedings in Informatics (LIPIcs)},
  ISBN =	{978-3-95977-060-6},
  ISSN =	{1868-8969},
  year =	{2018},
  volume =	{94},
  editor =	{Anna R. Karlin},
  publisher =	{Schloss Dagstuhl--Leibniz-Zentrum fuer Informatik},
  address =	{Dagstuhl, Germany},
  URL =		{http://drops.dagstuhl.de/opus/volltexte/2018/8311},
  URN =		{urn:nbn:de:0030-drops-83114},
  doi =		{10.4230/LIPIcs.ITCS.2018.53},
  annote =	{Keywords: distribution testing, proofs of proximity, property testing}
}

@mastersthesis{oconnor2021delegating,
  author = {Jack O'Connor},
  title = {Delegating Machine Learning with Succinct Proofs},
  year = {2021},
  school = {University of Warwick}
}

@article{regev2009lattices,
  title={On lattices, learning with errors, random linear codes, and cryptography},
  author={Regev, Oded},
  journal={Journal of the ACM (JACM)},
  volume={56},
  number={6},
  pages={1--40},
  year={2009},
  publisher={ACM New York, NY, USA}
}

@inproceedings{pietrzak2012cryptography,
  title={Cryptography from Learning Parity with Noise.},
  author={Pietrzak, Krzysztof},
  booktitle={SOFSEM},
  volume={12},
  pages={99--114},
  year={2012},
  organization={Springer}
}

@misc{gollakota2023efficient,
      title={An Efficient Tester-Learner for Halfspaces},
      author={Aravind Gollakota and Adam R. Klivans and Konstantinos Stavropoulos and Arsen Vasilyan},
      year={2023},
      eprint={2302.14853},
      archivePrefix={arXiv},
      primaryClass={cs.LG}
}

@misc{diakonikolas2023efficient,
      title={Efficient Testable Learning of Halfspaces with Adversarial Label Noise}, 
      author={Ilias Diakonikolas and Daniel M. Kane and Vasilis Kontonis and Sihan Liu and Nikos Zarifis},
      year={2023},
      eprint={2303.05485},
      archivePrefix={arXiv},
      primaryClass={cs.LG}
}

@misc{gollakota2023testerlearners,
      title={Tester-Learners for Halfspaces: Universal Algorithms}, 
      author={Aravind Gollakota and Adam R. Klivans and Konstantinos Stavropoulos and Arsen Vasilyan},
      year={2023},
      eprint={2305.11765},
      archivePrefix={arXiv},
      primaryClass={cs.LG}
}

@article{fitzsimons2017private,
  title={Private quantum computation: an introduction to blind quantum computing and related protocols},
  author={Fitzsimons, Joseph F},
  journal={npj Quantum Information},
  volume={3},
  number={1},
  pages={23},
  year={2017},
  publisher={Nature Publishing Group UK London}
}

@inproceedings{Broadbent_2009,
	doi = {10.1109/focs.2009.36},
	url = {https://doi.org/10.1109%2Ffocs.2009.36},
	year = 2009,
	month = {10},
	publisher = {{IEEE}},
	author = {Anne Broadbent and Joseph Fitzsimons and Elham Kashefi},
	title = {Universal Blind Quantum Computation},
	booktitle = {2009 50th Annual {IEEE} Symposium on Foundations of Computer Science}
}

@inproceedings{kearns1994learnability,
  title={On the learnability of discrete distributions},
  author={Kearns, Michael and Mansour, Yishay and Ron, Dana and Rubinfeld, Ronitt and Schapire, Robert E and Sellie, Linda},
  booktitle={Proceedings of the twenty-sixth annual ACM symposium on Theory of computing},
  pages={273--282},
  year={1994}
}

\newpage

\appendix

\section{Auxiliary Results in Classical Computational Learning Theory}\label{appendix:useful}

In this appendix, we compile mostly well known results on how to obtain computational learning guarantees from a Fourier analysis perspective.
We begin with a standard fact: If $\mathcal{D}_{\mathcal{X}_n}=\mathcal{U}_n$, then the misclassification probability for a binary-valued hypothesis is determined by an inner product of Fourier coefficients or, equivalently, by an $\ell_2$-distance between Fourier coefficients.

\begin{lemma}\label{lemma:general-loss-explicit-inner-product}
    Let $\mathcal{D}=(\mathcal{U}_n, \varphi)$ be a probability distribution over $\mathcal{X}_n\times\{0,1\}$ with uniform input marginal.
    Let $f:\{0,1\}^n\to\{0,1\}$ and $g=(-1)^f$. 
    Then,
    \begin{align}
        \mathbb{P}_{(x,b)\sim \mathcal{D}}[b\neq f(x)]
        &= \frac{1-\langle \phi,g\rangle_{\mathcal{U}_n}}{2}\\
        &= \frac{1-\sum_{s\in\{0,1\}^n}\hat{g}(s)\hat{\phi}(s)}{2}\\
        &= \frac{1}{4} \left(\sum_{s\in\{0,1\}^n} \left(\hat{\phi}(s)-\hat{g}(s)\right)^2 + \left(1 - \mathbb{E}_{x\sim \mathcal{U}_n} \left[(\phi(x))^2\right]\right)\right).
    \end{align}
\end{lemma}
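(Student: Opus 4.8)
The plan is to compute the misclassification probability directly and then convert the resulting expression into the Fourier domain via Plancherel's identity. First I would use that the input marginal is uniform and that $f$ (hence $g=(-1)^f$) is Boolean to write the $0/1$-loss pointwise in terms of $\phi$ and $g$. Concretely, conditioned on the input being $x$, the label $b$ equals $1$ with probability $\varphi(x)$ and $0$ with probability $1-\varphi(x)$, so $\mathbb{P}[b\neq f(x)\mid x] = \varphi(x)(1-f(x)) + (1-\varphi(x))f(x)$. Rewriting this in $\{-1,1\}$-notation with $\phi(x)=1-2\varphi(x)$ and $g(x)=1-2f(x)$, a short algebraic manipulation shows that $\varphi(x)(1-f(x)) + (1-\varphi(x))f(x) = \tfrac{1}{2}\bigl(1 - \phi(x)g(x)\bigr)$. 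Taking the expectation over $x\sim\mathcal{U}_n$ then yields $\mathbb{P}_{(x,b)\sim\mathcal{D}}[b\neq f(x)] = \tfrac{1}{2}\bigl(1 - \mathbb{E}_{x\sim\mathcal{U}_n}[\phi(x)g(x)]\bigr) = \tfrac{1}{2}\bigl(1 - \langle\phi,g\rangle_{\mathcal{U}_n}\bigr)$, which is the first claimed equality.

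Next I would apply Plancherel's identity with respect to the uniform distribution, namely $\langle\phi,g\rangle_{\mathcal{U}_n} = \sum_{s\in\{0,1\}^n}\hat{\phi}(s)\hat{g}(s)$, which is immediate from expanding both functions in the parity ONB $\{\chi_s\}_s$ (as recalled after \Cref{definition:fourier-coefficients}). This gives the second equality. For the third equality, I would start from $\sum_s(\hat{\phi}(s)-\hat{g}(s))^2 = \sum_s\hat{\phi}(s)^2 - 2\sum_s\hat{\phi}(s)\hat{g}(s) + \sum_s\hat{g}(s)^2$, use Parseval to identify $\sum_s\hat{g}(s)^2 = \mathbb{E}_{x\sim\mathcal{U}_n}[g(x)^2] = 1$ (since $g$ is $\{-1,1\}$-valued) and $\sum_s\hat{\phi}(s)^2 = \mathbb{E}_{x\sim\mathcal{U}_n}[\phi(x)^2]$, and substitute $\sum_s\hat{\phi}(s)\hat{g}(s) = \langle\phi,g\rangle_{\mathcal{U}_n} = 1 - 2\mathbb{P}_{(x,b)\sim\mathcal{D}}[b\neq f(x)]$ from the first two parts. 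Solving the resulting linear equation for $\mathbb{P}_{(x,b)\sim\mathcal{D}}[b\neq f(x)]$ produces the factor $\tfrac{1}{4}$ and the claimed form $\tfrac{1}{4}\bigl(\sum_s(\hat{\phi}(s)-\hat{g}(s))^2 + (1 - \mathbb{E}_{x\sim\mathcal{U}_n}[\phi(x)^2])\bigr)$.

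This proof is entirely routine; there is no real obstacle. The only point that requires a moment's care is the pointwise identity $\varphi(x)(1-f(x)) + (1-\varphi(x))f(x) = \tfrac{1}{2}(1-\phi(x)g(x))$, which one should verify by checking it holds as a polynomial identity in $\varphi(x)$ and $f(x)$ (or simply by substituting $\varphi = \tfrac{1-\phi}{2}$, $f = \tfrac{1-g}{2}$ and expanding); the cross terms involving $\phi(x)$ alone and $g(x)$ alone cancel, leaving only the constant and the $\phi(x)g(x)$ term. Everything else is a direct application of Parseval/Plancherel and elementary rearrangement.
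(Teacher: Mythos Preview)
Your proposal is correct and follows essentially the same approach as the paper: compute the misclassification probability pointwise to obtain $\tfrac{1}{2}(1-\langle\phi,g\rangle_{\mathcal{U}_n})$, apply Plancherel for the second equality, and use Parseval (with $\sum_s\hat{g}(s)^2=1$) to rearrange into the third form. The only cosmetic difference is that the paper expands the joint sum over $(x,b)$ using $\mathds{1}_{\{g(x)\neq b\}}=\tfrac{1-bg(x)}{2}$ rather than conditioning on $x$ first, but the computations are equivalent.
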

\begin{proof}
    This is a proof by computation. With our notation of $\phi = 1-2\varphi$ and $g=(-1)^f$, we get:
    \begin{align}
        \mathbb{P}_{(x,b)\sim \mathcal{D}}[b\neq f(x)]
        &= \mathbb{P}_{(x,b)\sim (\mathcal{U}_n, \phi)}[b\neq g(x)]\\
        &= \frac{1}{2^n}\sum_{x\in\{0,1\}^n} \sum_{b\in\{-1,1\}} \frac{1+b\phi(x)}{2}\cdot \mathds{1}_{\{g(x)\neq b\}} \\
        &= \frac{1}{2^n}\sum_{x\in\{0,1\}^n} \sum_{b\in\{-1,1\}} \frac{1+b\phi(x)}{2}\cdot\frac{1-b g(x)}{2}\\
        &= \frac{1}{2} - \frac{1}{2}\mathbb{E}_{x\sim\mathcal{U}_n}[\phi(x) g(x)] + \frac{1}{4}\underbrace{\sum_{b\in\{-1,1\}} b \mathbb{E}_{x\sim\mathcal{U}_n}[\phi(x) - g(x)]}_{=0}\\
        &= \frac{1-\langle \phi,g\rangle_{\mathcal{U}_n}}{2} \\
        &= \frac{1-\sum_{s\in\{0,1\}^n}\hat{g}(s)\hat{\phi}(s)}{2}\, .
    \end{align}
    Here, the last step holds by Plancherel's theorem for the Boolean Fourier transform.
    The last equality in the statement of the lemma now follows by Parseval's equality.
\end{proof}

We highlight a useful special case of this lemma
Namely, if $\mathcal{D}_{\mathcal{X}_n}=\mathcal{U}_n$, then agnostic parity learning is equivalent to identifying the largest Fourier coefficient of $\phi$.

\begin{lemma}\label{lemma:parity-loss-explicit}
    Let $\mathcal{D}=(\mathcal{U}_n, \varphi)$ be a probability distribution over $\mathcal{X}_n\times\{0,1\}$ with uniform input marginal.
    Let $s\in\{0,1\}^n$. 
    Then,
    \begin{equation}
        \mathbb{P}_{(x,b)\sim \mathcal{D}}[b\neq s\cdot x]
        = \frac{1-\hat{\phi} (s)}{2}\, .
    \end{equation}
\end{lemma}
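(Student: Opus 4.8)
The plan is to derive this as the special case of \Cref{lemma:general-loss-explicit-inner-product} in which $f$ is itself a parity function, i.e.\ $f(x) = s\cdot x$. First I would observe that with this choice, $g = (-1)^f = \chi_s$, since $\chi_s(x) = (-1)^{s\cdot x}$ by \Cref{definition:fourier-coefficients}. Hence by orthonormality of the parity basis $\{\chi_t\}_{t\in\{0,1\}^n}$, the only non-zero Fourier coefficient of $g$ is $\hat{g}(s) = 1$, with $\hat{g}(t) = 0$ for all $t\neq s$.

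Next I would substitute this into the second displayed equality of \Cref{lemma:general-loss-explicit-inner-product}, namely $\mathbb{P}_{(x,b)\sim\mathcal{D}}[b\neq f(x)] = \tfrac{1 - \sum_{t}\hat{g}(t)\hat{\phi}(t)}{2}$. Since $\sum_{t}\hat{g}(t)\hat{\phi}(t) = \hat{\phi}(s)$, this immediately collapses to $\mathbb{P}_{(x,b)\sim\mathcal{D}}[b\neq s\cdot x] = \tfrac{1-\hat{\phi}(s)}{2}$, which is the claimed identity. I would write this up in two short sentences plus a one-line display.

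There is essentially no obstacle here: the statement is a direct corollary, and the only thing to be careful about is the relabeling bookkeeping — matching $b\neq f(x)$ (with $f$ the $\{0,1\}$-valued parity $x\mapsto s\cdot x$) against the $\{-1,1\}$-relabeled computation inside \Cref{lemma:general-loss-explicit-inner-product}, and confirming that $\hat{\chi_s}(t) = \delta_{s,t}$ follows from $\langle\chi_s,\chi_t\rangle_{\mathcal{U}_n} = \delta_{s,t}$. Alternatively, if one prefers a self-contained computation avoiding the reduction, one could just note $\mathbb{P}_{(x,b)\sim\mathcal{D}}[b\neq s\cdot x] = \mathbb{E}_{x\sim\mathcal{U}_n}\big[\tfrac{1-\phi(x)\chi_s(x)}{2}\big]$ directly (using that, conditioned on $x$, the $\{-1,1\}$-label has mean $\phi(x)$ and $\chi_s(x)\in\{-1,1\}$), and then recognize $\mathbb{E}_{x\sim\mathcal{U}_n}[\phi(x)\chi_s(x)] = \hat{\phi}(s)$ by \Cref{definition:fourier-coefficients}. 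Either route is a few lines; I would present the reduction-based one for brevity since \Cref{lemma:general-loss-explicit-inner-product} is already available.
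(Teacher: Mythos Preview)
Your proposal is correct and takes exactly the same approach as the paper, which simply states that this is a special case of \Cref{lemma:general-loss-explicit-inner-product}. Your explicit unpacking (that $g=\chi_s$ has $\hat{g}(t)=\delta_{s,t}$, so the Fourier inner-product sum collapses to $\hat{\phi}(s)$) is precisely the intended specialization.
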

\begin{proof}
    This is a special case of \Cref{lemma:general-loss-explicit-inner-product}.
\end{proof}

This well known characterization of the misclassification probability of a parity (compare, e.g., \cite[Eq.~(1)]{feldman2009power}) makes it easy to see that parity learning is possible as soon as one can identify an approximately heaviest Fourier coefficient. We emphasize that this is true even in the distributional agnostic case:

\begin{lemma}\label{lemma:parity-learning-via-heaviest-Fourier-coefficient}
    Let $\mathcal{D}=(\mathcal{U}_n, \varphi)$ be a probability distribution over $\mathcal{X}_n\times\{0,1\}$ with uniform input marginal.
    Let $\varepsilon\in (0,1)$. 
    If $s\in\{0,1\}^n$ is such that $\max_{t\in\{0,1\}^n} \hat{\phi}(t) - \hat{\phi}(s) \leq 2\varepsilon$, then also
    \begin{equation}
        \mathbb{P}_{(x,b)\sim\mathcal{D}}[b\neq s\cdot x]
        \leq \min\limits_{t\in\{0,1\}^n} \mathbb{P}_{(x,b)\sim\mathcal{D}}[b\neq t\cdot x] + \varepsilon\, .
    \end{equation}
    In particular, any procedure that, given $\delta,\varepsilon\in (0,1)$, outputs, with success probability $\geq 1-\delta$, an $(2\varepsilon)$-approximately-heaviest Fourier coefficient of $\phi$, immediately gives rise to a distributional $1$-agnostic parity learner.
\end{lemma}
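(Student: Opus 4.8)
\textbf{Proof proposal for \Cref{lemma:parity-learning-via-heaviest-Fourier-coefficient}.}
The plan is to leverage \Cref{lemma:parity-loss-explicit} to reduce the learning statement to an elementary inequality about the Fourier coefficients of $\phi$. First I would apply \Cref{lemma:parity-loss-explicit} twice: once to rewrite $\mathbb{P}_{(x,b)\sim\mathcal{D}}[b\neq s\cdot x] = \frac{1-\hat{\phi}(s)}{2}$, and once (together with minimizing over $t$) to rewrite $\min_{t\in\{0,1\}^n}\mathbb{P}_{(x,b)\sim\mathcal{D}}[b\neq t\cdot x] = \min_{t\in\{0,1\}^n}\frac{1-\hat{\phi}(t)}{2} = \frac{1-\max_{t\in\{0,1\}^n}\hat{\phi}(t)}{2}$, where the last equality uses that $\xi\mapsto \frac{1-\xi}{2}$ is monotonically decreasing.

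Then I would just subtract: the excess misclassification probability of the parity $\chi_s$ over the optimal parity equals
\begin{equation}
    \mathbb{P}_{(x,b)\sim\mathcal{D}}[b\neq s\cdot x] - \min_{t\in\{0,1\}^n}\mathbb{P}_{(x,b)\sim\mathcal{D}}[b\neq t\cdot x]
    = \frac{1-\hat{\phi}(s)}{2} - \frac{1-\max_{t}\hat{\phi}(t)}{2}
    = \frac{\max_{t}\hat{\phi}(t) - \hat{\phi}(s)}{2}.
\end{equation}
By the hypothesis $\max_{t\in\{0,1\}^n}\hat{\phi}(t) - \hat{\phi}(s)\leq 2\varepsilon$, this is at most $\varepsilon$, which is exactly the claimed bound. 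The ``in particular'' clause is then immediate: if a procedure outputs, with probability $\geq 1-\delta$, a string $s$ with $\max_t\hat{\phi}(t)-\hat{\phi}(s)\leq 2\varepsilon$, then on that same event $s$ satisfies the $1$-agnostic parity learning criterion with accuracy $\varepsilon$, so the procedure is a distributional $1$-agnostic parity learner (with the parity hypothesis $x\mapsto s\cdot x$ lying in the benchmark class, hence proper).

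There is essentially no obstacle here; this is a short, routine computation once \Cref{lemma:parity-loss-explicit} is in hand. The only point that warrants a word of care is the use of the monotonicity of $\xi\mapsto \frac{1-\xi}{2}$ to interchange the minimum over $t$ with the affine transformation — but since this map is strictly decreasing, $\min_t \frac{1-\hat{\phi}(t)}{2}$ is attained precisely at the $t$ maximizing $\hat{\phi}(t)$, so the identity holds with no additional hypotheses. I would keep the proof to these three or four lines.
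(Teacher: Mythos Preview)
Your proposal is correct and follows essentially the same route as the paper: both invoke \Cref{lemma:parity-loss-explicit} to rewrite the misclassification probabilities as $\tfrac{1-\hat{\phi}(\cdot)}{2}$, then use the monotonicity of $\xi\mapsto\tfrac{1-\xi}{2}$ together with the hypothesis $\max_t\hat{\phi}(t)-\hat{\phi}(s)\le 2\varepsilon$ to bound the excess error by $\varepsilon$. The only cosmetic difference is that the paper presents this as a single chain of (in)equalities starting from $\mathbb{P}_{(x,b)\sim\mathcal{D}}[b\neq s\cdot x]$, whereas you compute the difference directly.
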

\begin{proof}
    Let $s\in\{0,1\}^n$ be such that $\max_{t\in\{0,1\}^n} \hat{\phi}(t) - \hat{\phi}(s) \leq 2\varepsilon$. Then we have (using \Cref{lemma:parity-loss-explicit}):
    \begin{align}
        \mathbb{P}_{(x,b)\sim\mathcal{D}}[b\neq s\cdot x]
        &= \frac{1-\hat{\phi}(s)}{2}\\
        &= \frac{1- \max_{t\in\{0,1\}^n} \hat{\phi}(t)}{2} + \frac{\max_{t\in\{0,1\}^n} \hat{\phi}(t) - \hat{\phi}(s)}{2}\\
        &\leq \min_{t\in\{0,1\}^n} \frac{1- \hat{\phi}(t)}{2} + \varepsilon\\
        &= \min_{t\in\{0,1\}^n} \mathbb{P}_{(x,b)\sim\mathcal{D}}[b\neq t\cdot x] + \varepsilon ,
    \end{align}
    as claimed.
\end{proof}

From the proof of \Cref{lemma:parity-learning-via-heaviest-Fourier-coefficient}, we also see the following: Let $\mathcal{A}$ be a procedure that, given $\delta,\varepsilon\in (0,1)$, outputs, with success probability $\geq 1-\delta$, an $(2\varepsilon)$-approximately-largest Fourier coefficient of $\phi$. Suppose the information-theoretic complexity of $\mathcal{A}$ is $m_\mathcal{A} (n,  \delta, \varepsilon)$ and the classical computational complexity of $\mathcal{A}$ is $t_\mathcal{A} (n,  \delta, \varepsilon)$. Then, the resulting distributional agnostic parity learner $\mathcal{A}'$ has the same information-theoretic and computational complexity as $\mathcal{A}$.

For the remainder of this section, we are concerned with extending the idea of \Cref{lemma:parity-learning-via-heaviest-Fourier-coefficient} beyond parities to Fourier-sparse function. Here, when building a hypothesis from an approximation to the Fourier spectrum, we do not necessarily obtain a Boolean function. Thus, we need results in the spirit of \Cref{lemma:general-loss-explicit-inner-product} but without the restriction to $\{0,1\}$-valued hypotheses. 
As a first step in this direction, we recall that the misclassification probability of a thresholded version of a general $\mathbb{R}$-valued function can be controlled in terms of an $L_2$-error, compare \cite{mansour1994learning}.

\begin{lemma}\label{lemma:misclassification-probability-bound-L2}
    Let $\mathcal{D}=(\mathcal{U}_n, \varphi)$ be a probability distribution over $\mathcal{X}_n\times\{0,1\}$ with uniform input marginal.
    Let $f:\{0,1\}^n\to\mathbb{R}$ and $g = 1-2f$. 
    Then,
    \begingroup
    \allowdisplaybreaks
    \begin{align}
        \mathbb{P}_{(x,b)\sim \mathcal{D}}\left[b\neq \mathds{1}_{\left\{f(x) \geq \tfrac{1}{2}\right\}}\right]
        &\leq \left(1 + 2\min_{x\in\{0,1\}^n} \lvert f(x) - \tfrac{1}{2}\rvert\right)^{-2} \mathbb{E}_{(x,b)\sim (\mathcal{U}_n, \phi)} \left[\left(b- g(x)\right)^2\right]\\
        &= \left(1 + 2\min_{x\in\{0,1\}^n} \lvert f(x) - \tfrac{1}{2}\rvert\right)^{-2}\left(\sum_{s\in\{0,1\}^n} \left(\hat{\phi}(s)-\hat{g}(s)\right)^2 + \left(1 - \mathbb{E}_{x\sim \mathcal{U}_n} \left[(\phi(x))^2\right]\right)\right)\\
        &= \left(1 + 2\min_{x\in\{0,1\}^n} \lvert f(x) - \tfrac{1}{2}\rvert\right)^{-2} \left(1 + \mathbb{E}_{x\sim \mathcal{U}_n} [(g(x))^2] - 2\sum_{s\in\{0,1\}^n}\hat{g}(s)\hat{\phi}(s)\right) .
    \end{align}
    \endgroup
\end{lemma}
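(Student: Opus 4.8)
\textbf{Proof proposal for \Cref{lemma:misclassification-probability-bound-L2}.}

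The plan is to prove the inequality first and then obtain the two equalities by purely algebraic/Fourier-analytic manipulations. For the inequality, the key observation is a pointwise comparison between the $0/1$-loss of the thresholded hypothesis and the squared loss of the underlying real-valued function. Concretely, write $h(x) = \mathds{1}_{\{f(x)\geq 1/2\}}$ and let $m_0 \coloneqq \min_{x\in\{0,1\}^n}\lvert f(x)-\tfrac12\rvert$, so that $\lvert f(x)-\tfrac12\rvert\geq m_0$ for all $x$. First I would fix an input $x$ and a label $b\in\{0,1\}$ and argue that whenever $h(x)\neq b$, the real value $f(x)$ must lie on the ``wrong side'' of $\tfrac12$ relative to $b$, hence $\lvert f(x)-b\rvert = \tfrac12 + \lvert f(x)-\tfrac12\rvert \geq \tfrac12 + m_0$; equivalently, translating to the $\{-1,1\}$-encoding via $g=1-2f$ and $\beta = 1-2b\in\{-1,1\}$, we get $\lvert g(x)-\beta\rvert \geq 1 + 2m_0$ on the event $h(x)\neq b$. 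Therefore $\mathds{1}_{\{h(x)\neq b\}} \leq (1+2m_0)^{-2}\,(g(x)-\beta)^2$ pointwise. Taking the expectation over $(x,b)\sim\mathcal{D}$ — equivalently over $(x,\beta)\sim(\mathcal{U}_n,\phi)$ in the sign encoding — yields
\begin{equation*}
    \mathbb{P}_{(x,b)\sim\mathcal{D}}\!\left[b\neq h(x)\right]
    \leq \left(1+2m_0\right)^{-2}\,\mathbb{E}_{(x,\beta)\sim(\mathcal{U}_n,\phi)}\!\left[(\beta - g(x))^2\right],
\end{equation*}
which is the claimed first line.

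Next I would expand the squared loss. Since the input marginal is uniform and, conditioned on $x$, the $\{-1,1\}$-label $\beta$ has mean $\phi(x)$ and satisfies $\beta^2 = 1$, we compute $\mathbb{E}_{(x,\beta)}[(\beta-g(x))^2] = \mathbb{E}_x[1 - 2\phi(x)g(x) + (g(x))^2] = 1 + \mathbb{E}_x[(g(x))^2] - 2\langle \phi, g\rangle_{\mathcal{U}_n}$. Applying Plancherel's theorem, $\langle\phi,g\rangle_{\mathcal{U}_n} = \sum_s \hat{\phi}(s)\hat{g}(s)$, which already gives the third displayed expression. To reach the middle expression, I would complete the square using Parseval: $1 + \mathbb{E}_x[(g(x))^2] - 2\sum_s\hat{g}(s)\hat{\phi}(s) = \sum_s(\hat{\phi}(s)-\hat{g}(s))^2 + 1 - \sum_s(\hat{\phi}(s))^2 = \sum_s(\hat{\phi}(s)-\hat{g}(s))^2 + (1 - \mathbb{E}_{x\sim\mathcal{U}_n}[(\phi(x))^2])$, where the last step is Parseval applied to $\phi$. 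This mirrors exactly the computation in \Cref{lemma:general-loss-explicit-inner-product}, so I would simply invoke the relevant identities from there rather than re-derive Plancherel/Parseval.

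The only genuinely delicate point — and the step I expect to need the most care — is the pointwise loss bound, specifically verifying that the event $\{h(x)\neq b\}$ really does force $\lvert g(x) - \beta\rvert \geq 1 + 2m_0$ and, in a degenerate edge case, handling the tie $f(x) = \tfrac12$ (where $m_0 = 0$ and the inequality becomes the trivial $\mathds{1}_{\{h(x)\neq b\}}\leq (g(x)-\beta)^2$, which holds since the right side is then $1$). One must also be careful that $f$ is not assumed $[0,1]$-valued, so $\lvert f(x) - b\rvert$ can exceed $1$; this only helps, since it makes the squared loss larger, so the bound is unaffected. Everything after this pointwise estimate is routine linearity-of-expectation plus the Fourier identities already established in the paper, so the writeup is short.
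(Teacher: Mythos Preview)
Your proposal is correct and follows essentially the same approach as the paper's proof: both establish the pointwise bound $\mathds{1}_{\{h(x)\neq b\}}\leq (1+2m_0)^{-2}(g(x)-\beta)^2$ by observing that a misclassification forces $\lvert g(x)-\beta\rvert \geq 1+2m_0$, then take expectations and expand via Parseval/Plancherel. Your writeup is in fact slightly more explicit about why the pointwise inequality holds (computing $\lvert f(x)-b\rvert = \tfrac12 + \lvert f(x)-\tfrac12\rvert$ on the error event) and about the edge cases, but the argument is the same.
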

\begin{proof}
    First, we change the label from $\{0,1\}$ to $\{-1,1\}$. Observe that with this change,
    \begin{equation}
        \mathbb{P}_{(x,b)\sim \mathcal{D}}\left[b\neq \mathds{1}_{\left\{f(x) \geq \tfrac{1}{2}\right\}}\right]
        = \mathbb{P}_{(x,b)\sim (\mathcal{U}_n, \varphi)}\left[b\neq \mathds{1}_{\left\{f(x) \geq \tfrac{1}{2}\right\}}\right]
        = \mathbb{P}_{(x,b)\sim (\mathcal{U}_n, \phi)}\left[b\neq \operatorname{sgn}(1-2f(x))\right] .
    \end{equation}
    Here, we defined the sign function as
    \begin{equation}
        \operatorname{sgn}:\mathbb{R}\to \{-1,1\},~
        \operatorname{sgn}(\alpha) = \begin{cases} 1 \quad &\textrm{ if }\alpha > 0\\ -1 &\textrm{ if }\alpha \leq 0 \end{cases}.
    \end{equation}
    Next, we write $g = 1-2f$ and note that $\mathds{1}_{\{b \neq \operatorname{sgn}(g(x))\}}\leq \left(1 + 2\min_{x\in\{0,1\}^n} \lvert f(x) - \tfrac{1}{2}\rvert\right)^{-2} (b-g(x))^2$ holds for any $(x,b)\in\mathcal{X}_n\times\{-1,1\}$ because $b \neq \operatorname{sgn}(g(x))$ implies $\lvert b - g(x)\rvert\geq 1 + 2\min_{x\in\{0,1\}^n} \lvert f(x) - \tfrac{1}{2}\rvert$.
    Therefore, we have shown
    \begin{equation}
        \mathbb{P}_{(x,b)\sim \mathcal{D}}\left[b\neq \mathds{1}_{\left\{f(x) \geq \tfrac{1}{2}\right\}}\right]
        = \mathbb{E}_{(x,b)\sim (\mathcal{U}_n, \phi)}\left[\mathds{1}_{\{b \neq \operatorname{sgn}(g(x))\}}\right]
        \leq \left(1 + 2\min_{x\in\{0,1\}^n} \lvert f(x) - \tfrac{1}{2}\rvert\right)^{-2}\mathbb{E}_{(x,b)\sim (\mathcal{U}_n, \phi)}\left[(b-g(x))^2\right] ,
    \end{equation}
    the claimed inequality.
    To see the claimed equalities, we rewrite
    \begin{align}
        \mathbb{E}_{(x,b)\sim (\mathcal{U}_n, \phi)}\left[(b-g(x))^2\right]
        &= \mathbb{E}_{x\sim \mathcal{U}_n} \left[ \left(\frac{1-\phi(x)}{2}\right) (-1 - g(x))^2  + \left(\frac{1+\phi(x)}{2}\right) (1 - g(x))^2\right]\\
        &= \mathbb{E}_{x\sim \mathcal{U}_n} \left[ 1 + (g(x))^2 - 2 g(x) \phi(x)\right]\\
        &= \mathbb{E}_{x\sim \mathcal{U}_n} \left[ (\phi(x) - g(x))^2\right] + \left(1 - \mathbb{E}_{x\sim \mathcal{U}_n} \left[(\phi(x))^2\right]\right)\\
        &= \sum_{s\in\{0,1\}^n} \left(\hat{\phi}(s)-\hat{g}(s)\right)^2 + \left(1 - \mathbb{E}_{x\sim \mathcal{U}_n} \left[(\phi(x))^2\right]\right)\\
        &= 1 + \mathbb{E}_{x\sim \mathcal{U}_n} [(g(x))^2] - 2\sum_{s\in\{0,1\}^n}\hat{g}(s)\hat{\phi}(s) ,
    \end{align}
    where we used Parseval in the second-to-last and the last step.
\end{proof}

If $f$ is $\{0,1\}$-valued, then the upper bound of \Cref{lemma:misclassification-probability-bound-L2} coincides with the exact expression from \Cref{lemma:general-loss-explicit-inner-product}. In that sense, it is a tight extension of \Cref{lemma:general-loss-explicit-inner-product} to non-Boolean hypotheses.

In \Cref{lemma:misclassification-probability-bound-L2}, we still insist on the hypothesis being given by a deterministic function.
If we allow for randomized hypotheses, there is also the following useful variant of \Cref{lemma:misclassification-probability-bound-L2}:

\begin{lemma}\label{lemma:misclassification-probability-bound-L2-probabilistic-hypothesis}
    Let $\mathcal{D}=(\mathcal{U}_n, \varphi)$ be a probability distribution over $\mathcal{X}_n\times\{0,1\}$ with uniform input marginal.
    Let $g:\{0,1\}^n\to\mathbb{R}$. 
    Define a randomized hypothesis $h:\{0,1\}^n\to\{0,1\}$ as follows: Independently for each $x\in\{0,1\}^n$, $h(x)=1$ with probability $p(x) = \tfrac{(1 - g(x))^2}{2 (1+ (g(x))^2)}$ and $h(x)=0$ with probability $1-p(x)$. 
    Then
    \begingroup
    \allowdisplaybreaks
    \begin{align}
        \mathbb{P}_{(x,b)\sim \mathcal{D};h}\left[b\neq h(x)\right]
        &\leq \left(  2 (1+ \min_{x\in\{0,1\}^n}(g(x))^2)\right)^{-1} \mathbb{E}_{(x,b)\sim (\mathcal{U}_n, \phi)} \left[\left(b- g(x)\right)^2\right]\\
        &= \left(  2 (1+ \min_{x\in\{0,1\}^n}(g(x))^2)\right)^{-1}\left(\sum_{s\in\{0,1\}^n} \left(\hat{\phi}(s)-\hat{g}(s)\right)^2 + \left(1 - \mathbb{E}_{x\sim \mathcal{U}_n} \left[(\phi(x))^2\right]\right)\right)\\
        &= \left(  2 (1+ \min_{x\in\{0,1\}^n}(g(x))^2)\right)^{-1} \left(1 + \mathbb{E}_{x\sim \mathcal{U}_n} [(g(x))^2] - 2\sum_{s\in\{0,1\}^n}\hat{g}(s)\hat{\phi}(s)\right) .
    \end{align}
    \endgroup
\end{lemma}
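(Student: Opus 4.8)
\textbf{Proof plan for \Cref{lemma:misclassification-probability-bound-L2-probabilistic-hypothesis}.}
The approach mirrors that of \Cref{lemma:misclassification-probability-bound-L2}, only now the $\{0,1\}$-valued hypothesis is replaced by the randomized one. First I would pass to $\{-1,1\}$-labels and compute the error of $h$ pointwise: conditioned on the input being $x$, the true label is $+1$ with probability $\tfrac{1+\phi(x)}{2}$ and $-1$ with probability $\tfrac{1-\phi(x)}{2}$, while $h(x)$ predicts the $\{-1,1\}$-label that corresponds to bit $1$ with probability $p(x)$ and bit $0$ with probability $1-p(x)$. A short computation then gives
\begin{align}
    \mathbb{P}_{(x,b)\sim\mathcal{D};h}[b\neq h(x)]
    &= \mathbb{E}_{x\sim\mathcal{U}_n}\!\left[ \tfrac{1+\phi(x)}{2}(1-p(x)) + \tfrac{1-\phi(x)}{2}p(x)\right]\\
    &= \mathbb{E}_{x\sim\mathcal{U}_n}\!\left[ \tfrac{1}{2} + p(x)\left(-\phi(x)\right) \right] - \text{(sign bookkeeping)},
\end{align}
so the main point is to verify the pointwise inequality
\begin{equation}
    \tfrac{1+\phi(x)}{2}(1-p(x)) + \tfrac{1-\phi(x)}{2}p(x)
    \leq \left(2(1+(g(x))^2)\right)^{-1}\mathbb{E}_{b\sim\phi(x)}\!\left[(b-g(x))^2\right],
\end{equation}
where the right-hand side equals $\left(2(1+(g(x))^2)\right)^{-1}\left(1+(g(x))^2 - 2g(x)\phi(x)\right)$ after expanding the conditional expectation as in \Cref{lemma:misclassification-probability-bound-L2}.

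Plugging in $p(x) = \tfrac{(1-g(x))^2}{2(1+(g(x))^2)}$ and simplifying, both sides should in fact turn out to be \emph{equal} (the choice of $p(x)$ is precisely what makes the ``gap'' vanish pointwise), so the inequality is really an identity at each $x$; one then only needs $1+(g(x))^2 \geq 1 + \min_{x'}(g(x'))^2$ to replace the $x$-dependent denominator by the uniform lower bound and pull the constant out of the expectation. Concretely: I expect to show $\tfrac{1+\phi(x)}{2}(1-p(x)) + \tfrac{1-\phi(x)}{2}p(x) = \tfrac{1+(g(x))^2 - 2g(x)\phi(x)}{2(1+(g(x))^2)}$ by direct algebra (a ratio of quadratics in $g(x)$ with a linear-in-$\phi(x)$ numerator on each side), which is a one-line check once the common denominator $2(1+(g(x))^2)$ is cleared. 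This yields the claimed inequality, and then the first displayed equality follows from the expansion $\mathbb{E}_{(x,b)\sim(\mathcal{U}_n,\phi)}[(b-g(x))^2] = \mathbb{E}_x[(\phi(x)-g(x))^2] + (1-\mathbb{E}_x[(\phi(x))^2])$, exactly the computation already carried out at the end of the proof of \Cref{lemma:misclassification-probability-bound-L2}; the remaining two equalities are then immediate from Parseval, as there.

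The only genuinely new ingredient compared to \Cref{lemma:misclassification-probability-bound-L2} is verifying that the specific probability $p(x)$ makes the pointwise error coincide with the (normalized) pointwise $L_2$-discrepancy; everything else is reuse. The main obstacle — if any — is purely bookkeeping: keeping the $\{0,1\}$- versus $\{-1,1\}$-label conventions straight when translating ``$h(x)=1$ with probability $p(x)$'' into the $\{-1,1\}$-world, and making sure the algebraic identity for $p(x)$ is checked in the right direction (one could alternatively just verify the inequality directly, which is cleaner: $b\neq h(x)$-type terms are bounded by $\left(2(1+(g(x))^2)\right)^{-1}(b-g(x))^2$ in expectation over $h(x)$, paralleling the deterministic bound $\mathds{1}_{\{b\neq\operatorname{sgn}(g(x))\}}\leq\left(1+2\min|f-\tfrac12|\right)^{-2}(b-g(x))^2$). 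Either way the proof is short, and I would write it as: reduce to $\{-1,1\}$, compute the pointwise conditional error, establish the pointwise bound, take expectations, apply the $L_2$-expansion and Parseval.
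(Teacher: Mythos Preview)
Your plan is correct and is essentially the paper's proof: the paper takes exactly the cleaner alternative you mention at the end, observing the per-$b$ identity $\mathbb{P}_h[b\neq h(x)] = \tfrac{((-1)^b - g(x))^2}{2(1+(g(x))^2)}$ (since $1-p(x)=\tfrac{(1+g(x))^2}{2(1+(g(x))^2)}$), then taking expectation over $(x,b)$ and lower-bounding the denominator, with the two equalities recycled from \Cref{lemma:misclassification-probability-bound-L2}. Your first displayed equation has $p(x)$ and $1-p(x)$ swapped---precisely the bookkeeping hazard you anticipated---which the per-$b$ route neatly avoids.
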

\begin{proof}
    First, we note that $p(x)\in [0,1]$ for all $x\in\{0,1\}^n$, so we indeed have well-defined probabilities.
    Using the given expression for $p(x)$, we get $1-p(x) = 1 - \tfrac{(1 - g(x))^2}{2 (1+ (g(x))^2)} = \tfrac{(-1-g(x))^2}{2 (1+ (g(x))^2)}$. With this, see that, for any fixed $x\in\{0,1\}^n$ and $b\in\{0,1\}$
    \begin{equation}
        \mathbb{P}_{h}\left[b\neq h(x)\right]
        = \mathbb{P}_{h}\left[(-1)^b\neq (-1)^{h(x)}\right]
        = \frac{((-1)^b - g(x))^2}{2 (1+ (g(x))^2)} .
    \end{equation}
    If we now additionally consider the random choice of $(x,b)\sim\mathcal{D}$, we obtain an overall misclassification probability of
    \begin{align}
        \mathbb{P}_{(x,b)\sim \mathcal{D};h}\left[b\neq h(x)\right]
        &= \mathbb{E}_{(x,b)\sim \mathcal{D}}\left[ \mathbb{P}_{h}\left[b\neq h(x)\right]\right]\\
        &= \mathbb{E}_{(x,b)\sim (\mathcal{U}_n, \phi)}\left[ \frac{(b - g(x))^2}{2 (1+ (g(x))^2)}\right] \\
        &\leq \left(  2 (1+ \min_{x\in\{0,1\}^n}(g(x))^2)\right)^{-1} \mathbb{E}_{(x,b)\sim (\mathcal{U}_n, \phi)} \left[\left(b- g(x)\right)^2\right] .
    \end{align}
    The remaining two equalities follow from rewritings for $\mathbb{E}_{(x,b)\sim (\mathcal{U}_n, \phi)} \left[\left(b- g(x)\right)^2\right]$ that have already been observed in \Cref{lemma:misclassification-probability-bound-L2}.
\end{proof}

\Cref{lemma:misclassification-probability-bound-L2-probabilistic-hypothesis} constitutes a slight generalization of \cite[Lemma 3]{blum1994weakly}, see also \cite[Section 5]{bshouty1996fourier}.
Similarly to how \Cref{lemma:parity-loss-explicit} implied \Cref{lemma:parity-learning-via-heaviest-Fourier-coefficient}, \Cref{lemma:misclassification-probability-bound-L2-probabilistic-hypothesis} tells us that finding $k$ approximately heaviest Fourier coefficients is sufficient for distributional $\alpha$-agnostic Fourier-$k$-sparse learning:

\begin{lemma}\label{lemma:Fourier-sparse-learning-via-heaviest-Fourier-coefficients}
    Let $\mathcal{D}=(\mathcal{U}_n, \varphi)$ be a probability distribution over $\mathcal{X}_n\times\{0,1\}$ with uniform input marginal.
    Let $\varepsilon\in (0,1)$ and let $k\in \{1,\ldots, 2^n\}$.
    Let $t_1,\ldots,t_k\in\{0,1\}^n$ be $k$ heaviest Fourier coefficients of $\phi$. That is, let $t_1\in \operatorname{argmax}_{t\in\{0,1\}^n} \lvert \hat{\phi}(t)\rvert$, and for $2\leq \ell\leq k$, let $t_\ell\in \operatorname{argmax}_{t\in\{0,1\}^n\setminus\{t_1,\ldots,t_{\ell -1}\}} \lvert \hat{\phi}(t)\rvert$. 
    Also, let $s_1,\ldots,s_k\in\{0,1\}^n$ be such that $\left\lvert  \lvert\hat{\phi}(t_\ell)\rvert - \lvert\hat{\phi}(s_\ell)\rvert\right\rvert \leq \nicefrac{\varepsilon}{2k} $ holds for every $1\leq \ell\leq k$ and let $\Tilde{\phi}(s_\ell)$ be $(\nicefrac{\varepsilon}{2k})$-accurate estimates of the respective Fourier coefficients, that is $\lvert \Tilde{\phi}(s_\ell) - \hat{\phi}(s_\ell)\rvert\leq \nicefrac{\varepsilon}{2k}$ holds for every $1\leq \ell\leq k$.
    Then, if we define 
    $g:\{0,1\}^n\to\mathbb{R}$, $g(x) = \sum_{\ell=1}^k \tilde{\phi}(s_\ell)\chi_{s_\ell}$, and take the randomized hypothesis $h:\{0,1\}^n\to \{0,1\}$ as in \Cref{lemma:misclassification-probability-bound-L2-probabilistic-hypothesis}, 
    \begin{equation}
        \mathbb{P}_{(x,b)\sim \mathcal{D};h}\left[b\neq h(x)\right]
        \leq \frac{2}{1+ \min_{x\in\{0,1\}^n}(g(x))^2} \min\limits_{\substack{\tilde{f}:\mathcal{X}_n\to\{0,1\}\\\mathrm{Fourier-}k\mathrm{-sparse}}} \mathbb{P}_{(x,b)\sim\mathcal{D}}[b\neq \tilde{f}(x)] + \varepsilon.
    \end{equation}
    In particular, any procedure that, given $\delta,\varepsilon\in (0,1)$, outputs, with success probability $\geq 1-\delta$, an $(\nicefrac{\varepsilon}{2k})$-accurate estimates of $k$ $(\nicefrac{\varepsilon}{2k})$-approximately-heaviest Fourier coefficients of $\phi$, immediately gives rise to a distributional $2$-agnostic Fourier-$k$-sparse learner.
\end{lemma}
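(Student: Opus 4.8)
The plan is to pair the misclassification bound of \Cref{lemma:misclassification-probability-bound-L2-probabilistic-hypothesis}, applied to the hypothesis built from the approximate heavy coefficients, with a matching lower bound on the benchmark optimum, both expressed through the total squared weight $W \coloneqq \sum_{\ell=1}^{k}(\hat{\phi}(t_\ell))^2$ that $\phi$ places on its $k$ heaviest Fourier coefficients. First I would apply \Cref{lemma:misclassification-probability-bound-L2-probabilistic-hypothesis} to $g = \sum_{\ell=1}^{k}\tilde{\phi}(s_\ell)\chi_{s_\ell}$. Assuming, as in the corollaries that invoke this lemma, that the strings $s_1,\dots,s_k$ are pairwise distinct, we have $\hat{g}(s_\ell)=\tilde{\phi}(s_\ell)$ and $\hat{g}(s)=0$ for $s\notin\{s_1,\dots,s_k\}$, so Parseval turns the right-hand side of that lemma into $\frac{1}{2(1+\min_x(g(x))^2)}\bigl(1-\sum_{\ell=1}^k(\hat{\phi}(s_\ell))^2+\sum_{\ell=1}^k(\hat{\phi}(s_\ell)-\tilde{\phi}(s_\ell))^2\bigr)$. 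The two correction terms are controlled directly by the hypotheses of the lemma: $\sum_\ell(\hat{\phi}(s_\ell)-\tilde{\phi}(s_\ell))^2 \le k\cdot(\nicefrac{\varepsilon}{2k})^2 = \nicefrac{\varepsilon^2}{4k}$, and $|\hat{\phi}(s_\ell)|\ge|\hat{\phi}(t_\ell)|-\nicefrac{\varepsilon}{2k}$ combined with $|\hat{\phi}(t_\ell)|\le 1$ gives $\sum_\ell(\hat{\phi}(s_\ell))^2 \ge W-\varepsilon$. Hence $\mathbb{P}_{(x,b)\sim\mathcal{D};h}[b\neq h(x)] \le \frac{1}{2(1+\min_x(g(x))^2)}\bigl((1-W)+\varepsilon+\nicefrac{\varepsilon^2}{4k}\bigr)$.

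Next I would lower-bound the benchmark. The key observation is that a Fourier-$k$-sparse $\tilde{f}:\mathcal{X}_n\to\{0,1\}$ that is not identically zero satisfies $0^n\in\operatorname{supp}(\tilde{f})$, because $\hat{\tilde{f}}(0^n)=\mathbb{E}[\tilde{f}]>0$; consequently $\tilde{g}\coloneqq(-1)^{\tilde{f}}=1-2\tilde{f}$ has $\operatorname{supp}(\tilde{g})\subseteq\operatorname{supp}(\tilde{f})$ and is itself Fourier-$(\le k)$-sparse (the case $\tilde{f}\equiv 0$, where $\tilde{g}\equiv 1$, is trivial since $k\ge 1$). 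Moreover $\tilde{g}$ is $\{-1,1\}$-valued, so $\sum_s(\hat{\tilde{g}}(s))^2=1$ by Parseval. By \Cref{lemma:general-loss-explicit-inner-product}, $\mathbb{P}_{(x,b)\sim\mathcal{D}}[b\neq\tilde{f}(x)]=\frac{1}{2}\bigl(1-\langle\phi,\tilde{g}\rangle_{\mathcal{U}_n}\bigr)$, and Cauchy--Schwarz over the at most $k$ nonzero Fourier coefficients of $\tilde{g}$ gives $\langle\phi,\tilde{g}\rangle_{\mathcal{U}_n}=\sum_s\hat{\phi}(s)\hat{\tilde{g}}(s) \le \sqrt{\sum_{s\in\operatorname{supp}(\tilde{g})}(\hat{\phi}(s))^2}\le\sqrt{W}$, since the right factor is a sum of at most $k$ squared Fourier coefficients of $\phi$ and $W$ is the maximal such sum. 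Therefore $\min_{\tilde{f}}\mathbb{P}_{(x,b)\sim\mathcal{D}}[b\neq\tilde{f}(x)]\ge\frac{1-\sqrt{W}}{2}\ge 0$.

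Finally I would stitch the two bounds together using the elementary inequality $1-W\le 2(1-\sqrt{W})$, which is equivalent to $(1-\sqrt{W})^2\ge 0$ and holds for all $W\in[0,1]$ (note $W\le\|\phi\|_2^2\le 1$). Since $\min_x(g(x))^2\ge 0$, this yields $\frac{1-W}{2(1+\min_x(g(x))^2)}\le\frac{2}{1+\min_x(g(x))^2}\cdot\frac{1-\sqrt{W}}{2}\le\frac{2}{1+\min_x(g(x))^2}\min_{\tilde{f}}\mathbb{P}[b\neq\tilde{f}(x)]$, while the leftover term $\frac{\varepsilon+\nicefrac{\varepsilon^2}{4k}}{2(1+\min_x(g(x))^2)}\le\frac{\varepsilon}{2}+\frac{\varepsilon^2}{8}\le\varepsilon$ for $\varepsilon<1$, $k\ge 1$. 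This gives the displayed inequality, and the ``in particular'' clause then follows because $\frac{2}{1+\min_x(g(x))^2}\le 2$: a routine that outputs $(\nicefrac{\varepsilon}{2k})$-accurate estimates of $k$ $(\nicefrac{\varepsilon}{2k})$-approximately-heaviest Fourier coefficients of $\phi$ (producing distinct $s_\ell$ via an iterated argmax, so that the technical comparison of \Cref{lemma:technical} applies) yields a distributional $2$-agnostic Fourier-$k$-sparse learner after forming $g$ and $h$ as above. The step I expect to be the main obstacle is the benchmark lower bound: one must verify carefully that replacing the Boolean $\tilde{f}$ by $\tilde{g}=(-1)^{\tilde{f}}$ does not push the Fourier support beyond $k$ (the bookkeeping around the constant coefficient), since that sparsity is exactly what powers $\langle\phi,\tilde{g}\rangle\le\sqrt{W}$ and hence the factor-$2$ guarantee; the remaining manipulations are routine tracking of $\varepsilon$-sized errors and the one-line convexity inequality.
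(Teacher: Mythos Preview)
Your proof is correct and follows essentially the same route as the paper's: both apply \Cref{lemma:misclassification-probability-bound-L2-probabilistic-hypothesis} for the upper bound, use \Cref{lemma:general-loss-explicit-inner-product} plus Cauchy--Schwarz over the $k$ heaviest coefficients for the lower bound $\opt\ge\tfrac{1-\sqrt{W}}{2}$, and bridge the two via $1-W\le 2(1-\sqrt{W})$. Your combinatorial check that $\tilde g=(-1)^{\tilde f}$ remains Fourier-$(\le k)$-sparse (because $0^n\in\operatorname{supp}(\tilde f)$ for nonzero Boolean $\tilde f$) makes explicit a point the paper's relaxation step $\{\hat{\tilde g}\}\subseteq\{\hat h:\|\hat h\|_0\le k,\ \|\hat h\|_2=1\}$ uses without comment, so your version is in fact slightly more careful here.
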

\begin{proof}
    For notational convenience, we write $\Tilde{\varepsilon} = \nicefrac{\varepsilon}{2k}$.
    We first use \Cref{lemma:misclassification-probability-bound-L2-probabilistic-hypothesis} to upper bound the misclassification probability of $h$ as follows:
    \begin{align}
        &\mathbb{P}_{(x,b)\sim \mathcal{D};h}\left[b\neq h(x)\right]\\
        &\leq \left(  2 (1+ \min_{x\in\{0,1\}^n}(g(x))^2)\right)^{-1}\left(\sum_{s\in\{0,1\}^n} \left(\hat{\phi}(s)-\hat{g}(s)\right)^2 + \left(1 - \mathbb{E}_{x\sim \mathcal{U}_n} \left[(\phi(x))^2\right]\right)\right)\\
        &= \left(  2 (1+ \min_{x\in\{0,1\}^n}(g(x))^2)\right)^{-1}\left(1 +  \sum_{\ell = 1}^k \left(\hat{\phi}(s_\ell)-\tilde{\phi}(s_\ell)\right)^2 +  \sum_{s\in\{0,1\}^n\setminus\{s_1,\ldots,s_\ell\}} \left(\hat{\phi}(s)\right)^2  - \mathbb{E}_{x\sim \mathcal{U}_n} \left[(\phi(x))^2\right]\right)\\
        &\leq \left(  2 (1+ \min_{x\in\{0,1\}^n}(g(x))^2)\right)^{-1}\left(1 +  k\vartheta^2 -   \sum_{\ell = 1}^k  \left(\hat{\phi}(s_\ell)\right)^2\right) ,
    \end{align}    
    where the last step used that $\mathbb{E}_{x\sim \mathcal{U}_n} \left[(\phi(x))^2\right] = \sum_{s\in\{0,1\}^n} (\hat{\phi}(s))^2$ holds by Parseval as well as the guarantee that $\lvert \Tilde{\phi}(s_\ell) - \hat{\phi}(s_\ell)\rvert\leq \Tilde{\varepsilon}$ holds for every $1\leq \ell\leq k$.

    Next, we lower bound the optimal misclassification probability achievable by Fourier-$k$-sparse Boolean functions, using \Cref{lemma:general-loss-explicit-inner-product} and Cauchy-Schwarz:
    \begin{align}
        \min\limits_{\substack{\tilde{f}:\mathcal{X}_n\to\{0,1\}\\\mathrm{Fourier-}k\mathrm{-sparse}}} \mathbb{P}_{(x,b)\sim\mathcal{D}}[b\neq \tilde{f}(x)]
        &= \min\limits_{\substack{\tilde{f}:\mathcal{X}_n\to\{0,1\}\\\mathrm{Fourier-}k\mathrm{-sparse}}} \frac{1-\sum_{s\in\{0,1\}^n}\hat{\tilde{g}}(s)\hat{\phi}(s)}{2}\\
        &\geq \frac{1}{2}\left(1 - \max\limits_{\hat{h}\in [-1,1]^{2^n}:\norm{\hat{h}}_0\leq k~\wedge ~ \norm{\hat{h}}_2= 1}\sum_{s\in\{0,1\}^n}\hat{h}(s)\hat{\phi}(s) \right) \label{eq:inter_step}\\
        &\geq \frac{1}{2}\left(1 - \sqrt{\sum_{\ell =1}^k (\hat{\phi}(t_\ell))^2} \right)\label{eq:inter_step2}\\
        &\geq \frac{1}{4}\left(1 - \sum_{\ell =1}^k (\hat{\phi}(t_\ell))^2 \right) .
    \end{align}
    Here, the last step uses that $1-\sqrt{x}\geq \tfrac{1-x}{2}$ holds for all $x\in [0,1]$.
    Using that $\hat{\phi}(s)\in [-1,1]$ for all $s\in\{0,1\}^n$ and that $[-1,1]\ni\xi\mapsto \xi^2$ is $2$-Lipschitz as well as our assumption that $\left\lvert  \lvert\hat{\phi}(t_\ell)\rvert - \lvert\hat{\phi}(s_\ell)\rvert\right\rvert \leq \Tilde{\varepsilon} $ holds for every $1\leq \ell\leq k$, we see that 
    \begin{equation}\label{eq:lipschitz}
        \left\lvert \sum_{\ell = 1}^k  \left(\hat{\phi}(t_\ell)\right)^2 - \sum_{\ell = 1}^k  \left(\hat{\phi}(s_\ell)\right)^2 \right\rvert
        \leq 2 k \Tilde{\varepsilon},
    \end{equation}
so our lower bound becomes
    \begin{equation}
        \min\limits_{\substack{\tilde{f}:\mathcal{X}_n\to\{0,1\}\\\mathrm{Fourier-}k\mathrm{-sparse}}} \mathbb{P}_{(x,b)\sim\mathcal{D}}[b\neq \tilde{f}(x)]
        \geq \frac{1}{4}\left(1 -   \sum_{\ell = 1}^k  \left(\hat{\phi}(s_\ell)\right)^2 \right) - \frac{k}{2} \Tilde{\varepsilon} .
    \end{equation}

    Combining our upper and lower bounds, we have shown:
    \begin{align}
        \mathbb{P}_{(x,b)\sim \mathcal{D};h}\left[b\neq h(x)\right]
        &\leq \left(  2 (1+ \min_{x\in\{0,1\}^n}(g(x))^2)\right)^{-1}\left(1 +  k\tilde{\varepsilon}^2 -   \sum_{\ell = 1}^k  \left(\hat{\phi}(s_\ell)\right)^2\right)\\
        &\leq \left(  2 (1+ \min_{x\in\{0,1\}^n}(g(x))^2)\right)^{-1}\left( 4\left(\min\limits_{\substack{\tilde{f}:\mathcal{X}_n\to\{0,1\}\\\mathrm{Fourier-}k\mathrm{-sparse}}} \mathbb{P}_{(x,b)\sim\mathcal{D}}[b\neq \tilde{f}(x)] + \frac{k}{2}\Tilde{\varepsilon}\right) + k\Tilde{\varepsilon}^2\right)\\
        &\leq \frac{2}{1+ \min_{x\in\{0,1\}^n}(g(x))^2} \min\limits_{\substack{\tilde{f}:\mathcal{X}_n\to\{0,1\}\\\mathrm{Fourier-}k\mathrm{-sparse}}} \mathbb{P}_{(x,b)\sim\mathcal{D}}[b\neq \tilde{f}(x)] + 2k \Tilde{\varepsilon}\\
        &\leq \frac{2}{1+ \min_{x\in\{0,1\}^n}(g(x))^2} \min\limits_{\substack{\tilde{f}:\mathcal{X}_n\to\{0,1\}\\\mathrm{Fourier-}k\mathrm{-sparse}}} \mathbb{P}_{(x,b)\sim\mathcal{D}}[b\neq \tilde{f}(x)] + \varepsilon ,
    \end{align}
    where the second to last step uses that $\tfrac{2}{1+ \min_{x\in\{0,1\}^n}(g(x))^2}\leq 2$ and the last step is by our choice of $\Tilde{\varepsilon}$.
\end{proof}

From the proof of \Cref{lemma:Fourier-sparse-learning-via-heaviest-Fourier-coefficients}, we also see the following: Let $\mathcal{A}$ be a procedure that, given $\delta,\varepsilon\in (0,1)$, outputs, with success probability $\geq 1-\delta$, $(\nicefrac{\varepsilon}{2k})$-accurate estimates of $k$ $(\nicefrac{\varepsilon}{2k})$-approximately-heaviest Fourier coefficients of $\phi$. Suppose the information-theoretic complexity of $\mathcal{A}$ is $m_\mathcal{A} (n,k,  \delta, \varepsilon)$ and the classical computational complexity of $\mathcal{A}$ is $t_\mathcal{A} (n,k,  \delta, \varepsilon)$. Then, the resulting distributional $2$-agnostic Fourier-$k$-sparse learner $\mathcal{A}'$ has information-theoretic complexity $m_\mathcal{A'} (n,k,  \delta, \varepsilon) = m_\mathcal{A} (n,k,  \delta, \nicefrac{\varepsilon}{2k})$ and classical computational complexity $t_\mathcal{A} (n,k,  \delta, \varepsilon) = t_\mathcal{A} (n,k,  \delta, \nicefrac{\varepsilon}{2k}) + k$. This is only a minor increase in complexity.

Next, we note a variant of \Cref{lemma:Fourier-sparse-learning-via-heaviest-Fourier-coefficients} with a deterministic hypothesis, obtained by replacing \Cref{lemma:misclassification-probability-bound-L2-probabilistic-hypothesis} by \Cref{lemma:misclassification-probability-bound-L2} in the reasoning above.

\begin{lemma}\label{lemma:Fourier-sparse-learning-via-heaviest-Fourier-coefficients-no-randomization}
    Let $\mathcal{D}=(\mathcal{U}_n, \varphi)$ be a probability distribution over $\mathcal{X}_n\times\{0,1\}$ with uniform input marginal.
    Let $\varepsilon\in (0,1)$ and let $k\in \{1,\ldots, 2^n\}$.
    Let $t_1,\ldots,t_k\in\{0,1\}^n$ be $k$ heaviest Fourier coefficients of $\phi$. That is, let $t_1\in \operatorname{argmax}_{t\in\{0,1\}^n} \lvert \hat{\phi}(t)\rvert$, and for $2\leq \ell\leq k$, let $t_\ell\in \operatorname{argmax}_{t\in\{0,1\}^n\setminus\{t_1,\ldots,t_{\ell -1}\}} \lvert \hat{\phi}(t)\rvert$. 
    Also, let $s_1,\ldots,s_k\in\{0,1\}^n$ be such that $\left\lvert\lvert \hat{\phi}(t_\ell)\rvert - \lvert\hat{\phi}(s_\ell)\rvert \right\rvert\leq \nicefrac{\varepsilon}{3k} $ holds for every $1\leq \ell\leq k$ and let $\Tilde{\phi}(s_\ell)$ be $(\nicefrac{\varepsilon}{3k})$-accurate estimates of the respective Fourier coefficients, that is $\lvert \Tilde{\phi}(s_\ell) - \hat{\phi}(s_\ell)\rvert\leq \nicefrac{\varepsilon}{3k}$ holds for every $1\leq \ell\leq k$.
    Then, if we define $f:\{0,1\}^n\to\mathbb{R}$ as $f=\tfrac{1-g}{2}$, where $g:\{0,1\}^n\to\mathbb{R}$, $g(x) = \sum_{\ell=1}^k \tilde{\phi}(s_\ell)\chi_{s_\ell}$,
    \begin{equation}
        \mathbb{P}_{(x,b)\sim\mathcal{D}} \left[b\neq \mathds{1}_{\left\{f(x) \geq \tfrac{1}{2}\right\}}\right]
        \leq \frac{4}{\left(1 + 2\min_{x\in\{0,1\}^n} \lvert f(x) - \tfrac{1}{2}\rvert\right)^{2}} \min\limits_{\substack{\tilde{f}:\mathcal{X}_n\to\{0,1\}\\\mathrm{Fourier-}k\mathrm{-sparse}}} \mathbb{P}_{(x,b)\sim\mathcal{D}}[b\neq \tilde{f}(x)] + \varepsilon.
    \end{equation}
    In particular, any procedure that, given $\delta,\varepsilon\in (0,1)$, outputs, with success probability $\geq 1-\delta$, an $(\nicefrac{\varepsilon}{3k})$-accurate estimates of the $k$ $(\nicefrac{\varepsilon}{3k})$-approximately-heaviest Fourier coefficients of $\phi$ immediately gives rise to a distributional $2$-agnostic Fourier-$k$-sparse learner.
\end{lemma}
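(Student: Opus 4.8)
The plan is to mirror the proof of \Cref{lemma:Fourier-sparse-learning-via-heaviest-Fourier-coefficients} almost verbatim, replacing the randomized-hypothesis estimate of \Cref{lemma:misclassification-probability-bound-L2-probabilistic-hypothesis} by the deterministic thresholding estimate of \Cref{lemma:misclassification-probability-bound-L2}, and carrying the slightly smaller accuracy $\nicefrac{\varepsilon}{3k}$ through the bookkeeping. First I would apply \Cref{lemma:misclassification-probability-bound-L2} to $f = \tfrac{1-g}{2}$ with $g(x) = \sum_{\ell=1}^k \tilde{\phi}(s_\ell)\chi_{s_\ell}(x)$. Since $\hat{g}$ is supported on $\{s_1,\ldots,s_k\}$ with $\hat{g}(s_\ell) = \tilde{\phi}(s_\ell)$, the $L_2$-term splits as $\sum_{s}(\hat{\phi}(s)-\hat{g}(s))^2 = \sum_{\ell=1}^k (\hat{\phi}(s_\ell)-\tilde{\phi}(s_\ell))^2 + \sum_{s\notin\{s_1,\ldots,s_k\}}(\hat{\phi}(s))^2$; bounding the first sum by $k\cdot(\nicefrac{\varepsilon}{3k})^2$ via the estimation guarantee $\lvert \tilde{\phi}(s_\ell)-\hat{\phi}(s_\ell)\rvert\leq\nicefrac{\varepsilon}{3k}$ and rewriting the second sum via Parseval as $\mathbb{E}_{x\sim\mathcal{U}_n}[(\phi(x))^2] - \sum_{\ell=1}^k (\hat{\phi}(s_\ell))^2$ yields the upper bound
\[
    \mathbb{P}_{(x,b)\sim\mathcal{D}}\left[b\neq \mathds{1}_{\left\{f(x)\geq\tfrac{1}{2}\right\}}\right]
    \leq \left(1 + 2\min_{x\in\{0,1\}^n}\lvert f(x)-\tfrac{1}{2}\rvert\right)^{-2}\left(1 + \tfrac{\varepsilon^2}{9k} - \sum_{\ell=1}^k(\hat{\phi}(s_\ell))^2\right).
\]

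Second, I would lower-bound the optimal Fourier-$k$-sparse misclassification probability exactly as in \eqref{eq:inter_step}--\eqref{eq:inter_step2} of the proof of \Cref{lemma:Fourier-sparse-learning-via-heaviest-Fourier-coefficients}: using \Cref{lemma:general-loss-explicit-inner-product}, Cauchy--Schwarz over $k$-sparse unit vectors, and $1-\sqrt{x}\geq\tfrac{1-x}{2}$ for $x\in[0,1]$, one obtains $\min_{\tilde{f}\ \mathrm{Fourier}\text{-}k\text{-sparse}}\mathbb{P}_{(x,b)\sim\mathcal{D}}[b\neq\tilde{f}(x)]\geq\tfrac{1}{4}(1 - \sum_{\ell=1}^k(\hat{\phi}(t_\ell))^2)$. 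Then, since $\xi\mapsto\xi^2$ is $2$-Lipschitz on $[-1,1]$ and $\bigl\lvert\lvert\hat{\phi}(t_\ell)\rvert - \lvert\hat{\phi}(s_\ell)\rvert\bigr\rvert\leq\nicefrac{\varepsilon}{3k}$ for every $\ell$, the analogue of \eqref{eq:lipschitz} gives $\bigl\lvert\sum_{\ell=1}^k(\hat{\phi}(t_\ell))^2 - \sum_{\ell=1}^k(\hat{\phi}(s_\ell))^2\bigr\rvert\leq 2k\cdot\nicefrac{\varepsilon}{3k} = \tfrac{2\varepsilon}{3}$, so $1 - \sum_{\ell=1}^k(\hat{\phi}(s_\ell))^2 \leq 4\min_{\tilde{f}}\mathbb{P}_{(x,b)\sim\mathcal{D}}[b\neq\tilde{f}(x)] + \tfrac{2\varepsilon}{3}$.

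Third, I would combine the two bounds: substituting the lower bound into the upper bound and using $\left(1 + 2\min_x\lvert f(x)-\tfrac{1}{2}\rvert\right)^{-2}\leq 1$ to discard the prefactor on the additive error terms, the total additive error is $\tfrac{2\varepsilon}{3} + \tfrac{\varepsilon^2}{9k}\leq \tfrac{2\varepsilon}{3} + \tfrac{\varepsilon}{9} = \tfrac{7\varepsilon}{9}\leq\varepsilon$ (using $k\geq 1$, $\varepsilon\in(0,1)$), which is exactly the claimed inequality. The ``in particular'' clause then follows immediately as in \Cref{lemma:Fourier-sparse-learning-via-heaviest-Fourier-coefficients}: a procedure producing $(\nicefrac{\varepsilon}{3k})$-accurate estimates of $k$ $(\nicefrac{\varepsilon}{3k})$-approximately-heaviest Fourier coefficients of $\phi$ supplies precisely the strings $s_1,\ldots,s_k$ and values $\tilde{\phi}(s_1),\ldots,\tilde{\phi}(s_k)$ needed to form $g$, $f$, and the deterministic hypothesis, with a minor increase in complexity from rescaling $\varepsilon\mapsto\nicefrac{\varepsilon}{3k}$ and the $\mathcal{O}(k)$ cost of assembling $g$.

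There is no substantial obstacle here; the only care required is the constant $3$ in the accuracy $\nicefrac{\varepsilon}{3k}$, chosen so that the $\tfrac{2\varepsilon}{3}$ contribution from the Lipschitz swap and the $\tfrac{\varepsilon^2}{9k}$ contribution from the estimation error sum to at most $\varepsilon$. The deterministic prefactor $4\bigl(1 + 2\min_x\lvert f(x)-\tfrac{1}{2}\rvert\bigr)^{-2}$ replaces the randomized prefactor $2\bigl(1 + \min_x(g(x))^2\bigr)^{-1}$ from \Cref{lemma:Fourier-sparse-learning-via-heaviest-Fourier-coefficients} but plays no role beyond being $\geq 1$ when we discard it from the error terms, and the factor-$4$ in front of $\min_{\tilde{f}}$ is exactly what makes this a $2$-agnostic guarantee once one recalls (as discussed after \Cref{lemma:misclassification-probability-bound-L2}) that the bound is tight for $\{0,1\}$-valued hypotheses and thus loses at most the explicit multiplicative factor displayed in the statement.
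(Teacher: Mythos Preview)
Your argument is correct and is exactly the paper's approach: the paper's proof is a one-liner instructing the reader to redo the proof of \Cref{lemma:Fourier-sparse-learning-via-heaviest-Fourier-coefficients} with \Cref{lemma:misclassification-probability-bound-L2} in place of \Cref{lemma:misclassification-probability-bound-L2-probabilistic-hypothesis}, and you have carried out precisely that substitution with the correct bookkeeping for the tighter accuracy $\nicefrac{\varepsilon}{3k}$.

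One minor slip in your final paragraph: the prefactor $4\bigl(1+2\min_x\lvert f(x)-\tfrac{1}{2}\rvert\bigr)^{-2}$ is only bounded above by $4$, not $2$, so the deterministic hypothesis is in general only $4$-agnostic; the paper itself makes this point in the remark immediately following the lemma, and the ``$2$-agnostic'' in the lemma's concluding sentence appears to be a slip in the statement rather than something your argument (or any argument) establishes.
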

\begin{proof}
    Follow the reasoning used to prove \Cref{lemma:Fourier-sparse-learning-via-heaviest-Fourier-coefficients}, replacing \Cref{lemma:misclassification-probability-bound-L2-probabilistic-hypothesis} by \Cref{lemma:misclassification-probability-bound-L2}.
\end{proof}

The prefactor of $4\left(1 + 2\min_{x\in\{0,1\}^n} \lvert f(x) - \tfrac{1}{2}\rvert\right)^{-2}$ in front of the optimal achievable risk in \Cref{lemma:Fourier-sparse-learning-via-heaviest-Fourier-coefficients-no-randomization} is always upper bounded by $4$, but it may be strictly larger than $2$. Thus, this only leads to a distributional $4$-agnostic Fourier-$k$-sparse learner, instead of a $2$-agnostic one obtained from \Cref{lemma:Fourier-sparse-learning-via-heaviest-Fourier-coefficients}.  
A possible advantage of using \Cref{lemma:Fourier-sparse-learning-via-heaviest-Fourier-coefficients-no-randomization} instead over \Cref{lemma:Fourier-sparse-learning-via-heaviest-Fourier-coefficients} is that the resulting distributional $4$-agnostic Fourier-$k$-sparse learner outputs a deterministic hypothesis rather than a randomized one.

\section{Classical Distributional-to-Functional Agnostic Learning Reduction}\label{appendix:classical-distributional-to-agnostic}

As mentioned in the main text, our \Cref{definition:mixture-of-superpositions-quantum-example} is partially motivated by proofs of classical distributional-to-functional agnostic learning reductions. For instance, given a distribution $\mathcal{D}$ over $\mathcal{X}_n\times\{0,1\}$, the associated probability distribution $F_\mathcal{D}$ over $\{0,1\}^{\mathcal{X}_n}$ was used in \cite[Appendix A.1]{gopalan2008agnostically} to reduce general distributional agnostic learning to functional agnostic learning, working in a scenario of learning from membership queries. 
To connect this more closely to the scenario of learning from random examples that is our main focus, we next demonstrate that one can make use of similar ideas to employ $F_\mathcal{D}$ as an auxiliary tool for an analogous reduction in learning from random examples.

\begin{theorem}\label{theorem:distributional-to-functional-reduction-classical}
    Let $\mathcal{F}\subseteq\{0,1\}^{\{0,1\}^n}$ be a benchmark class, let $\delta,\varepsilon\in (0,1)$ and $m=m(n, \delta,\varepsilon)\in\mathbb{N}_{>0}$ .
    Suppose that there is a randomized algorithm $A$ that, given access to a set of $m$ random examples, $S^{f}_m=(x_1,f(x_1)),\ldots,(x_m,f(x_m))$ from $(\mathcal{U}_n,f)$ for any function $f:\{0,1\}^n\to\{0,1\}$, with success probability $\geq 1-\delta$, outputs a function $h:\{0,1\}^n\to\{0,1\}$ such that 
    \begin{equation}
        \mathbb{P}_{x\sim\mathcal{U}_n}[f(x)\neq h(x)]
        \leq \inf_{\Tilde{h}\in\mathcal{F}}\mathbb{P}_{x\sim\mathcal{U}_n}[f(x)\neq \tilde{h}(x)] + \varepsilon.
    \end{equation}
    Then there is a randomized algorithm $\tilde{\mathcal{A}}$ that, given access to a set of $\tilde{m}=m(n, \tfrac{\delta}{2},\tfrac{\varepsilon}{3})$ random examples, $S^{\mathcal{D}}_m=(x_1,y_1),\ldots,(x_{\tilde{m}},y_{\tilde{m}})$ from $\mathcal{D}$ for any distribution $\mathcal{D}$ over $\{0,1\}^n\times\{0,1\}$ with $\mathcal{D}\rvert_{\mathcal{X}_n}=\mathcal{U}_n$, with success probability $\geq 1- \tfrac{\delta}{2} - \mathcal{O}(\tilde{m}^2\cdot 2^{-n}) -  ( \lvert\mathcal{F}\rvert + 1) \cdot \exp\left( - \frac{2^{n+1} \varepsilon^2}{9}\right)$, outputs a function $h:\{0,1\}^n\to\{0,1\}$ such that
    \begin{equation}
        \mathbb{P}_{(x,y)\sim\mathcal{D}}[y\neq h(x)]
        \leq \inf_{\Tilde{h}\in\mathcal{F}} \mathbb{P}_{(x,y)\sim\mathcal{D}}[y\neq \tilde{h}(x)] + \varepsilon.
    \end{equation} 
    Moreover, if $\mathcal{A}$ has runtime $t=t(n,\varepsilon,\delta)$, then $\tilde{\mathcal{A}}$ has runtime $\tilde{t} = t(n,\tfrac{\delta}{2},\tfrac{\varepsilon}{3})$.
    Finally, if $\mathcal{A}$ is proper, then so is $\tilde{\mathcal{A}}$.
\end{theorem}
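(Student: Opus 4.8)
The plan is to use the ``distribution over functions'' interpretation of the random example oracle for $\mathcal{D}$, exactly the object $F_\mathcal{D}$ from \Cref{definition:mixture-of-superpositions-quantum-example}. The key point is that a labeled sample $(x_1,y_1),\ldots,(x_{\tilde m},y_{\tilde m})$ drawn i.i.d.\ from $\mathcal{D}=(\mathcal{U}_n,\varphi)$ can be viewed as follows: first draw a random function $f\sim F_\mathcal{D}$ (i.e.\ set $f(z)=1$ with probability $\varphi(z)$ independently for each $z$), and then, \emph{conditioned on all inputs $x_i$ being distinct}, the sample $(x_1,f(x_1)),\ldots,(x_{\tilde m},f(x_{\tilde m}))$ has exactly the distribution of $\tilde m$ i.i.d.\ uniform inputs labeled by the fixed function $f$. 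So $\tilde{\mathcal{A}}$ simply runs $\mathcal{A}$ on its own sample $S^\mathcal{D}_{\tilde m}$ (with accuracy $\tfrac{\varepsilon}{3}$ and confidence $\tfrac{\delta}{2}$) and outputs whatever hypothesis $h$ comes back. Runtime and properness are inherited verbatim, which disposes of the last two sentences of the statement immediately.

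\textbf{Key steps.} First I would set up the coupling: because $\mathcal{D}\rvert_{\mathcal{X}_n}=\mathcal{U}_n$, the pair $(x,y)\sim\mathcal{D}$ is distributed as $(x, f(x))$ where $x\sim\mathcal{U}_n$ and $f\sim F_\mathcal{D}$ jointly (sample $f$ first, then $x$, then read off $f(x)$). Taking $\tilde m$ independent copies, the sample $S^\mathcal{D}_{\tilde m}$ agrees in distribution with ``draw $f\sim F_\mathcal{D}$, then draw $\tilde m$ i.i.d.\ uniform inputs $x_1,\ldots,x_{\tilde m}$, then label them by $f$'' — \emph{except} that in the latter the $x_i$ are genuinely i.i.d.\ whereas conditioning on $f$ and then on the realized labels does not change the input marginal; the only subtlety is that $\mathcal{A}$'s guarantee is stated for inputs drawn i.i.d.\ and independent of $f$, which is exactly what we have here once $f$ is fixed. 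Second, I would condition on the event $E$ that all $\tilde m$ inputs in $S^\mathcal{D}_{\tilde m}$ are pairwise distinct; by a birthday/union bound, $\mathbb{P}[E^c]\le \binom{\tilde m}{2} 2^{-n} = \mathcal{O}(\tilde m^2 2^{-n})$, which accounts for the $\mathcal{O}(\tilde m^2\cdot 2^{-n})$ loss in the success probability. On $E$, conditionally on $f$, the sample is distributed as $\tilde m$ i.i.d.\ uniform examples for $f$, so $\mathcal{A}$'s output $h$ satisfies $\mathbb{P}_{x\sim\mathcal{U}_n}[f(x)\ne h(x)]\le \inf_{\tilde h\in\mathcal{F}}\mathbb{P}_{x\sim\mathcal{U}_n}[f(x)\ne\tilde h(x)]+\tfrac{\varepsilon}{3}$ with probability $\ge 1-\tfrac{\delta}{2}$ over the randomness of $\mathcal{A}$ and the sample.

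\textbf{From the functional guarantee to the distributional one.} The remaining work is to translate a bound on $\mathbb{P}_{x\sim\mathcal{U}_n}[f(x)\ne h(x)]$ for the \emph{random} $f\sim F_\mathcal{D}$ into a bound on $\mathbb{P}_{(x,y)\sim\mathcal{D}}[y\ne h(x)]$. Here I would use two concentration facts. (i) For any fixed function $t:\{0,1\}^n\to\{0,1\}$, the quantity $\mathbb{P}_{x\sim\mathcal{U}_n}[f(x)\ne t(x)] = 2^{-n}\sum_x \mathds{1}_{\{f(x)\ne t(x)\}}$ is an average of $2^n$ independent Bernoulli's with mean $\mathbb{E}_{f\sim F_\mathcal{D}}\!\big[\mathbb{P}_{x\sim\mathcal{U}_n}[f(x)\ne t(x)]\big]=\mathbb{P}_{(x,y)\sim\mathcal{D}}[y\ne t(x)]$; Hoeffding gives $\big|\mathbb{P}_{x\sim\mathcal{U}_n}[f(x)\ne t(x)]-\mathbb{P}_{(x,y)\sim\mathcal{D}}[y\ne t(x)]\big|\le \tfrac{\varepsilon}{3}$ except with probability $\le 2\exp(-\tfrac{2\cdot 2^n\varepsilon^2}{9})$. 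Apply this with $t$ ranging over $\mathcal{F}$ (a union bound over $|\mathcal{F}|$ events) and over the \emph{output} $h$ — but $h$ depends on the sample, so instead apply it to the single unknown target and argue through the $\inf$, or, cleanly, apply it to $h$ by noting $h$ is determined by $S^\mathcal{D}_{\tilde m}$ which is independent of the ``fresh'' randomness used only in defining $F_\mathcal{D}$... this is the delicate point. The honest fix: note $\mathbb{P}_{(x,y)\sim\mathcal{D}}[y\ne h(x)]$ and $\mathbb{P}_{x\sim\mathcal{U}_n}[f(x)\ne h(x)]$ need to be compared, and since $f$ is the \emph{same} $f$ that generated the sample, one bounds $\big|\mathbb{P}_{(x,y)\sim\mathcal{D}}[y\ne h(x)]-\mathbb{P}_{x\sim\mathcal{U}_n}[f(x)\ne h(x)]\big|$ via a conditional-on-$f$ Hoeffding bound over the $2^n-\tilde m$ \emph{unqueried} points together with the trivial $\tilde m\cdot 2^{-n}$ contribution from queried points, again $\le\tfrac{\varepsilon}{3}$ except with exponentially small probability; similarly for the benchmark, $\inf_{\tilde h\in\mathcal{F}}\mathbb{P}_{x\sim\mathcal{U}_n}[f(x)\ne\tilde h(x)]\le\inf_{\tilde h\in\mathcal{F}}\mathbb{P}_{(x,y)\sim\mathcal{D}}[y\ne\tilde h(x)]+\tfrac{\varepsilon}{3}$ except with probability $\le|\mathcal{F}|\exp(-\tfrac{2\cdot 2^n\varepsilon^2}{9})$. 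Chaining $\tfrac{\varepsilon}{3}+\tfrac{\varepsilon}{3}+\tfrac{\varepsilon}{3}=\varepsilon$ and collecting the failure probabilities $\tfrac{\delta}{2}+\mathcal{O}(\tilde m^2 2^{-n})+(|\mathcal{F}|+1)\exp(-\tfrac{2^{n+1}\varepsilon^2}{9})$ gives the claim. The main obstacle is precisely this last step — carefully handling the dependence between the output hypothesis $h$ and the random function $f$, so that the Hoeffding bounds apply (by conditioning on $f$ and only invoking randomness over points \emph{not} in the sample), and then bookkeeping the union bound over $\mathcal{F}$ and the output; everything else is a routine coupling argument.
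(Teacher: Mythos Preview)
Your approach is correct and essentially coincides with the paper's: same algorithm $\tilde{\mathcal{A}}$, same $F_\mathcal{D}$-coupling, same collision bound, same $\tfrac{\varepsilon}{3}+\tfrac{\varepsilon}{3}+\tfrac{\varepsilon}{3}$ decomposition. The only noteworthy difference is that the paper swaps the roles of your two concentration steps: it applies McDiarmid directly to $\operatorname{opt}_f(\mathcal{F})$ as a bounded-difference function of the $2^n$ independent bits $f(x)$ (yielding the single ``$+1$'' term, no union bound needed there), and instead handles $\operatorname{err}_\mathcal{D}(h)$ versus $\operatorname{err}_f(h)$ by bounding $\mathds{1}\{\exists\tilde h\in\mathcal{F}:\operatorname{err}_\mathcal{D}(\tilde h)>\operatorname{err}_f(\tilde h)+\tfrac{\varepsilon}{3}\}$ via a union bound over $\mathcal{F}$ --- which tacitly uses $h\in\mathcal{F}$, whereas your conditioning argument does not need properness. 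One phrasing slip in your write-up: ``conditional-on-$f$ Hoeffding'' should read ``conditional on the sample'' (the queried inputs and their labels, plus $\mathcal{A}$'s coins); once you condition on those, $h$ is fixed and the values $f(x)$ at the $2^n-\tilde m$ unqueried points remain independent Bernoulli's with means $\varphi(x)$, so Hoeffding applies there and the queried points contribute at most $\tilde m\cdot 2^{-n}$.
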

\begin{proof}
    For the purpose of the proof, we adopt the following notation: 
    For $f,h:\{0,1\}^n\to\{0,1\}$, we write $\operatorname{err}_f(h)\coloneqq \mathbb{P}_{x\sim\mathcal{U}_n}[f(x)\neq h(x)]$ and $\operatorname{opt}_f(\mathcal{F})=\inf_{\Tilde{h}\in\mathcal{F}}\operatorname{err}_f(\tilde{h})$.
    For a distribution $\mathcal{D}$ over $\{0,1\}^n\times\{0,1\}$ and for $h:\{0,1\}^n\to\{0,1\}$, we write $\operatorname{err}_{\mathcal{D}}(h)\coloneqq \mathbb{P}_{(x,y)\sim\mathcal{D}}[f(x)\neq h(x)]$ and $\operatorname{opt}_{\mathcal{D}}(\mathcal{F})=\inf_{\Tilde{h}\in\mathcal{F}}\operatorname{err}_{\mathcal{D}}(\tilde{h})$.
    
    The randomized algorithm $\tilde{\mathcal{A}}$ should simply consist in running $\mathcal{A}$ on the available data (possibly aborting if there is a collision). Note: Conditioned on there being no collision in the training data, it is statistically indistinguishable whether the training data was generated from $\mathcal{D}$ or from $(\mathcal{U}_n,f)$ for some randomly drawn $f\sim F(\mathcal{D})$. 
    Then, we can argue as follows:
    \begingroup
    \allowdisplaybreaks
    \small
    \begin{align}
        &\mathbb{P}_{S^{\mathcal{D}}_{\tilde{m}},\mathcal{A}}\left[\operatorname{err}_{\mathcal{D}}(h_{S^{\mathcal{D}}_{\tilde{m}},\mathcal{A}})> \operatorname{opt}_{\mathcal{D}}(\mathcal{F}) + \varepsilon\right]\\
        &\leq \mathcal{O}(\tilde{m}^2\cdot 2^{-n}) + \mathbb{P}_{S^{\mathcal{D}}_{\tilde{m}},\mathcal{A}}\left[\operatorname{err}_{\mathcal{D}}(h_{S^{\mathcal{D}}_m,\mathcal{A}})> \operatorname{opt}_{\mathcal{D}}(\mathcal{F}) + \varepsilon~|~\textrm{no collision}\right]\\
        &\leq \mathcal{O}(\tilde{m}^2\cdot 2^{-n}) + \mathbb{P}_{f\sim F(\mathcal{D}), S^{f}_{\Tilde{m}},\mathcal{A}}\left[\operatorname{err}_{\mathcal{D}}(h_{S^{f}_{\Tilde{m}},\mathcal{A}})> \operatorname{opt}_{\mathcal{D}}(\mathcal{F}) + \varepsilon\right]\\
        &\leq \mathcal{O}(\tilde{m}^2\cdot 2^{-n}) + \mathbb{P}_{f\sim F(\mathcal{D}), S^{f}_{\Tilde{m}},\mathcal{A}}\left[\operatorname{err}_{\mathcal{D}}(h_{S^{f}_{\Tilde{m}},\mathcal{A}})> \operatorname{opt}_{f}(\mathcal{F}) + \frac{2\varepsilon}{3}\right] + \mathbb{P}_{f\sim F(\mathcal{D})}\left[\operatorname{opt}_{f}(\mathcal{F}) > \operatorname{opt}_{\mathcal{D}}(\mathcal{F}) + \frac{\varepsilon}{3}\right]\\[20pt]
        \begin{split}
            &\leq \mathcal{O}(\tilde{m}^2\cdot 2^{-n}) + \mathbb{P}_{f\sim F(\mathcal{D}), S^{f}_{\Tilde{m}},\mathcal{A}}\left[\operatorname{err}_{f}(h_{S^{f}_{\Tilde{m}},\mathcal{A}})> \operatorname{opt}_{f}(\mathcal{F}) + \frac{\varepsilon}{3}\right] + \mathbb{P}_{f\sim F(\mathcal{D}), S^{f}_{\Tilde{m}},\mathcal{A}}\left[\operatorname{err}_{\mathcal{D}}(h_{S^{f}_{\Tilde{m}},\mathcal{A}}) > \operatorname{err}_{f}(h_{S^{f}_{\Tilde{m}},\mathcal{A}})  + \frac{\varepsilon}{3}\right] \\
            &\hphantom{\leq \mathcal{O}(\tilde{m}^2\cdot 2^{-n}) + } + \mathbb{P}_{f\sim F(\mathcal{D})}\left[\operatorname{opt}_{f}(\mathcal{F}) > \operatorname{opt}_{\mathcal{D}}(\mathcal{F}) + \frac{\varepsilon}{3}\right]
        \end{split}\\
        \begin{split}
            &= \mathcal{O}(\tilde{m}^2\cdot 2^{-n}) + \mathbb{E}_{f\sim F(\mathcal{D})}\left[\mathbb{P}_{ S^{f}_{\Tilde{m}},\mathcal{A}}\left[\operatorname{err}_{f}(h_{S^{f}_{\Tilde{m}},\mathcal{A}})> \operatorname{opt}_{f}(\mathcal{F}) + \frac{\varepsilon}{3}\right]\right] + \mathbb{P}_{f\sim F(\mathcal{D}), S^{f}_{\Tilde{m}},\mathcal{A}}\left[\operatorname{err}_{\mathcal{D}}(h_{S^{f}_{\Tilde{m}},\mathcal{A}}) > \operatorname{err}_{f}(h_{S^{f}_{\Tilde{m}},\mathcal{A}})  + \frac{\varepsilon}{3}\right]\\
            &\hphantom{\leq \mathcal{O}(\tilde{m}^2\cdot 2^{-n}) +} + \mathbb{P}_{f\sim F(\mathcal{D})}\left[\operatorname{opt}_{f}(\mathcal{F}) > \operatorname{opt}_{\mathcal{D}}(\mathcal{F}) + \frac{\varepsilon}{3}\right]
        \end{split}\\
        &\leq \mathcal{O}(\tilde{m}^2\cdot 2^{-n}) + \mathbb{E}_{f\sim F(\mathcal{D})}\left[\tfrac{\delta}{2}\right] + \mathbb{P}_{f\sim F(\mathcal{D}), S^{f}_{\Tilde{m}},\mathcal{A}}\left[\operatorname{err}_{\mathcal{D}}(h_{S^{f}_{\Tilde{m}},\mathcal{A}}) > \operatorname{err}_{f}(h_{S^{f}_{\Tilde{m}},\mathcal{A}})  + \frac{\varepsilon}{3}\right] + \mathbb{P}_{f\sim F(\mathcal{D})}\left[\operatorname{opt}_{f}(\mathcal{F}) > \operatorname{opt}_{\mathcal{D}}(\mathcal{F}) + \frac{\varepsilon}{3}\right]\\
        &= \mathcal{O}(\tilde{m}^2\cdot 2^{-n}) + \frac{\delta}{2} + \mathbb{P}_{f\sim F(\mathcal{D}), S^{f}_{\Tilde{m}},\mathcal{A}}\left[\operatorname{err}_{\mathcal{D}}(h_{S^{f}_{\Tilde{m}},\mathcal{A}}) > \operatorname{err}_{f}(h_{S^{f}_{\Tilde{m}},\mathcal{A}})  + \frac{\varepsilon}{3}\right] + \mathbb{P}_{f\sim F(\mathcal{D})}\left[\operatorname{opt}_{f}(\mathcal{F}) > \operatorname{opt}_{\mathcal{D}}(\mathcal{F}) + \frac{\varepsilon}{3}\right].
    \end{align}
    \normalsize
    \endgroup
    Next, observe that, since the expectation of an infimum is upper-bounded by the infimum over expectations, and using the definition of $F(\mathcal{D})$,
    \begin{align}
        \mathbb{E}_{f\sim F(\mathcal{D})}\left[\operatorname{opt}_{f}(\mathcal{F})\right]
        &= \mathbb{E}_{f\sim F(\mathcal{D})}\left[\inf_{\Tilde{h}\in\mathcal{F}}\operatorname{err}_f(\tilde{h})\right]\\
        &\leq \inf_{\Tilde{h}\in\mathcal{F}}\mathbb{E}_{f\sim F(\mathcal{D})}\left[\operatorname{err}_f(\tilde{h})\right]\\
        &= \inf_{\Tilde{h}\in\mathcal{F}}\operatorname{err}_{\mathcal{D}}(h)\\
        &= \operatorname{opt}_{\mathcal{D}}(\mathcal{F}).
    \end{align}
    This in particular tells us that
    \begin{equation}
        \mathbb{P}_{f\sim F(\mathcal{D})}\left[\operatorname{opt}_{f}(\mathcal{F}) > \operatorname{opt}_{\mathcal{D}}(\mathcal{F}) + \frac{\varepsilon}{3}\right]
        \leq \mathbb{P}_{f\sim F(\mathcal{D})}\left[\operatorname{opt}_{f}(\mathcal{F}) > \mathbb{E}_{f\sim F(\mathcal{D})}\left[\operatorname{opt}_{f}(\mathcal{F})\right] + \frac{\varepsilon}{3}\right].
    \end{equation}
    We claim that the latter probability is small by McDiarmid's bounded differences inequality \cite{mcdiarmid1989onthemethod}. Namely, we can write 
    \begin{equation}
        \operatorname{opt}_{f}(\mathcal{F})
        = \inf_{\Tilde{h}\in\mathcal{F}}\operatorname{err}_f(\tilde{h})
        = \inf_{\Tilde{h}\in\mathcal{F}} \frac{1}{2^n}\sum_{x\in\{0,1\}^n} (1-\delta_{f(x),\tilde{h}(x)})
        = \xi(\{Z_b\}_{b\in\{0,1\}^n}),
    \end{equation}
    where we defined the measurable function $\xi:\{0,1\}^{\{0,1\}^n}\to\mathbb{R}$, $\xi(\{z_b\}_{b\in\{0,1\}^n}) = \inf_{\Tilde{h}\in\mathcal{F}} \frac{1}{2^n}\sum_{x\in\{0,1\}^n} (1-\delta_{z_x,\tilde{h}(x)})$, and the $Z_b, b\in\{0,1\}^n$, are independent random variables with $\mathbb{P}[Z_b=0]=\mathbb{P}_{(x,y)\sim\mathcal{D}}[y=0|x=b]=1-\mathbb{P}[Z_b=1]$.
    Now, suppose that $\{z_b\}_{b\in\{0,1\}^n}$ and $\{z'_b\}_{b\in\{0,1\}^n}$ differ only in a single coordinate, $z_{b_0}\neq z'_{b_0}$ for some $b_0\in\{0,1\}^n$ but $z_b=z'_b$ for all $b\in\{0,1\}^n\setminus\{b_0\}$. Then, using that $\lvert \inf_A f - \inf_A g \rvert\leq \sup_A \lvert f-g\rvert$, we get
    \begin{align}
        \lvert \xi(\{z_b\}_{b\in\{0,1\}^n}) - \xi(\{z'_b\}_{b\in\{0,1\}^n})\rvert
        &= \left\lvert \inf_{\Tilde{h}\in\mathcal{F}} \frac{1}{2^n}\sum_{x\in\{0,1\}^n} (1-\delta_{z_x,\tilde{h}(x)}) - \inf_{\Tilde{h}\in\mathcal{F}} \frac{1}{2^n}\sum_{x\in\{0,1\}^n} (1-\delta_{z'_x,\tilde{h}(x)})\right\rvert\\
        &\leq \sup_{\Tilde{h}\in\mathcal{F}}\left\lvert \frac{1}{2^n}\sum_{x\in\{0,1\}^n} (1-\delta_{z_x,\tilde{h}(x)}) - \frac{1}{2^n}\sum_{x\in\{0,1\}^n} (1-\delta_{z'_x,\tilde{h}(x)}) \right\rvert\\
        &= \frac{1}{2^n}\sup_{\Tilde{h}\in\mathcal{F}}\left\lvert \delta_{z_x,\tilde{h}(x)} -\delta_{z'_x,\tilde{h}(x)} \right\rvert\\
        &= \frac{1}{2^n}.
    \end{align}
    Therefore, McDiarmid's bounded differences inequality \cite{mcdiarmid1989onthemethod} yields
    \begin{align}
        \mathbb{P}_{f\sim F(\mathcal{D})}\left[\operatorname{opt}_{f}(\mathcal{F}) > \mathbb{E}_{f\sim F(\mathcal{D})}\left[\operatorname{opt}_{f}(\mathcal{F})\right] + \frac{\varepsilon}{3}\right]
        &= \mathbb{P}\left[\xi (\{Z_b\}_{b\in\{0,1\}^n}) > \mathbb{E}\left[\{Z_b\}_{b\in\{0,1\}^n}\right] + \frac{\varepsilon}{3}\right]\\
        &\leq \exp\left( - \frac{2\cdot\left(\tfrac{\varepsilon}{3}\right)^2}{2^n\cdot \left(\tfrac{1}{2^n}\right)^2}\right)\\
        &= \exp\left( - \frac{2^{n+1} \varepsilon^2}{9}\right) .
    \end{align}
    Combining what we have so far, we have shown:
    \small
    \begin{equation}
        \mathbb{P}_{S^{\mathcal{D}}_{\tilde{m}},\mathcal{A}}\left[\operatorname{err}_{\mathcal{D}}(h_{S^{\mathcal{D}}_m,\mathcal{A}})> \operatorname{opt}_{\mathcal{D}}(\mathcal{F}) + \varepsilon\right]
        \leq \mathcal{O}(\tilde{m}^2\cdot 2^{-n}) + \frac{\delta}{2} + \exp\left( - \frac{2^{n+1} \varepsilon^2}{9}\right) + \mathbb{P}_{f\sim F(\mathcal{D}), S^{f}_{\Tilde{m}},\mathcal{A}}\left[\operatorname{err}_{\mathcal{D}}(h_{S^{f}_{\Tilde{m}},\mathcal{A}}) > \operatorname{err}_{f}(h_{S^{f}_{\Tilde{m}},\mathcal{A}})  + \frac{\varepsilon}{3}\right].
    \end{equation}
    \normalsize
    It remains to control the last summand in the above expression. To this end, observe that
    \begingroup
    \allowdisplaybreaks
    \begin{align}
        \mathbb{P}_{f\sim F(\mathcal{D}), S^{f}_{\Tilde{m}},\mathcal{A}}\left[\operatorname{err}_{\mathcal{D}}(h_{S^{f}_{\Tilde{m}},\mathcal{A}}) > \operatorname{err}_{f}(h_{S^{f}_{\Tilde{m}},\mathcal{A}})  + \frac{\varepsilon}{3}\right]
        &= \mathbb{E}_{f\sim F(\mathcal{D}), S^{f}_{\Tilde{m}},\mathcal{A}}\left[\mathds{1}_{\{\operatorname{err}_{\mathcal{D}}(h_{S^{f}_{\Tilde{m}},\mathcal{A}}) > \operatorname{err}_{f}(h_{S^{f}_{\Tilde{m}},\mathcal{A}})  + \frac{\varepsilon}{3}\}}\right]\\
        &\leq \mathbb{E}_{f\sim F(\mathcal{D}), S^{f}_{\Tilde{m}},\mathcal{A}}\left[\mathds{1}_{\{\exists \tilde{h}\in\mathcal{F}: \operatorname{err}_{\mathcal{D}}(\tilde{h}) > \operatorname{err}_{f}(\tilde{h})  + \frac{\varepsilon}{3}\}}\right]\\
        &= \mathbb{E}_{f\sim F(\mathcal{D})}\left[\mathds{1}_{\{\exists \tilde{h}\in\mathcal{F}: \operatorname{err}_{\mathcal{D}}(\tilde{h}) > \operatorname{err}_{f}(\tilde{h})  + \frac{\varepsilon}{3}\}}\right]\\
        &= \mathbb{P}_{f\sim F(\mathcal{D})}\left[\exists \tilde{h}\in\mathcal{F}: \operatorname{err}_{\mathcal{D}}(\tilde{h}) > \operatorname{err}_{f}(\tilde{h})  + \frac{\varepsilon}{3}\right]\\
        &= \mathbb{P}_{f\sim F(\mathcal{D})}\left[\exists \tilde{h}\in\mathcal{F}: \mathbb{E}_{f\sim F(\mathcal{D})}[\operatorname{err}_{f}(\tilde{h})] > \operatorname{err}_{f}(\tilde{h})  + \frac{\varepsilon}{3}\right]\\
        &\leq \sum_{\tilde{h}\in\mathcal{F}}\mathbb{P}_{f\sim F(\mathcal{D})}\left[\mathbb{E}_{f\sim F(\mathcal{D})}[\operatorname{err}_{f}(\tilde{h})] > \operatorname{err}_{f}(\tilde{h})  + \frac{\varepsilon}{3}\right] .
    \end{align}
    \endgroup
    In a similar way, we can use McDiarmid \cite{mcdiarmid1989onthemethod} to show: For every $\tilde{h}\in\mathcal{F}$,
    \begin{equation}
        \mathbb{P}_{f\sim F(\mathcal{D})}\left[\mathbb{E}_{f\sim F(\mathcal{D})}[\operatorname{err}_{f}(\tilde{h})] > \operatorname{err}_{f}(\tilde{h})  + \frac{\varepsilon}{3}\right]
        \leq \exp\left( - \frac{2^{n+1} \varepsilon^2}{9}\right) .
    \end{equation}
    Thus, we have shown:
    \begin{equation}
        \mathbb{P}_{f\sim F(\mathcal{D}), S^{f}_{\Tilde{m}},\mathcal{A}}\left[\operatorname{err}_{\mathcal{D}}(h_{S^{f}_{\Tilde{m}},\mathcal{A}}) > \operatorname{err}_{f}(h_{S^{f}_{\Tilde{m}},\mathcal{A}})  + \frac{\varepsilon}{3}\right]
        \leq \lvert\mathcal{F}\rvert\cdot \exp\left( - \frac{2^{n+1} \varepsilon^2}{9}\right) .
    \end{equation}
    Altogether, we have proven that
    \begin{equation}
        \mathbb{P}_{S^{\mathcal{D}}_{\tilde{m}},\mathcal{A}}\left[\operatorname{err}_{\mathcal{D}}(h_{S^{\mathcal{D}}_m,\mathcal{A}})> \operatorname{opt}_{\mathcal{D}}(\mathcal{F}) + \varepsilon\right]
        \leq \mathcal{O}(\tilde{m}^2\cdot 2^{-n}) + \frac{\delta}{2} + \exp\left( - \frac{2^{n+1} \varepsilon^2}{9}\right) + \lvert\mathcal{F}\rvert\cdot \exp\left( - \frac{2^{n+1} \varepsilon^2}{9}\right)
    \end{equation}
    which is the claimed result.
\end{proof}

% \section{Alternative Definitions of Agnostic Quantum Examples}

\section{Additional Auxiliary Results and Proofs}\label{appendix:proofs}

\begin{corollary}\label{corollary:functional-exact-fourier-sparse-learning}
    Let $\mathcal{D}$ be a probability distribution over $\mathcal{X}_n\times\{0,1\}$ with $\mathcal{D}=(\mathcal{U}_n, f)$ for some deterministic labeling function $f:\{0,1\}^n\to\{0,1\}$ such that $g$ is Fourier-$k$-sparse. 
    Let $\delta\in (0,1)$.
    Then, there exists a quantum algorithm that, given $\mathcal{O}\left(k^4\log(\nicefrac{k^2}{\delta})\right)$ copies of $\ket{\psi_{\mathcal{D}}}$, uses $\mathcal{O}\left(nk^4\log(\nicefrac{k^2}{\delta})\right)$ single-qubit gates, classical computation time $\tilde{\mathcal{O}}\left(nk^4\log(\nicefrac{k^2}{\delta})\right)$, and classical memory of size $\tilde{\mathcal{O}}\left(nk^4\log(\nicefrac{k^2}{\delta})\right)$, and outputs a succinct description of $f$.
\end{corollary}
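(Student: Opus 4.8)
The plan is to combine the Fourier spectrum approximation result of \Cref{corollary:quantum-approximation-fourier-spectrum} with the known granularity structure of Fourier-sparse $\{-1,1\}$-valued functions, which pins down all the nonzero Fourier coefficients exactly once one has a sufficiently fine approximation. First I would apply \Cref{corollary:quantum-approximation-fourier-spectrum} with a carefully chosen accuracy parameter $\varepsilon'$ and confidence parameter $\delta$ to obtain, with success probability $\geq 1-\delta$, a succinctly represented $\tilde g:\mathcal{X}_n\to[-1,1]$ with $\norm{\tilde g-\hat g}_\infty\leq\varepsilon'$ and $\norm{\tilde g}_0\leq \nicefrac{4}{\varepsilon'^2}$. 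The key point is the choice of $\varepsilon'$: by \cite[Theorem 8.1]{gopalan2011testing} (as already invoked in \Cref{lemma:no-small-nonzero-Fourier-coeff-versus-sparsity}), a $k$-sparse $\{-1,1\}$-valued function is $(\lfloor\log k\rfloor-1)$-granular, so every nonzero Fourier coefficient of $g$ has magnitude at least $2^{1-\lfloor\log k\rfloor}\geq \nicefrac{2}{k}$, and any two distinct coefficients differ by at least $\nicefrac{2}{k}$. Hence taking $\varepsilon' = \nicefrac{1}{2k}$ (so that $\nicefrac{4}{\varepsilon'^2} = 16 k^2$) suffices: given $\tilde g$, the algorithm rounds each entry of $\tilde g$ to the nearest integer multiple of $2^{1-\lfloor\log k\rfloor}$, and sets all entries smaller in magnitude than $\nicefrac{1}{k}$ to zero. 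Since $\norm{\tilde g-\hat g}_\infty\leq \nicefrac{1}{2k} < \nicefrac{1}{k}$, this rounding recovers $\hat g$ exactly on its support and zeroes everything else, yielding an exact succinct description of $g$, hence of $f$ via $f=\tfrac{1-g}{2}$.

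Next I would track the complexities. Plugging $\varepsilon' = \nicefrac{1}{2k}$ into the bounds of \Cref{corollary:quantum-approximation-fourier-spectrum}: the copy complexity $\mathcal{O}\!\left(\tfrac{\log(\nicefrac{1}{\delta \varepsilon'^2})}{\varepsilon'^4}\right)$ becomes $\mathcal{O}\!\left(k^4\log(\nicefrac{k^2}{\delta})\right)$, the single-qubit gate count becomes $\mathcal{O}\!\left(n k^4 \log(\nicefrac{k^2}{\delta})\right)$, and the classical time and memory become $\tilde{\mathcal{O}}\!\left(n k^4 \log(\nicefrac{k^2}{\delta})\right)$, matching the statement. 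The additional rounding/thresholding postprocessing touches only the $\mathcal{O}(k^2)$ nonzero entries of $\tilde g$, at cost $\tilde{\mathcal{O}}(n k^2)$, which is absorbed into the stated bounds. One should also handle the small-$k$ edge case ($k\leq 2$), where the granularity theorem is stated for $k\geq 2$; for $k=1$, $g$ is $\pm\chi_s$ and a single approximate Fourier sample already identifies $s$, and for $k=2$ the coefficients lie in $\{0,\pm 1\}$ or $\{\pm\tfrac12\}$ and the same rounding argument applies with $\varepsilon'=\nicefrac14$. I would also note that the output is guaranteed only conditioned on the success event of \Cref{corollary:quantum-approximation-fourier-spectrum}, which already carries probability $\geq 1-\delta$, so no further union bound is needed.

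I do not expect a serious obstacle here: the statement is essentially a corollary packaging of \Cref{corollary:quantum-approximation-fourier-spectrum} plus granularity, and the excerpt itself flags it as ``straightforward.'' The only mildly delicate point is making sure the granularity bound $\nicefrac{2}{k}$ and the approximation accuracy $\nicefrac{1}{2k}$ are comfortably compatible (i.e.\ that half the minimal gap between distinct coefficient values strictly exceeds $\varepsilon'$, so rounding is unambiguous), and that $\varepsilon'=\nicefrac{1}{2k}$ is indeed achievable without the inverse-exponential accuracy restriction — but the latter is exactly what the final sentence of the proof of \Cref{corollary:quantum-approximation-fourier-spectrum} asserts for the noiseless functional case. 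A brief remark would then observe that, as already noted in the excerpt, this exact-learning sample complexity $\mathcal{O}(k^4\log(\nicefrac{k^2}{\delta}))$ is worse than the $\mathcal{O}(k^{3/2}\log^2 k)$ of \cite{arunachalam2021twonewresults}, but that the present approach has the advantage of extending directly to the noisy setting by substituting the appropriate noisy analogue of \Cref{corollary:quantum-approximation-fourier-spectrum}.
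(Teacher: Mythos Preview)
Your proposal is correct but follows a different post-processing route than the paper. Both arguments invoke \Cref{corollary:quantum-approximation-fourier-spectrum} with accuracy $\varepsilon'=\nicefrac{1}{2k}$ (so the complexity bounds coincide), but they diverge in how $\tilde g$ is turned into an exact description of $g$. You exploit the granularity of Fourier-sparse Boolean functions \cite{gopalan2011testing} to round each estimated coefficient to the nearest admissible value and then threshold; this recovers $\hat g$ exactly. The paper instead uses a specific feature of the noiseless Fourier-sampling procedure---namely that $\hat g(s)=0$ implies $\tilde g(s)=0$ (no false positives in the output list)---to conclude that $\hat g-\tilde g$ is $k$-sparse, whence $\norm{g-\tilde h}_\infty\leq\norm{\hat g-\tilde g}_1\leq\sqrt{k}\,\norm{\hat g-\tilde g}_2\leq k\tilde\varepsilon=\nicefrac{1}{2}$, so $g=\operatorname{sgn}(\tilde h)$.

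Each approach has its own appeal. The paper's argument is self-contained (no external structural theorem) but leans on the ``no false positives'' property, which is special to the noiseless functional case; this is precisely why the paper notes in \Cref{subsection:noisy-functional-agnostic} that the exact-learning corollary extends to noisy examples only after substituting the appropriate noisy analogue of \Cref{corollary:quantum-approximation-fourier-spectrum}. Your granularity route does not need the no-false-positives property---thresholding at $\nicefrac{1}{k}$ automatically kills spurious coordinates---so it would transfer more directly to approximate Fourier sampling settings where false positives can occur, at the modest cost of importing the granularity theorem.
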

\begin{proof}
    Let $\tilde{\varepsilon} = \nicefrac{\varepsilon}{2 k}$.
    First, via \Cref{corollary:quantum-approximation-fourier-spectrum}, we can use $\mathcal{O}\left(\tfrac{\log(\nicefrac{1}{\delta\tilde{\varepsilon}^2})}{\tilde{\varepsilon}^4}\right)$ copies of $\ket{\psi_{\mathcal{D}}}$, $\mathcal{O}\left(n\tfrac{\log(\nicefrac{1}{\delta\tilde{\varepsilon}^2})}{\tilde{\varepsilon}^4}\right)$ single-qubit gates, classical computation time $\tilde{\mathcal{O}}\left(n\tfrac{\log(\nicefrac{1}{\delta \tilde{\varepsilon}^2})}{\tilde{\varepsilon}^4}\right)$, and classical memory of size $\tilde{\mathcal{O}}\left(n\tfrac{\log(\nicefrac{1}{\delta\tilde{\varepsilon}^2})}{\tilde{\varepsilon}^4}\right)$ to, with success probability $\geq 1-\delta$, output a succinctly represented vector $\Tilde{g}$ such that $\norm{\Tilde{g}-\hat{g}}_\infty\leq\tilde{\varepsilon}$ and $\norm{\Tilde{g}}_0\leq\tfrac{4}{\tilde{\varepsilon}^2}$. For the rest of the proof, we condition on that success event.
    Define $\tilde{h}:\{0,1\}^n\to\{0,1\}$, $\tilde{h}(x)=\sum_{s:\tilde{g}(s)\neq 0} \Tilde{g}(s) \chi_s(x)$.
    Now, observe that, by Parseval,
    \begin{align}
        \norm{g-\tilde{h}}_2^2
        &= \sum_{s:\hat{g}(s)\neq 0} (\hat{g}(s) - \tilde{g}(s))^2 + \sum_{s:\hat{g}(s)= 0} (\hat{g}(s) - \tilde{g}(s))^2\\
        &\leq k\cdot \tilde{\varepsilon}^2 + \sum_{s:\hat{g}(s)= 0} \tilde{g}(s)^2\\
        &= k\cdot \tilde{\varepsilon}^2, 
    \end{align}
    where we used that the procedure from \Cref{corollary:quantum-approximation-fourier-spectrum} leads to an output $\tilde{g}$ such that $\hat{g}(s)=0~\Rightarrow ~\tilde{g}(s)=0$. 
    (This can, e.g., be seen in the proof of \Cref{corollary:distributional-agnostic-quantum-approximation-fourier-spectrum}: If $\hat{g}(s)=0$, then that means $\hat{\phi}(s)=0$ in the notation of that proof, so that $s\not\in L$ and therefore $\tilde{g}(s)=\tilde{\phi}(s)=0$.) 
    Moreover, since $g$ is by assumption Fourier-$k$-sparse and since $\hat{g}(s)=0~\Rightarrow ~\tilde{g}(s)=0$, we get
    \begin{align}
        \norm{g-\tilde{h}}_\infty
        &= \max_{x\in\{0,1\}^n} \left\lvert \sum_{s:\hat{g}(s)\neq 0}  (\hat{g}(s) - \tilde{g}(s)) (-1)^{s\cdot x} \right\rvert\\
        &\leq \sum_{s:\hat{g}(s)\neq 0} \left\lvert \hat{g}(s) - \tilde{g}(s)\right\rvert\\
        &\leq \norm{\hat{g}-\tilde{g}}_1\\
        &\leq \sqrt{k}\cdot \norm{\hat{g}-\tilde{g}}_2\\
        &= \sqrt{k}\cdot \norm{g-\tilde{h}}_2, 
    \end{align}
    where the second-to-last step uses Cauchy-Schwarz.
    Combining this with the previous inequality, we see that 
    \begin{equation}
        \norm{g-\tilde{h}}_\infty
        \leq k\cdot\tilde{\varepsilon}
        = \frac{1}{2}. 
    \end{equation}
    Since $g$ is $\{-1,1\}$-valued, this implies $g = \operatorname{sgn}(\tilde{h})$.
    Accordingly, we get that $f=\tfrac{1}{2}(1 - \operatorname{sgn}(\tilde{h}))$, and the vector $\tilde{g}$ can serve as a succinct representation of $f$.
\end{proof}

\begin{proof}[Proof of \Cref{lemma:noisy-functional-quantum-Fourier-sampling-v1}]
    We first compute:
    \small
    \begin{align}
        H^{\otimes (n+1)}\rho_{(\mathcal{U}_n, f),\eta}H^{\otimes (n+1)}
        &= (1-\eta)H^{\otimes (n+1)} \ket{\psi_{(\mathcal{U}_n,f)}}\bra{\psi_{(\mathcal{U}_n,f)}} H^{\otimes (n+1)} + \eta H^{\otimes (n+1)} \ket{\psi_{(\mathcal{U}_n,f\oplus 1)}}\bra{\psi_{(\mathcal{U}_n,f\oplus 1)}} H^{\otimes (n+1)}\\
        &= \frac{1-\eta}{2} \left(\ket{0}^{\otimes (n+1)} + \sum_{s\in\{0,1\}^n} \hat{g}(s)\ket{s,1} \right) \left(\bra{0}^{\otimes (n+1)} + \sum_{s\in\{0,1\}^n} \hat{g}(s)\bra{s,1} \right)\\
        &\hphantom{=}~~ + \frac{\eta}{2}\left(\ket{0}^{\otimes (n+1)} - \sum_{s\in\{0,1\}^n} \hat{g}(s)\ket{s,1} \right) \left(\bra{0}^{\otimes (n+1)} - \sum_{s\in\{0,1\}^n} \hat{g}(s)\bra{s,1} \right)\\
        &= \frac{1}{2}(\ket{0}\bra{0})^{\otimes (n+1)} + \frac{1}{2}\left(\sum_{s\in\{0,1\}^n} \hat{g}(s)\ket{s,1}\right)\left(\sum_{s\in\{0,1\}^n} \hat{g}(s)\bra{s,1}\right) \\
        &\hphantom{=}~~ + \left(\tfrac{1}{2}-\eta\right)\left(\ket{0}^{\otimes (n+1)}\left(\sum_{s\in\{0,1\}^n} \hat{g}(s)\bra{s,1}\right) + \left(\sum_{s\in\{0,1\}^n} \hat{g}(s)\ket{s,1}\right)\bra{0}^{\otimes (n+1)}\right)\, .
    \end{align}
    \normalsize
    Noting that the third summand is off-diagonal w.r.t.~the computational basis, we can read off the claimed probabilities.
\end{proof}

\begin{proof}[Proof of \Cref{lemma:noisy-functional-quantum-Fourier-sampling-v2}]
    Applying $H^{\otimes (n+1)}$ to a copy of$\ket{\psi_{\mathcal{D}_\eta}}$ leads to the quantum state
    \begin{align}
        H^{\otimes (n+1)}\ket{\psi_{\mathcal{D}}}
        &= \frac{1}{\sqrt{2^n}}\sum_{x\in\{0,1\}^n} \left(\sqrt{1-\eta}H^{\otimes (n+1)}\ket{x,f(x)}+\sqrt{\eta}H^{\otimes (n+1)}\ket{x,f(x)\oplus 1}\right)\\
        &= \frac{1}{\sqrt{2^n}}\sum_{x\in\{0,1\}^n} \left(\sqrt{1-\eta} \left(\frac{1}{\sqrt{2^{n}}}\sum_{s\in\{0,1\}^n} (-1)^{s\cdot x} \left(\ket{s}\otimes \frac{\ket{0}+(-1)^{f(x)}\ket{1}}{\sqrt{2}}\right)\right) \right.\\
        &\hphantom{\frac{1}{\sqrt{2^n}}\sum_{x\in\{0,1\}^n} \left(\right.~~~~~}\left.+\sqrt{\eta}\left(\frac{1}{\sqrt{2^{n}}}\sum_{s\in\{0,1\}^n} (-1)^{s\cdot x} \left(\ket{s}\otimes\frac{\ket{0}-(-1)^{f(x)}\ket{1}}{\sqrt{2}}\right)\right)\right)\\
        &= \frac{\sqrt{1-\eta}+\sqrt{\eta}}{\sqrt{2}}\ket{0}^{\otimes (n+1)} + \frac{\sqrt{1-\eta}-\sqrt{\eta}}{\sqrt{2}} \sum_{s\in\{0,1\}^n}\mathbb{E}_{x\sim\mathcal{U}_n}[\chi_s(x) (-1)^{f(x)}]\ket{s,1}\\
        &= \frac{\sqrt{1-\eta}+\sqrt{\eta}}{\sqrt{2}}\ket{0}^{\otimes (n+1)} + \frac{\sqrt{1-\eta}-\sqrt{\eta}}{\sqrt{2}} \sum_{s\in\{0,1\}^n}\hat{g}(s)\ket{s,1}\, .
    \end{align}
    From this, we can now easily deduce the two claims about the outcomes of the procedure, simply observing that 
    \begin{equation}
        \left( \frac{\sqrt{1-\eta}\pm\sqrt{\eta}}{\sqrt{2}}\right)^2
        = \frac{1}{2}\pm\sqrt{(1-\eta)\eta}\, .
    \end{equation}
\end{proof}

\begin{proof}[Proof of \Cref{corollary:quantum-noise-rate-learning-v2}]
    The case $\eta_b=0$ is trivial. So, we now assume $\eta_b>0$ for the remainder of the proof.
    Using \Cref{lemma:noisy-functional-quantum-Fourier-sampling-v2} (i), a single copy of $\ket{\psi_{\mathcal{D}_\eta}}$ allows to sample from a Bernoulli distribution with mean $\tfrac{1}{2}-\sqrt{(1-\eta)\eta}$. For ease of notation, write $\xi(\eta) = \sqrt{(1-\eta)\eta}$.
    So, using Hoeffding's inequality for empirical mean estimation,
    $\mathcal{O}\left(\tfrac{\log (1/\delta )}{\tilde{\varepsilon}^2}\right)$ copies of $\ket{\psi_{\mathcal{D}_\eta}}$ suffice to obtain an estimate $\hat{\xi}$ of $\xi(\eta)$ such that, with success probability $\geq 1-\delta$, we have $\lvert \hat{\xi}-\xi\rvert\leq\tilde{\varepsilon}$. Noting that enforcing the cutoff condition $\hat{\xi}\in [0, \sqrt{(1-\eta_b)\eta_b}]$ can only improve the estimate, this proves the first part of the statement.
    
    For the second part, define $\hat{\eta} \coloneqq \tfrac{1}{2}\left(1 - \sqrt{1-4\hat{\xi}^2}\right)$.
    Then, we have
    \begin{align}
        \lvert \hat{\eta} - \eta\rvert
        &= \frac{1}{2} \left\lvert \sqrt{1-4\hat{\xi}^2} - \sqrt{1-4\xi^2}\right\rvert\\
        &\leq \frac{L_1}{2} \left\lvert 1-4\hat{\xi}^2 - (1-4\xi^2)\right\rvert\\
        &\leq 2L_1 L_2 \left\lvert\hat{\xi}-\xi\right\rvert\\
        &\leq 2L_1 L_2\cdot\tilde{\varepsilon}\\
        &= \frac{2\sqrt{(1-\eta_b)\eta_b}}{\sqrt{1 - 4(1-\eta_b)\eta_b}}\cdot\tilde{\varepsilon}\, ,
    \end{align}
    where we used that the function $[1 - 4(1-\eta_b)\eta_b, 1]\ni x\mapsto\sqrt{x}$ is Lipschitz-continuous with Lipschitz constant $L_1=\tfrac{1}{2\sqrt{1 - 4(1-\eta_b)\eta_b}}$ and that the function $[0, \sqrt{(1-\eta_b)\eta_b}]\ni x\mapsto x^2$ is Lipschitz-continuous with Lipschitz constant $L_2=2\sqrt{(1-\eta_b)\eta_b}$.
    Thus, if we choose $\tilde{\varepsilon} = \frac{\sqrt{1 -4(1-\eta_b)\eta_b}}{2\sqrt{(1-\eta_b)\eta_b}}\cdot\varepsilon$, then a $\tilde{\varepsilon}$-accurate estimate $\hat{\xi}$ for $\xi(\eta)$ leads to an $\varepsilon$-accurate estimate of $\eta$. 
    Again, enforcing the cutoff $\hat{\eta}\in [0,\eta_b]$ can only improve the estimate. To finish the proof, note that this choice of $\tilde{\varepsilon}$ leads to
    \begin{equation}
        \frac{1}{\tilde{\varepsilon}^2}
        = \frac{1}{\varepsilon^2}\cdot \frac{4(1-\eta_b)\eta_b}{1 - 4(1-\eta_b)\eta_b}
        =\frac{1}{\varepsilon^2}\cdot \frac{(1-\eta_b)\eta_b}{\left(\eta_b-\tfrac{1}{2}\right)^2}
        \leq \frac{1}{\varepsilon^2}\cdot \frac{\eta_b}{\left(\eta_b-\tfrac{1}{2}\right)^2} , 
    \end{equation}
    which can then be plugged into the previously derived complexity bounds.
\end{proof}

\begin{lemma}\label{lemma:helpful}
    For any $a,b\in\mathbb{R}_{\geq 0}^n$, we have $\norm{b^\downarrow - a^\downarrow}_\infty\leq \norm{a-b}_\infty$.
\end{lemma}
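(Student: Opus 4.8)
\textbf{Proof proposal for \Cref{lemma:helpful}.}

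The plan is to prove the statement that rearranging two nonnegative vectors in decreasing order can only decrease (or preserve) their $\infty$-distance, i.e. $\norm{b^\downarrow - a^\downarrow}_\infty \leq \norm{a - b}_\infty$, where $v^\downarrow$ denotes the vector $v$ with coordinates sorted in non-increasing order. First I would fix the value $\tau \coloneqq \norm{a-b}_\infty$, so that $|a_i - b_i| \leq \tau$ for every index $i$, equivalently $b_i - \tau \leq a_i \leq b_i + \tau$ for all $i$. The goal is then to show $|a^\downarrow_k - b^\downarrow_k| \leq \tau$ for each position $k$. By symmetry in $a$ and $b$ it suffices to prove the one-sided bound $a^\downarrow_k \leq b^\downarrow_k + \tau$ for all $k$; the reverse inequality follows by swapping the roles of $a$ and $b$.

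The key combinatorial step is a counting argument about how many coordinates exceed a given threshold. For a real number $t$ and a vector $v$, let $N_v(t) \coloneqq |\{i : v_i > t\}|$; note that the sorted vector satisfies $v^\downarrow_k > t$ if and only if $N_v(t) \geq k$. Now suppose for contradiction that $a^\downarrow_k > b^\downarrow_k + \tau$ for some $k$; set $t \coloneqq b^\downarrow_k + \tau$. Then $a^\downarrow_k > t$, so $N_a(t) \geq k$, meaning there are at least $k$ indices $i$ with $a_i > t = b^\downarrow_k + \tau$. For each such index, $b_i \geq a_i - \tau > b^\downarrow_k$, so $N_b(b^\downarrow_k) \geq k$. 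But $N_b(b^\downarrow_k) = |\{i : b_i > b^\downarrow_k\}| \leq k - 1$, since at most the top $k-1$ sorted entries of $b$ can strictly exceed the $k$-th largest value $b^\downarrow_k$ (ties at $b^\downarrow_k$ are not counted). This contradiction establishes $a^\downarrow_k \leq b^\downarrow_k + \tau$ for every $k$; swapping $a \leftrightarrow b$ gives the matching lower bound, and together $|a^\downarrow_k - b^\downarrow_k| \leq \tau$ for all $k$, hence $\norm{b^\downarrow - a^\downarrow}_\infty \leq \tau = \norm{a-b}_\infty$.

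I do not expect any serious obstacle here — the statement is a clean instance of the general fact that sorting is a contraction in the $\ell_\infty$ metric. The only point requiring a little care is bookkeeping around ties: one must use \emph{strict} inequalities in the definition of $N_v(t)$ so that the bound $N_b(b^\downarrow_k) \leq k-1$ holds even when several coordinates equal $b^\downarrow_k$. (Note that nonnegativity of $a$ and $b$ is not actually used in the argument; it is presumably just the context in which the lemma is applied, e.g. to sorted Fourier-coefficient magnitudes, so I would keep the hypothesis as stated but the proof works verbatim for arbitrary real vectors.) An alternative, slicker route would be to invoke the variational characterization $v^\downarrow_k = \max_{|S| = k} \min_{i \in S} v_i$ and argue directly: given the optimal $S$ of size $k$ for $a$, $\min_{i\in S} b_i \geq \min_{i \in S} a_i - \tau = a^\downarrow_k - \tau$, so $b^\downarrow_k \geq a^\downarrow_k - \tau$; this is essentially the same proof repackaged, and either version is short enough to write out in full with no computation.
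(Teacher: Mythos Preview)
Your proof is correct and takes a genuinely different route from the paper's. The paper proceeds by induction on $n$: assuming without loss of generality that $b=b^\downarrow$, it lets $\pi$ be a permutation with $a^\downarrow=a_\pi$; if $\pi$ has a fixed point $j$, one peels off coordinate $j$ and applies the induction hypothesis to the remaining $n-1$ coordinates; if $\pi$ has no fixed point, one swaps two entries of $a^\downarrow_{\pi^{-1}}$ (using the $n=2$ case to show this swap does not decrease the $\infty$-distance) to create a fixed point and then reduce to the previous case. Your threshold-counting argument, and the equivalent variational formulation $v^\downarrow_k=\max_{|S|=k}\min_{i\in S} v_i$, avoids induction entirely and is shorter and more transparent; the paper's approach, by contrast, is more hands-on with permutations and requires a separate verification of the $n=2$ base case. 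Both arguments are elementary, and you are right that nonnegativity plays no role in either.
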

\begin{proof}
    We do a proof by induction over $n$.
    W.l.o.g., we can assume that $b=b^\downarrow$. (Otherwise, let $\tau\in S_n$ be a permutation such that $b^\downarrow = b_\tau$, and work with $\tilde{a}=a_\tau$ and $\tilde{b}=b_\tau = b^\downarrow$ instead of $a$ and $b$, and with $\pi\circ\tau$ instead of $\pi$.)
    The case $n=1$ is trivial and the case $n=2$ is easy to check.
    So, take as induction hypothesis that the statement holds true for some $n$.
    Now for the induction step from $n$ to $n+1$. Let $\pi\in S_{n+1}$ be a permutation such that $a^\downarrow = a_\pi$. Equivalently, $a = a^\downarrow_{\pi^{-1}}$. 
    If $\pi$ (and thus $\pi^{-1}$) has a fixed point $1\leq j\leq n+1$ with $\pi(j)=j$, then
    \begin{align}
        \norm{b^\downarrow - a^\downarrow}_\infty
        &= \max\{\lvert b^\downarrow_j-a^\downarrow_j\rvert, \norm{b^\downarrow_{\{j\}^c} - a^\downarrow_{\{j\}^c}}_\infty\}\\
        &= \max\{\lvert b_j-a_j\rvert, \norm{b^\downarrow_{\{j\}^c} - a^\downarrow_{\{j\}^c}}_\infty\}\\
        &\leq \max\{\lvert b_j-a_j\rvert, \norm{b_{\{j\}^c} - a_{\{j\}^c}}_\infty\}\\
        &= \norm{b-a}_\infty ,
    \end{align}
    where the inequality holds by the induction hypothesis. 
    If $\pi$ and thus $\pi^{-1}$ does not have a fixed point, write $\pi^{-1}(1)=m$ and $\pi^{-1}(k)=1$. 
    Consider a permutation $\tilde{\pi}^{-1}\in S_{n+1}$ defined as
    \begin{equation}
        \tilde{\pi}^{-1}(i)
        =\begin{cases} \pi^{-1}(i)\quad &\textrm{ if } i\neq k, 1 \\
        m &\textrm{ if } i=k\\ 
        1 &\textrm{ if } i=1 \end{cases} .
    \end{equation}
    As $b_1\geq b_k$ and $a^\downarrow_1\geq a^\downarrow_m$, the case $n=2$ tells us that
    \begin{equation}
        \max\{\lvert b_1 - a^\downarrow_1 \rvert, \lvert b_k - a^\downarrow_m\rvert\}
        \leq \max\{\lvert b_1 - a^\downarrow_m \rvert, \lvert b_k - a^\downarrow_1\rvert\}.
    \end{equation}
    Thus, 
    \begin{align}
        \norm{b-a}_\infty
        &= \norm{b-a^\downarrow_{\pi^{-1}}}_\infty\\
        &= \max\{\lvert b_1 - a^\downarrow_{\pi^{-1}(1)}\rvert, \lvert b_k - a^\downarrow_{\pi^{-1}(k)}\rvert, \norm{b_{\{1,k\}^c} - (a^\downarrow_{\pi^{-1}})_{\{\pi^{-1}(1),\pi^{-1}(k)\}^c}}_\infty\}\\
        &= \max\{\lvert b_1 - a^\downarrow_{m}\rvert, \lvert b_k - a^\downarrow_{1}\rvert, \norm{b_{\{1,k\}^c} - (a^\downarrow_{\pi^{-1}})_{\{\pi^{-1}(1),\pi^{-1}(k)\}^c}}_\infty\}\\
        &\geq \max\{\lvert b_1 - a^\downarrow_1 \rvert, \lvert b_k - a^\downarrow_m\rvert, \norm{b_{\{1,k\}^c} - (a^\downarrow_{\pi^{-1}})_{\{\pi^{-1}(1),\pi^{-1}(k)\}^c}}_\infty\}\\
        &= \max\{\lvert b_1 - a^\downarrow_{\tilde{\pi}^{-1}(1)} \rvert, \lvert b_k - a^\downarrow_{\tilde{\pi}^{-1}(k)}\rvert, \norm{b_{\{1,k\}^c} - (a^\downarrow_{\tilde{\pi}^{-1}})_{\{\tilde{\pi}^{-1}(1),\tilde{\pi}^{-1}(k)\}^c}}_\infty\}\\
        &= \norm{b-a^\downarrow_{\tilde{\pi}^{-1}}}_\infty .
    \end{align}
    As $a^\downarrow_{\tilde{\pi}^{-1}}$ differs from $a^{\downarrow}$ only by the permutation $\tilde{\pi}$ and as $\tilde{\pi}$ has a fixed point by construction, we get 
    \begin{equation}
        \norm{b-a^\downarrow_{\tilde{\pi}^{-1}}}_\infty
        \geq \norm{b^\downarrow - a^\downarrow}_\infty
    \end{equation}
    from the previous case. Thus, also in the case of no fixed point we have $\norm{b^\downarrow - a^\downarrow}_\infty\leq \norm{b-a}_\infty$, which finishes the induction step.
\end{proof}

\begin{lemma}\label{lemma:technical}
    For any $a,b\in\mathbb{R}_{\geq 0}^n$, if $\pi\in S_n$ is a permutation such that $a^\downarrow = a_\pi$, then the corresponding permutation $b_\pi$ of $b$ satisfies
    \begin{equation}
        \norm{b^\downarrow - b_\pi}_\infty
        \leq 2 \norm{a-b}_\infty .
    \end{equation}
\end{lemma}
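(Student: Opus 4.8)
The plan is to deduce \Cref{lemma:technical} from \Cref{lemma:helpful} together with a triangle inequality. The key observation is that $b_\pi$ is a rearrangement of $b$ that agrees with the \emph{sorting order of $a$}, which need not coincide with the sorting order of $b$, but cannot be too far from it once we control $\norm{a-b}_\infty$. Concretely, I would first note that $b^\downarrow$ is itself a rearrangement of $b_\pi$, so $\norm{b^\downarrow - b_\pi}_\infty = \norm{(b_\pi)^\downarrow - b_\pi}_\infty$, and the task is to bound how far the vector $b_\pi$ is from being sorted.

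The core step is the following chain. Since $a^\downarrow = a_\pi$, we have $\norm{a_\pi - b_\pi}_\infty = \norm{a-b}_\infty$ (applying the permutation $\pi$ coordinatewise does not change the $\infty$-norm of a difference). Now apply \Cref{lemma:helpful} with the pair $(a_\pi, b_\pi)$: it yields $\norm{(b_\pi)^\downarrow - (a_\pi)^\downarrow}_\infty \leq \norm{a_\pi - b_\pi}_\infty = \norm{a-b}_\infty$. But $(a_\pi)^\downarrow = a^\downarrow = a_\pi$ since $a_\pi$ is already sorted in decreasing order. Hence $\norm{(b_\pi)^\downarrow - a_\pi}_\infty \leq \norm{a-b}_\infty$. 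Combining this with $\norm{a_\pi - b_\pi}_\infty = \norm{a-b}_\infty$ via the triangle inequality gives
\begin{equation}
    \norm{b^\downarrow - b_\pi}_\infty
    = \norm{(b_\pi)^\downarrow - b_\pi}_\infty
    \leq \norm{(b_\pi)^\downarrow - a_\pi}_\infty + \norm{a_\pi - b_\pi}_\infty
    \leq 2\norm{a-b}_\infty,
\end{equation}
which is exactly the claimed bound.

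I do not anticipate a serious obstacle here; the only point that requires a moment of care is the identity $(a_\pi)^\downarrow = a_\pi$, which holds precisely because $\pi$ was chosen so that $a_\pi = a^\downarrow$ is already in non-increasing order, and the fact that coordinatewise permutation is an isometry for $\norm{\cdot}_\infty$. The genuinely nontrivial content is entirely in \Cref{lemma:helpful} (the statement that sorting is $1$-Lipschitz in $\infty$-norm), which is assumed; \Cref{lemma:technical} is then a short corollary. If one wanted to avoid invoking \Cref{lemma:helpful} as a black box, an alternative route would be a direct argument: for the coordinate $i$ achieving the maximum of $\lvert b^\downarrow_i - (b_\pi)_i\rvert$, use that $a^\downarrow$ is sorted to sandwich $(b_\pi)_i$ between values within $\norm{a-b}_\infty$ of neighboring entries of $b^\downarrow$, but this essentially re-proves \Cref{lemma:helpful} and is not worth doing separately.
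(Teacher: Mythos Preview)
Your proposal is correct and matches the paper's proof essentially line for line: both insert $a_\pi = a^\downarrow$ as the intermediate point in a triangle inequality, use permutation-invariance of $\norm{\cdot}_\infty$ for one term, and invoke \Cref{lemma:helpful} for the other. Your additional remark that $b^\downarrow = (b_\pi)^\downarrow$ is true but not needed, since the paper simply writes $\norm{b^\downarrow - b_\pi}_\infty \leq \norm{b^\downarrow - a^\downarrow}_\infty + \norm{a_\pi - b_\pi}_\infty$ directly.
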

\begin{proof}
    By triangle inequality, we have
    \begin{equation}
        \norm{b^\downarrow - b_\pi}_\infty
        \leq \norm{b^\downarrow - a^\downarrow}_\infty + \norm{a_\pi - b_\pi}_\infty
        = \norm{b^\downarrow - a^\downarrow}_\infty +\norm{a-b}_\infty
        \leq 2 \norm{a-b}_\infty ,
    \end{equation}
    where the last step is by \Cref{lemma:helpful}.
\end{proof}

\begin{lemma}\label{lemma:fully-uniform-vs-random-noisy-parity}
    Let $\eta\in [0,\nicefrac{1}{2})$.
    Any (classical or quantum) algorithm that, given $m$ classical examples from an unknown distribution $\mathcal{D}\in \{\mathcal{U}_{n+1}\}\cup \{(\mathcal{U}_n, (1-2\eta) \chi_s)\}_{s\in\{0,1\}^n}$, decides whether $\mathcal{D} = \mathcal{U}_{n+1}$ or $\mathcal{D}\in \{(\mathcal{U}_n, (1-2\eta) \chi_s)\}_{s\in\{0,1\}^n}$ with success probability $\geq \nicefrac{7}{12}$ has to use at least $m\geq \Omega (n)$ random examples.
\end{lemma}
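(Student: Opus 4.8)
The plan is to prove this as a standard information-theoretic (Le Cam / $\chi^2$) indistinguishability lower bound. Since the data handed to the algorithm are classical samples, this argument covers quantum algorithms for free: any (randomized or quantum) procedure acting on $m$ classical i.i.d.\ samples can distinguish two sources with success probability at most $\tfrac12 + \tfrac12 d_{\mathrm{TV}}$ of the induced $m$-sample distributions.

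First I would pass to the average-case version of the problem. Let $P_0$ denote the law of a single draw $(x,y)\sim\mathcal{U}_{n+1}$, and for $s\in\{0,1\}^n$ let $P_s$ denote the law of a single draw from $(\mathcal{U}_n,(1-2\eta)\chi_s)$, i.e.\ $x\sim\mathcal{U}_n$ and $y=s\cdot x\oplus e$ with $e\sim\mathrm{Bernoulli}(\eta)$. If an algorithm decides the stated problem with success probability $\geq \nicefrac{7}{12}$ for every admissible $\mathcal{D}$, then in particular it succeeds with probability $\geq\nicefrac{7}{12}$ on the Bayesian instance ``with probability $\nicefrac12$ the source is $P_0^{\otimes m}$, and with probability $\nicefrac12$ one draws $s\sim\mathcal{U}_n$ and the source is $P_s^{\otimes m}$''. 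By the two-point (Le Cam) bound, optimal success on this instance is $\tfrac12+\tfrac12 d_{\mathrm{TV}}\!\big(P_0^{\otimes m},\ \mathbb{E}_{s}[P_s^{\otimes m}]\big)$, so we must have $d_{\mathrm{TV}}\!\big(P_0^{\otimes m},\ \mathbb{E}_{s}[P_s^{\otimes m}]\big)\geq \nicefrac16$.

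Next I would bound this total variation distance through the $\chi^2$-divergence, $d_{\mathrm{TV}}(\bar P,P_0^{\otimes m})\leq\tfrac12\sqrt{\chi^2(\bar P\,\|\,P_0^{\otimes m})}$ with $\bar P=\mathbb{E}_s[P_s^{\otimes m}]$, and compute the latter via the usual inner-product trick. One checks $\tfrac{P_s}{P_0}(x,y)=1+(1-2\eta)(-1)^y\chi_s(x)$, so under $P_0$ the cross term vanishes whenever $s\neq s'$ and $\langle P_s/P_0,\,P_{s'}/P_0\rangle_{P_0}=1+(1-2\eta)^2\,\mathds{1}[s=s']$. Tensorizing over the $m$ i.i.d.\ samples and averaging over independent uniform $s,s'$ (so that $\mathbb{P}[s=s']=2^{-n}$) yields
\[
    \chi^2\!\big(\bar P\,\|\,P_0^{\otimes m}\big)=2^{-n}\Big((1+(1-2\eta)^2)^m-1\Big).
\]
Combining with $d_{\mathrm{TV}}\geq\nicefrac16$ gives $(1+(1-2\eta)^2)^m-1\geq 2^n/9$, hence $m\geq \dfrac{n\log 2-\log 9}{\log(1+(1-2\eta)^2)}$. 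Since $\eta$ is a fixed constant in $[0,\nicefrac12)$, the denominator $\log(1+(1-2\eta)^2)$ is a positive constant (equal to $\log 2$ at $\eta=0$, decreasing to $0$ only as $\eta\to\nicefrac12$), so $m=\Omega(n)$ as claimed.

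There is no genuine obstacle here; the only points requiring care are (i) the worst-case-to-average-case reduction, where one must keep track of constants so that the residual total-variation gap after the Le Cam step is a fixed positive number, and (ii) the edge case $\eta=0$, which is still covered because $(1-2\eta)^2=1$ makes the denominator $\log 2>0$. One could alternatively route this through the statistical-query / statistical-dimension lower bound for parities, but the direct $\chi^2$ computation above is the most self-contained, and it transparently applies to quantum learners precisely because the samples are classical.
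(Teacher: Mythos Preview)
Your proof is correct and takes a genuinely different route from the paper's. Both arguments start identically, reducing to a bound on $d_{\mathrm{TV}}\big(P_0^{\otimes m},\,\mathbb{E}_s[P_s^{\otimes m}]\big)$ via the Le Cam two-point method. From there, the paper first applies a data-processing inequality (the label-noise channel $\mathcal{N}_\eta$ fixes $\mathcal{U}_{n+1}$ and maps noiseless parities to $\eta$-noisy ones) to reduce to $\eta=0$, and then computes the TV distance combinatorially: whenever the inputs $x_1,\ldots,x_m$ are linearly independent over $\mathbb{F}_2^n$, the labels $(s\cdot x_i)_i$ for uniform $s$ are exactly uniform, so the two distributions agree on that event and the TV distance is at most the probability of linear dependence, $(2^m-1)/2^n$. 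Your approach instead bounds TV through the $\chi^2$-divergence via the Ingster second-moment method, evaluating the pairwise likelihood-ratio inner products directly for arbitrary $\eta$. Your route is arguably more systematic and handles the noise in one shot without the separate DPI step; the paper's combinatorial argument gives a slightly sharper TV bound (linear rather than square-root in $2^{m-n}$) and makes the $\mathbb{F}_2$-linear structure of parities explicit. For the purpose of establishing $m=\Omega(n)$, both are equally adequate.
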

\begin{proof}
    An algorithm as in the statement of the Lemma can in particular distinguish between $\mathcal{U}_{n+1}$ and $(\mathcal{U}_n, (1-2\eta) \chi_s)$ for a uniformly random $s\in\{0,1\}^n$ with success probability $\geq \nicefrac{7}{12}$ using $m$ random examples.
    As the optimal success probability for this distinguishing task is determined by the TV distance via
    \begin{equation}
        p_{\mathrm{success}}
        = \frac{1}{2}\left(1 + d_{\mathrm{TV}} \left( \mathcal{U}_{n+1}^{\otimes m} , \mathbb{E}_{s\sim\mathcal{U}_n} \left[(\mathcal{U}_n, (1-2\eta)\chi_s)^{\otimes m}\right]\right) \right) ,
    \end{equation}
    it suffices to show that the respective TV distance is at most $o(1)$ for $m\leq o(n)$.
    
    We first argue that it suffices to prove such a TV distance upper bound for $\eta =0$. To this end, assume $\eta>0$ and let $\mathcal{N}_\eta$ be the classical noise channel that adds label noise of strength $\eta$. That is, for $x\in\mathcal{X}_n$ and $y\in\{0,1\}$, we have $\mathcal{N}_\eta (x,y)=(x,y)$ with probability $1-\eta$ and $\mathcal{N}_\eta (x,y)=(x,1\oplus y)$ with probability $\eta$. Then, 
    \begin{equation}
        \mathcal{N}_\eta (\mathcal{U}_{n+1}) 
        = \mathcal{U}_{n+1}
    \end{equation}
    as well as
    \begin{equation}
        \mathcal{N}_\eta ((\mathcal{U}_n, \chi_s))
        = (\mathcal{U}_n, (1-2\eta)\chi_s)
    \end{equation}
    for every $s\in\{0,1\}^n$. Thus, we can see that
    \begin{align}
        d_{\mathrm{TV}} \left( \mathcal{U}_{n+1}^{\otimes m} , \mathbb{E}_{s\sim\mathcal{U}_n} \left[(\mathcal{U}_n, (1-2\eta)\chi_s)^{\otimes m}\right]\right)
        &= d_{\mathrm{TV}} \left( \mathcal{N}_\eta^{\otimes m} \left(\mathcal{U}_{n+1}\right)^{\otimes m} , \mathcal{N}_\eta^{\otimes m}\left( \mathbb{E}_{s\sim\mathcal{U}_n} \left[(\mathcal{U}_n, (1-2\eta)\chi_s)^{\otimes m}\right]\right)\right)\\
        &\leq d_{\mathrm{TV}} \left( \mathcal{U}_{n+1}^{\otimes m} , \mathbb{E}_{s\sim\mathcal{U}_n} \left[(\mathcal{U}_n, \chi_s)^{\otimes m}\right]\right) ,
    \end{align}
    where the inequality holds because the TV distance is non-increasing under classical noise channels.
    
    So, we now focus on upper bounding the TV distance $d_{\mathrm{TV}} \left( \mathcal{U}_{n+1}^{\otimes m} , \mathbb{E}_{s\sim\mathcal{U}_n} \left[(\mathcal{U}_n, \chi_s)^{\otimes m}\right]\right)$ by a direct computation:
    \begingroup
    \allowdisplaybreaks
    \begin{align}
        &d_{\mathrm{TV}} \left( \mathcal{U}_{n+1}^{\otimes m} , \mathbb{E}_{s\sim\mathcal{U}_n} \left[(\mathcal{U}_n, \chi_s)^{\otimes m}\right]\right)\\
        &= \frac{1}{2}\sum_{(x_1,y_1),\ldots,(x_m,y_m)\in\{0,1\}^n\times\{0,1\}} \left\lvert \left(\frac{1}{2^{n+1}}\right)^m - \left(\frac{1}{2^{n}}\right)^m\mathbb{E}_{s\sim\mathcal{U}_n}\left[\prod_{i=1}^m \mathbb{P}[y_i = \chi_s(x_i)] \right] \right\rvert\\
        &= \frac{1}{2}\sum_{\substack{(x_1,y_1),\ldots,(x_m,y_m)\in\{0,1\}^n\times\{0,1\}\\ \textrm{s.t.~}\{x_1,\ldots, x_m\}\textrm{ linearly independent}}} \left\lvert \left(\frac{1}{2^{n+1}}\right)^m - \left(\frac{1}{2^{n}}\right)^m\mathbb{E}_{s\sim\mathcal{U}_n}\left[\prod_{i=1}^m \mathbb{P}[y_i = \chi_s(x_i)] \right] \right\rvert\\
        &\hphantom{=}~ +\frac{1}{2}\sum_{\substack{(x_1,y_1),\ldots,(x_m,y_m)\in\{0,1\}^n\times\{0,1\}\\ \textrm{s.t.~}\{x_1,\ldots, x_m\}\textrm{ linearly dependent}}} \left\lvert \left(\frac{1}{2^{n+1}}\right)^m - \left(\frac{1}{2^{n}}\right)^m\mathbb{E}_{s\sim\mathcal{U}_n}\left[\prod_{i=1}^m \mathbb{P}[y_i = \chi_s(x_i)] \right] \right\rvert\\
        &= \frac{1}{2}\sum_{\substack{(x_1,y_1),\ldots,(x_m,y_m)\in\{0,1\}^n\times\{0,1\}\\ \textrm{s.t.~}\{x_1,\ldots, x_m\}\textrm{ linearly independent}}} \left\lvert \left(\frac{1}{2^{n+1}}\right)^m - \left(\frac{1}{2^{n}}\right)^m \left(\prod_{i=1}^m \underbrace{\mathbb{E}_{s\sim\mathcal{U}_n}\left[\mathbb{P}[y_i = \chi_s(x_i)] \right]}_{=\nicefrac{1}{2}}\right) \right\rvert\\
        &\hphantom{=}~ + \frac{1}{2^{n m}} \sum_{\{x_1,\ldots, x_m\}\textrm{ linearly dependent}} \underbrace{\frac{1}{2} \sum\limits_{y_1,\ldots,y_m\in\{0,1\}} \left\lvert \frac{1}{2^m} - \mathbb{E}_{s\sim\mathcal{U}_n}\left[\prod_{i=1}^m \mathbb{P}[y_i = \chi_s(x_i)] \right] \right\rvert}_{ = d_{\mathrm{TV}}\left(\mathcal{U}_1^{\otimes m}, \mathbb{E}_{s\sim\mathcal{U}_n}\left[ \mathbb{P}^{(s)}_{Y_1,\ldots,Y_m|x_1,\ldots,x_m}\right] \right)\in [0,1]}\\
        &\leq \mathbb{P}_{x_1,\ldots,x_m\sim\mathcal{U}_n^{\otimes m}} \left[\{x_1,\ldots, x_m\}\textrm{ linearly dependent}\right]\\
        &= \mathbb{P}_{x_1,\ldots,x_m\sim\mathcal{U}_n^{\otimes m}}\left[(x_1 = 0^n)\lor \left(\bigvee_{ i=1}^{m-1} \left( x_i\in\mathrm{span}\{x_1,\ldots,x_i\}~\wedge ~ \{x_1,\ldots,x_i\}\textrm{ linearly independent}\right)\right) \right]\\
        &\leq \mathbb{P}_{x_1\sim\mathcal{U}_n}\left[x_1 = 0^n\right] + \sum_{i=1}^{m-1} \mathbb{P}_{x_1,\ldots,x_m\sim\mathcal{U}_n^{\otimes m}}\left[x_i\in\mathrm{span}\{x_1,\ldots,x_i\}~|~\{x_1,\ldots,x_i\}\textrm{ linearly independent}\right]\cdot\\
        &\hphantom{\leq \mathbb{P}_{x_1\sim\mathcal{U}_n}\left[x_1 = 0^n\right] + \sum_{i=1}^{m-1} \mathbb{P}}\cdot\mathbb{P}_{x_1,\ldots,x_m\sim\mathcal{U}_n^{\otimes m}}\left[\{x_1,\ldots,x_i\}\textrm{ linearly independent}\right]\\
        &\leq \frac{1}{2^m} + \sum_{i=1}^{m-1} 2^{i-n}\cdot 1\\
        &= \sum_{i=0}^{m-1} 2^{i-n}\\
        &= \frac{2^m -1}{2^n} .
    \end{align}
    \endgroup
    Here, the first step consists in plugging in definitions. In the second step, we split the summation in two. The third step holds because, if $\{x_1,\ldots,x_m\}$ are linearly independent, then the events $\{y_i=\chi_s (x_i)\}$, $i=1,\ldots,m$, and therefore the random variables $\mathbb{P}[y_i = \chi_s(x_i)]$, $i=1,\ldots,m$, are independent, so that the expectation of their product factorizes. The remaining steps are either direct rewritings or a simple union bound.
    Now, if $m\leq o(n)$, then we obtain from the above calculation that
    \begin{equation}
        d_{\mathrm{TV}} \left( \mathcal{U}_{n+1}^{\otimes m} , \mathbb{E}_{s\sim\mathcal{U}_n} \left[(\mathcal{U}_n, \chi_s)^{\otimes m}\right]\right)
        \leq \frac{2^m -1}{2^n}
        \leq o(1) . 
    \end{equation}
    This was all that remained to be shown, the proof is complete.
\end{proof}

\begin{lemma}\label{lemma:fourier-modification-reduction}
    Let $0 < a\leq b \leq 1$ with $a<1$.
    Let $\alpha = a$ and $\beta=\sqrt{b^2-a^2}$.
    Let $s\in\{0,1\}^n$.
    Define
    \begin{equation}
        \mathfrak{D}_{\alpha, \beta}^{s}
        =\left\{(\mathcal{U}_n, \varphi')~|~\exists t\in\{0,1\}^n\setminus\{s\}: \phi'=1-2\varphi' = \alpha\chi_s + \beta\chi_t\right\}.
    \end{equation}
    Let $\eta\in [0,\tfrac{1-\nicefrac{\beta}{(1-\alpha)}}{2}]$.
    The problem of distinguishing between a uniformly random element of $\mathfrak{D}_{\alpha, \beta}^{s}$ and $(\mathcal{U}_n, \alpha\chi_s)$ from samples is at least as hard as distinguishing between a uniformly random $\eta$-noisy parity (acting on uniformly random inputs) and the uniform distribution $\mathcal{U}_{n+1}$ from samples.
\end{lemma}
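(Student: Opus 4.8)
The plan is to exhibit an explicit sample-level reduction: given a sample-generating oracle for an unknown $\mathcal{D}^\ast$ that is either $\mathcal{U}_{n+1}$ or a uniformly random $\eta$-noisy parity on uniformly random inputs, I will construct from it a sample-generating oracle for a distribution $\mathcal{D}'$ that is either $(\mathcal{U}_n,\alpha\chi_s)$ or a uniformly random element of $\mathfrak{D}_{\alpha,\beta}^s$, in such a way that ``$\mathcal{D}^\ast=\mathcal{U}_{n+1}$'' maps to ``$\mathcal{D}'=(\mathcal{U}_n,\alpha\chi_s)$'' and ``$\mathcal{D}^\ast$ is a noisy parity'' maps to ``$\mathcal{D}'\in\mathfrak{D}_{\alpha,\beta}^s$''. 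Any distinguisher for the latter problem then solves the former with the same sample budget and success probability, which gives the claimed hardness transfer.

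The construction I would use is a per-sample postprocessing map. Write an incoming sample from $\mathcal{D}^\ast$ as $(z,c)\in\{0,1\}^n\times\{0,1\}$, and let the (fixed, known) target shift be $s$. On input $(z,c)$, output $(z, y)$ where $y$ is obtained by a randomized relabeling: set the $\{-1,1\}$-valued label $(-1)^y$ to equal $\chi_s(z)$ with probability depending on a fresh fair coin and on $(-1)^c\chi_s(z)$, designed so that the induced conditional $\{-1,1\}$-expectation of $y$ given $z$ equals $\alpha\chi_s(z)$ when the input stream is $\mathcal{U}_{n+1}$ (so $(-1)^c$ is an independent fair coin, carrying no information about $z$), and equals $\alpha\chi_s(z)+\beta\chi_t(z)$ when the input stream is the $\eta$-noisy parity $\chi_t$ (so $(-1)^c$ equals $(1-2\eta)\chi_t(z)$ in expectation). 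Concretely, since an $\eta$-noisy parity label has bias $(1-2\eta)$ toward $\chi_t(z)$, I would mix the incoming bit $(-1)^c$ (scaled appropriately) with an independently generated $\alpha$-biased copy of $\chi_s(z)$, choosing the mixing weights so that the two components contribute exactly $\beta$ and $\alpha$ respectively to the resulting Fourier coefficients. The constraint $\eta\le\tfrac{1-\nicefrac{\beta}{(1-\alpha)}}{2}$ is precisely what is needed for the required mixing weight, $\tfrac{\beta}{1-2\eta}$, to be at most $1-\alpha$, i.e.\ for the target conditional expectation $\alpha\chi_s+\tfrac{\beta}{1-2\eta}\cdot(1-2\eta)\chi_t$ to remain $[-1,1]$-valued and hence realize a legitimate distribution. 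One must also check that when $t=s$ the map still produces a valid distribution, but the hard instance family for noisy parities can be taken over $t\neq s$ (indeed over all of $\{0,1\}^n$, with the $t=s$ case being negligible), and after relabeling, uniform-random $t$ over $\{0,1\}^n\setminus\{s\}$ yields a uniform-random element of $\mathfrak{D}_{\alpha,\beta}^s$.

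The two things to verify carefully are: (i) that the postprocessed sample is genuinely a fresh i.i.d.\ sample from the claimed $\mathcal{D}'$ — this follows because the input $z$ is untouched and distributed as $\mathcal{U}_n$ in both cases, while the new label $y$ is generated as a function of $z$, of $(-1)^c$, and of fresh internal randomness, so conditioning on $z$ gives the desired Bernoulli law with the correct bias, independently across samples; and (ii) that the map is information-preserving in the reduction sense, i.e.\ it does not accidentally let a distinguisher succeed on $\mathcal{D}'$ without distinguishing $\mathcal{D}^\ast$ — this is automatic since the map is a fixed Markov kernel applied i.i.d., so it can only decrease statistical distinguishability, which is exactly the direction we need for a hardness reduction. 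I expect the main obstacle to be bookkeeping: pinning down the exact randomized relabeling rule and confirming the induced Fourier coefficients are $\alpha$ on $\chi_s$ and $\beta$ on $\chi_t$ and $0$ elsewhere, together with checking the endpoint cases ($\alpha=0$ excluded by hypothesis, $b=1$, $\eta=0$) so that the reduction degrades gracefully; the conceptual content beyond that is light, and the lower bound then follows by invoking \Cref{lemma:fully-uniform-vs-random-noisy-parity}.
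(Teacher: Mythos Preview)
Your approach is essentially the paper's: both reductions postprocess each incoming sample $(z,c)$ by keeping $z$ and, with probability $\gamma=1-\tfrac{\beta}{1-2\eta}$, replacing the label by a fresh bit biased toward $\chi_s(z)$ (otherwise keeping $c$), so that the resulting conditional expectation is $\gamma\cdot\tfrac{\alpha}{\gamma}\chi_s+(1-\gamma)\cdot(1-2\eta)\chi_t=\alpha\chi_s+\beta\chi_t$. One bookkeeping point to tighten: the fresh bit must carry bias $\tfrac{\alpha}{\gamma}\chi_s(z)$, not $\alpha\chi_s(z)$ as your phrasing suggests, and the role of the constraint $\eta\le\tfrac{1-\beta/(1-\alpha)}{2}$ is precisely that $\tfrac{\alpha}{\gamma}\le 1$ so this resampling law is legitimate --- not merely that the final expectation $\alpha\chi_s+\beta\chi_t$ is $[-1,1]$-valued.

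The only substantive difference is the $t=s$ edge case. You absorb it into a $2^{-n}$ total-variation slack, which is perfectly valid. The paper instead spends a preliminary $\mathcal{O}\bigl(\tfrac{\log(1/\delta)}{(1-2\eta)^2}\bigr)$ samples to estimate $\hat\phi'(s)$ and declare ``noisy parity'' directly if the estimate is large, excluding $t=s$ before invoking the mixing; this keeps the reduction exact at the cost of a small additive sample overhead. Either route yields the lemma.
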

\begin{proof}
    We first introduce a tool for our proof.
    For $\varphi:\{0,1\}^n\to [0,1]$ and $\gamma\in [0,1]$, we define $\mathcal{P}(\varphi, \gamma)[\varphi']$ as follows:
    A sample from the distribution $(\mathcal{U}_n, \mathcal{P}(\varphi, \gamma)[\varphi'])$ over $\{0,1\}^n\times\{0,1\}$ is obtained through the following procedure:
    \begin{enumerate}
        \item Sample $(x,y)\sim(\mc U_n,\varphi')$.
        \item With probability $\gamma$, re-sample $y\sim\varphi (x)$.
        \item Return $(x,y)$. 
    \end{enumerate}
    From this description, it is easy to see that
    \begin{align}
        \widehat{\mc P(\varphi,\gamma)[\varphi']}(s) = \gamma\hat\phi(s) + (1-\gamma)\hat\phi'(s)\,.
    \end{align}
    Here, we used our usual notation for $\phi = 1-2\varphi$ and $\phi'=1-2\varphi'$.
    
    Assume that $\mathcal{A}_{\alpha,\beta}^s$ is an algorithm that distinguishes a uniformly random element of $\mathfrak{D}_{\alpha, \beta}^{s}$ from $(\mathcal{U}_n, a\chi_s)$ using random examples, with success probability $\geq 1-\delta$.
    Let $\eta\in [0,\tfrac{1-\nicefrac{\beta}{(1-\alpha)}}{2}]$.
    Then, we can distinguish a uniformly random $\eta$-noisy parity (acting on uniformly random inputs) from the uniform distribution $\mathcal{U}_{n+1}$ with a sample and time complexity overhead of $\O\qty(\frac{\log(1/\delta)}{(1-2\eta)^2})$ compared to $\mathcal{A}_{\alpha,\beta}^s$ as follows:
    Let $\mathcal{D}=(\mathcal{U}_n, \varphi')$ be the unknown distribution.
    First, use $\O\qty(\frac{\log(1/\delta)}{(1-2\eta)^2})$ many random examples from $\mathcal{D}$ to estimate, with success probability $\geq 1-\tfrac{\delta}{2}$, the Fourier coefficient $\hat{\phi}'(s)$ to accuracy $\tfrac{1-2\eta}{3}$.
    If the estimate has absolute value $\geq \tfrac{1-2\eta}{3}$, then the unknown distribution must have been $(\mathcal{U}_n, (1-2\eta)\chi_s)$, so in this case we have successfully solved the distinguishing task.
    
    Thus, assume that the estimate for $\hat{\phi}'(s)$ has absolute value $< \tfrac{1-2\eta}{3}$, in which case we know that the unknown distribution cannot have been $(\mathcal{U}_n, (1-2\eta)\chi_s)$.
    Let $\gamma = 1-\tfrac{\beta}{1-2\eta}\in [\alpha,1]$.
    If the unknown distribution is $(\mathcal{U}_n, \varphi')=(\mathcal{U}_n, (1-2\eta)\chi_t)$ for some $t\neq s$, then $\mathcal{P}(\tfrac{\alpha}{\gamma}\chi_s, \gamma )[\varphi']=\mathcal{P}(\tfrac{\alpha}{\gamma}\chi_s,\gamma )[(1-2\eta)\chi_t]= \alpha \chi_s +  \beta\chi_t$. Thus, in this case $(\mathcal{U}_n, \mathcal{P}(\tfrac{\alpha}{\gamma}\chi_s, \gamma )[\varphi'])\in \mathfrak{D}_{\alpha, \beta}^{s}$.
    Similarly, if the unknown distribution is $\mathcal{U}_{n+1}=(\mathcal{U}_n, \varphi')=(\mathcal{U}_n, \tfrac{1}{2})$, then $\mathcal{P}(\tfrac{\alpha}{\gamma}\chi_s, \gamma )[\varphi'] = \mathcal{P}(\tfrac{\alpha}{\gamma}\chi_s, \gamma )[\tfrac{1}{2}] = \alpha\chi_s$. Thus, in this case $(\mathcal{U}_n, \mathcal{P}(\tfrac{\alpha}{\gamma}\chi_s, \gamma )[\varphi']) = (\mathcal{U}_n, \alpha\chi_s)$. 
    Hence, as a single sample from $(\mathcal{U}_n, \varphi')$ suffices to generate a single sample from $(\mathcal{U}_n, \mathcal{P}(\tfrac{\alpha}{\gamma}\chi_s, \gamma )[\varphi'])$, we can now call $\mathcal{A}_{\alpha,\beta}^s$ to distinguish whether the unknown distribution is $(\mathcal{U}_n, (1-2\eta)\chi_t)$ for some $t\neq s$ or $\mathcal{U}_{n+1}$.
\end{proof}

\end{document}